\newtheorem{lemma}{Lemma}
\newtheorem{theorem}{Theorem}
\newtheorem{proposition}{Proposition}
\newtheorem{corollary}{Corollary}
\newenvironment{proof}{{\bf Proof:}}{~$\dashv$\\}
\def\Z{\mathbb{Z}}
\def\N{\mathbb{N}}
\def\lsem{{\lbrack\mid}}
\def\rsem{{\mid\rbrack}}
\def\qd{\mathtt{qd}}
\def\qdd{\mathtt{qdd}}
\def\dust{\mathtt{dust}}
\def\Th{\mathtt{Th}}
\def\Fr{\mathtt{Fr}}
\def\size{\mathtt{len}}
\def\kernel{\mathtt{kernel}}
\def\root{\mathtt{root}}
\def\coNP{\mathbf{coNP}}
\def\NEXPTIME{\mathbf{NEXPTIME}}
\def\PSPACE{\mathbf{PSPACE}}
\def\EXPSPACE{\mathbf{EXPSPACE}}
\def\vvar{\mathtt{var}}
\def\L{\mathbf{L}}
\def\S{\mathbf{S}}
\def\K{\mathbf{K}}
\def\KD{\mathbf{KD}}
\def\PVAR{\mathbf{PVAR}}
\def\IVAR{\mathbf{IVAR}}
\def\fiv{\mathtt{fiv}}
\def\qd{\mathtt{qd}}
\begin{document}
\title{Modal definability in Euclidean modal logics}
\author{Philippe Balbiani$^{a}$\footnote{Email: philippe.balbiani@irit.fr.}
\hspace{0.24cm}
Tinko Tinchev$^{b}$\footnote{Email: tinko@fmi.uni-sofia.bg.}}
\date{$^{a}$Toulouse Institute of Computer Science Research
\\
CNRS~---~INPT~---~UT, Toulouse, France
\\
$^{b}$Faculty of Mathematics and Informatics
\\
Sofia University St.~Kliment Ohridski, Sofia, Bulgaria}
\maketitle
\begin{abstract}
This paper is about the computability of the modal definability problem in classes of frames determined by Euclidean modal logics.
We characterize those Euclidean modal logics such that the classes of frames they determine give rise to an undecidable modal definability problem.
\end{abstract}
{\bf Keywords:}
Propositional Modal Logic.
Model Theory of Classical First-Order Logic.
Correspondence Theory.
Modal definability.
Stable classes of frames.
\section{Introduction}\label{section:introduction}
A modal formula is first-order definable in a class of frames if there exists a sentence that is true exactly in the frames of that class validating the modal formula.
A sentence is modally definable in a class of frames if there exists a modal formula that is valid exactly in the frames of that class satisfying the sentence.
A modal formula and a sentence correspond in a class of frames if they are respectively valid and true in the same frames of that class.
%
%
\\
\\
The question of the correspondence between modal formulas and sentences is concomitant with the creation of the relational semantics of modal logic.
Kripke~\cite{Kripke:1963} already observed that some sentences are modally definable: transitivity vs $\Box p\rightarrow\Box\Box p$, symmetry vs $p\rightarrow\Box\Diamond p$, etc.
Less than $20$ years have elapsed between Kripke's observation and the development of Correspondence Theory culminating in the publication of the book ``Modal Logic and Classical Logic''~\cite{vanBenthem:1983}: in 1975, Sahlqvist~\cite{Sahlqvist:1975} isolated a large set of modal formulas which guarantee completeness with respect to first-order definable classes of frames, whereas van Benthem~\cite{vanBenthem:1975} and Goldblatt~\cite{Goldblatt:1975} independently noticed that McKinsey formula $\Box\Diamond p\rightarrow\Diamond\Box p$ has no first-order correspondent.
\\
\\
%
%
Since the first-order conditions corresponding to Sahlqvist formulas are effectively computable~\cite{Sahlqvist:1975}, then it is natural to ask whether Sahlqvist fragment modally defines all first-order correspondents of modal formulas.
This question has received a negative answer, the conjunction $(\Box\Diamond p\rightarrow\Diamond\Box p)\wedge(\Box p\rightarrow\Box\Box p)$ possessing a first-order correspondent not modally definable by a Sahlqvist formula~\cite[Chapter~$3$]{Blackburn:deRijke:Venema:2001}.
However, notice that this statement should be carefully considered, seeing that the ``Sahlqvist formulas'' considered by Blackburn {\it et al.}\/ constitute a strict subset of the formulas considered by Sahlqvist.
Indeed, we do not know whether the first-order correspondent of $(\Box\Diamond p\rightarrow\Diamond\Box p)\wedge(\Box p\rightarrow\Box\Box p)$ is modally definable by a formula considered by Sahlqvist.
\\
\\
In any case, it is natural to ask whether the first-order definability problem, the modal definability problem and the correspondence problem are decidable.
%
%
%
%
The answers to these questions have been firstly obtained by Chagrova in her doctoral thesis~\cite{Chagrova:1989} and then further developed in~\cite{Chagrov:Chagrova:1995,Chagrov:Chagrova:2006,Chagrov:Chagrova:2007,Chagrova:1991}:
%
%
%
%
%
%
%
%
%
%
the first-order definability problem, the modal definability problem and the correspondence problem are undecidable.
%
%
%
%
Chagrova's results concern the class of all frames.
They have been obtained by reductions from accessibility problems in Minsky machines~\cite{Chagrov:Chagrova:2006}.
As a result, it is not obvious how similar results can be obtained with respect to restricted classes of frames (the class of all transitive frames, the class of all symmetric frames, etc).
%
%
\\
\\
With respect to the class of frames determined by $\S5$, it follows from $\S5$~Reduction Theorem in~\cite[Page~$51$]{Hughes:Cresswell:1968} and Lemma~$9.7$ in~\cite[Page~$99$]{vanBenthem:1983} that every modal formula is first-order definable.
Moreover, as proved in~\cite{Balbiani:Tinchev:2005,Balbiani:Tinchev:2006}, modal definability with respect to the class of frames determined by $\S5$ is $\PSPACE$-complete.
%
%
%
%
With respect to the classes of frames determined by $\KD45$ and $\K45$, similar results hold as well~\cite{Georgiev:2017a,Georgiev:2017b}.
\\
\\
Therefore, a first question comes: is there an extension of $\K5$ such that the class of frames it determines gives rise both to a trivial first-order definability problem~---~i.e. every modal formula is first-order definable in that class~---~and an undecidable modal definability problem?
This question has been positively answered: with respect to the class of frames determined by $\K5$, first-order definability is trivial, whereas modal definability is undecidable~\cite{Balbiani:Georgiev:Tinchev:2018}.
\\
\\
The truth is that the classes of frames determined by all extensions of $\K5$ give rise to a trivial first-order definability problem.
%
%
%
%
Therefore, a second question comes: characterize those extensions of $\K5$ such that the classes of frames they determine give rise to an undecidable modal definability problem.
We answer this question in Section~\ref{section:about:computability:of:modal:definability} where we
also prove that one can solve in non-deterministic exponential time the following decision problem: determine whether the problem of deciding the modal definability of sentences with respect to a given Euclidean modal logic is decidable.
\\
\\
As for the rest of the paper, it is organized as follows.
Although we assume the reader is at home with basic tools and techniques in Propositional Modal Logic and Model Theory of Classical First-Order Logic, we include in Sections~\ref{section:about:preliminaries}--\ref{section:first:order:syntax:and:semantics} the needed material about these topics.
The definability problems are introduced in Section~\ref{section:definability:problems}.
Standard theorems about relative interpretations and relativizations are adapted to our context in Sections~\ref{section:about:relative:interpretations} and~\ref{section:about:relativization}.
Theorems~\ref{proposition:finale} and~\ref{proposition:finale:correspondence:problem} state in Section~\ref{section:about:computability:of:modal:definability} the main results of this paper.
This paper includes the proofs of some of our results.
Some of these proofs are relatively simple and we have included them here just for the sake of the completeness.
\section{Preliminaries}\label{section:about:preliminaries}
For all $k,l{\in}\Z$, let $\lsem k,l\rsem{=}\{m{\in}\Z:\ k{\leq}m{\leq}l\}$.
\\
\\
For all sets $S$, $\wp(S)$ denotes the set of all subsets of $S$.
\\
\\
For all sets $S$, ${\parallel}S{\parallel}$ denotes the cardinality of $S$.
%
%
\\
\\
For all sets $A,B$, for all functions $\rho:\ A{\longrightarrow}\wp(B)$ and for all $D{\in}\wp(B)$, let $\rho^{{-}1}(D){=}\{s{\in}A:\ \rho(s){=}D\}$.
%
%
\\
\\
For all sets $W$, for all binary relations $R$ on $W$ and for all $s,t{\in}W$, let $R(s){=}\{w{\in}
$\linebreak$
W:\ s{R}w\}$ and $R^{{-}1}(t){=}\{w{\in}W:\ w{R}t\}$.
%
%
%
%
%
%
%
%
%
%
%
%
%
%
%
%
%
%
%
%
%
%
%
%
%
%
\\
\\
Let $\mathbf{N}^{+}{=}\N{\setminus}\{0\}$ and $\mathbf{N}^{-}{=}\N{\cup}\{{-}1\}$.
%
%
%
%
%
%
\\
\\
Obviously, for all $m{\in}\mathbf{N}^{+}{\cup}\{0\}$ and for all $n{\in}\mathbf{N}^{-}$, $(m,n){\in}\{(0,0)\}{\cup}(\mathbf{N}^{+}{\times}\mathbf{N}^{-})$ if and only if either $m{\not=}0$, or $n{=}0$.
\\
\\
Let $\ll$ be the well-founded partial order on $\mathbf{N}^{+}{\times}\mathbf{N}^{-}$ such that for all $m,m^{\prime}{\in}\mathbf{N}^{+}$ and for all $n,n^{\prime}{\in}\mathbf{N}^{-}$,
%
%
%
%
%
%
%
%
%
%
%
%
%
%
%
%
$(m,n){\ll}(m^{\prime},n^{\prime})$ if and only if $m{\leq}m^{\prime}$ and $n{\leq}n^{\prime}$.
%
%
%
%
%
%
%
%
%
%
%
%
%
%
%
%
%
%
%
%
%
%
%
%
%
%
%
%
%
%
%
%
%
%
\begin{lemma}\label{lemma:property:of:the:well:founded:order:between:pairs}
For all $m{\in}\mathbf{N}^{+}$ and for all $n{\in}\mathbf{N}^{-}$, if $m{\geq}2$ and $n{\geq}2$ then for all $m^{\prime}{\in}\mathbf{N}^{+}$ and for all $n^{\prime}{\in}\mathbf{N}^{-}$, if $m^{\prime}{<}m$ then $(m^{\prime},2){\ll}(m,n)$ and if $n^{\prime}{<}n$ then $(2,n^{\prime}){\ll}(m,n)$.
%
%
\end{lemma}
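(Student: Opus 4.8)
The plan is to unwind the definitions and check the two implications separately, using the characterization of $\ll$ directly. Recall that $(a,b)\ll(c,d)$ means $a\leq c$ and $b\leq d$, where $a,c\in\mathbf{N}^{+}$ and $b,d\in\mathbf{N}^{-}$. So fix $m\in\mathbf{N}^{+}$ and $n\in\mathbf{N}^{-}$ with $m\geq 2$ and $n\geq 2$.

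For the first implication, suppose $m^{\prime}\in\mathbf{N}^{+}$ with $m^{\prime}<m$. I need $(m^{\prime},2)\ll(m,n)$, i.e. $m^{\prime}\leq m$ and $2\leq n$. The first holds since $m^{\prime}<m$ implies $m^{\prime}\leq m$; the second is exactly the hypothesis $n\geq 2$. So this direction is immediate.

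For the second implication, suppose $n^{\prime}\in\mathbf{N}^{-}$ with $n^{\prime}<n$. I need $(2,n^{\prime})\ll(m,n)$, i.e. $2\leq m$ and $n^{\prime}\leq n$. The first is exactly the hypothesis $m\geq 2$; the second follows from $n^{\prime}<n$. So this direction is immediate as well.

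There is essentially no obstacle here: the lemma is a direct consequence of the definition of $\ll$ together with the two standing hypotheses $m\geq 2$ and $n\geq 2$, which supply precisely the ``missing'' coordinate comparison in each case. The only mild subtlety is bookkeeping about membership in $\mathbf{N}^{+}$ and $\mathbf{N}^{-}$ — one should note that $m^{\prime}<m$ together with $m^{\prime}\in\mathbf{N}^{+}$ keeps us in the domain of $\ll$, and similarly $2\in\mathbf{N}^{+}$ and $n^{\prime}\in\mathbf{N}^{-}$, $2\in\mathbf{N}^{-}$ (using $n\geq 2\geq 2$ so $2$ is a legitimate second coordinate) — but this is routine. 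I would write the proof in two or three lines.
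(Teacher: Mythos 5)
Your proof is correct: the paper states this lemma without proof, and your argument is exactly the intended one, namely the direct unwinding of the definition of $\ll$ (componentwise $\leq$), with the hypotheses $m\geq 2$ and $n\geq 2$ supplying the missing coordinate comparison in each case. (One negligible remark: $2\in\mathbf{N}^{-}$ holds unconditionally since $\mathbf{N}^{-}=\N\cup\{-1\}$, so no appeal to $n\geq 2$ is needed for domain membership.)
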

%
%
For all $m{\in}\mathbf{N}^{+}$ and for all $n{\in}\mathbf{N}^{-}$, let
\begin{itemize}
\item $\pi(m,n){=}\{(m^{\prime},n^{\prime}):\ m^{\prime}{\in}\mathbf{N}^{+}$, $n^{\prime}{\in}\mathbf{N}^{-}$ and $(m^{\prime},n^{\prime}){\ll}(m,n)\}$.
\end{itemize}
A subset $\mathtt{S}$ of $\mathbf{N}^{+}{\times}\mathbf{N}^{-}$ is {\it closed}\/ if for all $m{\in}\mathbf{N}^{+}$ and for all $n{\in}\mathbf{N}^{-}$, if $(m,n){\in}\mathtt{S}$ then $\pi(m,n){\subseteq}\mathtt{S}$.
\\
\\
Obviously, for all $m{\in}\mathbf{N}^{+}$ and for all $n{\in}\mathbf{N}^{-}$, $\pi(m,n)$ is a closed subset of $\mathbf{N}^{+}{\times}\mathbf{N}^{-}$.
\begin{lemma}\label{lemma:ll:finitely:many:smaller:pairs}
For all $m{\in}\mathbf{N}^{+}$ and for all $n{\in}\mathbf{N}^{-}$, $\pi(m,n)$ is finite.
\end{lemma}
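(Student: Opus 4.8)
The plan is to show that $\pi(m,n)$ is a subset of a set that is manifestly finite, namely a finite Cartesian product. Recall $\mathbf{N}^{+}=\N\setminus\{0\}$ and $\mathbf{N}^{-}=\N\cup\{-1\}$, and that $(m',n')\ll(m,n)$ holds exactly when $m'\leq m$ and $n'\leq n$. So by definition $\pi(m,n)\subseteq\lsem 1,m\rsem\times\lsem -1,n\rsem$, where $\lsem k,l\rsem$ denotes the integer interval $\{p\in\Z:\ k\leq p\leq l\}$ introduced in the Preliminaries. The point is simply that every first coordinate of a pair in $\pi(m,n)$ lies between $1$ and $m$ (since it is a positive integer no larger than $m$), and every second coordinate lies between $-1$ and $n$ (since it is an element of $\mathbf{N}^{-}$, i.e.\ an integer $\geq -1$, that is no larger than $n$).

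From this the result is immediate: $\lsem 1,m\rsem$ has $m$ elements, $\lsem -1,n\rsem$ has $n+2$ elements, so their product has $m(n+2)$ elements, a finite number; and a subset of a finite set is finite. Concretely, I would write: fix $m\in\mathbf{N}^{+}$ and $n\in\mathbf{N}^{-}$; let $(m',n')\in\pi(m,n)$; then $m'\in\mathbf{N}^{+}$, $n'\in\mathbf{N}^{-}$ and $(m',n')\ll(m,n)$, whence $1\leq m'\leq m$ and $-1\leq n'\leq n$, so $(m',n')\in\lsem 1,m\rsem\times\lsem -1,n\rsem$. Hence $\pi(m,n)\subseteq\lsem 1,m\rsem\times\lsem -1,n\rsem$, and the latter being finite, so is $\pi(m,n)$.

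There is essentially no obstacle here; the lemma is a bookkeeping fact recorded for later use (presumably to justify well-founded inductions over $\ll$ whose induction hypothesis ranges over $\pi(m,n)$, or to bound the size of some construction indexed by such pairs). The only thing worth a moment's care is matching the unusual index sets: one must remember that $\mathbf{N}^{+}$ starts at $1$ and $\mathbf{N}^{-}$ starts at $-1$, so that the bounding interval for the second coordinate is $\lsem -1,n\rsem$ rather than $\lsem 0,n\rsem$, but this changes nothing essential. If one prefers to avoid even invoking "a subset of a finite set is finite," one can instead note the explicit surjection from $\lsem 1,m\rsem\times\lsem -1,n\rsem$ (or a characteristic-function argument) onto $\pi(m,n)$, but that is needless overhead for so elementary a claim.
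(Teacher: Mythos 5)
Your proof is correct: the observation that $(m^{\prime},n^{\prime}){\ll}(m,n)$ forces $1{\leq}m^{\prime}{\leq}m$ and $-1{\leq}n^{\prime}{\leq}n$, so that $\pi(m,n){\subseteq}\lsem1,m\rsem{\times}\lsem{-}1,n\rsem$, is exactly the bookkeeping fact the lemma records, and the paper itself states this lemma without proof, treating it as obvious in just this way. Nothing to add.
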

%
%
%
%
%
%
%
%
\begin{lemma}\label{lemma:if:S:L:setminus:some:pairs:is:infinite:then:two:consequences}
For all closed subsets $\mathtt{S}$ of $\mathbf{N}^{+}{\times}\mathbf{N}^{-}$, if $\mathtt{S}{\setminus}((\{1\}{\times}\mathbf{N}^{-}){\cup}(\mathbf{N}^{+}{\times}\{{-}1,0,
$\linebreak$
1\}))$ is infinite then either $\{2\}{\times}\mathbf{N}^{-}{\subseteq}\mathtt{S}$, or $\mathbf{N}^{+}{\times}\{2\}{\subseteq}\mathtt{S}$.
\end{lemma}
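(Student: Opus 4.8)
The plan is to argue by contraposition: assuming that neither $\{2\}{\times}\mathbf{N}^{-}{\subseteq}\mathtt{S}$ nor $\mathbf{N}^{+}{\times}\{2\}{\subseteq}\mathtt{S}$, I would show that $\mathtt{S}{\setminus}((\{1\}{\times}\mathbf{N}^{-}){\cup}(\mathbf{N}^{+}{\times}\{{-}1,0,1\}))$ is finite. The point of the exceptional set being removed is that the remaining pairs $(m,n)$ all satisfy $m{\geq}2$ and $n{\geq}2$, which is exactly the hypothesis of Lemma~\ref{lemma:property:of:the:well:founded:order:between:pairs}. So the whole proof will be powered by that lemma together with the closedness of $\mathtt{S}$ and the finiteness statement of Lemma~\ref{lemma:ll:finitely:many:smaller:pairs}.

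Concretely, since $\{2\}{\times}\mathbf{N}^{-}{\not\subseteq}\mathtt{S}$, there is some $n_{0}{\in}\mathbf{N}^{-}$ with $(2,n_{0}){\notin}\mathtt{S}$; we may assume $n_{0}{\geq}2$, otherwise $(2,n_{0})$ already lies in the removed set and tells us nothing, but in fact if $(2,n_{0}){\notin}\mathtt{S}$ for some small $n_{0}$ then a fortiori — wait, that is not monotone in the right direction, so I would instead pick the witness directly: let $n_{0}$ be least such that $(2,n_{0}){\notin}\mathtt{S}$, and note $n_{0}{\geq}2$ is not guaranteed, so I should just take any witness $n_{0}$, and separately any witness $m_{0}$ with $(m_{0},2){\notin}\mathtt{S}$. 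The key observation is: if $(m,n){\in}\mathtt{S}$ with $m{\geq}2$, $n{\geq}2$, $m{>}2$ and $n{>}n_{0}$, then by Lemma~\ref{lemma:property:of:the:well:founded:order:between:pairs} applied with $m^{\prime}{=}2{<}m$ we would get $(2,n_{0}){\ll}(m,n)$ if $n_{0}{<}n$ — hmm, the lemma as stated gives $(2,n^{\prime}){\ll}(m,n)$ for $n^{\prime}{<}n$, and $(m^{\prime},2){\ll}(m,n)$ for $m^{\prime}{<}m$. So from $(m,n){\in}\mathtt{S}$ and closedness, $\pi(m,n){\subseteq}\mathtt{S}$, hence $(2,n^{\prime}){\in}\mathtt{S}$ for every $n^{\prime}$ with $2{\leq}n^{\prime}{<}n$ and $(m^{\prime},2){\in}\mathtt{S}$ for every $m^{\prime}$ with $1{\leq}m^{\prime}{<}m$.

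Putting these together: fix the witnesses $n_{0}$ with $(2,n_{0}){\notin}\mathtt{S}$ and $m_{0}$ with $(m_{0},2){\notin}\mathtt{S}$. I claim any $(m,n){\in}\mathtt{S}$ with $m{\geq}2$ and $n{\geq}2$ must satisfy $n{\leq}\max(n_{0},2)$ and $m{\leq}\max(m_{0},2)$: if instead $n{>}n_{0}$ and $n{\geq}3$, then $(m,n){\in}\mathtt{S}$ forces, via $\pi(m,n){\subseteq}\mathtt{S}$ and Lemma~\ref{lemma:property:of:the:well:founded:order:between:pairs} with the pair $(2,n_{0})$ — here one checks $(2,n_{0}){\ll}(m,n)$ since $2{\leq}m$ and $n_{0}{<}n$ — that $(2,n_{0}){\in}\mathtt{S}$, a contradiction; symmetrically if $m{>}m_{0}$ and $m{\geq}3$ we contradict $(m_{0},2){\notin}\mathtt{S}$. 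Hence the set of pairs in $\mathtt{S}$ with both coordinates $\geq 2$ is contained in $\lsem 2,\max(m_{0},2)\rsem{\times}\lsem 2,\max(n_{0},2)\rsem$, which is finite; and $\mathtt{S}{\setminus}((\{1\}{\times}\mathbf{N}^{-}){\cup}(\mathbf{N}^{+}{\times}\{{-}1,0,1\}))$ is exactly that set, so it is finite, completing the contraposition. The main obstacle is purely bookkeeping: making sure the witnesses $n_{0},m_{0}$ can be taken with the second coordinate at least $2$ (or handling the cases $n_{0}{\leq}1$, $m_{0}{\leq}1$ separately, where the conclusion is even easier since then a small pair is already missing from $\mathtt{S}$), and checking the boundary inequalities so that the hypothesis ``$m{\geq}2$ and $n{\geq}2$'' of Lemma~\ref{lemma:property:of:the:well:founded:order:between:pairs} genuinely applies at each invocation.
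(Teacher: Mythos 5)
Your proposal is correct and is essentially the contrapositive of the paper's own argument: both fix witnesses $(m_{0},2){\not\in}\mathtt{S}$ and $(2,n_{0}){\not\in}\mathtt{S}$ and observe, via Lemma~\ref{lemma:property:of:the:well:founded:order:between:pairs} together with closedness, that no pair of $\mathtt{S}$ with both coordinates at least $2$ can exceed these witnesses. The only cosmetic difference is that you bound $\mathtt{S}{\setminus}((\{1\}{\times}\mathbf{N}^{-}){\cup}(\mathbf{N}^{+}{\times}\{{-}1,0,1\}))$ directly by an explicit finite box, whereas the paper argues by contradiction and uses the finiteness of $\pi(m,n)$ (Lemma~\ref{lemma:ll:finitely:many:smaller:pairs}) to extract a pair lying outside it.
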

\begin{proof}
Let $\mathtt{S}$ be a closed subset of $\mathbf{N}^{+}{\times}\mathbf{N}^{-}$.
Suppose $\mathtt{S}{\setminus}((\{1\}{\times}\mathbf{N}^{-}){\cup}(\mathbf{N}^{+}{\times}
$\linebreak$
\{{-}1,0,1\}))$ is infinite.
For the sake of the contradiction, suppose neither $\{2\}{\times}
$\linebreak$
\mathbf{N}^{-}{\subseteq}\mathtt{S}$, nor $\mathbf{N}^{+}{\times}\{2\}{\subseteq}\mathtt{S}$.
Hence, let $m{\in}\mathbf{N}^{+}$ and $n{\in}\mathbf{N}^{-}$ be such that $(m,2){\not\in}\mathtt{S}$ and $(2,n){\not\in}\mathtt{S}$.
Since $\mathtt{S}{\setminus}((\{1\}{\times}\mathbf{N}^{-}){\cup}(\mathbf{N}^{+}{\times}\{{-}1,0,1\}))$ is infinite, then by
\linebreak
Lemma~\ref{lemma:ll:finitely:many:smaller:pairs}, $\mathtt{S}{\setminus}((\{1\}{\times}\mathbf{N}^{-}){\cup}(\mathbf{N}^{+}{\times}\{{-}1,0,1\})){\not\subseteq}\pi(m,n)$.
Thus, let $m^{\prime}{\in}\mathbf{N}^{+}$ and $n^{\prime}{\in}\mathbf{N}^{-}$ be such that $(m^{\prime},n^{\prime}){\in}\mathtt{S}{\setminus}((\{1\}{\times}\mathbf{N}^{-}){\cup}(\mathbf{N}^{+}{\times}\{{-}1,0,1\}))$ and $(m^{\prime},n^{\prime}){\not\ll}
$\linebreak$
(m,n)$.
Consequently, $(m^{\prime},n^{\prime}){\in}\mathtt{S}$, $m^{\prime}{>}1$ and $n^{\prime}{>}1$.
Hence, $m^{\prime}{\geq}2$ and $n^{\prime}{\geq}2$.
Since $(m^{\prime},n^{\prime}){\not\ll}(m,n)$, then either $m{<}m^{\prime}$, or $n{<}n^{\prime}$.
Since $m^{\prime}{\geq}2$ and $n^{\prime}{\geq}2$, then by Lemma~\ref{lemma:property:of:the:well:founded:order:between:pairs}, either $(m,2){\ll}(m^{\prime},n^{\prime})$, or $(2,n){\ll}(m^{\prime},n^{\prime})$.
Since $\mathtt{S}$ is a closed subset of $\mathbf{N}^{+}{\times}\mathbf{N}^{-}$ and $(m^{\prime},n^{\prime}){\in}\mathtt{S}$, then either $(m,2){\in}\mathtt{S}$, or $(2,n){\in}\mathtt{S}$: a contradiction.
Thus, either $\{2\}{\times}\mathbf{N}^{-}{\subseteq}\mathtt{S}$, or $\mathbf{N}^{+}{\times}\{2\}{\subseteq}\mathtt{S}$.
\medskip
\end{proof}
\section{Frames, galaxies and flowers}\label{section:preliminary:definitions}
In this section, we introduce $3$~types of relational structures: frames, galaxies and flowers.
\begin{figure}[ht]
\begin{center}
\begin{picture}(-50,50)(-50,50)
\put(-120.00,50.00){\circle*{3}}
\put(-115.00,50.00){$f$}
\put(-160.00,50.00){\circle*{3}}
\put(-155.00,50.00){$e$}
\put(-200.00,50.00){\circle*{3}}
\put(-195.00,50.00){$d$}
\put(-120.00,90.00){\circle*{3}}
\put(-115.00,90.00){$c$}
\put(-160.00,90.00){\circle*{3}}
\put(-155.00,90.00){$b$}
\put(-200.00,90.00){\circle*{3}}
\put(-195.00,90.00){$a$}
\put(-160.00,90.00){\vector(+1,-1){40.00}}
\put(-160.00,90.00){\vector(-1,-1){40.00}}
\put(-120.00,90.00){\vector(-1,-1){40.00}}
\put(-120.00,90.00){\vector(0,-1){40.00}}
\put(-220.00,40.00){\line(+1,0){120.00}}
\put(-220.00,60.00){\line(+1,0){120.00}}
\put(-220.00,40.00){\line(0,+1){20.00}}
\put(-100.00,40.00){\line(0,+1){20.00}}
\put(+40.00,50.00){\circle*{3}}
\put(45.00,50.00){$l$}
\put(00.00,50.00){\circle*{3}}
\put(05.00,50.00){$k$}
\put(-40.00,50.00){\circle*{3}}
\put(-35.00,50.00){$j$}
\put(+40.00,90.00){\circle*{3}}
\put(+45.00,90.00){$i$}
\put(00.00,90.00){\circle*{3}}
\put(05.00,90.00){$h$}
\put(-40.00,90.00){\circle*{3}}
\put(-35.00,90.00){$g$}
\put(-40.00,90.00){\vector(0,-1){40.00}}
\put(00.00,90.00){\vector(+1,-1){40.00}}
\put(00.00,90.00){\vector(-1,-1){40.00}}
\put(+40.00,90.00){\vector(-1,-1){40.00}}
\put(00.00,90.00){\vector(0,-1){40.00}}
\put(-60.00,40.00){\line(+1,0){120.00}}
\put(-60.00,60.00){\line(+1,0){120.00}}
\put(-60.00,40.00){\line(0,+1){20.00}}
\put(+60.00,40.00){\line(0,+1){20.00}}
\end{picture}
\caption{Example of an Euclidean frame where, on one hand, $d$, $e$ and $f$ are pairwise connected and, on the other hand, $j$, $k$ and $l$ are pairwise connected.}
\end{center}
\end{figure}
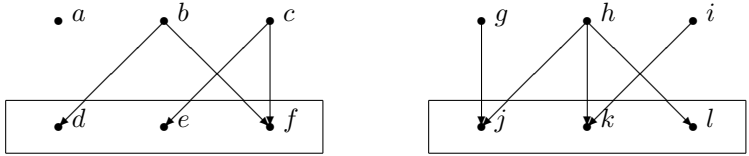
\paragraph{Frames}
A {\it frame}\/ is a tuple of the form $(W,R)$ where $W$ is a non-empty set and $R$ is a binary relation on $W$.
\\
\\
A frame $(W,R)$ is {\it irreflexive}\/ if for all $s{\in}W$, not $s{R}s$.
A frame $(W,R)$ is {\it symmetric}\/ if for all $s,t{\in}W$, if $s{R}t$ then $t{R}s$.
A frame $(W,R)$ is {\it transitive}\/ if for all $s,t,u{\in}W$, if $s{R}t$ and $t{R}u$ then $s{R}u$.
A frame $(W,R)$ is {\it Euclidean}\/ if for all $s,t,u{\in}W$, if $s{R}t$ and $s{R}u$ then $t{R}u$ and $u{R}t$.
%
%
%
%
\\
\\
Obviously, for all Euclidean frames $(W,R)$ and for all $w{\in}W$, if $R(w){=}\emptyset$ then $R^{{-}1}(w){=}\emptyset$.
%
%
%
%
%
%
\paragraph{Dusts, roots and kernels}
%
%
Let the {\it dust of the Euclidean frame $(W,R)$}\/ be the set (denoted $\dust(W,R)$) of all $w$ in $W$ such that $R(w){=}\emptyset$.
Let the {\it root of the Euclidean frame $(W,R)$}\/ be the set (denoted $\root(W,R)$) of all $w$ in $W$ such that $R(w){\not=}\emptyset$ and $R^{{-}1}(w){=}\emptyset$.
Let the {\it kernel of the Euclidean frame $(W,R)$}\/ be the set (denoted $\kernel(W,R)$) of all $w$ in $W$ such that $R^{{-}1}(w){\not=}\emptyset$.
\begin{figure}[ht]
\begin{center}
\begin{picture}(-50,50)(-50,50)
\put(-40.00,50.00){\circle*{3}}
\put(-35.00,50.00){$f$}
\put(-80.00,50.00){\circle*{3}}
\put(-75.00,50.00){$e$}
\put(-120.00,50.00){\circle*{3}}
\put(-115.00,50.00){$d$}
\put(-40.00,90.00){\circle*{3}}
\put(-35.00,90.00){$c$}
\put(-80.00,90.00){\circle*{3}}
\put(-75.00,90.00){$b$}
\put(-120.00,90.00){\circle*{3}}
\put(-115.00,90.00){$a$}
\put(-80.00,90.00){\vector(+1,-1){40.00}}
\put(-80.00,90.00){\vector(-1,-1){40.00}}
\put(-40.00,90.00){\vector(-1,-1){40.00}}
\put(-40.00,90.00){\vector(0,-1){40.00}}
\put(-140.00,40.00){\line(+1,0){120.00}}
\put(-140.00,60.00){\line(+1,0){120.00}}
\put(-140.00,40.00){\line(0,+1){20.00}}
\put(-20.00,40.00){\line(0,+1){20.00}}
\end{picture}
\caption{Example of an Euclidean frame with dust $\{a\}$, root $\{b,c\}$ and kernel $\{d,e,f\}$.}
\end{center}
\end{figure}
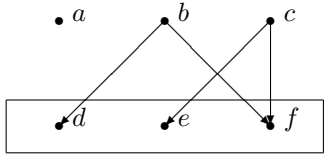
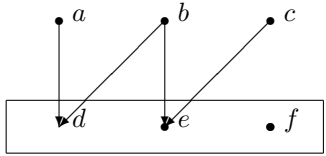
\begin{figure}[ht]
\begin{center}
\begin{picture}(-50,50)(-50,50)
\put(-40.00,50.00){\circle*{3}}
\put(-35.00,50.00){$f$}
\put(-80.00,50.00){\circle*{3}}
\put(-75.00,50.00){$e$}
\put(-40.00,50.00){\circle*{3}}
\put(-115.00,50.00){$d$}
\put(-40.00,90.00){\circle*{3}}
\put(-35.00,90.00){$c$}
\put(-80.00,90.00){\circle*{3}}
\put(-75.00,90.00){$b$}
\put(-120.00,90.00){\circle*{3}}
\put(-115.00,90.00){$a$}
\put(-120.00,90.00){\vector(0,-1){40.00}}
\put(-80.00,90.00){\vector(-1,-1){40.00}}
\put(-40.00,90.00){\vector(-1,-1){40.00}}
\put(-80.00,90.00){\vector(0,-1){40.00}}
\put(-140.00,40.00){\line(+1,0){120.00}}
\put(-140.00,60.00){\line(+1,0){120.00}}
\put(-140.00,40.00){\line(0,+1){20.00}}
\put(-20.00,40.00){\line(0,+1){20.00}}
\end{picture}
\caption{Example of an Euclidean frame with dust $\emptyset$, root $\{a,b,c\}$ and kernel $\{d,e,f\}$.}
\end{center}
\end{figure}
%
%
%
%
%
%
%
%
%
%
%
%
%
%
%
%
%
%
%
%
%
%
%
%
%
%
%
%
%
%
%
%
\\
\\
Obviously, for all Euclidean frames $(W,R)$, $\dust(W,R)$, $\root(W,R)$ and
\linebreak$
\kernel(W,R)$ constitute a partition of $W$.
%
%
%
%
%
%
%
%
%
%
%
%
%
%
%
%
%
%
%
%
%
%
%
%
%
%
%
%
%
%
%
%
%
%
\paragraph{Indexed families of Euclidean frames}
The {\it size of an indexed family $\{(W_{i},
$\linebreak$
R_{i}):\ i{\in}I\}$ of Euclidean frames}\/ is the cardinality of $I$.
An indexed family $\{(W_{i},R_{i}):\ i{\in}I\}$ of Euclidean frames is {\it non-empty}\/ if its size is positive.
\\
\\
A non-empty indexed family $\{(W_{i},R_{i}):\ i{\in}I\}$ of Euclidean frames is {\it disjoint}\/ if for all $i,j{\in}I$, if $i{\not=}j$ then $W_{i}{\cap}W_{j}{=}\emptyset$.
\\
\\
The {\it union of a disjoint indexed family $\{(W_{i},R_{i}):\ i{\in}I\}$ of Euclidean frames}\/ is the Euclidean frame $(W,R)$ such that $W{=}\bigcup\{W_{i}:\ i{\in}I\}$ and $R{=}\bigcup\{R_{i}:\ i{\in}I\}$.
%
%
%
%
%
%
%
%
%
%
\paragraph{Dust-finite indexed families of Euclidean frames}
A disjoint indexed family $\{(W_{i},R_{i}):\ i{\in}I\}$ of Euclidean frames is {\it dust-finite}\/ if for all $i{\in}I$, $\dust(W_{i},
$\linebreak$
R_{i})$ is finite.
For all $k{\in}\N$, a dust-finite indexed family $\{(W_{i},R_{i}):\ i{\in}I\}$ of Euclidean frames is {\it $k$-dust-bounded}\/ if for all $i{\in}I$, ${\parallel}\dust(W_{i},R_{i}){\parallel}{\leq}k$.
A dust-finite indexed family $\{(W_{i},R_{i}):\ i{\in}I\}$ of Euclidean frames is {\it dust-bounded}\/ if there exists $k{\in}\N$ such that $\{(W_{i},R_{i}):\ i{\in}I\}$ is $k$-dust-bounded.
%
%
%
%
%
%
%
%
%
%
\paragraph{Root-finite indexed families of Euclidean frames}
A disjoint indexed family $\{(W_{i},R_{i}):\ i{\in}I\}$ of Euclidean frames is {\it root-finite}\/ if for all $i{\in}I$, $\root(W_{i},
$\linebreak$
R_{i})$ is finite.
For all $k{\in}\N$, a root-finite indexed family $\{(W_{i},R_{i}):\ i{\in}I\}$ of Euclidean frames is {\it $k$-root-bounded}\/ if for all $i{\in}I$, ${\parallel}\root(W_{i},R_{i}){\parallel}{\leq}k$.
A root-finite indexed family $\{(W_{i},R_{i}):\ i{\in}I\}$ of Euclidean frames is {\it root-bounded}\/ if there exists $k{\in}\N$ such that $\{(W_{i},R_{i}):\ i{\in}I\}$ is $k$-root-bounded.
\paragraph{Kernel-finite indexed families of Euclidean frames}
A disjoint indexed family $\{(W_{i},R_{i}):\ i{\in}I\}$ of Euclidean frames is {\it kernel-finite}\/ if for all $i{\in}I$, $\kernel(W_{i},R_{i})$ is finite.
For all $k{\in}\N$, a kernel-finite indexed family $\{(W_{i},R_{i}):\ i{\in}I\}$ of Euclidean frames is {\it $k$-kernel-bounded}\/ if for all $i{\in}I$, ${\parallel}\kernel(W_{i},R_{i}){\parallel}{\leq}k$.
A kernel-finite indexed family $\{(W_{i},R_{i}):\ i{\in}I\}$ of Euclidean frames is {\it kernel-bounded}\/ if there exists $k{\in}\N$ such that $\{(W_{i},R_{i}):\ i{\in}I\}$ is $k$-kernel-bounded.
%
%
%
%
%
%
%
%
%
%
\paragraph{Galaxies}
For all sets $A,B$ and for all functions $\rho:\ A{\longrightarrow}\wp(B)$, if $A{\cap}B{=}\emptyset$ and $A{\cup}B{\not=}\emptyset$ then let ${\mathcal F}_{A,B}^{\rho}{=}(W_{A,B}^{\rho},R_{A,B}^{\rho})$ be the Euclidean frame (called {\it galaxy}) such that
\begin{itemize}
\item $W_{A,B}^{\rho}{=}A{\cup}B$,
\item $R_{A,B}^{\rho}{=}\bigcup\{\{s\}{\times}\rho(s):\ s{\in}A\}{\cup}(B{\times}B)$.
\end{itemize}
Obviously, for all galaxies ${\mathcal F}_{A,B}^{\rho}$, $\dust({\mathcal F}_{A,B}^{\rho}){=}\rho^{{-}1}(\emptyset)$, $\root({\mathcal F}_{A,B}^{\rho}){=}A{\setminus}\rho^{{-}1}(\emptyset)$ and $\kernel({\mathcal F}_{A,B}^{\rho}){=}B$.
\\
\\
The {\it upper part of the galaxy ${\mathcal F}_{A,B}^{\rho}$}\/ is the set $A$.
The {\it lower part of the galaxy ${\mathcal F}_{A,B}^{\rho}$}\/ is the set $B$.
\\
\\
A galaxy ${\mathcal F}_{A,B}^{\rho}$ is {\it finite}\/ if $A$ and $B$ are finite.
\\
\\
A galaxy ${\mathcal F}_{A,B}^{\rho}$ is {\it headed}\/ if there exists $s{\in}A$ such that $\rho(s){\not=}\emptyset$.
\\
\\
A galaxy ${\mathcal F}_{A,B}^{\rho}$ is {\it simple}\/ if $B{\not=}\emptyset$ and for all $s{\in}A$, either ${\parallel}\rho(s){\parallel}{\leq}1$, or ${\parallel}B{\setminus}\rho(s){\parallel}{\leq}
$\linebreak$
1$.
\\
\\
Obviously, for all galaxies ${\mathcal F}_{A,B}^{\rho}$, ${\mathcal F}_{A,B}^{\rho}$ is simple if and only if $B{\not=}\emptyset$ and for all $s{\in}A$, either $\rho(s){=}\emptyset$, or $\rho(s){=}B$, or there exists $t{\in}B$ such that $\rho(s){=}\{t\}$, or there exists $t{\in}B$ such that $\rho(s){=}B{\setminus}\{t\}$.
\begin{figure}[ht]
\begin{center}
\begin{picture}(-50,50)(-50,50)
\put(-40.00,50.00){\circle*{3}}
\put(-35.00,50.00){$f$}
\put(-80.00,50.00){\circle*{3}}
\put(-75.00,50.00){$e$}
\put(-120.00,50.00){\circle*{3}}
\put(-115.00,50.00){$d$}
\put(-40.00,90.00){\circle*{3}}
\put(-35.00,90.00){$c$}
\put(-80.00,90.00){\circle*{3}}
\put(-75.00,90.00){$b$}
\put(-120.00,90.00){\circle*{3}}
\put(-115.00,90.00){$a$}
\put(-80.00,90.00){\vector(-1,-1){40.00}}
\put(-80.00,90.00){\vector(0,-1){40.00}}
\put(-80.00,90.00){\vector(+1,-1){40.00}}
\put(-40.00,90.00){\vector(0,-1){40.00}}
\put(-40.00,90.00){\vector(0,-1){40.00}}
\put(-140.00,40.00){\line(+1,0){120.00}}
\put(-140.00,60.00){\line(+1,0){120.00}}
\put(-140.00,40.00){\line(0,+1){20.00}}
\put(-20.00,40.00){\line(0,+1){20.00}}
\end{picture}
\caption{Example of a simple galaxy with dust $\{a\}$, root $\{b,c\}$ and kernel $\{d,e,f\}$ and where $A{=}\{a,b,c\}$, $B{=}\{d,e,f\}$, $\rho(a){=}\emptyset$, $\rho(b){=}\{d,e,f\}$ and $\rho(c){=}\{f\}$.}
\end{center}
\end{figure}
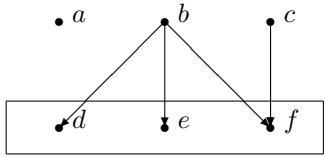
%
%
%
%
%
%
%
%
\begin{lemma}\label{lemma:every:frame:disjoint:union:galaxies}
Every Euclidean frame is the union of a disjoint indexed family of galaxies.
\end{lemma}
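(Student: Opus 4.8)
The plan is to decompose an arbitrary Euclidean frame $(W,R)$ along its connectivity structure and recognise each piece as a galaxy in the sense just defined. First I would recall that $\dust(W,R)$, $\root(W,R)$ and $\kernel(W,R)$ partition $W$, and note that on the kernel the Euclidean property forces a very rigid shape: if $s,t\in\kernel(W,R)$ lie in the same $R$-connected component, then $sRt$ and $tRs$; more precisely, for any $w$ with $R^{-1}(w)\neq\emptyset$ pick $u$ with $uRw$, and for any $w'$ with $uRw'$ the Euclidean condition gives $wRw'$ and $w'Rw$, so the kernel breaks up into ``clusters'' each of which is a full relation $B\times B$ on a set $B$, with no $R$-edges between distinct clusters. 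This is exactly the lower-part structure of a galaxy.

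Next I would define the index set $I$ and the family itself. For each kernel-cluster $B$ (a maximal subset of $\kernel(W,R)$ on which $R$ restricts to $B\times B$), let $A_B$ be the set of all $s\in\root(W,R)$ with $R(s)\subseteq B$ — one checks using the Euclidean property that $R(s)$ for a root point is always contained in a single cluster, since if $sRt$ and $sRu$ then $tRu$, so $R(s)$ is $R$-connected inside the kernel. Put $\rho_B(s)=R(s)$ for $s\in A_B$; then ${\mathcal F}_{A_B,B}^{\rho_B}$ is a galaxy whose universe is $A_B\cup B$, whose root is $A_B\setminus\rho_B^{-1}(\emptyset)=A_B$ (root points have $R(s)\neq\emptyset$), and whose relation is $\bigcup\{\{s\}\times R(s):s\in A_B\}\cup(B\times B)$, which is exactly $R$ restricted to $A_B\cup B$. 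Separately, each $w\in\dust(W,R)$ is handled by a one-point galaxy: take $A=\{w\}$, $B=\emptyset$, $\rho(w)=\emptyset$ (this satisfies $A\cap B=\emptyset$ and $A\cup B\neq\emptyset$, and gives the frame $(\{w\},\emptyset)$). The index set $I$ is then the set of kernel-clusters together with the dust points, and the family $\{(W_i,R_i):i\in I\}$ is disjoint because clusters are pairwise disjoint, each root point is assigned to exactly one cluster, and dust points are singled out individually.

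Finally I would verify that the union of this family is $(W,R)$: the universes cover $W$ because every point is dust, root, or kernel, and is placed accordingly; the universes are pairwise disjoint by the previous paragraph; and the relations sum correctly because every $R$-edge $sRt$ has $t\in R(s)\neq\emptyset$, so $s$ is not dust — if $s$ is a root point then the edge lives in the galaxy $A_B\cup B$ for the cluster $B\supseteq R(s)$, and if $s$ is a kernel point then $t$ is too and both lie in the same cluster, so the edge is in $B\times B$; conversely every edge of every $R_i$ is an edge of $R$ by construction.

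**Main obstacle.** The one genuinely load-bearing step is showing that the kernel splits into clean disjoint full clusters and that every root point's successor set sits inside one such cluster; everything else is bookkeeping about the partition and about unwinding the definition of galaxy and of union. I would isolate the cluster analysis as a short preliminary observation (``for every Euclidean frame, the relation $R$ restricted to $\{w:R^{-1}(w)\neq\emptyset\}\cup R(s)$-type sets is an equivalence-style block structure''), prove it directly from the two instances $sRt\wedge sRu\Rightarrow tRu\wedge uRt$, and then the construction above goes through mechanically.
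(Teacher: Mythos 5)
Your proof is correct: the observation that every kernel point is reflexive and that $R$ restricted to the kernel is an equivalence relation (so the kernel splits into disjoint full clusters $B$ with $R\cap(B\times B)=B\times B$ and no edges between clusters), the assignment of each root point $s$ to the unique cluster containing $R(s)$ with $\rho(s)=R(s)$, and the singleton galaxies $A=\{w\}$, $B=\emptyset$ for dust points yield exactly the required disjoint family, and the verification that the union of these galaxies is $(W,R)$ goes through as you describe. The paper states this lemma without proof, and your argument is the evident intended one; the only step worth writing out in full is the cluster analysis you correctly single out as load-bearing, which follows from the two Euclidean instances you cite.
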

%
%
%
%
%
%
%
%
%
%
Let ${\mathcal K}_{2}$ be the class of all galaxies ${\mathcal F}^{\rho}_{A,B}$ such that ${\parallel}A{\parallel}{\geq}4$, ${\parallel}B{\parallel}{\geq}4$ and for all $s{\in}A$, ${\parallel}\rho(s){\parallel}{=}2$.
Let ${\mathcal L}_{2}$ be the class of all galaxies ${\mathcal F}^{\rho}_{A,B}$ such that ${\parallel}A{\parallel}{\geq}4$, ${\parallel}B{\parallel}{\geq}4$ and for all $s{\in}A$, ${\parallel}B{\setminus}\rho(s){\parallel}{=}2$.
\begin{figure}[ht]
\begin{center}
\begin{picture}(-50,50)(-50,50)
\put(-20.00,50.00){\circle*{3}}
\put(-15.00,50.00){$h$}
\put(-60.00,50.00){\circle*{3}}
\put(-55.00,50.00){$g$}
\put(-100.00,50.00){\circle*{3}}
\put(-95.00,50.00){$f$}
\put(-140.00,50.00){\circle*{3}}
\put(-135.00,50.00){$e$}
\put(-20.00,90.00){\circle*{3}}
\put(-15.00,90.00){$d$}
\put(-60.00,90.00){\circle*{3}}
\put(-55.00,90.00){$c$}
\put(-100.00,90.00){\circle*{3}}
\put(-95.00,90.00){$b$}
\put(-140.00,90.00){\circle*{3}}
\put(-135.00,90.00){$a$}
\put(-60.00,90.00){\vector(+1,-1){40.00}}
\put(-60.00,90.00){\vector(-1,-1){40.00}}
\put(-20.00,90.00){\vector(-1,-1){40.00}}
\put(-20.00,90.00){\vector(0,-1){40.00}}
\put(-140.00,90.00){\vector(0,-1){40.00}}
\put(-140.00,90.00){\vector(+1,-1){40.00}}
\put(-100.00,90.00){\vector(-1,-1){40.00}}
\put(-100.00,90.00){\vector(+1,-1){40.00}}
\put(-160.00,40.00){\line(+1,0){160.00}}
\put(-160.00,60.00){\line(+1,0){160.00}}
\put(-160.00,40.00){\line(0,+1){20.00}}
\put(-00.00,40.00){\line(0,+1){20.00}}
\end{picture}
\caption{Example of a galaxy in ${\mathcal K}_{2}$ and ${\mathcal L}_{2}$, with dust $\emptyset$, root $\{a,b,c,d\}$ and kernel $\{e,f,g,h\}$ and where $A{=}\{a,b,c,d\}$, $B{=}\{e,f,g,h\}$, $\rho(a){=}\{e,f\}$, $\rho(b){=}\{e,g\}$, $\rho(c){=}\{f,h\}$ and $\rho(d){=}\{g,h\}$.}
\end{center}
\end{figure}
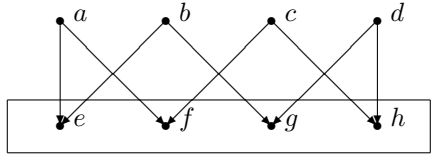
\paragraph{Flowers}
Let ${\mathcal F}_{0}^{0}{=}(W_{0}^{0},R_{0}^{0})$ be the Euclidean frame (called {\it isolated point}) such that
\begin{itemize}
\item $W_{0}^{0}{=}\{0\}$,
\item $R_{0}^{0}{=}\emptyset$.
\end{itemize}
%
%
%
%
%
%
%
%
%
%
%
%
%
%
%
%
%
%
%
%
Obviously, $\dust({\mathcal F}_{0}^{0}){=}\{0\}$, $\root({\mathcal F}_{0}^{0}){=}\emptyset$ and $\kernel({\mathcal F}_{0}^{0}){=}\emptyset$.
%
%
%
%
%
%
Moreover, the isolated point is a finite galaxy.
%
%
%
%
%
%
In other respect, ${\mathcal F}_{0}^{0}$ is not simple.
%
%
\\
\\
%
%
For all $m{\in}\mathbf{N}^{+}$ and for all $n{\in}\mathbf{N}^{-}$, let ${\mathcal F}_{m}^{n}{=}(W_{m}^{n},R_{m}^{n})$ be the Euclidean frame (called {\it flower}) such that if $n{\in}\N$ then
\begin{itemize}
\item $W_{m}^{n}{=}\lsem0,m{+}n\rsem$,
\item $R_{m}^{n}{=}(\{0\}{\times}\lsem1,m\rsem){\cup}(\lsem1,m{+}n\rsem{\times}\lsem1,m{+}n\rsem)$,
\end{itemize}
and if $n{=}{-}1$ then
\begin{itemize}
\item $W_{m}^{n}{=}\lsem1,m\rsem$,
\item $R_{m}^{n}{=}\lsem1,m\rsem{\times}\lsem1,m\rsem$.
\end{itemize}
Obviously, for all flowers ${\mathcal F}_{m}^{n}$, if $n{\in}\N$ then $\dust({\mathcal F}_{m}^{n}){=}\emptyset$, $\root({\mathcal F}_{m}^{n}){=}\{0\}$ and $\kernel({\mathcal F}_{m}^{n}){=}\lsem1,m{+}n\rsem$ and if $n{=}{-}1$ then $\dust({\mathcal F}_{m}^{n}){=}\emptyset$, $\root({\mathcal F}_{m}^{n}){=}\emptyset$ and
\linebreak$
\kernel({\mathcal F}_{m}^{n}){=}\lsem1,m\rsem$.
Moreover, every flower is a finite galaxy.
\begin{figure}[ht]
\begin{center}
\begin{picture}(-50,50)(-50,50)
\put(-120.00,50.00){\circle*{3}}
\put(-115.00,50.00){$3$}
\put(-160.00,50.00){\circle*{3}}
\put(-155.00,50.00){$2$}
\put(-200.00,50.00){\circle*{3}}
\put(-195.00,50.00){$1$}
\put(-160.00,90.00){\circle*{3}}
\put(-155.00,90.00){$0$}
\put(-160.00,90.00){\vector(-1,-1){40.00}}
\put(-160.00,90.00){\vector(0,-1){40.00}}
\put(-220.00,40.00){\line(+1,0){120.00}}
\put(-220.00,60.00){\line(+1,0){120.00}}
\put(-220.00,40.00){\line(0,+1){20.00}}
\put(-100.00,40.00){\line(0,+1){20.00}}
\put(40.00,50.00){\circle*{3}}
\put(45.00,50.00){$3$}
\put(00.00,50.00){\circle*{3}}
\put(+05.00,50.00){$2$}
\put(-40.00,50.00){\circle*{3}}
\put(-35.00,50.00){$1$}
\put(-60.00,40.00){\line(+1,0){120.00}}
\put(-60.00,60.00){\line(+1,0){120.00}}
\put(-60.00,40.00){\line(0,+1){20.00}}
\put(+60.00,40.00){\line(0,+1){20.00}}
\end{picture}
\caption{Flowers ${\mathcal F}_{2}^{1}$ and ${\mathcal F}_{3}^{-1}$.}
\end{center}
\end{figure}
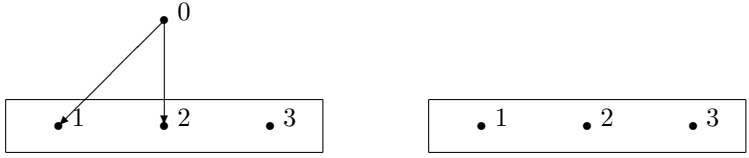
%
%
%
%
%
%
%
%
%
%
%
%
%
%
\begin{lemma}\label{lemma:about:flowers:simple:galaxies}
For all flowers ${\mathcal F}_{m}^{n}$, ${\mathcal F}_{m}^{n}$ is simple if and only if either $m{=}1$, or $n{=}{-}1$, or $n{=}0$, or $n=1$.
\end{lemma}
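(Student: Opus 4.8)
The plan is to unfold the definition of ``simple galaxy'' applied to a flower, distinguishing the two cases in the definition of ${\mathcal F}_{m}^{n}$, namely $n{=}{-}1$ and $n{\in}\N$. Recall that a galaxy ${\mathcal F}_{A,B}^{\rho}$ is simple iff $B{\not=}\emptyset$ and for every $s{\in}A$, either ${\parallel}\rho(s){\parallel}{\leq}1$ or ${\parallel}B{\setminus}\rho(s){\parallel}{\leq}1$; so the argument reduces to reading off, for a flower, its upper part $A$, its lower part $B$ and the map $\rho$, and then computing two cardinalities. First I would treat the case $n{=}{-}1$: here $W_{m}^{n}{=}\lsem1,m\rsem$ and $R_{m}^{n}{=}\lsem1,m\rsem{\times}\lsem1,m\rsem$, so viewed as a galaxy ${\mathcal F}_{m}^{-1}$ has upper part $A{=}\emptyset$, lower part $B{=}\lsem1,m\rsem$ and $\rho$ the empty function. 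Since $m{\in}\mathbf{N}^{+}$ we have $B{\not=}\emptyset$, and the condition universally quantified over $A$ holds vacuously; hence ${\mathcal F}_{m}^{-1}$ is simple. This accounts for the disjunct ``$n{=}{-}1$'' on the right-hand side, and explains why no constraint on $m$ appears in that branch.

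Next I would treat the case $n{\in}\N$. Here $W_{m}^{n}{=}\lsem0,m{+}n\rsem$ and $R_{m}^{n}{=}(\{0\}{\times}\lsem1,m\rsem){\cup}(\lsem1,m{+}n\rsem{\times}\lsem1,m{+}n\rsem)$, so as a galaxy ${\mathcal F}_{m}^{n}$ has $A{=}\{0\}$, $B{=}\lsem1,m{+}n\rsem$ and $\rho(0){=}\lsem1,m\rsem$. Since $m{+}n{\geq}1$, $B{\not=}\emptyset$, so ${\mathcal F}_{m}^{n}$ is simple iff ${\parallel}\rho(0){\parallel}{\leq}1$ or ${\parallel}B{\setminus}\rho(0){\parallel}{\leq}1$. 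Now ${\parallel}\rho(0){\parallel}{=}{\parallel}\lsem1,m\rsem{\parallel}{=}m$, while $B{\setminus}\rho(0){=}\lsem m{+}1,m{+}n\rsem$, so ${\parallel}B{\setminus}\rho(0){\parallel}{=}n$. Hence ${\mathcal F}_{m}^{n}$ is simple iff $m{\leq}1$ or $n{\leq}1$; since $m{\in}\mathbf{N}^{+}$ forces $m{=}1$ when $m{\leq}1$, and $n{\in}\N$ forces $n{\in}\{0,1\}$ when $n{\leq}1$, this is exactly: $m{=}1$ or $n{=}0$ or $n{=}1$.

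Combining the two cases yields the claimed equivalence. There is no real obstacle here: the whole content is the bookkeeping of correctly identifying the galaxy decomposition $(A,B,\rho)$ of a flower in each of the two cases — in particular noticing that for $n{=}{-}1$ the upper part $A$ is empty, which makes simplicity automatic — together with the two one-line cardinality computations ${\parallel}\rho(0){\parallel}{=}m$ and ${\parallel}B{\setminus}\rho(0){\parallel}{=}n$. If one prefers, the case $n{\in}\N$ can instead be run through the alternative characterization of simplicity recorded just before the statement (``$\rho(s){=}\emptyset$, or $\rho(s){=}B$, or $\rho(s){=}\{t\}$, or $\rho(s){=}B{\setminus}\{t\}$''), which gives the same three subcases $\rho(0){=}B$ iff $n{=}0$, ${\parallel}\rho(0){\parallel}{=}1$ iff $m{=}1$, and ${\parallel}B{\setminus}\rho(0){\parallel}{=}1$ iff $n{=}1$.
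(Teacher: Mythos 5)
Your proof is correct: the paper states this lemma without proof, and your argument is exactly the intended routine verification, namely reading off the (unique) galaxy decomposition of a flower~---~$A{=}\emptyset$, $B{=}\lsem1,m\rsem$ when $n{=}{-}1$, so simplicity holds vacuously, and $A{=}\{0\}$, $B{=}\lsem1,m{+}n\rsem$, $\rho(0){=}\lsem1,m\rsem$ when $n{\in}\N$~---~and computing ${\parallel}\rho(0){\parallel}{=}m$ and ${\parallel}B{\setminus}\rho(0){\parallel}{=}n$. Your final bookkeeping correctly reconciles the case split with the stated disjunction ($m{=}1$ with $n{=}{-}1$ is covered by the $n{=}{-}1$ branch), so nothing is missing.
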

Notice that for all $m{\in}\mathbf{N}^{+}{\cup}\{0\}$ and for all $n{\in}\mathbf{N}^{-}$, if either $m{\not=}0$, or $n{=}0$ then either ${\mathcal F}_{m}^{n}$ is the isolated point, or ${\mathcal F}_{m}^{n}$ is a flower.
\section{Generated subframes, bounded morphic images and Ehrenfeucht-Fra\"\i ss\'e games}
In this section, we introduce the standard concepts of generated subframes, bounded morphic images and Ehrenfeucht-Fra\"\i ss\'e games.
\paragraph{Generated subframes}
Let $(W,R),(W^{\prime},R^{\prime})$ be frames.
\\
\\
$(W,R)$ is a {\it generated subframe of $(W^{\prime},R^{\prime})$}\/ if the following conditions are satisfied:
\begin{itemize}
\item $W{\subseteq}W^{\prime}$,
\item $R{\subseteq}R^{\prime}$,
\item for all $s{\in}W$ and for all $t^{\prime}{\in}W^{\prime}$, if $s{R^{\prime}}t^{\prime}$ then $t^{\prime}{\in}W$ and $s{R}t^{\prime}$.
\end{itemize}
For all $s{\in}W$, let $(W_{s},R_{s})$ be the least generated subframe of $(W,R)$ containing $s$.
For all $s{\in}W$, {\it $s$ generates $(W,R)$}\/ if the following conditions are satisfied:
\begin{itemize}
\item $W_{s}{=}W$,
\item $R_{s}{=}R$.
\end{itemize}
$(W,R)$ is {\it rooted}\/ if there exists $s{\in}W$ such that $s$ generates $(W,R)$.
See~\cite[Chapter~$3$]{Blackburn:deRijke:Venema:2001}.
\\
\\
Obviously, if $(W,R)$ is finite, rooted and Euclidean then either $(W,R)$ is isomorphic with the isolated point, or $(W,R)$ is isomorphic with a flower.
\begin{lemma}\label{lemma:generated:subframe:from:a:single:point:has:specific:shapes}
If $(W,R)$ is Euclidean then for all $s{\in}W$, either $R_{s}{=}W_{s}{\times}W_{s}$, or $W_{s}{=}\{s\}{\cup}C$ and $R_{s}{=}(\{s\}{\times}B){\cup}(C{\times}C)$ for some sets $B,C$ such that either $B{\not=}\emptyset$, or $C{=}\emptyset$, $B{\subseteq}C$ and $s{\not\in}C$.
\end{lemma}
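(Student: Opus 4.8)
The plan is to unfold the definition of the least generated subframe and then split on whether $s$ is reflexive. Recall that $W_s$ is the smallest subset of $W$ containing $s$ and closed under $R$-successors --- equivalently, the set of points reachable from $s$ along a finite (possibly empty) $R$-path --- and that $R_s=R\cap(W_s{\times}W_s)$. If $sRs$, I would aim for the first alternative: Euclidicity applied to $sRt$ and $sRu$ gives $tRu$ for all $t,u{\in}R(s)$, so $R(s){\times}R(s){\subseteq}R$, and $s{\in}R(s)$; moreover $R(s)$ is closed under successors, since $t{\in}R(s)$ and $tRv$ together with $sRs$, $sRt$ yield first $tRs$ and then $sRv$, so $v{\in}R(s)$. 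Hence $W_s{=}R(s)$ and $R_s{=}W_s{\times}W_s$. If $s$ is irreflexive and $R(s){=}\emptyset$, then $W_s{=}\{s\}$ and $R_s{=}\emptyset$, and one takes $B{=}C{=}\emptyset$ in the second alternative (the clause ``$C{=}\emptyset$'' being the one invoked).

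The substantive case is $s$ irreflexive with $R(s){\not=}\emptyset$. Put $B{=}R(s)$ and $C{=}W_s{\setminus}\{s\}$; since $s$ is irreflexive, $s{\notin}B$, so $B{\subseteq}C$, $s{\notin}C$, and $B{\not=}\emptyset$, and tautologically $W_s{=}\{s\}{\cup}C$. The whole case then reduces to proving two claims: (i) $C{\times}C{\subseteq}R$, and (ii) no $v{\in}C$ satisfies $vRs$. Granting them, one simply partitions $W_s{\times}W_s$ into the four blocks $\{s\}{\times}\{s\}$, $\{s\}{\times}C$, $C{\times}\{s\}$, $C{\times}C$ and intersects with $R$: the first block contributes nothing ($s$ irreflexive), the second gives $\{s\}{\times}B$ (as $B{\subseteq}C$), the third nothing (by (ii)), and the fourth gives $C{\times}C$ (by (i)); so $R_s{=}(\{s\}{\times}B){\cup}(C{\times}C)$, which is exactly the second alternative.

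For (i) and (ii) I would first record the auxiliary facts, all immediate from Euclidicity: every point possessing an $R$-predecessor is reflexive (apply the Euclidean condition to $xRv$ and $xRv$), hence in particular every point of $C$ is reflexive; $B{\times}B{\subseteq}R$ (from $sRt$, $sRu$); and if $x$ is reflexive then $xRy$ and $yRz$ force $xRz$ (``transitivity through a reflexive point'': from $xRx$, $xRy$ get $yRx$, then from $yRx$, $yRz$ get $xRz$). Using this last fact by induction along a path $s\,R\,a_1\,R\cdots R\,a_p{=}v$ shows $a_1Rv$ for every $v{\in}C$, where the first vertex $a_1$ lies in $B$. Then for an arbitrary $b{\in}B$, combining $a_1Rb$ (from $B{\times}B{\subseteq}R$) with $a_1Rv$ via Euclidicity yields $bRv$ and $vRb$; thus every $b{\in}B$ $R$-sees, and is $R$-seen by, all of $C$. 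Claim (i) follows by one more Euclidean step: given $v,w{\in}C$ and any $b{\in}B$, from $bRv$ and $bRw$ we get $vRw$. Claim (ii): if $vRs$ with $v{\in}C$, pick $b{\in}B$; from $vRb$ and $vRs$ Euclidicity gives $bRs$, and then from $bRs$ and $bRs$ it gives $sRs$, contradicting irreflexivity.

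The one real obstacle is that Euclidean frames are not transitive, so one cannot naively ``follow paths'' to transport edges; the workaround is precisely the observation that every point reachable from $s$ in at least one step has a predecessor and is therefore reflexive, which restores just enough transitivity (through those reflexive points) to push the single edge $a_1Rv$ and then the whole block $B{\times}B$ of edges out to all of $C$, and to derive the self-loop $sRs$ from any hypothetical back-edge $vRs$. Everything else is routine manipulation of the Euclidean condition.
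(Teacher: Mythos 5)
The paper states this lemma without proof (it is among the ``relatively simple'' proofs omitted), so there is no official argument to compare against; your proof is correct and is the natural intended one. The case split on whether $s$ is reflexive, the key observation that any point with an $R$-predecessor is reflexive (which restores transitivity through such points and lets you propagate $a_{1}Rv$ and then all of $B{\times}C$, $C{\times}B$, $C{\times}C$), the exclusion of back-edges $vRs$ via the derived loop $sRs$, and the final partition of $W_{s}{\times}W_{s}$ giving $R_{s}{=}(\{s\}{\times}B){\cup}(C{\times}C)$ (with $B{=}C{=}\emptyset$ when $R(s){=}\emptyset$, and $R_{s}{=}W_{s}{\times}W_{s}$ when $sRs$) are all sound and cover every case.
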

\begin{lemma}\label{lemma:about:generated:subframe:and:disjoint:unions}
Let $\{(W_{i},R_{i}):\ i{\in}I\}$ be a disjoint indexed family of Euclidean frames.
If $(W,R)$ is the union of $\{(W_{i},R_{i}):\ i{\in}I\}$ then for all $i{\in}I$, $(W_{i},R_{i})$ is a generated subframe of $(W,R)$.
\end{lemma}
%
%
%
%
%
%
%
%
%
%
%
%
%
%
%
%
%
%
%
%
%
%
%
%
%
%
%
%
%
%
%
%
%
%
%
%
%
%
\paragraph{Bounded morphic images}
Let $q{\geq}3$.
\\
\\
Let $(W,R),(W^{\prime},R^{\prime})$ be frames.
\\
\\
A function $f:\ W^{\prime}{\longrightarrow}W$ is a {\it bounded morphism from $(W^{\prime},R^{\prime})$ to $(W,R)$}\/ if the following conditions are satisfied:
\begin{itemize}
\item for all $s^{\prime},t^{\prime}{\in}W^{\prime}$, if $s^{\prime}{R^{\prime}}t^{\prime}$ then $f(s^{\prime}){R}f(t^{\prime})$,
\item for all $s^{\prime}{\in}W^{\prime}$ and for all $t{\in}W$, if $f(s^{\prime}){R}t$ then there exists $t^{\prime}{\in}W^{\prime}$ such that $s^{\prime}{R^{\prime}}t^{\prime}$ and $f(t^{\prime})=t$.
\end{itemize}
$(W,R)$ is a {\it bounded morphic image of $(W^{\prime},R^{\prime})$}\/ if there exists a surjective bounded morphism from $(W^{\prime},R^{\prime})$ to $(W,R)$.
See~\cite[Chapter~$3$]{Blackburn:deRijke:Venema:2001}.
%
%
%
%
%
%
%
%
%
%
%
%
%
%
%
%
\begin{lemma}\label{lemma:about:bounded:morphisms:in:the:situation:of F:m:n:frames}
For all flowers ${\mathcal F}_{m}^{n},{\mathcal F}_{m^{\prime}}^{n^{\prime}}$, ${\mathcal F}_{m}^{n}$ is a bounded morphic image of ${\mathcal F}_{m^{\prime}}^{n^{\prime}}$ if and only if $(m,n){\ll}(m^{\prime},n^{\prime})$.
\end{lemma}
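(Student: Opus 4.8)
I would prove the two implications of the equivalence separately. For the ``if'' direction I would write down an explicit surjective bounded morphism; for the ``only if'' direction I would extract the inequalities $m\leq m'$ and $n\leq n'$ from the structural features that bounded morphisms preserve and reflect, using the descriptions of $\root$ and $\kernel$ of flowers recorded above.

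\textbf{The ``if'' direction.} Assume $m\leq m'$ and $n\leq n'$. If $n'=-1$ then $n=-1$ as well, both flowers are single clusters, and \emph{any} surjection from $\lsem1,m'\rsem$ onto $\lsem1,m\rsem$ is easily seen to be a bounded morphism. If $n'\geq0$, I would define $f\colon W_{m'}^{n'}\longrightarrow W_{m}^{n}$ by: sending the source root $0$ to $0$ when $n\geq0$ and to $1$ when $n=-1$; letting $f$ be the identity on $\lsem1,m\rsem$; setting $f(m'{+}j){=}m{+}j$ for each $j$ with $1\leq j\leq n$; and sending every remaining element of $\kernel({\mathcal F}_{m'}^{n'})$ to $1$. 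This $f$ is clearly surjective; the forth condition holds because every edge of $R_{m'}^{n'}$ is sent either inside $\kernel({\mathcal F}_{m}^{n})$ or out of $\root({\mathcal F}_{m}^{n})$ along an edge of $R_{m}^{n}$; and the back condition holds because the restriction of $f$ to $\kernel({\mathcal F}_{m'}^{n'})$ is already onto $\kernel({\mathcal F}_{m}^{n})$ and the restriction of $f$ to $\lsem1,m'\rsem$, i.e. to the successors of the source root, is already onto the successors of $f(0)$ in ${\mathcal F}_{m}^{n}$.

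\textbf{The ``only if'' direction.} Suppose $f$ is a surjective bounded morphism from ${\mathcal F}_{m'}^{n'}$ onto ${\mathcal F}_{m}^{n}$. The facts I would use are: the forth condition maps reflexive points to reflexive points; in ${\mathcal F}_{k}^{l}$ the reflexive points are exactly the elements of $\kernel({\mathcal F}_{k}^{l})=\lsem1,k{+}l\rsem$; the root $0$ is the only irreflexive point when $l\geq0$; and there is no irreflexive point when $l=-1$. First suppose $n\geq0$. Then $0\in W_{m}^{n}$ is irreflexive, so its non-empty preimage under $f$ consists of irreflexive points of ${\mathcal F}_{m'}^{n'}$; hence $n'\geq0$ and $f^{-1}(0)=\{0\}$, so $f(0){=}0$. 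Applying the forth condition to the edges from $0$ in the source gives $f(\lsem1,m'\rsem)\subseteq\lsem1,m\rsem$, and applying the back condition at $0$ gives $f(\lsem1,m'\rsem)=\lsem1,m\rsem$, whence $m\leq m'$; moreover every element of $\lsem m{+}1,m{+}n\rsem$ is reflexive, is not $f(0)$, and does not lie in $f(\lsem1,m'\rsem)$, so it must be $f$ of some element of $\lsem m'{+}1,m'{+}n'\rsem$, and therefore $n\leq n'$. Now suppose $n=-1$; then $n\leq n'$ is immediate. If $n'=-1$, then $f$ is a surjection between clusters of sizes $m'$ and $m$, so $m\leq m'$. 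If $n'\geq0$, then $f(0)$ lies in the complete cluster $W_{m}^{-1}$ and the back condition at the irreflexive source root $0$ forces the $m'$ successors of $0$ to be mapped by $f$ onto all of $\lsem1,m\rsem$, so again $m\leq m'$.

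\textbf{Main obstacle.} The delicate point will be the inequality $n\leq n'$ in the ``only if'' direction: comparing the sizes of the kernels is not enough, and one must run the whole chain — using surjectivity and the absence of irreflexive points to rule out $n'=-1$, forcing $f(0){=}0$, and then using the forth condition on the edges leaving the source root to confine the images of the root-visible kernel elements to $\lsem1,m\rsem$ — before the counting argument on the root-invisible kernel elements can be carried out. The ``if'' direction is conceptually routine but needs careful bookkeeping, the case $n=-1\leq n'$ (source has a root, target does not) being the one where the construction changes shape.
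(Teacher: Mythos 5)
Your proposal is correct: the paper states this lemma without giving a proof, and your argument supplies it in the expected way — an explicit surjective bounded morphism (identity on $\lsem1,m\rsem$, a shift on the last $n$ kernel points, everything else collapsed) for the ``if'' direction, and, for the ``only if'' direction, the observation that reflexive points map to reflexive points, which pins $f(0){=}0$ when $n{\geq}0$ and lets the forth/back conditions at the root plus surjectivity yield $m{\leq}m^{\prime}$ and $n{\leq}n^{\prime}$ by counting preimages. All cases ($n{=}{-}1$ versus $n{\geq}0$ on each side) are handled, so nothing is missing.
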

\begin{lemma}\label{lemma:about:k:s:new:element:bmi:of:W:R}
Let $k$ be an integer such that $k{\geq}4$.
Let $B,C$ be sets such that $B{\subseteq}C$, ${\parallel}C{\parallel}{>}k$, ${\parallel}B{\parallel}{>}1$ and ${\parallel}C{\setminus}B{\parallel}{>}1$.
Let $m{=}\min\{{\parallel}B{\parallel},k\}$ and $n{=}\min\{{\parallel}C{\setminus}B{\parallel},k\}$.
Let $s$ be a new element.
If $W{=}\{s\}{\cup}C$ and $R{=}(\{s\}{\times}B){\cup}(C{\times}C)$ then ${\mathcal F}_{m}^{n}$ is a bounded morphic image of $(W,R)$.
\end{lemma}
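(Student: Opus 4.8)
The plan is to exhibit an explicit surjective bounded morphism from $(W,R)$ onto ${\mathcal F}_{m}^{n}$. First I would record the trivial arithmetic hidden in the hypotheses: since ${\parallel}B{\parallel}{>}1$, ${\parallel}C{\setminus}B{\parallel}{>}1$ and $k{\geq}4$, we have $m{=}\min\{{\parallel}B{\parallel},k\}{\in}\lsem2,k\rsem$ and $n{=}\min\{{\parallel}C{\setminus}B{\parallel},k\}{\in}\lsem2,k\rsem$; in particular $m{\in}\mathbf{N}^{+}$, $n{\in}\N{\subseteq}\mathbf{N}^{-}$ with $n{\geq}0$, $m{\leq}{\parallel}B{\parallel}$ and $n{\leq}{\parallel}C{\setminus}B{\parallel}$. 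Hence ${\mathcal F}_{m}^{n}$ falls under the ``$n{\in}\N$'' clause of the definition of flower, so that $W_{m}^{n}{=}\lsem0,m{+}n\rsem$ and $R_{m}^{n}{=}(\{0\}{\times}\lsem1,m\rsem){\cup}(\lsem1,m{+}n\rsem{\times}\lsem1,m{+}n\rsem)$. It is also worth noticing that $(W,R)$ is itself a galaxy (it is ${\mathcal F}_{\{s\},C}^{\rho}$ for $\rho$ the function with $\rho(s){=}B$, using that $s$ is new, so $\{s\}{\cap}C{=}\emptyset$), with upper part $\{s\}$ and lower part $C$; incidentally, the hypothesis ${\parallel}C{\parallel}{>}k$ will play no role in the argument. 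One might be tempted to reduce to Lemma~\ref{lemma:about:bounded:morphisms:in:the:situation:of F:m:n:frames}, but that lemma concerns flowers, i.e. finite frames, whereas here $C$ may be infinite, so a direct construction is cleaner.

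For the construction, note that $\{s\}$, $B$ and $C{\setminus}B$ partition $W$. I would choose a surjection $g:\ B{\longrightarrow}\lsem1,m\rsem$, which exists because ${\parallel}B{\parallel}{\geq}m$, and a surjection $h:\ C{\setminus}B{\longrightarrow}\lsem m{+}1,m{+}n\rsem$, which exists because ${\parallel}C{\setminus}B{\parallel}{\geq}n$ and ${\parallel}\lsem m{+}1,m{+}n\rsem{\parallel}{=}n$. Then define $f:\ W{\longrightarrow}W_{m}^{n}$ by $f(s){=}0$, $f(w){=}g(w)$ for $w{\in}B$, and $f(w){=}h(w)$ for $w{\in}C{\setminus}B$. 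This $f$ is surjective, since its image is $\{0\}{\cup}\lsem1,m\rsem{\cup}\lsem m{+}1,m{+}n\rsem{=}\lsem0,m{+}n\rsem$, and moreover $f(C){=}\lsem1,m{+}n\rsem$.

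It then remains to check that $f$ is a bounded morphism, which is where the small amount of case analysis lies. For the forth condition, take $w{R}w^{\prime}$: if $w{=}s$ then $w^{\prime}{\in}B$, so $(f(w),f(w^{\prime})){=}(0,g(w^{\prime})){\in}\{0\}{\times}\lsem1,m\rsem{\subseteq}R_{m}^{n}$; if $w{\in}C$ then, since $w{\neq}s$, the pair $(w,w^{\prime})$ cannot come from $\{s\}{\times}B$, so it comes from $C{\times}C$ and $w^{\prime}{\in}C$, whence $f(w),f(w^{\prime}){\in}\lsem1,m{+}n\rsem$ and $(f(w),f(w^{\prime})){\in}\lsem1,m{+}n\rsem{\times}\lsem1,m{+}n\rsem{\subseteq}R_{m}^{n}$. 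For the back condition, take $w{\in}W$ and $z{\in}W_{m}^{n}$ with $f(w){R_{m}^{n}}z$: if $w{=}s$ then $f(w){=}0$, so $z{\in}R_{m}^{n}(0){=}\lsem1,m\rsem$, and surjectivity of $g$ yields $w^{\prime}{\in}B$ with $f(w^{\prime}){=}z$, while $s{R}w^{\prime}$ holds because $w^{\prime}{\in}B$; if $w{\in}C$ then $f(w){\in}\lsem1,m{+}n\rsem$, so $z{\in}R_{m}^{n}(f(w)){=}\lsem1,m{+}n\rsem$, and since $f(C){=}\lsem1,m{+}n\rsem$ there is $w^{\prime}{\in}C$ with $f(w^{\prime}){=}z$, while $w{R}w^{\prime}$ holds because $w,w^{\prime}{\in}C$. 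This shows $f$ is a surjective bounded morphism from $(W,R)$ to ${\mathcal F}_{m}^{n}$. I do not anticipate any genuine obstacle; the only points demanding care are the cardinality bookkeeping that produces the two surjections (making sure $m\leq{\parallel}B{\parallel}$ and $n\leq{\parallel}C{\setminus}B{\parallel}$ even when these cardinalities are infinite) and the observation that any $w{\in}C$ has all its $R$-successors inside the clique $C$, which is exactly what lets the back-condition witnesses be found.
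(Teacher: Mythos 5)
Your proof is correct. The paper states this lemma without proof, and your explicit construction~---~sending $s$ to $0$, collapsing $B$ onto $\lsem1,m\rsem$ via a surjection and $C{\setminus}B$ onto $\lsem m{+}1,m{+}n\rsem$ via a surjection, then checking the forth and back conditions by the case split $w{=}s$ versus $w{\in}C$~---~is exactly the natural intended argument; your side remarks (that $m,n{\in}\lsem2,k\rsem$, that the cardinality bookkeeping works for infinite $B$ and $C{\setminus}B$, and that the hypothesis ${\parallel}C{\parallel}{>}k$ is not actually needed) are also accurate.
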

\begin{lemma}\label{lemma:about:partitions:q:greater:smaller}
If $R{=}W{\times}W$, $R^{\prime}{=}W^{\prime}{\times}W^{\prime}$ and either ${\parallel}W^{\prime}{\parallel}{\leq}q$ and ${\parallel}W^{\prime}{\parallel}{=}{\parallel}W{\parallel}$, or ${\parallel}W^{\prime}{\parallel}{>}q$ and ${\parallel}W{\parallel}{=}q$ then $(W,R)$ is a bounded morphic image of $(W^{\prime},R^{\prime})$.
\end{lemma}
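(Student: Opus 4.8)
The plan is to treat the two cases of the disjunctive hypothesis separately, but in both cases to rely on the same elementary observation: when $R{=}W{\times}W$ and $R^{\prime}{=}W^{\prime}{\times}W^{\prime}$, \emph{every} function $f:\ W^{\prime}{\longrightarrow}W$ satisfies the first (forth) condition of a bounded morphism, since $f(s^{\prime}){R}f(t^{\prime})$ holds vacuously for all $s^{\prime},t^{\prime}{\in}W^{\prime}$, $R$ being the universal relation on $W$; and \emph{every surjective} such $f$ satisfies the second (back) condition, since given $s^{\prime}{\in}W^{\prime}$ and $t{\in}W$ with $f(s^{\prime}){R}t$, one may pick any $t^{\prime}$ in the nonempty set $f^{{-}1}(\{t\})$, and then $s^{\prime}{R^{\prime}}t^{\prime}$ holds because $R^{\prime}$ is universal while $f(t^{\prime}){=}t$ by the choice of $t^{\prime}$. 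Thus in each case it suffices to exhibit a surjection $f:\ W^{\prime}{\longrightarrow}W$.

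In the first case, where ${\parallel}W^{\prime}{\parallel}{\leq}q$ and ${\parallel}W^{\prime}{\parallel}{=}{\parallel}W{\parallel}$, the equality of cardinalities supplies a bijection $f:\ W^{\prime}{\longrightarrow}W$, which is in particular surjective; by the observation above, $f$ is a surjective bounded morphism from $(W^{\prime},R^{\prime})$ to $(W,R)$, so $(W,R)$ is a bounded morphic image of $(W^{\prime},R^{\prime})$. In the second case, where ${\parallel}W^{\prime}{\parallel}{>}q$ and ${\parallel}W{\parallel}{=}q$, the set $W$ is finite with exactly $q{\geq}3$ elements whereas $W^{\prime}$ has strictly more than $q$ elements, whence ${\parallel}W^{\prime}{\parallel}{\geq}{\parallel}W{\parallel}$ and there exists a surjection $f:\ W^{\prime}{\longrightarrow}W$; again by the observation, such an $f$ witnesses that $(W,R)$ is a bounded morphic image of $(W^{\prime},R^{\prime})$.

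I expect no real obstacle here: the whole force of the statement is that universal relations make both defining conditions of a bounded morphism trivial, reducing the lemma to the purely set-theoretic existence of a surjection $W^{\prime}{\longrightarrow}W$, which follows immediately from the cardinality comparison in each case. The only points that deserve a moment's attention are that $f^{{-}1}(\{t\})$ is nonempty in the verification of the back condition — this is precisely where surjectivity of $f$ is used — and that $W$ and $W^{\prime}$ are nonempty so that talking about a surjection between them is meaningful, which is guaranteed by the hypotheses (in the second case $q{\geq}3$ forces $W$ to be nonempty, and in the first case the equality of cardinalities transfers nonemptiness of $W$ to $W^{\prime}$).
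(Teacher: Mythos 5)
Your proof is correct. The paper does not actually print a proof of this lemma (it is among the ``relatively simple'' proofs the authors omit), and your argument is precisely the expected one: since $R{=}W{\times}W$ and $R^{\prime}{=}W^{\prime}{\times}W^{\prime}$ are universal, the forth condition holds for any function and the back condition for any surjection, so the lemma reduces to the existence of a surjection from $W^{\prime}$ onto $W$, which the cardinality hypotheses (with $W$ non-empty by the definition of a frame, and finite of size at most $q$ in both cases) immediately provide.
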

\begin{lemma}\label{lemma:inclusion:Bprime:B:for:generated:subframes:is:bmi}
If $W{=}\{s\}{\cup}B$, $R{=}(\{s\}{\times}B){\cup}(B{\times}B)$, $W^{\prime}{=}\{s^{\prime}\}{\cup}B^{\prime}$ and $R^{\prime}{=}(\{s^{\prime}\}{\times}
$\linebreak$
B^{\prime}){\cup}(B^{\prime}{\times}B^{\prime})$ for some sets $B,B^{\prime}$ and for some elements $s,s^{\prime}$ such that $s{\not\in}B$, $B{\not=}\emptyset$, $s^{\prime}{\not\in}B^{\prime}$, $B^{\prime}{\not=}\emptyset$ and ${\parallel}B{\parallel}{\leq}{\parallel}B^{\prime}{\parallel}$ then $(W,R)$ is a bounded morphic image of $(W^{\prime},R^{\prime})$.
\end{lemma}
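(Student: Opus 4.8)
The plan is to write down an explicit surjective bounded morphism from $(W^{\prime},R^{\prime})$ onto $(W,R)$. Since $B{\not=}\emptyset$, $B^{\prime}{\not=}\emptyset$ and ${\parallel}B{\parallel}{\leq}{\parallel}B^{\prime}{\parallel}$, I would first fix an arbitrary surjection $g\colon B^{\prime}{\longrightarrow}B$ and then define $f\colon W^{\prime}{\longrightarrow}W$ by setting $f(s^{\prime}){=}s$ and $f(x^{\prime}){=}g(x^{\prime})$ for each $x^{\prime}{\in}B^{\prime}$. This is well-defined because $s^{\prime}{\not\in}B^{\prime}$, and it is surjective because $f(s^{\prime}){=}s$ while $f$ maps $B^{\prime}$ onto $B$, so $f$ reaches every element of $W{=}\{s\}{\cup}B$.

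Next I would verify the forth (homomorphism) condition. Because $s^{\prime}{\not\in}B^{\prime}$, the only pairs in $R^{\prime}{=}(\{s^{\prime}\}{\times}B^{\prime}){\cup}(B^{\prime}{\times}B^{\prime})$ are those of the form $(s^{\prime},v^{\prime})$ with $v^{\prime}{\in}B^{\prime}$ and those of the form $(u^{\prime},v^{\prime})$ with $u^{\prime},v^{\prime}{\in}B^{\prime}$. In the first case $f(s^{\prime}){=}s$ and $f(v^{\prime}){=}g(v^{\prime}){\in}B$, so the image pair lies in $\{s\}{\times}B{\subseteq}R$; in the second case the image pair lies in $B{\times}B{\subseteq}R$. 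Then I would verify the back (zig-zag) condition, which is the only place where a small amount of care is needed: since $s{\not\in}B$, nothing in $W$ is $R$-related to $s$, so $R(s){=}B$ and $R(x){=}B$ for every $x{\in}B$; symmetrically $R^{\prime}(s^{\prime}){=}B^{\prime}$ and $R^{\prime}(x^{\prime}){=}B^{\prime}$ for every $x^{\prime}{\in}B^{\prime}$. Now if $f(u^{\prime})\,R\,t$, then in either case ($u^{\prime}{=}s^{\prime}$ or $u^{\prime}{\in}B^{\prime}$) we get $t{\in}B$, and surjectivity of $g$ yields $v^{\prime}{\in}B^{\prime}$ with $g(v^{\prime}){=}t$, whence $u^{\prime}\,R^{\prime}\,v^{\prime}$ (using $R^{\prime}(u^{\prime}){=}B^{\prime}$) and $f(v^{\prime}){=}t$. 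Thus $f$ is a surjective bounded morphism and $(W,R)$ is a bounded morphic image of $(W^{\prime},R^{\prime})$.

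The argument is entirely routine, so there is no real obstacle; the single point deserving attention is the computation of $R(s)$, of $R(x)$ for $x{\in}B$, and of their primed analogues, all of which rely on the hypotheses $s{\not\in}B$ and $s^{\prime}{\not\in}B^{\prime}$ to ensure that the roots $s$ and $s^{\prime}$ are not in the range of the respective accessibility relations.
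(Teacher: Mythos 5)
Your proof is correct: fixing a surjection $g\colon B^{\prime}{\longrightarrow}B$ (available since $B{\not=}\emptyset$ and ${\parallel}B{\parallel}{\leq}{\parallel}B^{\prime}{\parallel}$) and extending it by $f(s^{\prime}){=}s$ gives exactly the surjective bounded morphism required, and your verification of the forth and back conditions via the computations $R(s){=}R(x){=}B$ for $x{\in}B$ and $R^{\prime}(s^{\prime}){=}R^{\prime}(x^{\prime}){=}B^{\prime}$ for $x^{\prime}{\in}B^{\prime}$ (both relying on $s{\not\in}B$, $s^{\prime}{\not\in}B^{\prime}$) is complete. The paper states this lemma without proof, regarding it as routine, and your construction is the intended argument.
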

\begin{lemma}\label{lemma:inclusion:Bprime:B:for:generated:subframes:is:bmi:bis}
If $W{=}\{s\}{\cup}B$, $R{=}\{(s,u)\}{\cup}(B{\times}B)$, $W^{\prime}{=}\{s^{\prime}\}{\cup}B^{\prime}$ and $R^{\prime}{=}\{(s^{\prime},
$\linebreak$
u^{\prime})\}{\cup}(B^{\prime}{\times}B^{\prime})$ for some sets $B,B^{\prime}$ and for some elements $s,s^{\prime},u,u^{\prime}$ such that $s{\not\in}B$, $u{\in}B$, $B{\not=}\emptyset$, $s^{\prime}{\not\in}B^{\prime}$, $u^{\prime}{\in}B^{\prime}$, $B^{\prime}{\not=}\emptyset$, ${\parallel}B{\parallel}{\leq}{\parallel}B^{\prime}{\parallel}$ and if ${\parallel}B{\parallel}{\leq}{\parallel}2$ then ${\parallel}B{\parallel}{=}
$\linebreak$
{\parallel}B^{\prime}{\parallel}$ then $(W,R)$ is a bounded morphic image of $(W^{\prime},R^{\prime})$.
\end{lemma}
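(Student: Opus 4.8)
The plan is to construct an explicit surjective bounded morphism $f$ from $(W^{\prime},R^{\prime})$ onto $(W,R)$. Both frames have the same shape: a single ``tail'' point ($s$, respectively $s^{\prime}$) whose unique successor is a distinguished cluster point ($u$, respectively $u^{\prime}$), sitting above a cluster ($B$, respectively $B^{\prime}$). Accordingly, I would send tail to tail and collapse cluster onto cluster while respecting the distinguished points: set $f(s^{\prime}){=}s$ and let $f|_{B^{\prime}}$ be any surjection of $B^{\prime}$ onto $B$ with $f(u^{\prime}){=}u$. Such a surjection exists: if ${\parallel}B{\parallel}{=}1$, map all of $B^{\prime}$ to $u$; otherwise ${\parallel}B{\parallel}{\geq}2$, and since ${\parallel}B{\parallel}{\leq}{\parallel}B^{\prime}{\parallel}$ we have ${\parallel}B^{\prime}{\setminus}\{u^{\prime}\}{\parallel}{\geq}{\parallel}B{\setminus}\{u\}{\parallel}$, so we may pick a surjection of $B^{\prime}{\setminus}\{u^{\prime}\}$ onto $B{\setminus}\{u\}$ and adjoin the clause $u^{\prime}{\mapsto}u$. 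Since $s^{\prime}{\not\in}B^{\prime}$, this unambiguously defines a function $f:\ W^{\prime}{\longrightarrow}W$, and $f(W^{\prime}){=}\{s\}{\cup}f(B^{\prime}){=}\{s\}{\cup}B{=}W$, so $f$ is surjective.

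It then remains to verify the two bounded-morphism conditions, and here the key observation is that, because $s{\not\in}B$ and $s^{\prime}{\not\in}B^{\prime}$, one reads straight off the definitions that $R(s){=}\{u\}$ while $R(b){=}B$ for every $b{\in}B$, and likewise $R^{\prime}(s^{\prime}){=}\{u^{\prime}\}$ while $R^{\prime}(b^{\prime}){=}B^{\prime}$ for every $b^{\prime}{\in}B^{\prime}$. The forth condition is immediate: the only $R^{\prime}$-edge out of $s^{\prime}$ is $(s^{\prime},u^{\prime})$ and $(f(s^{\prime}),f(u^{\prime})){=}(s,u){\in}R$; any $R^{\prime}$-edge $(b^{\prime},c^{\prime})$ with $b^{\prime}{\in}B^{\prime}$ is mapped by $f$ into $B{\times}B{\subseteq}R$. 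For the back condition at $s^{\prime}$, the only point $R$-reachable from $f(s^{\prime}){=}s$ is $u$, witnessed by $(s^{\prime},u^{\prime}){\in}R^{\prime}$ together with $f(u^{\prime}){=}u$. For the back condition at a point $b^{\prime}{\in}B^{\prime}$, the points $R$-reachable from $f(b^{\prime}){\in}B$ are exactly the elements of $B$, and for each $t{\in}B$ the surjectivity of $f|_{B^{\prime}}$ supplies some $c^{\prime}{\in}B^{\prime}$ with $f(c^{\prime}){=}t$, while $(b^{\prime},c^{\prime}){\in}R^{\prime}$ because $b^{\prime},c^{\prime}{\in}B^{\prime}$. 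Hence $f$ is a surjective bounded morphism, so $(W,R)$ is a bounded morphic image of $(W^{\prime},R^{\prime})$.

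The argument is essentially a routine verification and I do not foresee a genuine obstacle. The only step that calls for a little care is producing the surjection of $B^{\prime}$ onto $B$ that carries $u^{\prime}$ to $u$, and this is precisely where the hypotheses ${\parallel}B{\parallel}{\leq}{\parallel}B^{\prime}{\parallel}$, $u{\in}B$, $u^{\prime}{\in}B^{\prime}$, $s{\not\in}B$ and $s^{\prime}{\not\in}B^{\prime}$ are put to work; the supplementary assumption that ${\parallel}B{\parallel}{=}{\parallel}B^{\prime}{\parallel}$ when ${\parallel}B{\parallel}{\leq}2$ turns out not to be needed by this construction.
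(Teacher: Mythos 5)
Your construction is correct and is exactly the argument the paper intends (this lemma is one of those whose routine proof the paper omits): send $s^{\prime}$ to $s$, $u^{\prime}$ to $u$, and let $f$ restrict to a surjection of $B^{\prime}$ onto $B$, which the hypothesis ${\parallel}B{\parallel}{\leq}{\parallel}B^{\prime}{\parallel}$ permits; since $s{\not\in}B$ and $s^{\prime}{\not\in}B^{\prime}$ give $R(s){=}\{u\}$, $R^{\prime}(s^{\prime}){=}\{u^{\prime}\}$ and $R(b){=}B$, $R^{\prime}(b^{\prime}){=}B^{\prime}$ on the clusters, your verification of the forth and back conditions is complete. Your closing remark is also accurate: the clause ``if ${\parallel}B{\parallel}{\leq}2$ then ${\parallel}B{\parallel}{=}{\parallel}B^{\prime}{\parallel}$'' is not needed for the bounded-morphism conclusion; it is harmless extra baggage reflecting the form in which the lemma is applied (where small clusters are kept of equal size so that the companion elementary-equivalence statements also hold).
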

\begin{lemma}\label{lemma:inclusion:Bprime:B:for:generated:subframes:is:bmi:ter}
If $W{=}\{s\}{\cup}B$, $R{=}(\{s\}{\times}(B{\setminus}\{u\})){\cup}(B{\times}B)$, $W^{\prime}{=}\{s^{\prime}\}{\cup}B^{\prime}$ and $R^{\prime}{=}
$\linebreak$
(\{s^{\prime}\}{\times}(B^{\prime}{\setminus}\{u^{\prime}\})){\cup}(B^{\prime}{\times}B^{\prime})$ for some sets $B,B^{\prime}$ and for some elements $s,s^{\prime},u,u^{\prime}$ such that $s{\not\in}B$, $u{\in}B$, $B{\not=}\emptyset$, $s^{\prime}{\not\in}B^{\prime}$, $u^{\prime}{\in}B^{\prime}$, $B^{\prime}{\not=}\emptyset$, ${\parallel}B{\parallel}{\leq}{\parallel}B^{\prime}{\parallel}$ and if ${\parallel}B{\parallel}{\leq}{\parallel}2$ then ${\parallel}B{\parallel}{=}{\parallel}B^{\prime}{\parallel}$ then $(W,R)$ is a bounded morphic image of $(W^{\prime},R^{\prime})$.
\end{lemma}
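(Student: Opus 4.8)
The plan is to construct an explicit surjective bounded morphism $f\colon (W',R')\to(W,R)$ in the style of the preceding lemmas, following the same pattern used for Lemmas~\ref{lemma:inclusion:Bprime:B:for:generated:subframes:is:bmi} and~\ref{lemma:inclusion:Bprime:B:for:generated:subframes:is:bmi:bis}. Here the two frames have the shape of a root $s$ (resp.\ $s'$) attached to a cluster $B$ (resp.\ $B'$), but the root sees everything in the cluster \emph{except} one designated point $u$ (resp.\ $u'$). The natural idea is to send $s'$ to $s$, to send $u'$ to $u$, and to pick some surjection $g\colon B'\setminus\{u'\}\to B\setminus\{u\}$, setting $f(s')=s$, $f(u')=u$ and $f(x)=g(x)$ for $x\in B'\setminus\{u'\}$. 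Since ${\parallel}B{\parallel}{\leq}{\parallel}B'{\parallel}$ we have ${\parallel}B\setminus\{u\}{\parallel}\leq{\parallel}B'\setminus\{u'\}{\parallel}$, so such a surjection $g$ exists, and $f$ is then clearly surjective onto $W=\{s\}\cup B$.

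Next I would verify the two bounded-morphism conditions. For the forth condition (preservation of $R'$): if $x\,R'\,y$ then either $x=s'$, in which case $y\in B'\setminus\{u'\}$, so $f(y)\in B\setminus\{u\}$ and hence $s\,R\,f(y)$, i.e.\ $f(x)\,R\,f(y)$; or $x,y\in B'$, in which case $f(x),f(y)\in B$ (the images of $B'$ land in $B$, with $u'\mapsto u$), and since $B\times B\subseteq R$ we get $f(x)\,R\,f(y)$. For the back condition: take $x'\in W'$ and $t\in W$ with $f(x')\,R\,t$. If $f(x')=s$, then $x'\in\{s'\}\cup$ (possibly more, but note only $s'$ maps to $s$ unless $B$ meets\dots{} — actually by the construction only $s'$ maps to $s$, since $f(B')\subseteq B$ and $s\notin B$), so $x'=s'$; then $t\in B\setminus\{u\}$, and surjectivity of $g$ gives $y'\in B'\setminus\{u'\}$ with $g(y')=t$, and $s'\,R'\,y'$ as required. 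If $f(x')\in B$, then $x'\in B'$, and whatever $t\in B$ we need a preimage $y'\in B'$ with $x'\,R'\,y'$; since $x'\in B'$ and $B'\times B'\subseteq R'$, any preimage of $t$ in $B'$ works, and such a preimage exists because $f$ restricted to $B'$ surjects onto $B$ (it hits $u$ via $u'$ and all of $B\setminus\{u\}$ via $g$).

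The one subtlety — and the reason for the extra hypothesis ``if ${\parallel}B{\parallel}\leq 2$ then ${\parallel}B{\parallel}={\parallel}B'{\parallel}$'' — is the degenerate case where $B\setminus\{u\}$ is empty or a singleton. If ${\parallel}B{\parallel}=1$ then $B=\{u\}$, $B\setminus\{u\}=\emptyset$, and $R$ restricted to the cluster is just $\{(u,u)\}$ while $s$ sees nothing; the hypothesis forces ${\parallel}B'{\parallel}=1$ too, so the two frames are already isomorphic and $f$ is the identity-like map. If ${\parallel}B{\parallel}=2$, say $B=\{u,v\}$, then $B\setminus\{u\}=\{v\}$ and $R$ has $s\,R\,v$, $v\,R\,v$, $u\,R\,u$, $u\,R\,v$, $v\,R\,u$; again the hypothesis forces ${\parallel}B'{\parallel}=2$, so $f$ is an isomorphism. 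Thus in the small cases $f$ is forced to be bijective and the back condition causes no trouble; in the case ${\parallel}B{\parallel}\geq 3$ the set $B\setminus\{u\}$ is nonempty (indeed has at least two elements), so $g$ has nonempty domain and codomain and everything goes through smoothly. I expect this case split on ${\parallel}B{\parallel}$ to be the only real point requiring care; the rest is a routine check of the two defining clauses of a bounded morphism, entirely parallel to the proofs of the neighbouring lemmas.
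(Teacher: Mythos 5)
Your construction is correct: sending $s'$ to $s$, $u'$ to $u$, and the rest of $B'$ onto $B\setminus\{u\}$ by an arbitrary surjection satisfies both bounded-morphism clauses, and your case split on ${\parallel}B{\parallel}{\leq}2$ is exactly where the extra hypothesis is needed (in particular for ${\parallel}B{\parallel}{=}1$, where $B\setminus\{u\}$ is empty). The paper states this lemma without proof, and your argument is the standard one it clearly intends, parallel to the neighbouring Lemmas on clusters with a distinguished root.
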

\begin{lemma}\label{lemma:inclusion:Bprime:B:for:generated:subframes:is:bmi:quater}
If $W{=}B$, $R{=}B{\times}B$, $W^{\prime}{=}B^{\prime}$ and $R^{\prime}{=}B^{\prime}{\times}B^{\prime}$ for some sets $B,B^{\prime}$ such that $B{\not=}\emptyset$, $B^{\prime}{\not=}\emptyset$, ${\parallel}B{\parallel}{\leq}{\parallel}B^{\prime}{\parallel}$ and if ${\parallel}B{\parallel}{\leq}{\parallel}2$ then ${\parallel}B{\parallel}{=}{\parallel}B^{\prime}{\parallel}$ then $(W,R)$ is a bounded morphic image of $(W^{\prime},R^{\prime})$.
\end{lemma}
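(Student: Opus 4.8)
The statement concerns two "cluster" frames $(W,R)=(B,B{\times}B)$ and $(W^{\prime},R^{\prime})=(B^{\prime},B^{\prime}{\times}B^{\prime})$, where $B^{\prime}$ is at least as big as $B$, and moreover if ${\parallel}B{\parallel}{\leq}2$ then the two cardinalities agree. The goal is to exhibit a surjective bounded morphism $f:\ B^{\prime}{\longrightarrow}B$. Since both relations are full Cartesian squares, the bounded morphism conditions simplify enormously: the forth condition "$s^{\prime}{R^{\prime}}t^{\prime}$ implies $f(s^{\prime}){R}f(t^{\prime})$" holds automatically because $R=B{\times}B$ relates every pair; and the back condition "$f(s^{\prime}){R}t$ implies $\exists t^{\prime}$ with $s^{\prime}{R^{\prime}}t^{\prime}$ and $f(t^{\prime})=t$" reduces, again because $R^{\prime}=B^{\prime}{\times}B^{\prime}$ relates every pair, to the single requirement that $f$ be surjective. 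So the entire content of the lemma is: \emph{there exists a surjection from $B^{\prime}$ onto $B$}.

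First I would dispose of the trivial direction: if ${\parallel}B{\parallel}{\leq}2$, the hypothesis forces ${\parallel}B{\parallel}{=}{\parallel}B^{\prime}{\parallel}$, so any bijection $f:\ B^{\prime}{\longrightarrow}B$ works and is in particular a surjection. If ${\parallel}B{\parallel}{\geq}3$, then since ${\parallel}B{\parallel}{\leq}{\parallel}B^{\prime}{\parallel}$ and both sets are non-empty, a standard set-theoretic fact gives a surjection $f:\ B^{\prime}{\longrightarrow}B$ (fix an injection $g:\ B{\hookrightarrow}B^{\prime}$, send each element of $g(B)$ back by $g^{-1}$, and send every element of $B^{\prime}{\setminus}g(B)$ to an arbitrary fixed point $b_{0}{\in}B$). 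In either case $f$ is surjective, hence by the observation of the previous paragraph it is a surjective bounded morphism from $(W^{\prime},R^{\prime})$ to $(W,R)$, so $(W,R)$ is a bounded morphic image of $(W^{\prime},R^{\prime})$.

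There is no real obstacle here; the only point requiring a moment's care is verifying that the two defining clauses of a bounded morphism are genuinely vacuous/equivalent-to-surjectivity when both relations are total, and that the "$\geq 3$" case is not actually needed for anything beyond observing that the extra cardinality hypothesis is only invoked in the small-$B$ regime. (The clause about ${\parallel}B{\parallel}{\leq}2$ is presumably in the statement for uniformity with Lemmas~\ref{lemma:inclusion:Bprime:B:for:generated:subframes:is:bmi:bis} and~\ref{lemma:inclusion:Bprime:B:for:generated:subframes:is:bmi:ter}, where it does real work because of the dangling point $u$; for the pure-cluster case it is simply harmless.) Thus the proof is a two-line case split followed by the remark that surjectivity suffices.
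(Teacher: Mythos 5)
Your proof is correct: since both relations are total, the forth condition is vacuous and the back condition (together with the surjectivity required of a bounded morphic image) reduces exactly to the existence of a surjection from $B^{\prime}$ onto $B$, which follows from ${\parallel}B{\parallel}{\leq}{\parallel}B^{\prime}{\parallel}$ and $B{\not=}\emptyset$. The paper states this lemma without proof, and your argument is precisely the routine one it leaves implicit; your case split at ${\parallel}B{\parallel}{\leq}2$ is unnecessary (the surjection construction works uniformly, and as you note the extra cardinality hypothesis does no work here), but it is harmless.
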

\begin{lemma}\label{lemma:about:frames:C:B:smaller:greater}
If $W{=}\{s\}{\cup}C$, $R{=}(\{s\}{\times}B){\cup}(C{\times}C)$, $W^{\prime}{=}\{s^{\prime}\}{\cup}C^{\prime}$ and $R^{\prime}{=}(\{s^{\prime}\}{\times}
$\linebreak$
B^{\prime}){\cup}(C^{\prime}{\times}C^{\prime})$ for some sets $B,C,B^{\prime},C^{\prime}$ and for some elements $s,s^{\prime}$ such that either $B{\not=}\emptyset$, or $C{=}\emptyset$, $B{\subseteq}C$, $s{\not\in}C$, either $B^{\prime}{\not=}\emptyset$, or $C^{\prime}{=}\emptyset$, $B^{\prime}{\subseteq}C^{\prime}$, $s^{\prime}{\not\in}C^{\prime}$, either ${\parallel}B^{\prime}{\parallel}{\leq}q$ and ${\parallel}B^{\prime}{\parallel}{=}{\parallel}B{\parallel}$, or ${\parallel}B^{\prime}{\parallel}{>}q$ and ${\parallel}B{\parallel}{=}q$ and either ${\parallel}C^{\prime}{\setminus}B^{\prime}{\parallel}{\leq}q$ and ${\parallel}C^{\prime}{\setminus}B^{\prime}{\parallel}{=}{\parallel}C{\setminus}B{\parallel}$, or ${\parallel}C^{\prime}{\setminus}B^{\prime}{\parallel}{>}q$ and ${\parallel}C{\setminus}B{\parallel}{=}q$ then $(W,R)$ is a bounded morphic image of $(W^{\prime},R^{\prime})$.
\end{lemma}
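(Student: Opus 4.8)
The plan is to construct an explicit surjective bounded morphism $f$ from $(W^{\prime},R^{\prime})$ onto $(W,R)$. First I would extract from the hypotheses the two numerical facts ${\parallel}B{\parallel}{\leq}{\parallel}B^{\prime}{\parallel}$ and ${\parallel}C{\setminus}B{\parallel}{\leq}{\parallel}C^{\prime}{\setminus}B^{\prime}{\parallel}$, together with the degenerate implications ``$B{=}\emptyset$ implies $B^{\prime}{=}\emptyset$'' and ``$C{\setminus}B{=}\emptyset$ implies $C^{\prime}{\setminus}B^{\prime}{=}\emptyset$''. Indeed, the case distinction on ${\parallel}B^{\prime}{\parallel}$ versus $q$ in the hypothesis says precisely that ${\parallel}B{\parallel}{=}{\parallel}B^{\prime}{\parallel}$ when ${\parallel}B^{\prime}{\parallel}{\leq}q$ and ${\parallel}B{\parallel}{=}q{<}{\parallel}B^{\prime}{\parallel}$ otherwise; since $q{\geq}3$, the value $0$ can only be attained in the first alternative, which gives the stated implication for $B$, and symmetrically for $C{\setminus}B$. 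In both alternatives ${\parallel}B{\parallel}{\leq}{\parallel}B^{\prime}{\parallel}$, and likewise ${\parallel}C{\setminus}B{\parallel}{\leq}{\parallel}C^{\prime}{\setminus}B^{\prime}{\parallel}$.

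Next I would fix a surjection $f_{1}:\ B^{\prime}{\longrightarrow}B$ and a surjection $f_{2}:\ C^{\prime}{\setminus}B^{\prime}{\longrightarrow}C{\setminus}B$. These exist: if the codomain is empty, the degenerate implication forces the domain to be empty and the empty function works; otherwise the codomain is a nonempty set dominated in cardinality by the domain, so a surjection exists. Then I would define $f:\ W^{\prime}{\longrightarrow}W$ by $f(s^{\prime}){=}s$, $f(x){=}f_{1}(x)$ for $x{\in}B^{\prime}$, and $f(x){=}f_{2}(x)$ for $x{\in}C^{\prime}{\setminus}B^{\prime}$; this is well defined because $s^{\prime}{\not\in}C^{\prime}$ and $C^{\prime}$ is the disjoint union of $B^{\prime}$ and $C^{\prime}{\setminus}B^{\prime}$. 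Surjectivity is then immediate, since $s$ is the image of $s^{\prime}$, $B$ is covered by $f_{1}$, and $C{\setminus}B$ is covered by $f_{2}$, whence $f(W^{\prime}){=}\{s\}{\cup}B{\cup}(C{\setminus}B){=}W$; note also $f(C^{\prime}){=}C$ while $s{\not\in}C$.

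It then remains to check the two bounded morphism conditions, using that $s{\not\in}C$ and $s^{\prime}{\not\in}C^{\prime}$, so that $R(s){=}B$, $R(c){=}C$ for every $c{\in}C$, $R^{\prime}(s^{\prime}){=}B^{\prime}$, and $R^{\prime}(x){=}C^{\prime}$ for every $x{\in}C^{\prime}$. For the forth condition: if $s^{\prime}{R^{\prime}}t^{\prime}$ then $t^{\prime}{\in}B^{\prime}$, so $f(t^{\prime}){\in}B$ and $(s,f(t^{\prime})){\in}\{s\}{\times}B{\subseteq}R$; and if $x{R^{\prime}}y$ with $x,y{\in}C^{\prime}$ then $f(x),f(y){\in}C$ and $(f(x),f(y)){\in}C{\times}C{\subseteq}R$. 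For the back condition, take $x^{\prime}{\in}W^{\prime}$ and $t{\in}W$ with $f(x^{\prime}){R}t$: if $x^{\prime}{=}s^{\prime}$ then $t{\in}R(s){=}B$, and surjectivity of $f_{1}$ provides $y^{\prime}{\in}B^{\prime}{=}R^{\prime}(s^{\prime})$ with $f(y^{\prime}){=}t$; if $x^{\prime}{\in}C^{\prime}$ then $f(x^{\prime}){\in}C$, so $t{\in}R(f(x^{\prime})){=}C{=}f(C^{\prime})$ and there is $y^{\prime}{\in}C^{\prime}{=}R^{\prime}(x^{\prime})$ with $f(y^{\prime}){=}t$. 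Hence $f$ is a surjective bounded morphism and $(W,R)$ is a bounded morphic image of $(W^{\prime},R^{\prime})$.

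I expect the only genuinely delicate point to be the first paragraph: reading off from the four-way disjunction in the hypothesis both the cardinality inequalities and, above all, the degenerate implications, so that the surjections $f_{1},f_{2}$ are guaranteed to exist in every case (including when some of $B$, $C{\setminus}B$, $B^{\prime}$, $C^{\prime}{\setminus}B^{\prime}$ are empty or infinite). The verification of the morphism conditions is routine and uses nothing beyond the explicit shape of $R$ and $R^{\prime}$. Incidentally, this lemma subsumes Lemmas~\ref{lemma:about:partitions:q:greater:smaller}--\ref{lemma:inclusion:Bprime:B:for:generated:subframes:is:bmi:quater} and could instead be assembled from them by cases, but the direct construction above is shorter.
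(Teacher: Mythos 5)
Your construction is correct and is exactly the intended argument (the paper states this lemma without proof): send $s^{\prime}$ to $s$ and choose surjections of $B^{\prime}$ onto $B$ and of $C^{\prime}{\setminus}B^{\prime}$ onto $C{\setminus}B$, which the cardinality clauses guarantee to exist since $q{\geq}3$ forces ${\parallel}B{\parallel}{\leq}{\parallel}B^{\prime}{\parallel}$, ${\parallel}C{\setminus}B{\parallel}{\leq}{\parallel}C^{\prime}{\setminus}B^{\prime}{\parallel}$ and the degenerate implications, after which the forth and back conditions follow from $R(s){=}B$, $R(c){=}C$, $R^{\prime}(s^{\prime}){=}B^{\prime}$, $R^{\prime}(c^{\prime}){=}C^{\prime}$ as you verify. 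Only your closing aside overreaches: Lemmas~\ref{lemma:about:partitions:q:greater:smaller}--\ref{lemma:inclusion:Bprime:B:for:generated:subframes:is:bmi:quater} are not special cases of this statement (for instance Lemma~\ref{lemma:inclusion:Bprime:B:for:generated:subframes:is:bmi} assumes only ${\parallel}B{\parallel}{\leq}{\parallel}B^{\prime}{\parallel}$ with no $q$-cap, and Lemma~\ref{lemma:about:partitions:q:greater:smaller} concerns frames with no root point), but that remark plays no role in your proof.
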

\begin{lemma}\label{lemma:about:what:happens:when:it:exists:bmi:flower:integer}
If $(W,R)$ is Euclidean then for all $s{\in}W$ and for all flowers ${\mathcal F}_{m}^{n}$, if $n{\in}\N$ and there exists a surjective bounded morphism $f$ from $(W_{s},R_{s})$ to ${\mathcal F}_{m}^{n}$ such that $f(s)$ generates ${\mathcal F}_{m}^{n}$ then $W_{s}{=}\{s\}{\cup}C$ and $R_{s}{=}(\{s\}{\times}B){\cup}(C{\times}C)$ for some sets $B,C$ such that $B{\subseteq}C$, $s{\not\in}C$, ${\parallel}B{\parallel}{\geq}m$ and ${\parallel}C{\setminus}B{\parallel}{\geq}n$.
\end{lemma}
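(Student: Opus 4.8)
The plan is to first pin down the value $f(s)$, then to read off the exact shape of $(W_{s},R_{s})$ from Lemma~\ref{lemma:generated:subframe:from:a:single:point:has:specific:shapes}, and finally to extract the two cardinality bounds from the two bounded-morphism conditions together with the surjectivity of $f$. First I would observe that, since $n{\in}\N$ and $m{\geq}1$, the least generated subframe of ${\mathcal F}_{m}^{n}$ containing a point $i{\in}\lsem1,m{+}n\rsem$ has universe $\lsem1,m{+}n\rsem$ (from a point of the cluster one reaches nothing outside the cluster), which is a proper subset of $W_{m}^{n}{=}\lsem0,m{+}n\rsem$. Hence $0$ is the only point generating ${\mathcal F}_{m}^{n}$, and since $f(s)$ generates ${\mathcal F}_{m}^{n}$ by hypothesis, $f(s){=}0$.

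Next I would apply Lemma~\ref{lemma:generated:subframe:from:a:single:point:has:specific:shapes} to the Euclidean frame $(W,R)$ and to the point $s$. The alternative $R_{s}{=}W_{s}{\times}W_{s}$ can be discarded: it would give $s{R_{s}}s$, hence $f(s){R_{m}^{n}}f(s)$, i.e. $0{R_{m}^{n}}0$, which is false because $(0,0){\not\in}R_{m}^{n}$. So $W_{s}{=}\{s\}{\cup}C$ and $R_{s}{=}(\{s\}{\times}B){\cup}(C{\times}C)$ for some sets $B,C$ with $B{\subseteq}C$, $s{\not\in}C$, and either $B{\not=}\emptyset$ or $C{=}\emptyset$. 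The degenerate case $B{=}C{=}\emptyset$ must also be excluded: there $R_{s}{=}\emptyset$, so the second bounded-morphism condition for $f$ at $s$ would fail for the pair $(0,1){\in}R_{m}^{n}$ (the point $1$ is available since $m{\geq}1$). Hence $B{\not=}\emptyset$ and $C{\not=}\emptyset$, which already gives the structural part of the conclusion.

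It then remains to establish ${\parallel}B{\parallel}{\geq}m$ and ${\parallel}C{\setminus}B{\parallel}{\geq}n$. From the first bounded-morphism condition: every $b{\in}B$ satisfies $s{R_{s}}b$, so $0{=}f(s){R_{m}^{n}}f(b)$ and thus $f(b){\in}\lsem1,m\rsem$; and every $c{\in}C$ satisfies $c{R_{s}}c$ (since $C{\times}C{\subseteq}R_{s}$), so $f(c)$ is a reflexive point of ${\mathcal F}_{m}^{n}$, i.e. $f(c){\in}\lsem1,m{+}n\rsem$. Hence $f(B){\subseteq}\lsem1,m\rsem$ and $f(C){\subseteq}\lsem1,m{+}n\rsem$. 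From the second bounded-morphism condition at $s$: for each $i{\in}\lsem1,m\rsem$ we have $f(s){R_{m}^{n}}i$, so there is $t^{\prime}{\in}W_{s}$ with $s{R_{s}}t^{\prime}$ and $f(t^{\prime}){=}i$, and $s{R_{s}}t^{\prime}$ forces $t^{\prime}{\in}B$ (as $s{\not\in}C$); thus $\lsem1,m\rsem{\subseteq}f(B)$, so $f(B){=}\lsem1,m\rsem$ and ${\parallel}B{\parallel}{\geq}m$. Finally, surjectivity of $f$ gives $\lsem0,m{+}n\rsem{=}f(W_{s}){=}\{0\}{\cup}f(C)$, hence $f(C){=}\lsem1,m{+}n\rsem$; combining with $f(C){=}f(B){\cup}f(C{\setminus}B){=}\lsem1,m\rsem{\cup}f(C{\setminus}B)$ yields $\lsem m{+}1,m{+}n\rsem{\subseteq}f(C{\setminus}B)$, whence ${\parallel}C{\setminus}B{\parallel}{\geq}n$.

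I do not expect a serious obstacle: the only place requiring real care is the case analysis in the second step, where one must genuinely rule out both the ``cluster'' shape $R_{s}{=}W_{s}{\times}W_{s}$ and the empty shape $B{=}C{=}\emptyset$ before one can run the argument with the two bounded-morphism conditions and the surjectivity of $f$. Once $(W_{s},R_{s})$ is known to have the stated form, the cardinality bounds follow routinely.
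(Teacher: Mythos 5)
Your proof is correct, and since the paper states this lemma without including a proof, your argument via Lemma~\ref{lemma:generated:subframe:from:a:single:point:has:specific:shapes} (pinning $f(s)=0$, discarding the cluster and empty shapes, then using the two bounded-morphism clauses plus surjectivity to get $f(B)=\lsem1,m\rsem$ and $f(C{\setminus}B)\supseteq\lsem m{+}1,m{+}n\rsem$) is exactly the intended routine route. No gaps: the exclusion of $R_{s}{=}W_{s}{\times}W_{s}$ and of $B{=}C{=}\emptyset$, and the observation that $0$ is the unique generator when $n{\in}\N$, are precisely the points that needed checking.
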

\begin{lemma}\label{lemma:about:what:happens:when:it:exists:bmi:flower:minus:one}
If $(W,R)$ is Euclidean then for all $s{\in}W$ and for all flowers ${\mathcal F}_{m}^{n}$, if $n{=}{-}1$ and there exists a surjective bounded morphism $f$ from $(W_{s},R_{s})$ to ${\mathcal F}_{m}^{n}$ such that $f(s)$ generates ${\mathcal F}_{m}^{n}$ then either $R_{s}{=}W_{s}{\times}W_{s}$ and ${\parallel}W_{s}{\parallel}{\geq}m$, or $W_{s}{=}\{s\}{\cup}C$ and $R_{s}{=}(\{s\}{\times}B){\cup}(C{\times}C)$ for some sets $B,C$ such that $B{\subseteq}C$, $s{\not\in}C$ and ${\parallel}B{\parallel}{\geq}m$.
\end{lemma}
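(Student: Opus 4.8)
The plan is to reduce everything to Lemma~\ref{lemma:generated:subframe:from:a:single:point:has:specific:shapes}, which already describes the possible shapes of the generated subframe $(W_{s},R_{s})$ of an Euclidean frame. That lemma offers two alternatives, and I would show that the first one yields the first disjunct of the conclusion while the second one yields the second disjunct; the only genuine work is extracting the cardinality bound ${\parallel}B{\parallel}{\geq}m$ in the second case.

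So assume $(W,R)$ is Euclidean, $s{\in}W$, $n{=}{-}1$, and $f$ is a surjective bounded morphism from $(W_{s},R_{s})$ onto ${\mathcal F}_{m}^{n}$ with $f(s)$ generating ${\mathcal F}_{m}^{n}$. Since $n{=}{-}1$, we have $W_{m}^{n}{=}\lsem1,m\rsem$ and $R_{m}^{n}{=}\lsem1,m\rsem{\times}\lsem1,m\rsem$, so every element of $W_{m}^{n}$ is $R_{m}^{n}$-related to every element of $W_{m}^{n}$, and ${\parallel}W_{m}^{n}{\parallel}{=}m$. First I would apply Lemma~\ref{lemma:generated:subframe:from:a:single:point:has:specific:shapes} to $(W,R)$ and $s$. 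In the first case, $R_{s}{=}W_{s}{\times}W_{s}$; since $f$ maps $W_{s}$ onto $W_{m}^{n}$, we get ${\parallel}W_{s}{\parallel}{\geq}m$, which is precisely the first disjunct of the conclusion. In the second case, $W_{s}{=}\{s\}{\cup}C$ and $R_{s}{=}(\{s\}{\times}B){\cup}(C{\times}C)$ for some sets $B,C$ with $B{\subseteq}C$ and $s{\not\in}C$ (and either $B{\not=}\emptyset$ or $C{=}\emptyset$). Because $s{\not\in}C$, the $R_{s}$-successors of $s$ are exactly the members of $B$. Now fix any $j{\in}\lsem1,m\rsem$: since $f(s){\in}W_{m}^{n}$ and hence $(f(s),j){\in}R_{m}^{n}$, the back condition of bounded morphisms supplies some $t^{\prime}{\in}W_{s}$ with $(s,t^{\prime}){\in}R_{s}$ and $f(t^{\prime}){=}j$, and by the previous remark $t^{\prime}{\in}B$. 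Thus $f$ restricted to $B$ already maps $B$ onto $W_{m}^{n}$, whence ${\parallel}B{\parallel}{\geq}m$, which is the second disjunct.

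I expect essentially no serious obstacle; the only point needing a moment's care is the degenerate sub-case $C{=}\emptyset$ of the second alternative, which forces $B{=}\emptyset$, $W_{s}{=}\{s\}$ and $R_{s}{=}\emptyset$. This sub-case cannot occur: the surjection-onto-$B$ argument above would produce a member of the empty set $B$, a contradiction (equivalently, since $m{\geq}1$, the frame ${\mathcal F}_{m}^{n}$ contains a reflexive point, so no surjective bounded morphism from the isolated point onto it exists). Hence only the sub-case $B{\not=}\emptyset$ survives, and there the bound ${\parallel}B{\parallel}{\geq}m$ established above completes the argument. (Note that the hypothesis that $f(s)$ generates ${\mathcal F}_{m}^{n}$ is not actually used here, since ${\mathcal F}_{m}^{n}$ with $n{=}{-}1$ is generated by each of its points; it appears only for parallelism with Lemma~\ref{lemma:about:what:happens:when:it:exists:bmi:flower:integer}.)
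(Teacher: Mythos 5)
Your proof is correct. The paper states this lemma without giving a proof, and your argument~---~invoking Lemma~\ref{lemma:generated:subframe:from:a:single:point:has:specific:shapes} for the shape of $(W_{s},R_{s})$, then using surjectivity of $f$ in the case $R_{s}{=}W_{s}{\times}W_{s}$ and the back condition of bounded morphisms at $s$ (noting $R_{s}(s){=}B$ since $s{\not\in}C$, and that $R_{m}^{-1}$ is total so $B$ must map onto $\lsem1,m\rsem$, which also rules out the degenerate sub-case $C{=}\emptyset$)~---~is exactly the evident intended route.
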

\paragraph{Ehrenfeucht-Fra\"\i ss\'e games}
Let $q{\geq}3$.
\\
\\
Let $(W,R)$ and $(W^{\prime},R^{\prime})$ be Euclidean frames.
\\
\\
The {\it Ehrenfeucht-Fra\"\i ss\'e game ${\mathcal G}_{q}((W,R),(W^{\prime},R^{\prime}))$}\/ is a game played by $2$~{\it players}\/ making $q$~{\it moves.}
For all $q^{\prime}{\in}\lsem1,q\rsem$, in its $q^{\prime}$-th move, the first player picks either a member $s_{q^{\prime}}$ of $(W,R)$, or a member $s_{q^{\prime}}^{\prime}$ of $(W^{\prime},R^{\prime})$.
For all $q^{\prime}{\in}\lsem1,q\rsem$, if the first player picked a member $s_{q^{\prime}}$ of $(W,R)$ in its $q^{\prime}$-th move then the second player picks a member $s_{q^{\prime}}^{\prime}$ of $(W^{\prime},R^{\prime})$ in its $q^{\prime}$-th move, whereas if the first player picked a member $s_{q^{\prime}}^{\prime}$ of $(W^{\prime},R^{\prime})$ in its $q^{\prime}$-th move then the second player picks a member $s_{q^{\prime}}$ of $(W,R)$ in its $q^{\prime}$-th move.
%
%
\\
\\
For all $q^{\prime}{\in}\lsem0,q\rsem$, for all $s_{1},\ldots,s_{q^{\prime}}{\in}W$ and for all $s^{\prime}_{1},\ldots,s^{\prime}_{q^{\prime}}{\in}W^{\prime}$, the couple $((s_{1},\ldots,s_{q^{\prime}}),(s^{\prime}_{1},\ldots,s^{\prime}_{q^{\prime}}))$ is a {\it position of the game.}
When $q^{\prime}{=}0$, this position is called the {\it empty position of the game.}
\\
\\
For all $q^{\prime}{\in}\lsem0,q\rsem$, for all $s_{1},\ldots,s_{q^{\prime}}{\in}W$ and for all $s^{\prime}_{1},\ldots,s^{\prime}_{q^{\prime}}{\in}W^{\prime}$, the second player {\it wins the position $((s_{1},\ldots,s_{q^{\prime}}),(s^{\prime}_{1},\ldots,s^{\prime}_{q^{\prime}}))$ of the game}\/ if for all $q^{\prime\prime},q^{\prime\prime\prime}{\in}
$\linebreak$
\lsem1,q^{\prime}\rsem$,
\begin{itemize}
\item $s_{q^{\prime\prime}}{=}s_{q^{\prime\prime\prime}}$ if and only if $s^{\prime}_{q^{\prime\prime}}{=}s^{\prime}_{q^{\prime\prime\prime}}$,
\item $s_{q^{\prime\prime}}{R}s_{q^{\prime\prime\prime}}$ if and only if $s^{\prime}_{q^{\prime\prime}}{R^{\prime}}s^{\prime}_{q^{\prime\prime\prime}}$.
\end{itemize}
%
%
A {\it strategy of the second player in the game}\/ is a deterministic procedure telling how the second player chooses its moves during the game.
%
%
%
%
A strategy of the second player in the game is {\it winning}\/ if the second player wins each position of the game that is reachable by applying the strategy.
See~\cite[Chapter~$2$]{Ebbinghaus:Flum:1999}.
\begin{lemma}\label{lemma:about:partitions:q:greater:smaller:games}
If $R{=}W{\times}W$, $R^{\prime}{=}W^{\prime}{\times}W^{\prime}$ and either ${\parallel}W^{\prime}{\parallel}{\leq}q$ and ${\parallel}W^{\prime}{\parallel}{=}{\parallel}W{\parallel}$, or ${\parallel}W^{\prime}{\parallel}{>}q$ and ${\parallel}W{\parallel}{=}q$ then the second player has a winning strategy in the Ehrenfeucht-Fra\"\i ss\'e game ${\mathcal G}_{q}((W,R),(W^{\prime},R^{\prime}))$.
\end{lemma}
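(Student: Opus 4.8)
The plan is to exhibit an explicit strategy for the second player and to prove, by induction on the number of moves already played, that every position reachable by following it is won by the second player. The crucial observation is that, since $R{=}W{\times}W$ and $R^{\prime}{=}W^{\prime}{\times}W^{\prime}$ are total, the second clause of the winning condition for a position~---~$s_{q^{\prime\prime}}{R}s_{q^{\prime\prime\prime}}$ if and only if $s^{\prime}_{q^{\prime\prime}}{R^{\prime}}s^{\prime}_{q^{\prime\prime\prime}}$~---~is satisfied automatically, both sides being always true (including when $q^{\prime\prime}{=}q^{\prime\prime\prime}$, by reflexivity of $R$ and $R^{\prime}$). Hence the second player only has to maintain the invariant that, among the points chosen so far, $s_{q^{\prime\prime}}{=}s_{q^{\prime\prime\prime}}$ if and only if $s^{\prime}_{q^{\prime\prime}}{=}s^{\prime}_{q^{\prime\prime\prime}}$; equivalently, that the correspondence recorded by the position is a bijection between the set of chosen points of one frame and the set of chosen points of the other.

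The strategy is the obvious one: copy equalities, and otherwise play a fresh point. Suppose a position $((s_{1},\ldots,s_{q^{\prime}{-}1}),(s^{\prime}_{1},\ldots,s^{\prime}_{q^{\prime}{-}1}))$ satisfying the invariant has been reached and the first player makes its $q^{\prime}$-th move. If the first player picks a point of one of the two frames that coincides with $s_{j}$ (respectively $s^{\prime}_{j}$) for some $j{<}q^{\prime}$, the second player answers with $s^{\prime}_{j}$ (respectively $s_{j}$), which is unambiguously determined by the invariant. If the first player picks a point distinct from all points already chosen in its frame, the second player answers with some point of the other frame distinct from all points already chosen there.

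The core of the verification, and the only place the cardinality hypotheses enter, is to show that in the second case such a fresh point always exists. Before the $q^{\prime}$-th move at most $q^{\prime}{-}1{\leq}q{-}1$ distinct points have been selected in each frame, and by the invariant the number of distinct points selected so far is the same on both sides. When ${\parallel}W^{\prime}{\parallel}{>}q$ and ${\parallel}W{\parallel}{=}q$, each of $W$ and $W^{\prime}$ has at least $q{>}q^{\prime}{-}1$ points, so a fresh point is available on whichever side the second player must move. When ${\parallel}W^{\prime}{\parallel}{=}{\parallel}W{\parallel}{\leq}q$, the very fact that the first player was able to play a fresh point on its side shows that the number of distinct points used so far is at most ${\parallel}W{\parallel}{-}1{=}{\parallel}W^{\prime}{\parallel}{-}1$, so again a fresh point remains on the other side. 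It then remains to check that the invariant is preserved: in the ``copy'' case equalities transfer through the pair $(s_{j},s^{\prime}_{j})$, and in the ``fresh'' case the newly chosen points are, by construction, simultaneously distinct from all earlier points on both sides. Therefore every reachable position is won, so the displayed strategy is winning. I expect the only slightly delicate point to be the bookkeeping of the ``at most $q^{\prime}{-}1$ distinct points so far'' bound in the two cases; everything else follows immediately from totality of the two relations.
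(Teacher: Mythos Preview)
Your proposal is correct. The paper actually states this lemma without proof (it is treated as a routine fact about Ehrenfeucht--Fra\"\i ss\'e games on structures with total relations), and the argument you give---reduce the winning condition to preserving equalities via totality of $R$ and $R^{\prime}$, then maintain a bijection by copying repeated choices and otherwise playing fresh points, using the cardinality hypotheses to guarantee fresh points remain---is the standard one and exactly what the authors would have in mind.
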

\begin{lemma}\label{lemma:about:frames:C:B:smaller:greater:games}
If $W{=}\{s\}{\cup}C$, $R{=}(\{s\}{\times}B){\cup}(C{\times}C)$, $W^{\prime}{=}\{s^{\prime}\}{\cup}C^{\prime}$ and $R^{\prime}{=}(\{s^{\prime}\}{\times}
$\linebreak$
B^{\prime}){\cup}(C^{\prime}{\times}C^{\prime})$ for some sets $B,C,B^{\prime},C^{\prime}$ and for some elements $s,s^{\prime}$ such that either $B{\not=}\emptyset$, or $C{=}\emptyset$, $B{\subseteq}C$, $s{\not\in}C$, either $B^{\prime}{\not=}\emptyset$, or $C^{\prime}{=}\emptyset$, $B^{\prime}{\subseteq}C^{\prime}$, $s^{\prime}{\not\in}C^{\prime}$, either ${\parallel}B^{\prime}{\parallel}{\leq}q$ and ${\parallel}B^{\prime}{\parallel}{=}{\parallel}B{\parallel}$, or ${\parallel}B^{\prime}{\parallel}{>}q$ and ${\parallel}B{\parallel}{=}q$ and either ${\parallel}C^{\prime}{\setminus}B^{\prime}{\parallel}{\leq}q$ and ${\parallel}C^{\prime}{\setminus}B^{\prime}{\parallel}{=}{\parallel}C{\setminus}B{\parallel}$, or ${\parallel}C^{\prime}{\setminus}B^{\prime}{\parallel}{>}q$ and ${\parallel}C{\setminus}B{\parallel}{=}q$ then the second player has a winning strategy in the Ehrenfeucht-Fra\"\i ss\'e game ${\mathcal G}_{q}((W,R),(W^{\prime},R^{\prime}))$.
\end{lemma}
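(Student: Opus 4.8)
The plan is to describe an explicit winning strategy for the second player, exploiting the fact that $R$ and $R^{\prime}$ are completely determined by a fixed partition of each frame into three blocks. First I would record the relevant structural facts about $(W,R)$. Since $s{\not\in}C$ and $B{\subseteq}C$, the three sets $\{s\}$, $B$ and $C{\setminus}B$ are pairwise disjoint and their union is $W$; moreover, for all $x,y{\in}W$, $x{R}y$ holds if and only if either $x{=}s$ and $y{\in}B$, or $x{\in}C$ and $y{\in}C$. In particular, whether $x{R}y$ holds depends only on which of the three blocks $x$ and $y$ belong to, and the analogous statement holds for $(W^{\prime},R^{\prime})$ with the blocks $\{s^{\prime}\}$, $B^{\prime}$ and $C^{\prime}{\setminus}B^{\prime}$. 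Consequently, whenever a position $((s_{1},\ldots,s_{q^{\prime}}),(s^{\prime}_{1},\ldots,s^{\prime}_{q^{\prime}}))$ of the game ${\mathcal G}_{q}((W,R),(W^{\prime},R^{\prime}))$ satisfies, for all $i,j{\in}\lsem1,q^{\prime}\rsem$, both that $s_{i}$ and $s^{\prime}_{i}$ lie in corresponding blocks and that $s_{i}{=}s_{j}$ iff $s^{\prime}_{i}{=}s^{\prime}_{j}$, that position is won by the second player: edge preservation is automatic from the block-determined form of $R$ and $R^{\prime}$, and equality across two distinct blocks holds on neither side.

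It therefore suffices to give a strategy keeping all reachable positions of this form, and I would let the second player play blockwise. In each round the first player's pick falls in exactly one block of $W$ or of $W^{\prime}$, and the second player answers within the corresponding block, running three independent sub-strategies. On the block pair $(\{s\},\{s^{\prime}\})$ the second player answers $s$ by $s^{\prime}$ and $s^{\prime}$ by $s$. On the block pair $(B,B^{\prime})$, when both are non-empty, the second player uses a winning strategy for the Ehrenfeucht-Fra\"\i ss\'e game ${\mathcal G}_{q}((B,B{\times}B),(B^{\prime},B^{\prime}{\times}B^{\prime}))$ between the corresponding full clusters; such a strategy is provided by Lemma~\ref{lemma:about:partitions:q:greater:smaller:games}, because by hypothesis either ${\parallel}B^{\prime}{\parallel}{\leq}q$ and ${\parallel}B^{\prime}{\parallel}{=}{\parallel}B{\parallel}$, or ${\parallel}B^{\prime}{\parallel}{>}q$ and ${\parallel}B{\parallel}{=}q$. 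Symmetrically, on the block pair $(C{\setminus}B,C^{\prime}{\setminus}B^{\prime})$, when both are non-empty, the second player uses a winning strategy for ${\mathcal G}_{q}((C{\setminus}B,(C{\setminus}B){\times}(C{\setminus}B)),(C^{\prime}{\setminus}B^{\prime},(C^{\prime}{\setminus}B^{\prime}){\times}(C^{\prime}{\setminus}B^{\prime})))$, again by Lemma~\ref{lemma:about:partitions:q:greater:smaller:games}, since either ${\parallel}C^{\prime}{\setminus}B^{\prime}{\parallel}{\leq}q$ and ${\parallel}C^{\prime}{\setminus}B^{\prime}{\parallel}{=}{\parallel}C{\setminus}B{\parallel}$, or ${\parallel}C^{\prime}{\setminus}B^{\prime}{\parallel}{>}q$ and ${\parallel}C{\setminus}B{\parallel}{=}q$. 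When one of $B$, $C{\setminus}B$ is empty, the cardinality hypotheses together with $q{\geq}3$ force its primed counterpart to be empty as well, so the corresponding sub-game is vacuous; in the extreme case $C{=}\emptyset$ we get $B{=}\emptyset$, hence $B^{\prime}{=}\emptyset$ and $C^{\prime}{=}\emptyset$, and the game is between two one-point irreflexive frames and is trivially won. Since at most $q$ rounds are played overall, each sub-game lasts at most $q$ rounds, so each sub-strategy keeps its own sub-position winning — that is, it preserves equality types within that block — and the second player's answers always stay in the block matching the first player's pick. Hence every reachable position has the form analysed above, and the second player wins.

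The only real content is the structural observation of the first paragraph — a short case analysis on the form of $R$ showing that a block-preserving, equality-preserving correspondence is automatically a partial isomorphism — together with the bookkeeping that matches the two disjunctive cardinality conditions in the statement to exactly the hypotheses needed to invoke Lemma~\ref{lemma:about:partitions:q:greater:smaller:games} on the cluster pairs $(B,B^{\prime})$ and $(C{\setminus}B,C^{\prime}{\setminus}B^{\prime})$. I do not anticipate any genuine obstacle: this is a routine decomposition of an Ehrenfeucht-Fra\"\i ss\'e game over a structure whose accessibility relation is block-determined into independent sub-games over its colour classes, the nontrivial colour classes being the two full clusters already treated in Lemma~\ref{lemma:about:partitions:q:greater:smaller:games}.
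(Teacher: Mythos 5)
Your proposal is correct. The paper actually states Lemma~\ref{lemma:about:frames:C:B:smaller:greater:games} without giving a proof (it is one of the results whose proofs are omitted), so there is no official argument to compare against; but your argument is exactly in the spirit of the proofs the paper does give for its other Ehrenfeucht--Fra\"\i ss\'e lemmas, and it checks out. The two essential points are both handled: first, the structural observation that $R$ and $R^{\prime}$ are block-determined by the partitions $\{s\}$, $B$, $C{\setminus}B$ and $\{s^{\prime}\}$, $B^{\prime}$, $C^{\prime}{\setminus}B^{\prime}$ (using $s{\not\in}C$, $B{\subseteq}C$ and their primed analogues), so that any block-respecting, equality-pattern-respecting matching is a winning position; second, the bookkeeping that the two disjunctive cardinality hypotheses are precisely the hypotheses of Lemma~\ref{lemma:about:partitions:q:greater:smaller:games} for the cluster pairs $(B,B^{\prime})$ and $(C{\setminus}B,C^{\prime}{\setminus}B^{\prime})$, that with $q{\geq}3$ a block is empty if and only if its primed counterpart is empty (which also disposes of the degenerate case $C{=}\emptyset$), and that since the whole game lasts only $q$ rounds each sub-strategy is invoked for at most $q$ moves, so composing the three independent sub-strategies is legitimate. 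An alternative, equally acceptable route would have been to write out the second player's responses by explicit case analysis, as the paper does in the proofs of Lemmas~\ref{lemma:the:alpha:reduction:is:always:dust:finite} and~\ref{lemma:gamma:reduction:prebounded:dust:quasi:root:bounded:OK}; your compositional reduction buys brevity and reuses Lemma~\ref{lemma:about:partitions:q:greater:smaller:games} instead of redoing the counting inside the clusters.
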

\section{Reductions}
In this section, we introduce $3$~types of reductions: the $\alpha$-reduction of a galaxy, the $\gamma$-reduction of a simple galaxy and the $\delta$-reduction of a disjoint indexed family of galaxies.
\\
\\
Let $q{\geq}3$.
\paragraph{$\alpha$-reductions}
The main effect of an $\alpha$-reduction on a galaxy is to reduce the cardinality of its upper part.
%
%
\\
\\
Let ${\mathcal F}_{A,B}^{\rho}$ be a galaxy.
\\
\\
Let $\sim$ be the equivalence relation on $A$ defined by $u{\sim}v$ if and only if $\rho(u){=}\rho(v)$.
For all $u{\in}A$, let $\lbrack u\rbrack_{\sim}$ be the equivalence class of $u$ modulo $\sim$.
\\
\\
For all $u{\in}A$, if ${\parallel}\lbrack u\rbrack_{\sim}{\parallel}{\leq}q$ then let $C_{u}$ be $\lbrack u\rbrack_{\sim}$ else let $C_{u}$ be an arbitrary subset of $\lbrack u\rbrack_{\sim}$ of cardinality $q$.
Let $A^{\prime}$ be $\bigcup\{C_{u}:\ u{\in}A\}$.
Obviously, from each equivalence class modulo $\sim$, $A^{\prime}$ contains at most $q$ elements.
Therefore, if $B$ is finite then $A^{\prime}$ contains at most $q{\times}2^{{\parallel}B{\parallel}}$ elements.
%
%
%
%
%
%
\\
\\
Let $B^{\prime}$ be $B$.
\\
\\
Let ${\mathcal F}_{A^{\prime},B^{\prime}}^{\rho^{\prime}}$ be the galaxy such that for all $u^{\prime}{\in}A^{\prime}$, $\rho^{\prime}(u^{\prime}){=}\rho(u^{\prime})$.
${\mathcal F}_{A^{\prime},B^{\prime}}^{\rho^{\prime}}$ is a {\it $q$-$\alpha$-reduction of ${\mathcal F}_{A,B}^{\rho}$.}
\begin{lemma}\label{lemma:the:alpha:reduction:is:always:dust:finite}
\begin{enumerate}
\item $A^{\prime}{\subseteq}A$ and if ${\parallel}A^{\prime}{\parallel}{\leq}2$ then $A^{\prime}{=}A$,
\item $B^{\prime}{=}B$,
\item for all $s^{\prime}{\in}A^{\prime}$, $\rho^{\prime}(s^{\prime}){=}\rho(s^{\prime})$,
\item ${\parallel}{\rho^{\prime}}^{{-}1}(\emptyset){\parallel}{\leq}q$ and ${\parallel}{\rho^{\prime}}^{{-}1}(B^{\prime}){\parallel}{\leq}q$,
\item for all $t^{\prime}{\in}B^{\prime}$, ${\parallel}{\rho^{\prime}}^{{-}1}(\{t^{\prime}\}){\parallel}{\leq}q$ and for all $t^{\prime}{\in}B^{\prime}$, ${\parallel}{\rho^{\prime}}^{{-}1}(B^{\prime}{\setminus}\{t^{\prime}\}){\parallel}{\leq}q$,
\item if ${\mathcal F}_{A,B}^{\rho}$ is simple then ${\mathcal F}_{A^{\prime},B^{\prime}}^{\rho^{\prime}}$ is simple, whereas if ${\mathcal F}_{A,B}^{\rho}$ is non-simple and $B$ is finite then ${\mathcal F}_{A^{\prime},B^{\prime}}^{\rho^{\prime}}$ is non-simple, $B^{\prime}$ is finite and ${\parallel}A^{\prime}{\parallel}{\leq}q{\times}2^{{\parallel}B{\parallel}}$,
\item ${\mathcal F}_{A^{\prime},B^{\prime}}^{\rho^{\prime}}$ is a bounded morphic image of ${\mathcal F}_{A,B}^{\rho}$,
\item the second player has a winning strategy in the Ehrenfeucht-Fra\"\i ss\'e game ${\mathcal G}_{q}({\mathcal F}_{A,B}^{\rho},{\mathcal F}_{A^{\prime},B^{\prime}}^{\rho^{\prime}})$.
\end{enumerate}
\end{lemma}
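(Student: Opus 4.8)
The plan is to verify the eight items in order; the first six are essentially bookkeeping and the last is the only substantial point. Items (1)--(3) are immediate from the construction of the $q$-$\alpha$-reduction: each $C_{u}$ is a subset of $[u]_{\sim}\subseteq A$ (and, as the construction intends, depends only on the class $[u]_{\sim}$, which we write $C_{E}$ for a $\sim$-class $E$), so $A^{\prime}=\bigcup\{C_{u}:\ u{\in}A\}\subseteq A$; since $\|A^{\prime}\|$ equals the sum over the $\sim$-classes $E$ of $\min\{\|E\|,q\}$ and $q\geq 3$, a value $\|A^{\prime}\|\leq 2$ forces every class to have at most $2<q$ elements, hence $C_{E}=E$ for each $E$ and $A^{\prime}=A$; and $B^{\prime}=B$, $\rho^{\prime}(s^{\prime})=\rho(s^{\prime})$ hold by definition. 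For (4) and (5) I would use that, for every $D\subseteq B$, the set $\rho^{-1}(D)$ is either empty or one whole $\sim$-class, so ${\rho^{\prime}}^{-1}(D)=A^{\prime}\cap\rho^{-1}(D)$ is either empty or equal to some $C_{E}$, hence has at most $q$ elements; taking $D$ to be $\emptyset$, $B$, $\{t^{\prime}\}$, and $B\setminus\{t^{\prime}\}$ yields the five cardinality bounds.

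For (6): if ${\mathcal F}_{A,B}^{\rho}$ is simple then $B^{\prime}=B\neq\emptyset$ and each $\rho^{\prime}(s^{\prime})=\rho(s^{\prime})$ with $s^{\prime}\in A^{\prime}\subseteq A$ is $\emptyset$, $B$, a singleton of $B$, or a co-singleton of $B$, so by the characterization of simple galaxies recalled above ${\mathcal F}_{A^{\prime},B^{\prime}}^{\rho^{\prime}}$ is simple. If ${\mathcal F}_{A,B}^{\rho}$ is non-simple and $B$ is finite, then $B^{\prime}=B$ is finite, the number of $\sim$-classes is at most $\|\wp(B)\|=2^{\|B\|}$ so $\|A^{\prime}\|\leq q\cdot 2^{\|B\|}$, and non-simplicity gives either $B=\emptyset$ (hence $B^{\prime}=\emptyset$ and ${\mathcal F}_{A^{\prime},B^{\prime}}^{\rho^{\prime}}$ is non-simple) or some $s\in A$ with $\|\rho(s)\|\geq 2$ and $\|B\setminus\rho(s)\|\geq 2$, in which case any $s^{\prime}\in C_{s}\neq\emptyset$ witnesses non-simplicity of ${\mathcal F}_{A^{\prime},B^{\prime}}^{\rho^{\prime}}$. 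For (7) I would construct a surjective bounded morphism $f$ from ${\mathcal F}_{A,B}^{\rho}$ to ${\mathcal F}_{A^{\prime},B^{\prime}}^{\rho^{\prime}}$ directly: let $f$ be the identity on $B$ and, on each $\sim$-class $E$, a surjection $E\to C_{E}$ that fixes $C_{E}$ pointwise. Surjectivity onto $A^{\prime}\cup B^{\prime}$ is clear, and both bounded-morphism conditions reduce to the identities $\rho^{\prime}(f(u))=\rho(f(u))=\rho(u)$ for $u\in A$ (valid because $f(u)\in[u]_{\sim}$) and $B^{\prime}{\times}B^{\prime}=B{\times}B$, together with the fact that in any galaxy a lower-part point is $R$-related exactly to the whole lower part.

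The real work is (8). I would show the second player wins ${\mathcal G}_{q}({\mathcal F}_{A,B}^{\rho},{\mathcal F}_{A^{\prime},B^{\prime}}^{\rho^{\prime}})$ by maintaining, after each round, a position $((s_{1},\ldots,s_{q^{\prime}}),(s_{1}^{\prime},\ldots,s_{q^{\prime}}^{\prime}))$ with: (i) $s_{i}\in B$ if and only if $s_{i}^{\prime}\in B^{\prime}$, and in that case $s_{i}=s_{i}^{\prime}$; (ii) $s_{i}\in A$ if and only if $s_{i}^{\prime}\in A^{\prime}$, and in that case $s_{i}\sim s_{i}^{\prime}$; (iii) $s_{i}=s_{j}$ if and only if $s_{i}^{\prime}=s_{j}^{\prime}$, for all $i,j$. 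Any such position is won by the second player, since the only non-obvious relational facts among picked points are ``a picked lower point lies in the $\rho$-image of a picked upper point'' and ``two picked lower points are related'', and both transport across by $\rho^{\prime}(s_{i}^{\prime})=\rho(s_{i}^{\prime})=\rho(s_{i})$ when $s_{i}\in A$ together with the coincidence of lower points on the two sides, the equality facts being exactly (iii). To maintain the invariant: on a lower-part move the second player copies it verbatim (legal as $B^{\prime}=B$); on an upper-part move, if it repeats an earlier pick the second player repeats the earlier answer, and otherwise must choose a \emph{fresh} point, from $C_{E}$ when the move is in ${\mathcal F}_{A,B}^{\rho}$ and from $E$ when the move is in ${\mathcal F}_{A^{\prime},B^{\prime}}^{\rho^{\prime}}$, where $E$ is the relevant $\sim$-class. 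Because the game lasts only $q$ rounds, at most $q$ distinct upper points of $E$ are ever chosen on a given side; as $\|C_{E}\|=\min\{\|E\|,q\}$, a fresh partner always exists --- when $\|E\|\leq q$ one has $C_{E}=E$ and the second player may simply mirror, and when $\|E\|>q$ then $\|C_{E}\|=q$ suffices on one side while $\|E\|>q$ suffices on the other. Fixing enumerations of $B$ and of the $C_{E}$'s and always taking the least available fresh point makes the strategy deterministic. I expect the bookkeeping of this last item --- in particular verifying that the ``fresh partner'' supply never runs out and that repeated picks remain consistent with (iii) --- to be the one genuinely delicate point; everything before it is routine.
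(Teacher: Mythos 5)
Your proof is correct and follows essentially the same route as the paper's: items (1)--(6) by construction, item (7) via the same class-respecting surjection that fixes $B$ and $A'$ pointwise, and item (8) via the same Ehrenfeucht-Fra\"\i ss\'e strategy (copy lower-part moves, answer a fresh upper-part move by a fresh point of the same $\sim$-class), your counting argument for the existence of a fresh partner being detail the paper leaves as ``obvious''. The only slight imprecision is the aside that the second player ``may simply mirror'' when ${\parallel}E{\parallel}{\leq}q$ (mirroring can clash with an earlier answer on the other side), but the least-available-fresh-point strategy you actually state, together with your invariant and count, handles this correctly.
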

\begin{proof}
$\mathbf{(1)}$--$\mathbf{(6)}$~By construction of ${\mathcal F}_{A^{\prime},B^{\prime}}^{\rho^{\prime}}$.
\\
\\
$\mathbf{(7)}$~Let $g:\ A{\longrightarrow}A^{\prime}$ be a function such that for all $s{\in}A$, either $s{\in}A^{\prime}$ and $g(s){=}s$, or $s{\not\in}A^{\prime}$ and $g(s){\sim}s$.
Notice that the existence of such a function is a consequence of the definition of $A^{\prime}$.
Notice also that $g$ is surjective.
Moreover, for all $s{\in}A$, $g(s){\sim}s$.
Let $f:\ A{\cup}B{\longrightarrow}A^{\prime}{\cup}B^{\prime}$ be the function such that for all $s{\in}A$, $f(s){=}g(s)$ and for all $s{\in}B$, $f(s){=}s$.
Notice that $f$ is a bounded morphism from ${\mathcal F}_{A,B}^{\rho}$ to ${\mathcal F}_{A^{\prime},B^{\prime}}^{\rho^{\prime}}$.
Notice also that $f$ is surjective.
\\
\\
$\mathbf{(8)}$~Let $q^{\prime}{\in}\lsem1,q\rsem$ and $((s_{1},\ldots,s_{q^{\prime}{-}1}),(s^{\prime}_{1},\ldots,s^{\prime}_{q^{\prime}{-}1}))$ be a winning position of the Ehrenfeucht-Fra\"\i ss\'e game ${\mathcal G}_{q}({\mathcal F}_{A,B}^{\rho},{\mathcal F}_{A^{\prime},B^{\prime}}^{\rho^{\prime}})$.
Suppose the first player picked a member $s$ of ${\mathcal F}_{A,B}^{\rho}$ in its $q^{\prime}$th move.
Then
\begin{itemize}
\item if $s$ is in $A$ then
\begin{itemize}
\item if $s{=}s_{q^{\prime\prime}}$ for some $q^{\prime\prime}{\in}\lsem1,q^{\prime}{-}1\rsem$ then the second player picks $s^{\prime}_{q^{\prime\prime}}$ in $A^{\prime}$ in its $q^{\prime}$th move,
\item otherwise, the second player picks an element in $\lbrack s\rbrack_{\sim}{\cap}(A^{\prime}{\setminus}\{s^{\prime}_{1},\ldots,
$\linebreak$
s^{\prime}_{q^{\prime}{-}1}\})$ in its $q^{\prime}$th move,
\end{itemize}
\item if $s$ is in $B$ then the second player picks $s$ in $B^{\prime}$ in its $q^{\prime}$th move.
\end{itemize}
Suppose the first player picked a member $s^{\prime}$ of ${\mathcal F}_{A^{\prime},B^{\prime}}^{\rho^{\prime}}$ in its $q^{\prime}$th move.
Then
\begin{itemize}
\item if $s^{\prime}$ is in $A^{\prime}$ then
\begin{itemize}
\item if $s^{\prime}{=}s^{\prime}_{q^{\prime\prime}}$ for some $q^{\prime\prime}{\in}\lsem1,q^{\prime}{-}1\rsem$ then the second player picks $s_{q^{\prime\prime}}$ in $A$ in its $q^{\prime}$th move,
\item otherwise, the second player picks an element in $\lbrack s^{\prime}\rbrack_{\sim}{\cap}(A{\setminus}\{s_{1},\ldots,
$\linebreak$
s_{q^{\prime}{-}1}\})$ in its $q^{\prime}$th move,
\end{itemize}
\item if $s^{\prime}$ is in $B^{\prime}$ then the second player picks $s^{\prime}$ in $B$ in its $q^{\prime}$th move.
\end{itemize}
Obviously, $((s_{1},\ldots,s_{q^{\prime}{-}1},s),(s^{\prime}_{1},\ldots,s^{\prime}_{q^{\prime}{-}1},s^{\prime}))$ is a winning position of the Eh\-renfeucht-Fra\"\i ss\'e game ${\mathcal G}_{q}({\mathcal F}_{A,B}^{\rho},{\mathcal F}_{A^{\prime},B^{\prime}}^{\rho^{\prime}})$.
\medskip
\end{proof}
%
%
Let $\{{\mathcal F}_{A_{i},B_{i}}^{\rho_{i}}:\ i{\in}I\}$ be a disjoint indexed family of galaxies.
$\{{\mathcal F}_{A_{i}^{\prime},B_{i}^{\prime}}^{\rho_{i}^{\prime}}:\ i{\in}I\}$ is a {\it $q$-$\alpha$-reduction of $\{{\mathcal F}_{A_{i},B_{i}}^{\rho_{i}}:\ i{\in}I\}$}\/ if for all $i{\in}I$, ${\mathcal F}_{A_{i}^{\prime},B_{i}^{\prime}}^{\rho_{i}^{\prime}}$ is a $q$-$\alpha$-reduction of ${\mathcal F}_{A_{i},B_{i}}^{\rho_{i}}$.
\begin{proposition}\label{lemma:good:properties:of:alpha:reductions:indexed}
%
%
Let $I^{ns}$ be the set of all $i{\in}I$ such that ${\mathcal F}_{A_{i},B_{i}}^{\rho_{i}}$ is non-simple.
If $\{{\mathcal F}_{A_{i}^{\prime},B_{i}^{\prime}}^{\rho_{i}^{\prime}}:\ i{\in}I\}$ is a $q$-$\alpha$-reduction of $\{{\mathcal F}_{A_{i},B_{i}}^{\rho_{i}}:\ i{\in}I\}$, $(W,R)$ is the union of $\{{\mathcal F}_{A_{i},B_{i}}^{\rho_{i}}:\ i{\in}I\}$ and $(W^{\prime},R^{\prime})$ is the union of $\{{\mathcal F}_{A_{i}^{\prime},B_{i}^{\prime}}^{\rho_{i}^{\prime}}:\ i{\in}I\}$ then
\begin{enumerate}
\item $\{{\mathcal F}_{A_{i}^{\prime},B_{i}^{\prime}}^{\rho_{i}^{\prime}}:\ i{\in}I\}$ is dust-bounded,
\item if $\{{\mathcal F}_{A_{i},B_{i}}^{\rho_{i}}:\ i{\in}I^{ns}\}$ is kernel-bounded then $\{{\mathcal F}_{A_{i}^{\prime},B_{i}^{\prime}}^{\rho_{i}^{\prime}}:\ i{\in}I^{ns}\}$ is root-bounded and kernel-bounded,
\item $(W^{\prime},R^{\prime})$ is a bounded morphic image of $(W,R)$,
\item the second player has a winning strategy in the Ehrenfeucht-Fra\"\i ss\'e game ${\mathcal G}_{q}((W,R),(W^{\prime},R^{\prime}))$.
\end{enumerate}
\end{proposition}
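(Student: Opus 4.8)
The plan is to prove each of the four items of Proposition~\ref{lemma:good:properties:of:alpha:reductions:indexed} by reducing it, via Lemma~\ref{lemma:about:generated:subframe:and:disjoint:unions} and component-wise reasoning, to the corresponding single-galaxy facts already established in Lemma~\ref{lemma:the:alpha:reduction:is:always:dust:finite}.

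For item~(1), I would invoke Lemma~\ref{lemma:the:alpha:reduction:is:always:dust:finite}(4): for every $i{\in}I$ we have ${\parallel}{\rho_{i}^{\prime}}^{{-}1}(\emptyset){\parallel}{\leq}q$, and since $\dust({\mathcal F}_{A_{i}^{\prime},B_{i}^{\prime}}^{\rho_{i}^{\prime}}){=}{\rho_{i}^{\prime}}^{{-}1}(\emptyset)$, the family is $q$-dust-bounded, hence dust-bounded. For item~(2), suppose $\{{\mathcal F}_{A_{i},B_{i}}^{\rho_{i}}:\ i{\in}I^{ns}\}$ is $k$-kernel-bounded for some $k{\in}\N$. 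Since $B_{i}^{\prime}{=}B_{i}$ by Lemma~\ref{lemma:the:alpha:reduction:is:always:dust:finite}(2) and $\kernel({\mathcal F}_{A_{i}^{\prime},B_{i}^{\prime}}^{\rho_{i}^{\prime}}){=}B_{i}^{\prime}$, the reduced family is $k$-kernel-bounded as well, so in particular each $B_{i}$ with $i{\in}I^{ns}$ is finite. Then Lemma~\ref{lemma:the:alpha:reduction:is:always:dust:finite}(6), applied to the non-simple galaxies, gives ${\parallel}A_{i}^{\prime}{\parallel}{\leq}q{\times}2^{{\parallel}B_{i}{\parallel}}{\leq}q{\times}2^{k}$, and since $\root({\mathcal F}_{A_{i}^{\prime},B_{i}^{\prime}}^{\rho_{i}^{\prime}}){\subseteq}A_{i}^{\prime}$, the family $\{{\mathcal F}_{A_{i}^{\prime},B_{i}^{\prime}}^{\rho_{i}^{\prime}}:\ i{\in}I^{ns}\}$ is $(q{\times}2^{k})$-root-bounded and $k$-kernel-bounded.

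For item~(3), by Lemma~\ref{lemma:the:alpha:reduction:is:always:dust:finite}(7) each ${\mathcal F}_{A_{i}^{\prime},B_{i}^{\prime}}^{\rho_{i}^{\prime}}$ is a bounded morphic image of ${\mathcal F}_{A_{i},B_{i}}^{\rho_{i}}$ via some surjective bounded morphism $f_{i}$. I would define $f:\ W{\longrightarrow}W^{\prime}$ by $f(w){=}f_{i}(w)$ when $w{\in}W_{i}$; this is well-defined since the $W_{i}$ partition $W$ (the family is disjoint), it is surjective since each $f_{i}$ is and the $W_{i}^{\prime}$ cover $W^{\prime}$, and it is a bounded morphism because $R{=}\bigcup\{R_{i}:\ i{\in}I\}$ and $R^{\prime}{=}\bigcup\{R_{i}^{\prime}:\ i{\in}I\}$, so both the forth condition and the back condition are checked inside a single component $i$ where they hold by the $f_{i}$. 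For item~(4), I would likewise assemble the winning strategies: by Lemma~\ref{lemma:the:alpha:reduction:is:always:dust:finite}(8) the second player has a winning strategy $\sigma_{i}$ in ${\mathcal G}_{q}({\mathcal F}_{A_{i},B_{i}}^{\rho_{i}},{\mathcal F}_{A_{i}^{\prime},B_{i}^{\prime}}^{\rho_{i}^{\prime}})$. When the first player picks an element in $W_{i}$ (or $W_{i}^{\prime}$), the second player replies using $\sigma_{i}$ restricted to the subsequence of earlier moves played in components $i$ and $i^{\prime}$. One then checks that a position is winning for the second player iff its restriction to each component is winning there: the equality conditions and the relational conditions across distinct components are automatically satisfied on both sides (distinct components are $R$-unrelated and $R^{\prime}$-unrelated and contain no common elements), so the global position is won exactly when each component-wise restriction is.

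The main obstacle is item~(4): one must be careful that ``play component-wise'' is a legitimate strategy and that the bookkeeping across components never creates a mismatch. The subtle point is that an element picked in $W_{i}$ could, a priori, be forced to be identified with or related to an element in another $W_{j}$; but since $W_{i}{\cap}W_{j}{=}\emptyset$, $R{\subseteq}\bigcup\{W_{i}{\times}W_{i}\}$, and symmetrically for the primed structure, no cross-component constraint is ever imposed, so the decomposition of positions into per-component positions is faithful in both directions. Once that observation is made precise, combining it with the per-component winning strategies $\sigma_{i}$ gives the global winning strategy, and the proof is complete.
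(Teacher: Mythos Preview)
Your proposal is correct and follows essentially the same approach as the paper, which simply writes ``By Lemma~\ref{lemma:the:alpha:reduction:is:always:dust:finite}'' and leaves the component-wise assembly implicit. You have spelled out explicitly how each of the eight items of Lemma~\ref{lemma:the:alpha:reduction:is:always:dust:finite} lifts from a single galaxy to the disjoint indexed family, which is exactly the intended content of that one-line proof; the gluing arguments you give for items~(3) and~(4) are the standard ones and are sound.
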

\begin{proof}
By Lemma~\ref{lemma:the:alpha:reduction:is:always:dust:finite}.
\medskip
\end{proof}
\paragraph{$\gamma$-reductions}
%
%
The main effect of a $\gamma$-reduction on a simple galaxy is to reduce both the cardinality of its upper part and the cardinality of its lower part.
\\
\\
Let ${\mathcal F}_{A,B}^{\rho}$ be a simple galaxy satisfying the conditions $(4)$ and $(5)$ of Lemma~\ref{lemma:the:alpha:reduction:is:always:dust:finite}.
\\
\\
%
%
Let $\approx$ be the equivalence relation on $B$ defined by $u{\approx}v$ if and only if ${\parallel}\rho^{{-}1}(\{u\}){\parallel}
$\linebreak$
{=}{\parallel}\rho^{{-}1}(\{v\}){\parallel}$ and ${\parallel}\rho^{{-}1}(B{\setminus}\{u\}){\parallel}{=}{\parallel}\rho^{{-}1}(B{\setminus}\{v\}){\parallel}$.
For all $u{\in}B$, let $\lbrack u\rbrack_{\approx}$ be the equivalence class of $u$ modulo $\approx$.
\\
\\
For all $u{\in}B$, if ${\parallel}\lbrack u\rbrack_{\approx}{\parallel}{\leq}q$ then let $D_{u}$ be $\lbrack u\rbrack_{\approx}$ else let $D_{u}$ be an arbitrary subset of $\lbrack u\rbrack_{\approx}$ of cardinality $q$.
Let $B^{\prime}$ be $\bigcup\{D_{u}:\ u{\in}B\}$.
Obviously, from each equivalence class modulo $\approx$, $B^{\prime}$ contains at most $q$ elements.
Therefore, $B^{\prime}$ contains at most $q{\times}(q{+}1)^{2}$ elements.
\\
\\
%
%
%
%
Let $A^{\prime}$ be $\rho^{{-}1}(\emptyset){\cup}\rho^{{-}1}(B){\cup}\bigcup\{\rho^{{-}1}(\{u\}):\ u{\in}B^{\prime}\}{\cup}\bigcup\{\rho^{{-}1}(B{\setminus}\{u\}):\ u{\in}B^{\prime}\}$.
Since $B^{\prime}$ contains at most $q{\times}(q{+}1)^{2}$ elements, then $A^{\prime}$ contains at most $2{\times}q{\times}(q
$\linebreak$
{\times}(q{+}1)^{2}+1)$ elements.
%
%
%
%
%
%
\\
\\
Let ${\mathcal F}_{A^{\prime},B^{\prime}}^{\rho^{\prime}}$ be the galaxy such that for all $u^{\prime}{\in}A^{\prime}$, $\rho^{\prime}(u^{\prime}){=}\rho(u^{\prime}){\cap}B^{\prime}$.
${\mathcal F}_{A^{\prime},B^{\prime}}^{\rho^{\prime}}$ is a {\it $q$-$\gamma$-reduction of ${\mathcal F}_{A,B}^{\rho}$.}
\begin{lemma}\label{lemma:gamma:reduction:prebounded:dust:quasi:root:bounded:OK}
\begin{enumerate}
\item $A^{\prime}{\subseteq}A$,
\item $B^{\prime}{\subseteq}B$ and if ${\parallel}B^{\prime}{\parallel}{\leq}2$ then $B^{\prime}{=}B$,
\item for all $s^{\prime}{\in}A^{\prime}$, $\rho^{\prime}(s^{\prime}){\subseteq}\rho(s^{\prime})$,
\item ${\parallel}{\rho^{\prime}}^{{-}1}(\emptyset){\parallel}{\leq}q$ and ${\parallel}{\rho^{\prime}}^{{-}1}(B^{\prime}){\parallel}{\leq}q$,
\item for all $t^{\prime}{\in}B^{\prime}$, ${\parallel}{\rho^{\prime}}^{{-}1}(\{t^{\prime}\}){\parallel}{\leq}q$ and for all $t^{\prime}{\in}B^{\prime}$, ${\parallel}{\rho^{\prime}}^{{-}1}(B^{\prime}{\setminus}\{t^{\prime}\}){\parallel}{\leq}q$,
\item ${\mathcal F}_{A^{\prime},B^{\prime}}^{\rho^{\prime}}$ is simple, ${\parallel}B^{\prime}{\parallel}{\leq}q{\times}(q{+}1)^{2}$ and ${\parallel}A^{\prime}{\parallel}{\leq}2{\times}q{\times}(q{\times}(q{+}1)^{2}+1)$,
\item for all $s^{\prime}{\in}A^{\prime}{\cup}B^{\prime}$, there exists $s{\in}A{\cup}B$ such that the least generated subframe of ${\mathcal F}_{A^{\prime},B^{\prime}}^{\rho^{\prime}}$ containing $s^{\prime}$ is a bounded morphic image of the least generated subframe of ${\mathcal F}_{A,B}^{\rho}$ containing $s$,
%
%
\item the second player has a winning strategy in the Ehrenfeucht-Fra\"\i ss\'e game ${\mathcal G}_{q}({\mathcal F}_{A,B}^{\rho},{\mathcal F}_{A^{\prime},B^{\prime}}^{\rho^{\prime}})$.
\end{enumerate}
\end{lemma}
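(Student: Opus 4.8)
The plan is to obtain items (1)--(6) by directly unwinding the construction of ${\mathcal F}_{A^{\prime},B^{\prime}}^{\rho^{\prime}}$ and to reserve the real work for (7) and (8). Items (1)--(3) are immediate: $A^{\prime}{\subseteq}A$ since $A^{\prime}$ is a union of sets $\rho^{{-}1}(\cdot)$, $B^{\prime}{\subseteq}B$ since $B^{\prime}$ is a union of subsets of $\approx$-classes, the clause ``if ${\parallel}B^{\prime}{\parallel}{\leq}2$ then $B^{\prime}{=}B$'' holds because $q{\geq}3$ forces every $\approx$-class that is not kept entire to contribute $q{\geq}3$ elements to $B^{\prime}$, and $\rho^{\prime}(s^{\prime}){=}\rho(s^{\prime}){\cap}B^{\prime}{\subseteq}\rho(s^{\prime})$ by definition. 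For (4)--(6) the decisive observation is that, ${\mathcal F}_{A,B}^{\rho}$ being simple, every $\rho(s)$ with $s{\in}A$ equals $\emptyset$, $B$, $\{t\}$ or $B{\setminus}\{t\}$ for some $t{\in}B$, so that $A$ is the union of $\rho^{{-}1}(\emptyset)$, $\rho^{{-}1}(B)$, $\bigcup_{u{\in}B}\rho^{{-}1}(\{u\})$ and $\bigcup_{u{\in}B}\rho^{{-}1}(B{\setminus}\{u\})$; consequently, for $s^{\prime}{\in}A^{\prime}$ the value $\rho^{\prime}(s^{\prime}){=}\rho(s^{\prime}){\cap}B^{\prime}$ equals, respectively, $\emptyset$, $B^{\prime}$, $\{t\}$ (with $t{\in}B^{\prime}$) or $B^{\prime}{\setminus}\{t\}$ (with $t{\in}B^{\prime}$). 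A short case analysis on this list identifies ${\rho^{\prime}}^{{-}1}(\emptyset)$ with $\rho^{{-}1}(\emptyset)$, ${\rho^{\prime}}^{{-}1}(B^{\prime})$ with $\rho^{{-}1}(B)$, and, when ${\parallel}B^{\prime}{\parallel}{\geq}3$, ${\rho^{\prime}}^{{-}1}(\{t^{\prime}\})$ with $\rho^{{-}1}(\{t^{\prime}\})$ and ${\rho^{\prime}}^{{-}1}(B^{\prime}{\setminus}\{t^{\prime}\})$ with $\rho^{{-}1}(B{\setminus}\{t^{\prime}\})$ for $t^{\prime}{\in}B^{\prime}$; hence conditions $(4)$ and $(5)$ of Lemma~\ref{lemma:the:alpha:reduction:is:always:dust:finite}, which ${\mathcal F}_{A,B}^{\rho}$ satisfies by hypothesis, transfer verbatim to ${\mathcal F}_{A^{\prime},B^{\prime}}^{\rho^{\prime}}$, and ${\mathcal F}_{A^{\prime},B^{\prime}}^{\rho^{\prime}}$ is simple since $B^{\prime}{\not=}\emptyset$ and every $\rho^{\prime}(s^{\prime})$ is on the above list. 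For the cardinality bounds I would count $\approx$-classes: condition $(5)$ for $\rho$ puts the pair $({\parallel}\rho^{{-}1}(\{u\}){\parallel},{\parallel}\rho^{{-}1}(B{\setminus}\{u\}){\parallel})$ into $\lsem0,q\rsem{\times}\lsem0,q\rsem$, so there are at most $(q{+}1)^{2}$ classes, each contributing at most $q$ elements to $B^{\prime}$, whence ${\parallel}B^{\prime}{\parallel}{\leq}q{\times}(q{+}1)^{2}$; then ${\parallel}A^{\prime}{\parallel}{\leq}{\parallel}\rho^{{-}1}(\emptyset){\parallel}{+}{\parallel}\rho^{{-}1}(B){\parallel}{+}\sum_{u{\in}B^{\prime}}({\parallel}\rho^{{-}1}(\{u\}){\parallel}{+}{\parallel}\rho^{{-}1}(B{\setminus}\{u\}){\parallel}){\leq}2q{+}2q{\parallel}B^{\prime}{\parallel}$, which gives the stated bound.

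For (7) I would show that one can always take $s{=}s^{\prime}$, which is legitimate because $A^{\prime}{\subseteq}A$ and $B^{\prime}{\subseteq}B$ by (1)--(2). If ${\parallel}B^{\prime}{\parallel}{\leq}2$ then $B^{\prime}{=}B$ by (2), hence $A^{\prime}{=}A$ and $\rho^{\prime}{=}\rho$, so ${\mathcal F}_{A^{\prime},B^{\prime}}^{\rho^{\prime}}{=}{\mathcal F}_{A,B}^{\rho}$ and each generated subframe is a bounded morphic image of itself via the identity. If ${\parallel}B^{\prime}{\parallel}{\geq}3$, the least generated subframe of ${\mathcal F}_{A^{\prime},B^{\prime}}^{\rho^{\prime}}$ containing $s^{\prime}$ is the isolated point (when $s^{\prime}{\in}A^{\prime}$ and $\rho^{\prime}(s^{\prime}){=}\emptyset$), the cluster on $B^{\prime}$ (when $s^{\prime}{\in}B^{\prime}$), or $(\{s^{\prime}\}{\cup}B^{\prime},(\{s^{\prime}\}{\times}\rho^{\prime}(s^{\prime})){\cup}(B^{\prime}{\times}B^{\prime}))$ with $\rho^{\prime}(s^{\prime})$ equal to $B^{\prime}$, to a singleton $\{t^{\prime}\}{\subseteq}B^{\prime}$, or to $B^{\prime}{\setminus}\{t^{\prime}\}$ with $t^{\prime}{\in}B^{\prime}$ (when $s^{\prime}{\in}A^{\prime}$ and $\rho^{\prime}(s^{\prime}){\not=}\emptyset$). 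By the case analysis of the previous paragraph, $\rho(s^{\prime})$ has the matching shape with $B$ in place of $B^{\prime}$ and the same $t^{\prime}$, so the least generated subframe of ${\mathcal F}_{A,B}^{\rho}$ containing $s^{\prime}$ has the matching shape with $B$ in place of $B^{\prime}$, and ${\parallel}B^{\prime}{\parallel}{\leq}{\parallel}B{\parallel}$. Lemmas~\ref{lemma:inclusion:Bprime:B:for:generated:subframes:is:bmi:quater}, \ref{lemma:inclusion:Bprime:B:for:generated:subframes:is:bmi}, \ref{lemma:inclusion:Bprime:B:for:generated:subframes:is:bmi:bis} and~\ref{lemma:inclusion:Bprime:B:for:generated:subframes:is:bmi:ter} (read with the cluster $B^{\prime}$ in the role of the smaller cluster and $B$ in the role of the larger one) then yield that the former subframe is a bounded morphic image of the latter; the side condition ``if the smaller cluster has at most $2$ elements then both clusters have the same cardinality'' is vacuously met since ${\parallel}B^{\prime}{\parallel}{\geq}3$.

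Item (8) is the heart of the argument. I would have the second player maintain a partial isomorphism between the picked elements together with an auxiliary injection $\phi$ defined on those members of $B$ that have been picked in ${\mathcal F}_{A,B}^{\rho}$ or ``reserved'' by an earlier move picking some $s{\in}A$ with $\rho(s)$ a singleton or co-singleton, subject to $\phi(t){\approx}t$ for all $t$ in its domain and $\phi(t){=}t$ whenever $t{\in}B^{\prime}$. The responses: to a pick of $t{\in}B$, answer $\phi(t)$ if already defined, else a not-yet-used member of $\lbrack t\rbrack_{\approx}{\cap}B^{\prime}$; to a pick of $s{\in}A$ with $\rho(s){=}\emptyset$ or $\rho(s){=}B$, answer $s$ itself, which lies in $A^{\prime}$ since $\rho^{{-}1}(\emptyset)$ and $\rho^{{-}1}(B)$ are kept in $A^{\prime}$; to a pick of $s{\in}A$ with $\rho(s){=}\{t\}$, reserve $\phi(t)$ if needed and answer a member of $\rho^{{-}1}(\{\phi(t)\})$, which is non-empty because $\phi(t){\approx}t$ and $s{\in}\rho^{{-}1}(\{t\})$ and which is contained in $A^{\prime}$ because $\phi(t){\in}B^{\prime}$; symmetrically for $\rho(s){=}B{\setminus}\{t\}$; picks in ${\mathcal F}_{A^{\prime},B^{\prime}}^{\rho^{\prime}}$ are mirrored, using that each $\approx$-class is retained in ${\mathcal F}_{A^{\prime},B^{\prime}}^{\rho^{\prime}}$ with its cardinality truncated at $q$. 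The winning condition is preserved because two $\approx$-equivalent members of $B$ receive the same in-edges from $A$ except possibly through the $A$-elements in $\rho^{{-}1}(\{u\}){\cup}\rho^{{-}1}(B{\setminus}\{u\})$ for $u$ in the finitely many classes of the picked elements, and all of those $A$-elements are retained in $A^{\prime}$; the quantitative point --- and the reason the constant $q$ occurs in the definitions of $B^{\prime}$ and $A^{\prime}$ --- is that a $q$-move game picks or reserves at most $q$ members of any single $\approx$-class, while $B^{\prime}$ keeps $\min\{{\parallel}\lbrack u\rbrack_{\approx}{\parallel},q\}$ of them, so a fresh representative is always available. The main obstacle throughout is precisely this verification in (8): tracking exactly which members of $B$ have already been separated from one another by the $A$-moves played, and checking that $\phi$ always extends consistently with the winning condition. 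Items (1)--(7), by contrast, follow directly from the construction and from the bounded-morphism lemmas of the preceding section once the short case analysis afforded by simplicity is in hand.
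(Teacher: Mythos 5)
Your proposal is correct and follows essentially the same route as the paper: items (1)--(6) by unwinding the construction, item (7) by taking $s{=}s^{\prime}$ and invoking Lemmas~\ref{lemma:inclusion:Bprime:B:for:generated:subframes:is:bmi}--\ref{lemma:inclusion:Bprime:B:for:generated:subframes:is:bmi:quater} through the same five-case analysis (with the $\|B^{\prime}\|{\leq}2$ versus $\|B^{\prime}\|{\geq}3$ split correctly discharging the side conditions of those lemmas), and item (8) by the same back-and-forth strategy based on $\approx$-classes truncated at $q$. Your auxiliary injection $\phi$ with ``reservations'' is just an explicit bookkeeping of what the paper encodes as the consistency clauses (``$\rho(s_{q^{\prime\prime}}){=}\{s\}$ iff $\rho^{\prime}(s^{\prime}_{q^{\prime\prime}}){=}\{s^{\prime}\}$'', etc.) in its choice of the least admissible response, so the two presentations of the strategy coincide in substance.
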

\begin{proof}
$\mathbf{(1)}$--$\mathbf{(6)}$~By construction of ${\mathcal F}_{A^{\prime},B^{\prime}}^{\rho^{\prime}}$.
\\
\\
$\mathbf{(7)}$~Let $s^{\prime}{\in}A^{\prime}{\cup}B^{\prime}$.
Hence, either $\mathbf{(i)}$~$s^{\prime}{\in}\rho^{-1}(\emptyset)$, or $\mathbf{(ii)}$~$s^{\prime}{\in}{\cup}\rho^{{-}1}(B)$, or $\mathbf{(iii)}$~there exists $u{\in}B^{\prime}$ such that $s^{\prime}{\in}\rho^{{-}1}(\{u\})$, or $\mathbf{(iv)}$~there exists $u{\in}B^{\prime}$ such that $s^{\prime}{\in}\rho^{{-}1}(B
$\linebreak$
{\setminus}\{u\})$, or $\mathbf{(v)}$~$s^{\prime}{\in}B^{\prime}$.
We consider the following $5$~cases.
\\
\\
{\bf Case $\mathbf{(i)}$:}
Thus, $s^{\prime}{\in}A^{\prime}$, $s^{\prime}{\in}A$, $\rho^{\prime}(s^{\prime}){=}\emptyset$ and $\rho(s^{\prime}){=}\emptyset$.
Consequently, the least generated subframe of ${\mathcal F}_{A^{\prime},B^{\prime}}^{\rho^{\prime}}$ containing $s^{\prime}$ is isomorphic to the isolated point and the least generated subframe of ${\mathcal F}_{A,B}^{\rho}$ containing $s^{\prime}$ is isomorphic to the isolated point.
Hence, the former generated subframe is a bounded morphic image of the latter generated subframe.
\\
\\
{\bf Case $\mathbf{(ii)}$:}
Thus, $s^{\prime}{\in}A^{\prime}$, $s^{\prime}{\in}A$, $\rho^{\prime}(s^{\prime}){=}B^{\prime}$ and $\rho(s^{\prime}){=}B$.
Consequently, the least generated subframe of ${\mathcal F}_{A^{\prime},B^{\prime}}^{\rho^{\prime}}$ containing $s^{\prime}$ is of the form $(\{s^{\prime}\}{\cup}B^{\prime},(\{s^{\prime}\}{\times}B^{\prime}){\cup}
$\linebreak$
(B^{\prime}{\times}B^{\prime}))$ and the least generated subframe of ${\mathcal F}_{A,B}^{\rho}$ containing $s^{\prime}$ is of the form $(\{s^{\prime}\}{\cup}B,(\{s^{\prime}\}{\times}B){\cup}(B{\times}B))$.
Since $B^{\prime}{\subseteq}B$, then ${\parallel}B^{\prime}{\parallel}{\leq}{\parallel}B{\parallel}$.
Since $B^{\prime}{\not=}\emptyset$ and $B{\not=}\emptyset$, then by Lemma~\ref{lemma:inclusion:Bprime:B:for:generated:subframes:is:bmi}, the former generated subframe is a bounded morphic image of the latter generated subframe.
\\
\\
{\bf Case $\mathbf{(iii)}$:}
Hence, $s^{\prime}{\in}A^{\prime}$, $u{\in}B^{\prime}$, $s^{\prime}{\in}A$, $u{\in}B$, $\rho^{\prime}(s^{\prime}){=}\{u\}$ and $\rho(s^{\prime}){=}\{u\}$.
Thus, the least generated subframe of ${\mathcal F}_{A^{\prime},B^{\prime}}^{\rho^{\prime}}$ containing $s^{\prime}$ is of the form $(\{s^{\prime}\}{\cup}B^{\prime},\{(s^{\prime},
$\linebreak$
u)\}{\cup}(B^{\prime}{\times}B^{\prime}))$ and the least generated subframe of ${\mathcal F}_{A,B}^{\rho}$ containing $s^{\prime}$ is of the form $(\{s^{\prime}\}{\cup}B,\{(s^{\prime},u)\}{\cup}(B{\times}B))$.
Since $B^{\prime}{\subseteq}B$, then ${\parallel}B^{\prime}{\parallel}{\leq}{\parallel}B{\parallel}$.
Since $B^{\prime}{\not=}\emptyset$ and $B{\not=}\emptyset$, then by Lemma~\ref{lemma:inclusion:Bprime:B:for:generated:subframes:is:bmi:bis}, the former generated subframe is a bounded morphic image of the latter generated subframe.
\\
\\
{\bf Case $\mathbf{(iv)}$:}
Consequently, $s^{\prime}{\in}A^{\prime}$, $u{\in}B^{\prime}$, $s^{\prime}{\in}A$, $u{\in}B$, $\rho^{\prime}(s^{\prime}){=}B^{\prime}{\setminus}\{u\}$ and $\rho(s^{\prime}){=}B
$\linebreak$
{\setminus}\{u\}$.
Hence, the least generated subframe of ${\mathcal F}_{A^{\prime},B^{\prime}}^{\rho^{\prime}}$ containing $s^{\prime}$ is of the form $(\{s^{\prime}\}{\cup}B^{\prime},(\{s^{\prime}\}{\times}(B^{\prime}{\setminus}\{u\})){\cup}(B^{\prime}{\times}B^{\prime}))$ and the least generated subframe of ${\mathcal F}_{A,B}^{\rho}$ containing $s^{\prime}$ is of the form $(\{s^{\prime}\}{\cup}B,(\{s^{\prime}\}{\times}(B{\setminus}\{u\})){\cup}(B{\times}B))$.
Since $B^{\prime}{\subseteq}B$, then ${\parallel}B^{\prime}{\parallel}{\leq}{\parallel}B{\parallel}$.
Since $B^{\prime}{\not=}\emptyset$ and $B{\not=}\emptyset$, then by Lemma~\ref{lemma:inclusion:Bprime:B:for:generated:subframes:is:bmi:ter}, the former generated subframe is a bounded morphic image of the latter generated subframe.
\\
\\
{\bf Case $\mathbf{(v)}$:}
Thus, $s^{\prime}{\in}B^{\prime}$ and $s^{\prime}{\in}B$.
Consequently, the least generated subframe of ${\mathcal F}_{A^{\prime},B^{\prime}}^{\rho^{\prime}}$ containing $s^{\prime}$ is of the form $(B^{\prime},B^{\prime}{\times}B^{\prime})$ and the least generated subframe of ${\mathcal F}_{A,B}^{\rho}$ containing $s^{\prime}$ is of the form $(B,B{\times}B)$.
Since $B^{\prime}{\subseteq}B$, then ${\parallel}B^{\prime}{\parallel}{\leq}{\parallel}B{\parallel}$.
Since $B^{\prime}{\not=}\emptyset$ and $B{\not=}\emptyset$, then by Lemma~\ref{lemma:inclusion:Bprime:B:for:generated:subframes:is:bmi:quater}, the former generated subframe is a bounded morphic image of the latter generated subframe.
\\
\\
$\mathbf{(8)}$~Let us assume that some well-order on $A{\cup}B$ has been fixed.
Let $q^{\prime}{\in}\lsem1,q\rsem$ and $((s_{1},\ldots,s_{q^{\prime}{-}1}),(s^{\prime}_{1},\ldots,s^{\prime}_{q^{\prime}{-}1}))$ be a winning position of the Ehrenfeucht-Fra\"\i ss\'e game ${\mathcal G}_{q}({\mathcal F}_{A,B}^{\rho},{\mathcal F}_{A^{\prime},B^{\prime}}^{\rho^{\prime}})$.
Suppose the first player picked a member $s$ of ${\mathcal F}_{A,B}^{\rho}$ in its $q^{\prime}$th move.
Then
\begin{itemize}
\item if $s{=}s_{q^{\prime\prime}}$ for some $q^{\prime\prime}{\in}\lsem1,q^{\prime}{-}1\rsem$ and $s{\in}B$ then the second player picks $s^{\prime}_{q^{\prime\prime}}$ in $B^{\prime}$ in its $q^{\prime}$th move,
\item if $s{\not=}s_{q^{\prime\prime}}$ for every $q^{\prime\prime}{\in}\lsem1,q^{\prime}{-}1\rsem$, $s{\in}B$ and ${\parallel}\lbrack s\rbrack_{\approx}{\parallel}{\leq}q$ then the second player picks $s$ in $B^{\prime}$ in its $q^{\prime}$th move,
\item if $s{\not=}s_{q^{\prime\prime}}$ for every $q^{\prime\prime}{\in}\lsem1,q^{\prime}{-}1\rsem$, $s{\in}B$ and ${\parallel}\lbrack s\rbrack_{\approx}{\parallel}{>}q$ then the second player picks the least $s^{\prime}$ in $B^{\prime}$ such that
\begin{itemize}
\item $s{\approx}s^{\prime}$,
\item $s^{\prime}{\not=}s^{\prime}_{q^{\prime\prime}}$ for every $q^{\prime\prime}{\in}\lsem1,q^{\prime}{-}1\rsem$,
\item for all $q^{\prime\prime}{\in}\lsem1,q^{\prime}{-}1\rsem$, if $s_{q^{\prime\prime}}{\in}A$, $\rho(s_{q^{\prime\prime}}){\not=}\emptyset$ and $\rho(s_{q^{\prime\prime}}){\not=}B$ then
\begin{itemize}
\item $\rho(s_{q^{\prime\prime}}){=}\{s\}$ if and only if $\rho^{\prime}(s^{\prime}_{q^{\prime\prime}}){=}\{s^{\prime}\}$,
\item $\rho(s_{q^{\prime\prime}}){=}B{\setminus}\{s\}$ if and only if $\rho^{\prime}(s^{\prime}_{q^{\prime\prime}}){=}B^{\prime}{\setminus}\{s^{\prime}\}$,
\end{itemize}
\end{itemize}
in its $q^{\prime}$th move,
\item if $s{=}s_{q^{\prime\prime}}$ for some $q^{\prime\prime}{\in}\lsem1,q^{\prime}{-}1\rsem$ and $s{\in}A$ then the second player picks $s^{\prime}_{q^{\prime\prime}}$ in $A^{\prime}$ in its $q^{\prime}$th move,
\item if $s{\not=}s_{q^{\prime\prime}}$ for every $q^{\prime\prime}{\in}\lsem1,q^{\prime}{-}1\rsem$, $s{\in}A$ and either $\rho(s){=}\emptyset$, or $\rho(s){=}B$ then the second player picks $s$ in $A^{\prime}$ in its $q^{\prime}$th move,
\item if $s{\not=}s_{q^{\prime\prime}}$ for every $q^{\prime\prime}{\in}\lsem1,q^{\prime}{-}1\rsem$, $s{\in}A$ and either $\rho(s){=}\{t\}$, or $\rho(s){=}B{\setminus}\{t\}$ for some $t{\in}B$ such that ${\parallel}\lbrack t\rbrack_{\approx}{\parallel}{\leq}q$ then the second player picks $s$ in $A^{\prime}$ in its $q^{\prime}$th move,
\item if $s{\not=}s_{q^{\prime\prime}}$ for every $q^{\prime\prime}{\in}\lsem1,q^{\prime}{-}1\rsem$, $s{\in}A$ and $\rho(s){=}\{t\}$ for some $t{\in}B$ such that ${\parallel}\lbrack t\rbrack_{\approx}{\parallel}{>}q$ and $t{=}s_{q^{\prime\prime\prime}}$ for some $q^{\prime\prime\prime}{\in}\lsem1,q^{\prime}{-}1\rsem$ then the second player picks the least element in ${\rho^{\prime}}^{{-}1}(\{s^{\prime}_{q^{\prime\prime\prime}}\}){\setminus}\{s^{\prime}_{1},\ldots,s^{\prime}_{q^{\prime}{-}1}\}$,
\item if $s{\not=}s_{q^{\prime\prime}}$ for every $q^{\prime\prime}{\in}\lsem1,q^{\prime}{-}1\rsem$, $s{\in}A$ and $\rho(s){=}B{\setminus}\{t\}$ for some $t{\in}B$ such that ${\parallel}\lbrack t\rbrack_{\approx}{\parallel}{>}q$ and $t{=}s_{q^{\prime\prime\prime}}$ for some $q^{\prime\prime\prime}{\in}\lsem1,q^{\prime}{-}1\rsem$ then the second player picks the least element in ${\rho^{\prime}}^{{-}1}(B^{\prime}{\setminus}\{s^{\prime}_{q^{\prime\prime\prime}}\}){\setminus}\{s^{\prime}_{1},\ldots,s^{\prime}_{q^{\prime}{-}1}\}$,
\item if $s{\not=}s_{q^{\prime\prime}}$ for every $q^{\prime\prime}{\in}\lsem1,q^{\prime}{-}1\rsem$, $s{\in}A$ and $\rho(s){=}\{t\}$ for some $t{\in}B$ such that ${\parallel}\lbrack t\rbrack_{\approx}{\parallel}{>}q$ and $t{\not=}s_{q^{\prime\prime\prime}}$ for every $q^{\prime\prime\prime}{\in}\lsem1,q^{\prime}{-}1\rsem$ then $t^{\prime}$ being the least element in $B^{\prime}$ such that
\begin{itemize}
\item $t{\approx}t^{\prime}$,
\item $t^{\prime}{\not=}s^{\prime}_{q^{\prime\prime}}$ for every $q^{\prime\prime}{\in}\lsem1,q^{\prime}{-}1\rsem$,
\item for all $q^{\prime\prime}{\in}\lsem1,q^{\prime}{-}1\rsem$, if $s_{q^{\prime\prime}}{\in}A$, $\rho(s_{q^{\prime\prime}}){\not=}\emptyset$ and $\rho(s_{q^{\prime\prime}}){\not=}B$ then
\begin{itemize}
\item $\rho(s_{q^{\prime\prime}}){=}\{t\}$ if and only if $\rho^{\prime}(s^{\prime}_{q^{\prime\prime}}){=}\{t^{\prime}\}$,
\item $\rho(s_{q^{\prime\prime}}){=}B{\setminus}\{t\}$ if and only if $\rho^{\prime}(s^{\prime}_{q^{\prime\prime}}){=}B^{\prime}{\setminus}\{t^{\prime}\}$,
\end{itemize}
\end{itemize}
the second player picks the least element in ${\rho^{\prime}}^{{-}1}(\{t^{\prime}\}){\setminus}\{s^{\prime}_{1},\ldots,s^{\prime}_{q^{\prime}{-}1}\}$,
\item if $s{\not=}s_{q^{\prime\prime}}$ for every $q^{\prime\prime}{\in}\lsem1,q^{\prime}{-}1\rsem$, $s{\in}A$ and $\rho(s){=}B{\setminus}\{t\}$ for some $t{\in}B$ such that ${\parallel}\lbrack t\rbrack_{\approx}{\parallel}{>}q$ and $t{\not=}s_{q^{\prime\prime\prime}}$ for some $q^{\prime\prime\prime}{\in}\lsem1,q^{\prime}{-}1\rsem$ then $t^{\prime}$ being the least element in $B^{\prime}$ such that
\begin{itemize}
\item $t{\approx}t^{\prime}$,
\item $t^{\prime}{\not=}s^{\prime}_{q^{\prime\prime}}$ for every $q^{\prime\prime}{\in}\lsem1,q^{\prime}{-}1\rsem$,
\item for all $q^{\prime\prime}{\in}\lsem1,q^{\prime}{-}1\rsem$, if $s_{q^{\prime\prime}}{\in}A$, $\rho(s_{q^{\prime\prime}}){\not=}\emptyset$ and $\rho(s_{q^{\prime\prime}}){\not=}B$ then
\begin{itemize}
\item $\rho(s_{q^{\prime\prime}}){=}\{t\}$ if and only if $\rho^{\prime}(s^{\prime}_{q^{\prime\prime}}){=}\{t^{\prime}\}$,
\item $\rho(s_{q^{\prime\prime}}){=}B{\setminus}\{t\}$ if and only if $\rho^{\prime}(s^{\prime}_{q^{\prime\prime}}){=}B^{\prime}{\setminus}\{t^{\prime}\}$,
\end{itemize}
\end{itemize}
the second player picks the least element in ${\rho^{\prime}}^{{-}1}(B^{\prime}{\setminus}\{t^{\prime}\}){\setminus}\{s^{\prime}_{1},\ldots,s^{\prime}_{q^{\prime}{-}1}\}$.
\end{itemize}
Suppose the first player picked a member $s^{\prime}$ of ${\mathcal F}_{A^{\prime},B^{\prime}}^{\rho^{\prime}}$ in its $q^{\prime}$th move.
Then
\begin{itemize}
\item if $s^{\prime}{=}s^{\prime}_{q^{\prime\prime}}$ for some $q^{\prime\prime}{\in}\lsem1,q^{\prime}{-}1\rsem$ and $s^{\prime}{\in}B^{\prime}$ then the second player picks $s_{q^{\prime\prime}}$ in $B$ in its $q^{\prime}$th move,
\item if $s^{\prime}{\not=}s^{\prime}_{q^{\prime\prime}}$ for every $q^{\prime\prime}{\in}\lsem1,q^{\prime}{-}1\rsem$, $s^{\prime}{\in}B^{\prime}$ and ${\parallel}\lbrack s^{\prime}\rbrack_{\approx}{\parallel}{\leq}q$ then the second player picks $s^{\prime}$ in $B$ in its $q^{\prime}$th move,
\item if $s^{\prime}{\not=}s^{\prime}_{q^{\prime\prime}}$ for every $q^{\prime\prime}{\in}\lsem1,q^{\prime}{-}1\rsem$, $s^{\prime}{\in}B^{\prime}$ and ${\parallel}\lbrack s^{\prime}\rbrack_{\approx}{\parallel}{>}q$ then the second player picks the least $s$ in $B$ such that
\begin{itemize}
\item $s^{\prime}{\approx}s$,
\item $s{\not=}s_{q^{\prime\prime}}$ for every $q^{\prime\prime}{\in}\lsem1,q^{\prime}{-}1\rsem$,
\item for all $q^{\prime\prime}{\in}\lsem1,q^{\prime}{-}1\rsem$, if $s^{\prime}_{q^{\prime\prime}}{\in}A^{\prime}$, $\rho^{\prime}(s^{\prime}_{q^{\prime\prime}}){\not=}\emptyset$ and $\rho^{\prime}(s^{\prime}_{q^{\prime\prime}}){\not=}B^{\prime}$ then
\begin{itemize}
\item $\rho^{\prime}(s^{\prime}_{q^{\prime\prime}}){=}\{s^{\prime}\}$ if and only if $\rho(s_{q^{\prime\prime}}){=}\{s\}$,
\item $\rho^{\prime}(s^{\prime}_{q^{\prime\prime}}){=}B^{\prime}{\setminus}\{s^{\prime}\}$ if and only if $\rho(s_{q^{\prime\prime}}){=}B{\setminus}\{s\}$,
\end{itemize}
\end{itemize}
in its $q^{\prime}$th move,
\item if $s^{\prime}{=}s^{\prime}_{q^{\prime\prime}}$ for some $q^{\prime\prime}{\in}\lsem1,q^{\prime}{-}1\rsem$ and $s^{\prime}{\in}A^{\prime}$ then the second player picks $s_{q^{\prime\prime}}$ in $A$ in its $q^{\prime}$th move,
\item if $s^{\prime}{\not=}s^{\prime}_{q^{\prime\prime}}$ for every $q^{\prime\prime}{\in}\lsem1,q^{\prime}{-}1\rsem$, $s^{\prime}{\in}A^{\prime}$ and either $\rho^{\prime}(s^{\prime}){=}\emptyset$, or $\rho^{\prime}(s^{\prime}){=}B^{\prime}$ then the second player picks $s^{\prime}$ in $A$ in its $q^{\prime}$th move,
\item if $s^{\prime}{\not=}s^{\prime}_{q^{\prime\prime}}$ for every $q^{\prime\prime}{\in}\lsem1,q^{\prime}{-}1\rsem$, $s^{\prime}{\in}A^{\prime}$ and either $\rho^{\prime}(s^{\prime}){=}\{t^{\prime}\}$, or $\rho^{\prime}(s^{\prime}){=}B^{\prime}
$\linebreak$
{\setminus}\{t^{\prime}\}$ for some $t^{\prime}{\in}B^{\prime}$ such that ${\parallel}\lbrack t^{\prime}\rbrack_{\approx}{\parallel}{\leq}q$ then the second player picks $s^{\prime}$ in $A$ in its $q^{\prime}$th move,
\item if $s^{\prime}{\not=}s^{\prime}_{q^{\prime\prime}}$ for every $q^{\prime\prime}{\in}\lsem1,q^{\prime}{-}1\rsem$, $s^{\prime}{\in}A^{\prime}$ and $\rho(s^{\prime}){=}\{t^{\prime}\}$ for some $t^{\prime}{\in}B^{\prime}$ such that ${\parallel}\lbrack t^{\prime}\rbrack_{\approx}{\parallel}{>}q$ and $t^{\prime}{=}s^{\prime}_{q^{\prime\prime\prime}}$ for some $q^{\prime\prime\prime}{\in}\lsem1,q^{\prime}{-}1\rsem$ then the second player picks the least element in ${\rho}^{{-}1}(\{s_{q^{\prime\prime\prime}}\}){\setminus}\{s_{1},\ldots,s_{q^{\prime}{-}1}\}$,
\item if $s^{\prime}{\not=}s^{\prime}_{q^{\prime\prime}}$ for every $q^{\prime\prime}{\in}\lsem1,q^{\prime}{-}1\rsem$, $s^{\prime}{\in}A^{\prime}$ and $\rho^{\prime}(s^{\prime}){=}B^{\prime}{\setminus}\{t^{\prime}\}$ for some $t^{\prime}{\in}B^{\prime}$ such that ${\parallel}\lbrack t^{\prime}\rbrack_{\approx}{\parallel}{>}q$ and $t^{\prime}{=}s^{\prime}_{q^{\prime\prime\prime}}$ for some $q^{\prime\prime\prime}{\in}\lsem1,q^{\prime}{-}1\rsem$ then the second player picks the least element in ${\rho}^{{-}1}(B{\setminus}\{s_{q^{\prime\prime\prime}}\}){\setminus}\{s_{1},\ldots,s_{q^{\prime}{-}1}\}$,
\item if $s^{\prime}{\not=}s^{\prime}_{q^{\prime\prime}}$ for every $q^{\prime\prime}{\in}\lsem1,q^{\prime}{-}1\rsem$, $s^{\prime}{\in}A^{\prime}$ and $\rho^{\prime}(s^{\prime}){=}\{t^{\prime}\}$ for some $t^{\prime}{\in}B^{\prime}$ such that ${\parallel}\lbrack t^{\prime}\rbrack_{\approx}{\parallel}{>}q$ and $t^{\prime}{\not=}s^{\prime}_{q^{\prime\prime\prime}}$ for every $q^{\prime\prime\prime}{\in}\lsem1,q^{\prime}{-}1\rsem$ then $t$ being the least element in $B$ such that
\begin{itemize}
\item $t^{\prime}{\approx}t$,
\item $t{\not=}s_{q^{\prime\prime}}$ for every $q^{\prime\prime}{\in}\lsem1,q^{\prime}{-}1\rsem$,
\item for all $q^{\prime\prime}{\in}\lsem1,q^{\prime}{-}1\rsem$, if $s^{\prime}_{q^{\prime\prime}}{\in}A^{\prime}$, $\rho^{\prime}(s^{\prime}_{q^{\prime\prime}}){\not=}\emptyset$ and $\rho^{\prime}(s^{\prime}_{q^{\prime\prime}}){\not=}B^{\prime}$ then
\begin{itemize}
\item $\rho^{\prime}(s^{\prime}_{q^{\prime\prime}}){=}\{t^{\prime}\}$ if and only if $\rho(s_{q^{\prime\prime}}){=}\{t\}$,
\item $\rho^{\prime}(s^{\prime}_{q^{\prime\prime}}){=}B^{\prime}{\setminus}\{t^{\prime}\}$ if and only if $\rho(s_{q^{\prime\prime}}){=}B{\setminus}\{t\}$,
\end{itemize}
\end{itemize}
the second player picks the least element in ${\rho}^{{-}1}(\{t\}){\setminus}\{s_{1},\ldots,s_{q^{\prime}{-}1}\}$,
\item if $s^{\prime}{\not=}s^{\prime}_{q^{\prime\prime}}$ for every $q^{\prime\prime}{\in}\lsem1,q^{\prime}{-}1\rsem$, $s^{\prime}{\in}A^{\prime}$ and $\rho^{\prime}(s^{\prime}){=}B^{\prime}{\setminus}\{t^{\prime}\}$ for some $t^{\prime}{\in}B^{\prime}$ such that ${\parallel}\lbrack t^{\prime}\rbrack_{\approx}{\parallel}{>}q$ and $t^{\prime}{\not=}s^{\prime}_{q^{\prime\prime\prime}}$ for some $q^{\prime\prime\prime}{\in}\lsem1,q^{\prime}{-}1\rsem$ then $t$ being the least element in $B$ such that
\begin{itemize}
\item $t^{\prime}{\approx}t$,
\item $t{\not=}s_{q^{\prime\prime}}$ for every $q^{\prime\prime}{\in}\lsem1,q^{\prime}{-}1\rsem$,
\item for all $q^{\prime\prime}{\in}\lsem1,q^{\prime}{-}1\rsem$, if $s^{\prime}_{q^{\prime\prime}}{\in}A^{\prime}$, $\rho^{\prime}(s^{\prime}_{q^{\prime\prime}}){\not=}\emptyset$ and $\rho^{\prime}(s^{\prime}_{q^{\prime\prime}}){\not=}B^{\prime}$ then
\begin{itemize}
\item $\rho^{\prime}(s^{\prime}_{q^{\prime\prime}}){=}\{t^{\prime}\}$ if and only if $\rho(s_{q^{\prime\prime}}){=}\{t\}$,
\item $\rho^{\prime}(s^{\prime}_{q^{\prime\prime}}){=}B^{\prime}{\setminus}\{t^{\prime}\}$ if and only if $\rho(s_{q^{\prime\prime}}){=}B{\setminus}\{t\}$,
\end{itemize}
\end{itemize}
the second player picks the least element in ${\rho}^{{-}1}(B{\setminus}\{t\}){\setminus}\{s_{1},\ldots,s_{q^{\prime}{-}1}\}$.
\end{itemize}
Obviously, $((s_{1},\ldots,s_{q^{\prime}{-}1},s),(s^{\prime}_{1},\ldots,s^{\prime}_{q^{\prime}{-}1},s^{\prime}))$ is a winning position of the Eh\-renfeucht-Fra\"\i ss\'e game ${\mathcal G}_{q}({\mathcal F}_{A,B}^{\rho},{\mathcal F}_{A^{\prime},B^{\prime}}^{\rho^{\prime}})$.
\medskip
\end{proof}
%
%
Let $\{{\mathcal F}_{A_{i},B_{i}}^{\rho_{i}}:\ i{\in}I\}$ be a disjoint family of galaxies such that for all $i{\in}A$, either ${\mathcal F}_{A_{i},B_{i}}^{\rho_{i}}$ is a simple galaxy satisfying the conditions $(4)$ and $(5)$ of Lemma~\ref{lemma:the:alpha:reduction:is:always:dust:finite}, or ${\mathcal F}_{A_{i},B_{i}}^{\rho_{i}}$ is a non-simple galaxy.
$\{{\mathcal F}_{A_{i}^{\prime},B_{i}^{\prime}}^{\rho_{i}^{\prime}}:\ i{\in}I\}$ is a {\it $q$-$\gamma$-reduction of $\{{\mathcal F}_{A_{i},B_{i}}^{\rho_{i}}:\ i{\in}I\}$}\/ if for all $i{\in}I$, either ${\mathcal F}_{A_{i},B_{i}}^{\rho_{i}}$ is a simple galaxy satisfying the conditions $(4)$ and $(5)$ of Lemma~\ref{lemma:the:alpha:reduction:is:always:dust:finite} and ${\mathcal F}_{A_{i}^{\prime},B_{i}^{\prime}}^{\rho_{i}^{\prime}}$ is a $q$-$\gamma$-reduction of ${\mathcal F}_{A_{i},B_{i}}^{\rho_{i}}$, or ${\mathcal F}_{A_{i},B_{i}}^{\rho_{i}}$ is a non-simple galaxy and ${\mathcal F}_{A_{i}^{\prime},B_{i}^{\prime}}^{\rho_{i}^{\prime}}$ is equal to ${\mathcal F}_{A_{i},B_{i}}^{\rho_{i}}$.
\begin{proposition}\label{lemma:good:properties:of:gamma:reductions:indexed}
%
%
Let $I^{ns}$ be the set of all $i{\in}I$ such that ${\mathcal F}_{A_{i},B_{i}}^{\rho_{i}}$ is non-simple.
If $\{{\mathcal F}_{A_{i}^{\prime},B_{i}^{\prime}}^{\rho_{i}^{\prime}}:\ i{\in}I\}$ is a $q$-$\gamma$-reduction of $\{{\mathcal F}_{A_{i},B_{i}}^{\rho_{i}}:\ i{\in}I\}$, $(W,R)$ is the union of $\{{\mathcal F}_{A_{i},B_{i}}^{\rho_{i}}:\ i{\in}I\}$ and $(W^{\prime},R^{\prime})$ is the union of $\{{\mathcal F}_{A_{i}^{\prime},B_{i}^{\prime}}^{\rho_{i}^{\prime}}:\ i{\in}I\}$ then
\begin{enumerate}
\item if $\{{\mathcal F}_{A_{i},B_{i}}^{\rho_{i}}:\ i{\in}I\}$ is dust-bounded then $\{{\mathcal F}_{A_{i}^{\prime},B_{i}^{\prime}}^{\rho_{i}^{\prime}}:\ i{\in}I\}$ is dust-bounded,
\item if $\{{\mathcal F}_{A_{i},B_{i}}^{\rho_{i}}:\ i{\in}I^{ns}\}$ is root-bounded and kernel-bounded then $\{{\mathcal F}_{A_{i}^{\prime},B_{i}^{\prime}}^{\rho_{i}^{\prime}}:\ i{\in}I\}$ is root-bounded and kernel-bounded,
\item for all $s^{\prime}{\in}W^{\prime}$, there exists $s{\in}W$ such that $(W^{\prime}_{s^{\prime}},R^{\prime}_{s^{\prime}})$ is a bounded morphic image of $(W_{s},R_{s})$,
\item the second player has a winning strategy in the Ehrenfeucht-Fra\"\i ss\'e game ${\mathcal G}_{q}((W,R),(W^{\prime},R^{\prime}))$.
\end{enumerate}
\end{proposition}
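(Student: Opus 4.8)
The plan is to mimic the proof of Proposition~\ref{lemma:good:properties:of:alpha:reductions:indexed}: reduce each clause to the corresponding clause of Lemma~\ref{lemma:gamma:reduction:prebounded:dust:quasi:root:bounded:OK}, applied componentwise, distinguishing the two kinds of index. For $i{\in}I{\setminus}I^{ns}$ the galaxy ${\mathcal F}_{A_{i},B_{i}}^{\rho_{i}}$ is simple and satisfies conditions $(4)$ and $(5)$ of Lemma~\ref{lemma:the:alpha:reduction:is:always:dust:finite}, so ${\mathcal F}_{A_{i}^{\prime},B_{i}^{\prime}}^{\rho_{i}^{\prime}}$ is a genuine $q$-$\gamma$-reduction and Lemma~\ref{lemma:gamma:reduction:prebounded:dust:quasi:root:bounded:OK} applies to it verbatim; for $i{\in}I^{ns}$ we have ${\mathcal F}_{A_{i}^{\prime},B_{i}^{\prime}}^{\rho_{i}^{\prime}}{=}{\mathcal F}_{A_{i},B_{i}}^{\rho_{i}}$, so everything about the $i$-th component is trivially inherited (via the identity).

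For clause $(1)$: on simple components, clause $(4)$ of Lemma~\ref{lemma:gamma:reduction:prebounded:dust:quasi:root:bounded:OK} gives ${\parallel}\dust({\mathcal F}_{A_{i}^{\prime},B_{i}^{\prime}}^{\rho_{i}^{\prime}}){\parallel}{=}{\parallel}{\rho_{i}^{\prime}}^{-1}(\emptyset){\parallel}{\leq}q$; on non-simple components the dust is unchanged and hence bounded by the hypothesised dust-bound $k$ of $\{{\mathcal F}_{A_{i},B_{i}}^{\rho_{i}}:\ i{\in}I\}$, so $\max\{k,q\}$ works. For clause $(2)$: on non-simple components root and kernel are unchanged and bounded by hypothesis; on simple components clause $(6)$ of Lemma~\ref{lemma:gamma:reduction:prebounded:dust:quasi:root:bounded:OK} gives ${\parallel}B_{i}^{\prime}{\parallel}{\leq}q{\times}(q{+}1)^{2}$ and ${\parallel}A_{i}^{\prime}{\parallel}{\leq}2{\times}q{\times}(q{\times}(q{+}1)^{2}{+}1)$, which bound the kernel and (since it is a subset of $A_{i}^{\prime}$) the root. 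Taking the maximum over the two regimes yields uniform bounds.

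For clause $(3)$: given $s^{\prime}{\in}W^{\prime}$, there is a unique $i{\in}I$ with $s^{\prime}{\in}W_{i}^{\prime}$; by Lemma~\ref{lemma:about:generated:subframe:and:disjoint:unions} the least generated subframe of $(W^{\prime},R^{\prime})$ containing $s^{\prime}$ coincides with the least generated subframe of ${\mathcal F}_{A_{i}^{\prime},B_{i}^{\prime}}^{\rho_{i}^{\prime}}$ containing $s^{\prime}$, and likewise on the source side. If $i{\in}I^{ns}$ take $s{=}s^{\prime}$; if $i{\notin}I^{ns}$, clause $(7)$ of Lemma~\ref{lemma:gamma:reduction:prebounded:dust:quasi:root:bounded:OK} supplies the required $s{\in}A_{i}{\cup}B_{i}$ whose generated subframe bounded-morphically maps onto that of $s^{\prime}$. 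For clause $(4)$: on each component the second player has a winning strategy in ${\mathcal G}_{q}({\mathcal F}_{A_{i},B_{i}}^{\rho_{i}},{\mathcal F}_{A_{i}^{\prime},B_{i}^{\prime}}^{\rho_{i}^{\prime}})$ (clause $(8)$ of Lemma~\ref{lemma:gamma:reduction:prebounded:dust:quasi:root:bounded:OK} when $i{\notin}I^{ns}$, the copy strategy when $i{\in}I^{ns}$); assemble them into the ``disjoint sum'' strategy, where a move of the first player in component $i$ of one structure is answered in component $i$ of the other using the $i$-th local strategy. Since the accessibility relations $R$ and $R^{\prime}$, as well as equality, never link distinct components, the winning condition of the global position reduces to the conjunction of the winning conditions of the per-component positions, all of which hold by induction on the number of moves.

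The only step that is more than bookkeeping is clause $(4)$, namely verifying that the disjoint sum of winning strategies for Ehrenfeucht--Fra\"\i ss\'e games on disjoint unions is again a winning strategy; but this is the standard additivity of the back-and-forth relation with respect to disjoint unions, and the argument is exactly the one already used to establish clause $(4)$ of Proposition~\ref{lemma:good:properties:of:alpha:reductions:indexed}.
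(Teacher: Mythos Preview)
Your proposal is correct and follows the same approach as the paper: the paper's proof is the one-liner ``By Lemma~\ref{lemma:gamma:reduction:prebounded:dust:quasi:root:bounded:OK}'', and what you have written is precisely the componentwise unpacking of that citation, distinguishing the indices in $I^{ns}$ (where the identity suffices) from those in $I{\setminus}I^{ns}$ (where the clauses of Lemma~\ref{lemma:gamma:reduction:prebounded:dust:quasi:root:bounded:OK} do the work). Your treatment of clause~$(4)$ via the disjoint-sum strategy is the natural elaboration the paper leaves to the reader.
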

\begin{proof}
By Lemma~\ref{lemma:gamma:reduction:prebounded:dust:quasi:root:bounded:OK}.
\medskip
\end{proof}
\paragraph{$\delta$-reductions}
%
%
The main effect of a $\delta$-reduction on a disjoint indexed family of galaxies is to reduce its size.
\\
\\
%
%
%
%
Let $\{{\mathcal F}_{A_{i},B_{i}}^{\rho_{i}}:\ i{\in}I\}$ be a disjoint indexed family of galaxies.
\\
\\
Let $\simeq$ be the equivalence relation on $I$ defined by $i{\simeq}j$ if and only if ${\mathcal F}_{A_{i},B_{i}}^{\rho_{i}}$ is isomorphic with ${\mathcal F}_{A_{j},B_{j}}^{\rho_{j}}$.
For all $i{\in}I$, let $\lbrack i\rbrack_{\simeq}$ be the equivalence class of $i$ modulo $\simeq$.
\\
\\
For all $i{\in}I$, if ${\parallel}\lbrack i\rbrack_{\simeq}{\parallel}{\leq}q$ then let $X_{i}$ be $\lbrack i\rbrack_{\simeq}$ else let $X_{i}$ be an arbitrary subset of $\lbrack i\rbrack_{\simeq}$ of cardinality $q$.
Let $J$ be $\bigcup\{X_{i}:\ i{\in}I\}$.
Obviously, from each equivalence class modulo $\simeq$, $J$ contains at most $q$ elements.
Therefore, if $\simeq$ possesses $N$ equivalence classes then $J$ contains at most $q{\times}N$ elements.
\\
\\
$\{{\mathcal F}_{A_{j},B_{j}}^{\rho_{j}}:\ j{\in}J\}$ is a {\it $q$-$\delta$-reduction of $\{{\mathcal F}_{A_{i},B_{i}}^{\rho_{i}}:\ i{\in}I\}$.}
\begin{proposition}\label{lemma:good:properties:of:delta:reductions:indexed}
%
%
%
%
If $(W,R)$ is the union of $\{{\mathcal F}_{A_{i},B_{i}}^{\rho_{i}}:\ i{\in}I\}$ and $(W^{\prime},R^{\prime})$ is the union of $\{{\mathcal F}_{A_{j},B_{j}}^{\rho_{j}}:\ j{\in}J\}$ then
\begin{enumerate}
\item If $\{{\mathcal F}_{A_{i},B_{i}}^{\rho_{i}}:\ i{\in}I\}$ is dust-boun\-ded (respectively root-boun\-ded, kernel-boun\-ded) then $\{{\mathcal F}_{A_{j},B_{j}}^{\rho_{j}}:\ j{\in}J\}$ is dust-boun\-ded (respectively root-boun\-ded, kernel-boun\-ded,
%
%
%
%
\item for all $Q{\geq}1$, if for all $i{\in}I$, ${\parallel}A_{i}{\parallel}{\leq}Q$ and ${\parallel}B_{i}{\parallel}{\leq}Q$ then ${\parallel}J{\parallel}{\leq}q{\times}(Q{+}1)^{2}{\times}
$\linebreak$
2^{Q^{2}}$,
\item $(W^{\prime},R^{\prime})$ is a bounded morphic image of $(W,R)$,
\item the second player has a winning strategy in the Ehrenfeucht-Fra\"\i ss\'e game ${\mathcal G}_{q}((W,R),(W^{\prime},R^{\prime}))$.
\end{enumerate}
\end{proposition}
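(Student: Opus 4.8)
The plan is to establish the four clauses in turn. Clauses~$(1)$--$(3)$ are essentially bookkeeping; clause~$(4)$ requires constructing a strategy for the second player, and is where the coincidence between the number of moves of the game and the number of copies of each $\simeq$-class kept in $J$ is exploited.

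For clause~$(1)$: since $J{\subseteq}I$, the family $\{{\mathcal F}_{A_{j},B_{j}}^{\rho_{j}}:\ j{\in}J\}$ is a subfamily of $\{{\mathcal F}_{A_{i},B_{i}}^{\rho_{i}}:\ i{\in}I\}$, so any $k$ witnessing dust-boundedness (respectively root-boundedness, kernel-boundedness) of the latter witnesses it for the former. For clause~$(2)$: I would bound the number $N$ of $\simeq$-classes. The partition of a galaxy into its upper and lower part is frame-invariant, because $\kernel({\mathcal F}_{A,B}^{\rho}){=}B$ and $\dust({\mathcal F}_{A,B}^{\rho}){\cup}\root({\mathcal F}_{A,B}^{\rho}){=}A$; hence the isomorphism type of ${\mathcal F}_{A,B}^{\rho}$ is determined by the data $(A,B,\rho)$ up to a simultaneous renaming of $A$ and $B$. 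A crude over-count fixes ${\parallel}B{\parallel}{\in}\lsem0,Q\rsem$ ($Q{+}1$ choices), then ${\parallel}A{\parallel}{\in}\lsem0,Q\rsem$ ($Q{+}1$ choices), then a function $\rho:\ A{\longrightarrow}\wp(B)$ (at most $(2^{Q})^{Q}{=}2^{Q^{2}}$ choices), so $N{\leq}(Q{+}1)^{2}{\times}2^{Q^{2}}$; combining with the observation ``$J$ contains at most $q{\times}N$ elements'' made just before the statement yields ${\parallel}J{\parallel}{\leq}q{\times}(Q{+}1)^{2}{\times}2^{Q^{2}}$.

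For clause~$(3)$: I would pick, for each $i{\in}I{\setminus}J$, an index $\sigma(i){\in}X_{i}$ (it lies in $J$ and satisfies $\sigma(i){\simeq}i$), and set $\sigma(i){=}i$ for $i{\in}J$; then $\sigma:\ I{\longrightarrow}J$ is surjective and $i{\simeq}\sigma(i)$ for all $i$. Fixing for each $i$ an isomorphism $h_{i}$ of ${\mathcal F}_{A_{i},B_{i}}^{\rho_{i}}$ onto ${\mathcal F}_{A_{\sigma(i)},B_{\sigma(i)}}^{\rho_{\sigma(i)}}$, with $h_{i}$ the identity when $i{\in}J$, the map $f:\ W{\longrightarrow}W^{\prime}$ defined by $f(w){=}h_{i}(w)$ whenever $w{\in}W_{i}$ is well-defined since $\{W_{i}:\ i{\in}I\}$ partitions $W$, and surjective since $f$ restricts to the identity on each $W_{j}$ with $j{\in}J$ and $W^{\prime}{=}\bigcup\{W_{j}:\ j{\in}J\}$. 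That $f$ is a bounded morphism from $(W,R)$ to $(W^{\prime},R^{\prime})$ then follows from $R{=}\bigcup\{R_{i}:\ i{\in}I\}$ and $R^{\prime}{=}\bigcup\{R_{j}:\ j{\in}J\}$ being disjoint unions ---~so no edge of $R$ or of $R^{\prime}$ crosses a component boundary ---~together with each $h_{i}$ being an isomorphism of the $i$-th component onto the $\sigma(i)$-th component.

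Clause~$(4)$ is the crux. I would have the second player maintain, along the play, a partial injection $\mu$ from the set of visited components of $(W,R)$ to the set of visited components of $(W^{\prime},R^{\prime})$ such that $c$ and $\mu(c)$ are always isomorphic, and answer any move made inside an already-matched pair of components by applying a fixed isomorphism between them. When the first player moves into a component of isomorphism type $T$ not yet visited (on either side), the second player must match it to a component of type $T$ on the other side that is still unmatched; this is possible throughout the $q$-move play because at most $q$ components of type $T$ can ever be visited on each side, the family $\{{\mathcal F}_{A_{i},B_{i}}^{\rho_{i}}:\ i{\in}I\}$ contains ${\parallel}\lbrack i\rbrack_{\simeq}{\parallel}$ components of type $T$ and its $q$-$\delta$-reduction contains $\min(q,{\parallel}\lbrack i\rbrack_{\simeq}{\parallel})$ of them, so the side with fewer copies of type $T$ still carries at least as many as can be visited on the other side. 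At the end of the play, the position is won by the second player: two chosen points lying in the same component of $(W,R)$ are related exactly as their images, lying in the matched component of $(W^{\prime},R^{\prime})$, are related (because the relevant $h_{i}$ is an isomorphism), while two chosen points lying in distinct components of $(W,R)$ have images in distinct components of $(W^{\prime},R^{\prime})$ (as $\mu$ is injective), and on both sides such a pair consists of distinct, mutually non-accessible points since $R$ and $R^{\prime}$ put no edge between distinct components. I expect the only genuinely delicate part to be this last piece of bookkeeping ---~specifically, the verification that an unmatched component of the right isomorphism type always remains, which hinges precisely on ${\parallel}X_{i}{\parallel}{=}\min(q,{\parallel}\lbrack i\rbrack_{\simeq}{\parallel})$ and on the game having only $q$ moves; everything else is routine, clauses~$(1)$--$(3)$ being immediate once the right objects are written down.
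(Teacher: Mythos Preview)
Your proposal is correct and follows essentially the same approach as the paper: clauses~$(1)$--$(2)$ are dispatched by construction (with your explicit count of $\simeq$-classes matching the $q{\times}N$ remark preceding the proposition), clause~$(3)$ via the surjection $\sigma:I\to J$ and componentwise isomorphisms glued into a global bounded morphism, and clause~$(4)$ by maintaining a partial matching of isomorphic components and exploiting that each $\simeq$-class contributes $\min(q,{\parallel}\lbrack i\rbrack_{\simeq}{\parallel})$ copies to $J$, which suffices for a $q$-round game. The paper's write-up of~$(4)$ fixes isomorphisms $f_{i,j}$ for all pairs $i{\simeq}j$ in advance rather than building $\mu$ incrementally, but this is only a notational difference.
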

\begin{proof}
$\mathbf{(1)}$ and $\mathbf{(2)}$~By construction of $\{{\mathcal F}_{A_{j},B_{j}}^{\rho_{j}}:\ j{\in}J\}$.
\\
\\
$\mathbf{(3)}$~Let $h:\ I{\longrightarrow}J$ be a function such that for all $i{\in}I$, either $i{\in}J$ and $h(i){=}i$, or $i{\not\in}J$ and $h(i){\simeq}i$.
Notice that the existence of such a function is a consequence of the definition of $J$.
Notice also that $h$ is surjective.
Moreover, for all $i{\in}I$, $h(i){\simeq}i$.
For all $i{\in}I$, let $g_{i}$ be an isomorphism from ${\mathcal F}_{A_{i},B_{i}}^{\rho_{i}}$ to ${\mathcal F}_{A_{h(i)},B_{h(i)}}^{\rho_{h(i)}}$.
Let $f:\ W{\longrightarrow}W^{\prime}$ be the function such that for all $i{\in}I$ and for all $s{\in}A_{i}{\cup}B_{i}$, $f(s){=}g_{i}(s)$.
Notice that $f$ is a bounded morphism from $(W,R)$ to $(W^{\prime},R^{\prime})$.
Moreover, $f$ is surjective.
\\
\\
$\mathbf{(4)}$~For all $i{\in}I$ and for all $j{\in}J$, if $i{\simeq}j$ then let $f_{i,j}$ be an isomorphism from ${\mathcal F}_{A_{i},B_{i}}^{\rho_{i}}$ to ${\mathcal F}_{A_{j},B_{j}}^{\rho_{j}}$.
Let $q^{\prime}{\in}\lsem1,q\rsem$ and $((s_{1},\ldots,s_{q^{\prime}{-}1}),(s^{\prime}_{1},\ldots,s^{\prime}_{q^{\prime}{-}1}))$ be a winning position of the Ehrenfeucht-Fra\"\i ss\'e game ${\mathcal G}_{q}((W,R),(W^{\prime},R^{\prime}))$.
Suppose the first player picked a member $s$ of $(W,R)$ in its $q^{\prime}$th move.
Let $i{\in}I$ be such that $s$ is a member of ${\mathcal F}_{A_{i},B_{i}}^{\rho_{i}}$, $i_{1},\ldots,i_{q^{\prime}{-}1}{\in}I$ be such that $s_{1}$ is a member of ${\mathcal F}_{A_{i_{1}},B_{i_{1}}}^{\rho_{i_{1}}}$, $\ldots$, $s_{q^{\prime}{-}1}$ is a member of ${\mathcal F}_{A_{i_{q^{\prime}{-}1}},B_{i_{q^{\prime}{-}1}}}^{\rho_{i_{q^{\prime}{-}1}}}$ and $j_{1},\ldots,j_{q^{\prime}{-}1}{\in}J$ be such that $s^{\prime}_{1}$ is a member of ${\mathcal F}_{A_{j_{1}},B_{j_{1}}}^{\rho_{j_{1}}}$, $\ldots$, $s^{\prime}_{q^{\prime}{-}1}$ is a member of ${\mathcal F}_{A_{j_{q^{\prime}{-}1}},B_{j_{q^{\prime}{-}1}}}^{\rho_{j_{q^{\prime}{-}1}}}$.
Then
\begin{itemize}
\item if $i{=}i_{q^{\prime\prime}}$ for some $q^{\prime\prime}{\in}\lsem1,q^{\prime}{-}1\rsem$ then the second player picks $f_{i_{q^{\prime\prime}},j_{q^{\prime\prime}}}(s)$ in its $q^{\prime}$th move,
\item otherwise, $j$ being an element in $\lbrack i\rbrack_{\simeq}{\cap}(J{\setminus}\{j_{1},\ldots,j_{q^{\prime}{-}1}\})$, the second play\-er picks $f_{i,j}(s)$ in its $q^{\prime}$th move,
\end{itemize}
Suppose the first player picked a member $s^{\prime}$ of $(W^{\prime},R^{\prime})$ in its $q^{\prime}$th move.
Let $j{\in}J$ be such that $s^{\prime}$ is a member of ${\mathcal F}_{A_{j},B_{j}}^{\rho_{j}}$, $i_{1},\ldots,i_{q^{\prime}{-}1}{\in}I$ be such that $s_{1}$ is a member of ${\mathcal F}_{A_{i_{1}},B_{i_{1}}}^{\rho_{i_{1}}}$, $\ldots$, $s_{q^{\prime}{-}1}$ is a member of ${\mathcal F}_{A_{i_{q^{\prime}{-}1}},B_{i_{q^{\prime}{-}1}}}^{\rho_{i_{q^{\prime}{-}1}}}$ and $j_{1},\ldots,j_{q^{\prime}{-}1}{\in}J$ be such that $s^{\prime}_{1}$ is a member of ${\mathcal F}_{A_{j_{1}},B_{j_{1}}}^{\rho_{j_{1}}}$, $\ldots$, $s^{\prime}_{q^{\prime}{-}1}$ is a member of ${\mathcal F}_{A_{j_{q^{\prime}{-}1}},B_{j_{q^{\prime}{-}1}}}^{\rho_{j_{q^{\prime}{-}1}}}$.
Then
\begin{itemize}
\item if $j{=}j_{q^{\prime\prime}}$ for some $q^{\prime\prime}{\in}\lsem1,q^{\prime}{-}1\rsem$ then the second player picks $f^{{-}1}_{i_{q^{\prime\prime}},j_{q^{\prime\prime}}}(s^{\prime})$ in its $q^{\prime}$th move,
\item otherwise, $i$ being an element in $\lbrack j\rbrack_{\simeq}{\cap}(I{\setminus}\{i_{1},\ldots,i_{q^{\prime}{-}1}\})$, the second play\-er picks $f^{{-}1}_{i,j}(s)$ in its $q^{\prime}$th move,
\end{itemize}
Obviously, $((s_{1},\ldots,s_{q^{\prime}{-}1},s),(s^{\prime}_{1},\ldots,s^{\prime}_{q^{\prime}{-}1},s^{\prime}))$ is a winning position of the Eh\-renfeucht-Fra\"\i ss\'e game ${\mathcal G}_{q}((W,R),(W^{\prime},R^{\prime}))$.
\medskip
\end{proof}
\section{Propositional Modal Logic}
In this section, we introduce tools and techniques in Propositional Modal Logic that we will need later in our proofs about modal definability.
\paragraph{Modal syntax}
Let us consider a countably infinite set $\PVAR$ of {\it propositional variables}\/ (denoted $\mathbf{p}$, $\mathbf{q}$, $\ldots$).
\\
\\
The {\it modal formulas}\/ (denoted $\varphi$, $\psi$, $\ldots$) are inductively defined as follows:
\begin{itemize}
\item $\varphi,\psi::=\mathbf{p}\mid\bot\mid\neg\varphi\mid(\varphi\vee\psi)\mid\Box\varphi$.
\end{itemize}
We adopt the standard rules for omission of the parentheses.
\\
\\
We define the other Boolean constructs as usual.
The modal formula $\Diamond\varphi$ is obtained as the well-known abbreviation
\begin{itemize}
\item $\Diamond\varphi::=\neg\Box\neg\varphi$.
\end{itemize}
We will also use the abbreviations
\begin{itemize}
\item $\lbrack U\rbrack\varphi::=\varphi\wedge\Box\Box\varphi$,
\item $\langle U\rangle\varphi::=\neg\lbrack U\rbrack\neg\varphi$.
\end{itemize}
The {\it set of all propositional variables occurring in the modal formula $\varphi$}\/ (denoted $\vvar(\varphi)$) is defined as usual.
\\
\\
For all modal formulas $\varphi$, let $\size(\varphi)$ be the number of symbols occurring in $\varphi$.
\paragraph{Modal semantics}
Let $(W,R)$ be a frame.
\\
\\
A {\it valuation on $(W,R)$}\/ is a function $V:\ \PVAR{\longrightarrow}{\cal P}(W)$.
\\
\\
The {\it satisfiability of a modal formula $\varphi$ at $s{\in}W$ with respect to a valuation $V$ on $(W,R)$}\/ (denoted $(W,R),V,s{\models}\varphi$) is inductively defined as follows:
\begin{itemize}
\item $(W,R),V,s{\models}\mathbf{p}$ if and only if $s{\in}V(\mathbf{p})$,
\item $(W,R),V,s{\not\models}\bot$,
\item $(W,R),V,s{\models}\neg\varphi$ if and only if $(W,R),V,s{\not\models}\varphi$,
\item $(W,R),V,s{\models}\varphi\vee\psi$ if and only if $(W,R),V,s{\models}\varphi$ or $(W,R),V,s{\models}\psi$,
\item $(W,R),V,s{\models}\Box\varphi$ if and only if for all $t{\in}W$, if $s{R}t$ then $(W,R),V,t{\models}\varphi$.
\end{itemize}
As a result,
\begin{itemize}
\item $(W,R),V,s{\models}\Diamond\varphi$ if and only if there exists $t{\in}W$ such that $s{R}t$ and $(W,R),
$\linebreak$
V,t{\models}\varphi$.
\end{itemize}
Moreover, if $(W,R)$ is Euclidean then
\begin{itemize}
\item $(W,R),V,s{\models}\lbrack U\rbrack\varphi$ if and only if for all $t{\in}W_{s}$, $(W,R),V,t{\models}\varphi$,
\item $(W,R),V,s{\models}\langle U\rangle\varphi$ if and only if there exists $t{\in}W_{s}$ such that $(W,R),V,t{\models}
$\linebreak$
\varphi$.
\end{itemize}
A modal formula $\varphi$ is {\it true with respect to a valuation $V$ on $(W,R)$}\/ (denoted $(W,R),V{\models}\varphi$) if $\varphi$ is satisfied at all $s{\in}W$ with respect to $V$.
A set $\Sigma$ of modal formulas is {\it true with respect to a valuation $V$ on $(W,R)$}\/ (denoted $(W,R),V{\models}\Sigma$) if for all modal formulas $\varphi$ in $\Sigma$, $(W,R),V{\models}\varphi$.
A modal formula $\varphi$ is {\it valid in $(W,R)$}\/ (denoted $(W,R){\models}\varphi$) if $\varphi$ is true with respect to all valuations on $(W,R)$.
A set $\Sigma$ of modal formulas is {\it valid in $(W,R)$}\/ (denoted $(W,R){\models}\Sigma$) if for all modal formulas $\varphi$ in $\Sigma$, $(W,R){\models}\varphi$.
%
%
\begin{lemma}\label{lemma:validity:in:a:given:frame:is:in:coNP}
The following decision problem is in $\coNP$:
\begin{description}
\item[input:] a modal formula $\varphi$ and a finite frame $(W^{\prime\prime},R^{\prime\prime})$,
\item[output:] determine whether $(W^{\prime\prime},R^{\prime\prime}){\models}\varphi$.
\end{description}
\end{lemma}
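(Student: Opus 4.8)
The plan is to establish the lemma by showing that the \emph{complement} problem --- given a modal formula $\varphi$ and a finite frame $(W^{\prime\prime},R^{\prime\prime})$, decide whether $(W^{\prime\prime},R^{\prime\prime}){\not\models}\varphi$ --- belongs to $\NP$; membership of the stated problem in $\coNP$ is then immediate. By the definition of validity, $(W^{\prime\prime},R^{\prime\prime}){\not\models}\varphi$ if and only if there exist a valuation $V$ on $(W^{\prime\prime},R^{\prime\prime})$ and a world $s{\in}W^{\prime\prime}$ such that $(W^{\prime\prime},R^{\prime\prime}),V,s{\not\models}\varphi$. So a counterexample is a pair $(V,s)$, and it suffices to argue that such a pair admits a polynomial-size certificate and that checking it is polynomial-time.

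First I would observe that the certificate can be kept polynomial in size. A routine induction on the structure of modal formulas shows that whether $(W^{\prime\prime},R^{\prime\prime}),V,s{\models}\varphi$ depends only on the values $V(\mathbf{p})$ for $\mathbf{p}{\in}\vvar(\varphi)$. Since $\vvar(\varphi)$ is finite with ${\parallel}\vvar(\varphi){\parallel}{\leq}\size(\varphi)$, and each $V(\mathbf{p})$ is a subset of the finite set $W^{\prime\prime}$, the restriction of $V$ to $\vvar(\varphi)$ can be encoded by a string of length $O(\size(\varphi){\times}{\parallel}W^{\prime\prime}{\parallel})$, which is polynomial in the size of the input $(\varphi,(W^{\prime\prime},R^{\prime\prime}))$. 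Together with the name of a world $s{\in}W^{\prime\prime}$, this yields a certificate of polynomial size.

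Next I would check that, given such a certificate, one can decide $(W^{\prime\prime},R^{\prime\prime}),V,s{\not\models}\varphi$ in polynomial time. This is just model checking for modal logic over finite models: proceeding by recursion on the subformulas of $\varphi$, one fills a Boolean table $T$ indexed by pairs $(\psi,w)$, where $\psi$ is a subformula of $\varphi$ and $w{\in}W^{\prime\prime}$, setting $T(\psi,w){=}1$ iff $(W^{\prime\prime},R^{\prime\prime}),V,w{\models}\psi$. The base cases $\mathbf{p}$ and $\bot$ are immediate; the cases $\neg\psi$, $\psi_{1}{\vee}\psi_{2}$ and $\Box\psi$ are computed from entries for strictly smaller subformulas, the last one at $w$ requiring only a scan of $R^{\prime\prime}(w)$. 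Since $\varphi$ has at most $\size(\varphi)$ subformulas and there are ${\parallel}W^{\prime\prime}{\parallel}$ worlds, the table has polynomially many entries, each computed in time polynomial in ${\parallel}W^{\prime\prime}{\parallel}$; reading off $T(\varphi,s)$ then tells us whether $(W^{\prime\prime},R^{\prime\prime}),V,s{\not\models}\varphi$.

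Combining the two points, a nondeterministic machine guesses the restriction of $V$ to $\vvar(\varphi)$ together with $s{\in}W^{\prime\prime}$ and then verifies in polynomial time that $(W^{\prime\prime},R^{\prime\prime}),V,s{\not\models}\varphi$; hence the complement problem is in $\NP$ and the problem of the lemma is in $\coNP$. I do not expect a genuine obstacle here: the only points needing (routine) care are the observation that the valuation need only be specified on $\vvar(\varphi)$, which is what keeps the certificate polynomial, and the standard fact that modal model checking over finite models runs in polynomial time.
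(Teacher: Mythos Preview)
Your argument is correct and is exactly the standard one: guess a valuation on $\vvar(\varphi)$ and a world, then model-check in polynomial time. The paper does not spell this out at all; its proof consists solely of citations to the model-checking literature (Arnold--Crubill\'e and Clarke--Emerson--Sistla), so your write-up is more self-contained but follows the same underlying approach those references encapsulate.
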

\begin{proof}
See~\cite{Arnold:Crubille:1988,Clarke:et:al:1986}.
\medskip
\end{proof}
\begin{lemma}\label{lemma:about:upper:bound:on:the:kernel:1}
Let $\varphi$ be a modal formula.
If there exists $m{\in}\mathbf{N}^{+}$ such that ${\mathcal F}_{m}^{2}{\not\models}\varphi$ then there exists $m{\in}\lsem1,2^{{\parallel}\vvar(\varphi){\parallel}}\rsem$ such that ${\mathcal F}_{m}^{2}{\not\models}\varphi$.
\end{lemma}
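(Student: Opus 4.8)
The plan is to establish this by a bounded-morphism argument that collapses the cluster points of the flower according to their valuation-types, keeping the number of surviving points bounded by the number of types. Assume $\mathcal{F}_m^2\not\models\varphi$ for some $m\in\mathbf{N}^{+}$, and fix a valuation $V$ on $\mathcal{F}_m^2$ and a point $w\in W_m^2$ with $(\mathcal{F}_m^2,V),w\not\models\varphi$. Recall that $W_m^2{=}\lsem0,m{+}2\rsem$, that $R_m^2$ makes $\lsem1,m{+}2\rsem$ a full cluster while $0$ sees exactly $\lsem1,m\rsem$, and that the two cluster points $m{+}1,m{+}2$ are not $R_m^2$-successors of $0$. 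For $x\in\lsem1,m\rsem$ put $c(x){=}\{\mathbf{p}\in\vvar(\varphi):x\in V(\mathbf{p})\}$; there are at most $2^{{\parallel}\vvar(\varphi){\parallel}}$ possible values of $c$, and since $m{\geq}1$ at least one value is realized on $\lsem1,m\rsem$. Let $m^{\prime\prime}$ be the number of realized values, so $1{\leq}m^{\prime\prime}{\leq}2^{{\parallel}\vvar(\varphi){\parallel}}$.

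Next I would build a surjective bounded morphism $f$ from $\mathcal{F}_m^2$ onto $\mathcal{F}_{m^{\prime\prime}}^2$ together with a matching valuation. Choose one representative in $\lsem1,m\rsem$ for each realized type and label the representatives $1,\ldots,m^{\prime\prime}$; let $g:\lsem1,m\rsem\longrightarrow\lsem1,m^{\prime\prime}\rsem$ send $x$ to the label of the representative of $c(x)$. Set $f(0){=}0$, $f(x){=}g(x)$ for $x\in\lsem1,m\rsem$, and $f(m{+}1){=}m^{\prime\prime}{+}1$, $f(m{+}2){=}m^{\prime\prime}{+}2$, so that the two points not seen from the root are retained rather than merged. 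Define $V^{\prime}$ on $\mathcal{F}_{m^{\prime\prime}}^2$ by: for $\mathbf{p}\in\vvar(\varphi)$, a point $y$ lies in $V^{\prime}(\mathbf{p})$ iff all its $f$-preimages lie in $V(\mathbf{p})$ (this is well defined because, by construction of $g$, all $f$-preimages of any point agree on the variables in $\vvar(\varphi)$), and on the remaining variables let $V^{\prime}$ be arbitrary. Checking that $f$ is a bounded morphism is routine: the forth condition splits according to the two kinds of $R_m^2$-edges (an edge $0\to\lsem1,m\rsem$ maps to $0\to\lsem1,m^{\prime\prime}\rsem$, and an edge inside the full cluster stays inside the full cluster of $\mathcal{F}_{m^{\prime\prime}}^2$); the back condition at $0$ holds because each realized type has a nonempty $g$-fibre, and inside the cluster it holds because $f$ restricted to $\lsem1,m{+}2\rsem$ is onto $\lsem1,m^{\prime\prime}{+}2\rsem$. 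Clearly $f$ is surjective, and $x\in V(\mathbf{p})$ iff $f(x)\in V^{\prime}(\mathbf{p})$ for every $\mathbf{p}\in\vvar(\varphi)$.

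Finally I would invoke the standard fact that a bounded morphism between Kripke models (with compatible valuations on the relevant variables) preserves and reflects the satisfaction of every modal formula whose variables lie in $\vvar(\varphi)$ at matching points, that is, $(\mathcal{F}_m^2,V),x\models\psi$ iff $(\mathcal{F}_{m^{\prime\prime}}^2,V^{\prime}),f(x)\models\psi$. Taking $\psi{=}\varphi$ and $x{=}w$ gives $(\mathcal{F}_{m^{\prime\prime}}^2,V^{\prime}),f(w)\not\models\varphi$, hence $\mathcal{F}_{m^{\prime\prime}}^2\not\models\varphi$ with $m^{\prime\prime}\in\lsem1,2^{{\parallel}\vvar(\varphi){\parallel}}\rsem$, which is what is claimed. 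The delicate point — the main obstacle — is precisely the treatment of the two unseen cluster points $m{+}1,m{+}2$: merging them when they happen to carry the same type would break surjectivity onto the unseen part of $\mathcal{F}_{m^{\prime\prime}}^2$ and would in fact produce $\mathcal{F}_{m^{\prime\prime}}^1$ instead, so one must keep both and copy their types onto $m^{\prime\prime}{+}1,m^{\prime\prime}{+}2$; once this choice is made, the construction and all the verifications go through uniformly, regardless of whether the refuting point $w$ is the root, a root-visible cluster point, or one of the two unseen points, so no case analysis on $w$ is needed.
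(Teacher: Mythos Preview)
Your proof is correct. The paper states this lemma without proof, and your filtration-style argument---collapsing the root-visible cluster points $\lsem1,m\rsem$ by their $\vvar(\varphi)$-type while keeping $0$, $m{+}1$, $m{+}2$ as distinct preimages of $0$, $m''{+}1$, $m''{+}2$---is exactly the natural argument and goes through cleanly; your explicit remark that the two unseen cluster points must not be merged (lest the target become $\mathcal{F}_{m''}^{1}$) is the one point that deserves care, and you handle it correctly.
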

\begin{lemma}\label{lemma:about:upper:bound:on:the:kernel:2}
Let $\varphi$ be a modal formula.
If there exists $n{\in}\mathbf{N}^{-}$ such that ${\mathcal F}_{2}^{n}{\not\models}\varphi$ then there exists $n{\in}\lsem{-}1,2^{{\parallel}\vvar(\varphi){\parallel}}\rsem$ such that ${\mathcal F}_{2}^{n}{\not\models}\varphi$.
\end{lemma}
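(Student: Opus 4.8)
The plan is to isolate a bounded ``combinatorial type'' of a refutation on a flower $\mathcal{F}_{2}^{n}$ and then rebuild that refutation on a flower with a small kernel. The main auxiliary statement, which I would establish first, is a \emph{cluster colouring claim}. Fix $\varphi$, and for a valuation $V$ on a flower ${\mathcal F}_{2}^{n}$ and a point $t{\in}W_{2}^{n}$ set $\mathrm{col}_{V}(t){=}\{\mathbf{p}{\in}\vvar(\varphi):\ t{\in}V(\mathbf{p})\}{\subseteq}\vvar(\varphi)$; write $B{=}\kernel({\mathcal F}_{2}^{n})$. The claim is: for every subformula $\psi$ of $\varphi$ and every $t{\in}B$, the truth value of $({\mathcal F}_{2}^{n},V,t){\models}\psi$ is a function of $\mathrm{col}_{V}(t)$ and of the set $\{\mathrm{col}_{V}(u):\ u{\in}B\}$ of colours realised in $B$; and when the root exists (i.e. $n{\in}\N$), the truth value of $({\mathcal F}_{2}^{n},V,0){\models}\psi$ is a function of $\mathrm{col}_{V}(0)$, $\mathrm{col}_{V}(1)$, $\mathrm{col}_{V}(2)$ and $\{\mathrm{col}_{V}(u):\ u{\in}B\}$. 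I would prove the first part by induction on $\psi$: the propositional variable and Boolean cases are immediate, and for $\psi{=}\Box\chi$ with $t{\in}B$ one uses that $R_{2}^{n}(t){=}B$, so that $({\mathcal F}_{2}^{n},V,t){\models}\Box\chi$ holds iff $\chi$ is true at every point of $B$, a condition which by the induction hypothesis depends on $\{\mathrm{col}_{V}(u):\ u{\in}B\}$ alone (in particular it is the same for all $t{\in}B$). The second part then follows by a short further induction using $R_{2}^{n}(0){=}\{1,2\}$ and the first part applied to the cluster points $1$ and $2$.

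Granting the claim, suppose $({\mathcal F}_{2}^{n},V,w){\not\models}\varphi$ for some $n{\in}\mathbf{N}^{-}$, some valuation $V$ and some $w{\in}W_{2}^{n}$, and put ${\mathcal C}{=}\{\mathrm{col}_{V}(u):\ u{\in}B\}$, so that ${\parallel}{\mathcal C}{\parallel}{\leq}2^{{\parallel}\vvar(\varphi){\parallel}}$. Since $\dust({\mathcal F}_{2}^{n}){=}\emptyset$, either $w{\in}B$, or $n{\in}\N$ and $w{=}0$. If $w{\in}B$, take $n^{\prime}{=}\max\{0,{\parallel}{\mathcal C}{\parallel}{-}2\}$ and pick a surjection $c:\ \lsem1,2{+}n^{\prime}\rsem{\longrightarrow}{\mathcal C}$ together with a point $w^{\prime}{\in}\lsem1,2{+}n^{\prime}\rsem$ with $c(w^{\prime}){=}\mathrm{col}_{V}(w)$; this is possible since $2{+}n^{\prime}{=}\max\{2,{\parallel}{\mathcal C}{\parallel}\}{\geq}{\parallel}{\mathcal C}{\parallel}$ and $\mathrm{col}_{V}(w){\in}{\mathcal C}$. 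Define $V^{\prime}$ on ${\mathcal F}_{2}^{n^{\prime}}$ so that $\mathrm{col}_{V^{\prime}}(t){=}c(t)$ for every $t{\in}\lsem1,2{+}n^{\prime}\rsem$ (its values on the root and on variables outside $\vvar(\varphi)$ being irrelevant). Then the kernel of ${\mathcal F}_{2}^{n^{\prime}}$ realises exactly the colours in ${\mathcal C}$ and $\mathrm{col}_{V^{\prime}}(w^{\prime}){=}\mathrm{col}_{V}(w)$, so applying the claim to $({\mathcal F}_{2}^{n},V)$ at $w$ and to $({\mathcal F}_{2}^{n^{\prime}},V^{\prime})$ at $w^{\prime}$ yields $({\mathcal F}_{2}^{n^{\prime}},V^{\prime},w^{\prime}){\not\models}\varphi$. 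If $w{=}0$, do the same with $w^{\prime}{=}0$, requiring in addition $c(1){=}\mathrm{col}_{V}(1)$, $c(2){=}\mathrm{col}_{V}(2)$ and $\mathrm{col}_{V^{\prime}}(0){=}\mathrm{col}_{V}(0)$; one can still arrange $2{+}n^{\prime}{\leq}{\parallel}{\mathcal C}{\parallel}{+}1$ (the extra unit being needed only when $\mathrm{col}_{V}(1){=}\mathrm{col}_{V}(2)$ and ${\parallel}{\mathcal C}{\parallel}{\geq}2$), and the root part of the claim applied at $0$ in both flowers gives $({\mathcal F}_{2}^{n^{\prime}},V^{\prime},0){\not\models}\varphi$. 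In both cases $0{\leq}n^{\prime}{\leq}2^{{\parallel}\vvar(\varphi){\parallel}}$, hence ${\mathcal F}_{2}^{n^{\prime}}{\not\models}\varphi$ with $n^{\prime}{\in}\lsem{-}1,2^{{\parallel}\vvar(\varphi){\parallel}}\rsem$, as required.

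The hard part will be the cluster colouring claim, where one has to get the invariant exactly right: the key realisation is that the truth of a subformula at a cluster point is completely determined once one knows that point's colour and the set of colours occurring in the cluster (so that shrinking the cluster to one representative per colour changes nothing), and that the behaviour at the root reduces to the behaviour at its two successors $1$ and $2$. The only place calling for some care with cardinalities is the $w{=}0$ case, where one must respect the prescribed colours of $0$, $1$ and $2$ while keeping $n^{\prime}{\leq}2^{{\parallel}\vvar(\varphi){\parallel}}$ even when $1$ and $2$ happen to carry the same colour; the ``${-}2$'' offset in the choice of $n^{\prime}$ absorbs this. Note that a blunt filtration of the refuting model through the subformulas of $\varphi$ would only give a bound of the form $2^{O(\size(\varphi))}$, so the colouring argument is what produces the sharper bound $2^{{\parallel}\vvar(\varphi){\parallel}}$.
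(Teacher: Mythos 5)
Your proposal is correct: the cluster colouring claim holds by the induction you sketch (truth at a cluster point of ${\mathcal F}_{2}^{n}$ depends uniformly only on that point's colour and the set of colours realised in the cluster, and truth at the root only on the colours of $0$, $1$, $2$ and that set), and the case analysis with $2{+}n^{\prime}{\leq}{\parallel}{\mathcal C}{\parallel}{+}1$ keeps $n^{\prime}$ within $\lsem{-}1,2^{{\parallel}\vvar(\varphi){\parallel}}\rsem$. The paper states this lemma without proof, but your colour-type (selective filtration of the final cluster) argument is evidently the intended one, since the bound $2^{{\parallel}\vvar(\varphi){\parallel}}$ is exactly the number of possible colours over $\vvar(\varphi)$.
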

A frame $(W^{\prime},R^{\prime})$ is {\it modally weaker than $(W,R)$}\/ (denoted $(W^{\prime},R^{\prime}){\preceq}(W,R)$) if for all modal formulas $\varphi$, if $(W^{\prime},R^{\prime}){\models}\varphi$ then $(W,R){\models}\varphi$.
\begin{lemma}[Generated Subframe Lemma]\label{Generated:Subframe:Lemma}
For all frames $(W^{\prime},R^{\prime})$, if $(W,
$\linebreak$
R)$ is a generated subframe of $(W^{\prime},R^{\prime})$ then $(W^{\prime},R^{\prime}){\preceq}(W,R)$.
\end{lemma}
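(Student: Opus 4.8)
The plan is to prove the contrapositive: assuming $(W,R)$ is a generated subframe of $(W',R')$, I show that whenever a modal formula $\varphi$ fails in $(W,R)$, it also fails in $(W',R')$. So suppose $(W,R)\not\models\varphi$. Then there exist a valuation $V$ on $(W,R)$ and a point $s\in W$ such that $(W,R),V,s\not\models\varphi$. First I would lift $V$ to a valuation $V'$ on $(W',R')$ by setting $V'(\mathbf{p})=V(\mathbf{p})$ for every propositional variable $\mathbf{p}$ (this is legitimate since $W\subseteq W'$, so $V(\mathbf{p})\subseteq W\subseteq W'$). The claim is that for every modal formula $\psi$ and every point $t\in W$, we have $(W,R),V,t\models\psi$ if and only if $(W',R'),V',t\models\psi$. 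Applying this claim to $\psi=\varphi$ and $t=s$ gives $(W',R'),V',s\not\models\varphi$, hence $(W',R')\not\models\varphi$, which is what we need.

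The proof of the claim is by induction on the structure of $\psi$. The base cases $\psi=\mathbf{p}$ and $\psi=\bot$ are immediate from the definition of $V'$ and the semantics. The Boolean cases $\psi=\neg\chi$ and $\psi=\chi_{1}\vee\chi_{2}$ follow directly from the induction hypothesis, since satisfaction of a Boolean combination at $t$ depends only on satisfaction of its immediate subformulas at the very same point $t$, which by hypothesis lies in $W$. The only case requiring the generated-subframe conditions is $\psi=\Box\chi$. For the direction from $(W',R')$ to $(W,R)$: if $(W,R),V,t\not\models\Box\chi$, pick $u\in W$ with $tRu$ and $(W,R),V,u\not\models\chi$; since $R\subseteq R'$ we have $tR'u$, and by the induction hypothesis (applicable since $u\in W$) $(W',R'),V',u\not\models\chi$, so $(W',R'),V',t\not\models\Box\chi$. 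For the converse direction: if $(W',R'),V',t\not\models\Box\chi$, pick $u'\in W'$ with $tR'u'$ and $(W',R'),V',u'\not\models\chi$; now the third generated-subframe condition (applied to $t\in W$ and $u'\in W'$ with $tR'u'$) yields $u'\in W$ and $tRu'$, so the induction hypothesis applies at $u'$ and gives $(W,R),V,u'\not\models\chi$, whence $(W,R),V,t\not\models\Box\chi$.

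The main (and really only) subtlety is the $\Box$ case in the direction that requires pulling an $R'$-successor of a point of $W$ back inside $W$; this is exactly where the defining "back" condition of a generated subframe is used, and it is crucial that the point $t$ at which we are evaluating $\Box\chi$ belongs to $W$ so that the condition is applicable. Everything else is a routine structural induction. Note that the lemma is stated for arbitrary frames, so no appeal to the Euclidean hypothesis or to the $[U]$, $\langle U\rangle$ abbreviations is needed; those abbreviations unfold to formulas built solely from the primitive connectives, so the claim covers them automatically.
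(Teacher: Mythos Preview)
Your proof is correct and is precisely the standard argument for this result. The paper itself does not give a proof but simply refers to \cite[Chapter~$3$]{Blackburn:deRijke:Venema:2001}; what you have written is exactly the textbook argument found there, so your approach coincides with the one the paper defers to.
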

\begin{proof}
See~\cite[Chapter~$3$]{Blackburn:deRijke:Venema:2001}.
\medskip
\end{proof}
\begin{lemma}[Bounded Morphism Lemma]\label{Bounded:Morphism:Lemma}
For all frames $(W^{\prime},R^{\prime})$, if $(W,
$\linebreak$
R)$ is a bounded morphic image of $(W^{\prime},R^{\prime})$ then $(W^{\prime},R^{\prime}){\preceq}(W,R)$.
\end{lemma}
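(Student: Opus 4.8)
The plan is to use the standard argument for bounded morphisms, namely to show that pulling back any valuation along a surjective bounded morphism preserves satisfaction pointwise. So I would fix a surjective bounded morphism $f:\ W^{\prime}{\longrightarrow}W$ and argue by contraposition: assuming $(W,R){\not\models}\varphi$, I must exhibit a valuation and a point of $(W^{\prime},R^{\prime})$ falsifying $\varphi$. Thus let $V$ be a valuation on $(W,R)$ and $s{\in}W$ be such that $(W,R),V,s{\not\models}\varphi$, and define the valuation $V^{\prime}$ on $(W^{\prime},R^{\prime})$ by $V^{\prime}(\mathbf{p}){=}\{w^{\prime}{\in}W^{\prime}:\ f(w^{\prime}){\in}V(\mathbf{p})\}$ for every propositional variable $\mathbf{p}$.

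The core step is a routine induction on the structure of a modal formula $\psi$, establishing that for every $w^{\prime}{\in}W^{\prime}$, $(W^{\prime},R^{\prime}),V^{\prime},w^{\prime}{\models}\psi$ if and only if $(W,R),V,f(w^{\prime}){\models}\psi$. The atomic case holds by the very definition of $V^{\prime}$; the case of $\bot$ is immediate; and the Boolean cases $\neg$ and $\vee$ follow directly from the induction hypothesis. The only case that genuinely uses the hypothesis on $f$ is $\psi{=}\Box\chi$: for the left-to-right direction one invokes the back condition of a bounded morphism (if $f(w^{\prime}){R}t$ then some $t^{\prime}$ with $w^{\prime}{R^{\prime}}t^{\prime}$ satisfies $f(t^{\prime}){=}t$), and for the right-to-left direction one invokes the forth condition (if $w^{\prime}{R^{\prime}}t^{\prime}$ then $f(w^{\prime}){R}f(t^{\prime})$); in both directions the induction hypothesis is then applied at the successor point. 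This $\Box$-case is the only delicate point, and even it is entirely standard.

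To conclude, since $f$ is surjective there exists $s^{\prime}{\in}W^{\prime}$ with $f(s^{\prime}){=}s$; by the equivalence just proved, $(W^{\prime},R^{\prime}),V^{\prime},s^{\prime}{\not\models}\varphi$, so $\varphi$ is not true with respect to $V^{\prime}$ on $(W^{\prime},R^{\prime})$, whence $(W^{\prime},R^{\prime}){\not\models}\varphi$. By contraposition this yields $(W^{\prime},R^{\prime}){\preceq}(W,R)$, as required. (Alternatively, as is done for the companion Generated Subframe Lemma, one may simply cite~\cite[Chapter~$3$]{Blackburn:deRijke:Venema:2001}, where this is Theorem~3.14.)
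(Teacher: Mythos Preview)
Your argument is correct and is exactly the standard proof; the paper itself does not spell it out but simply refers the reader to \cite[Chapter~$3$]{Blackburn:deRijke:Venema:2001}, the very citation you mention at the end. So your proposal is both consistent with and more detailed than the paper's own treatment.
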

\begin{proof}
See~\cite[Chapter~$3$]{Blackburn:deRijke:Venema:2001}.
\medskip
\end{proof}
\begin{lemma}\label{lemma:sigma:about:K2:galaxies:and:generated:by:s:subframes}
Let ${\mathcal F}_{A,B}^{\rho}$ be a headed galaxy.
For all modal formulas $\varphi$, if ${\mathcal F}_{A,B}^{\rho}{\not\models}\varphi$ then there exists $s{\in}A$ such that $(W_{s},R_{s}){\not\models}\varphi$ where $(W_{s},R_{s})$ is the least generated subframe of ${\mathcal F}_{A,B}^{\rho}$ containing $s$.
\end{lemma}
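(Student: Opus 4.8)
The plan is to argue contrapositively. Suppose that for every $s{\in}A$ we have $(W_{s},R_{s}){\models}\varphi$; I want to conclude ${\mathcal F}_{A,B}^{\rho}{\models}\varphi$. The first observation is that since ${\mathcal F}_{A,B}^{\rho}$ is headed, there is some $s_{0}{\in}A$ with $\rho(s_{0}){\not=}\emptyset$; then $B{\subseteq}R(s_{0})$ by the definition of $R_{A,B}^{\rho}$ (as $\rho(s_{0}){\subseteq}B$ and $B{\times}B{\subseteq}R_{A,B}^{\rho}$, every element of $B$ is $R$-reachable from some element of $\rho(s_{0})$, hence from $s_{0}$ in two steps), so that $(W_{s_{0}},R_{s_{0}})$ contains all of $B$. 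More importantly, for \emph{every} $s{\in}A$, the generated subframe $(W_{s},R_{s})$ is either $\{s\}$ alone (when $\rho(s){=}\emptyset$) or $\{s\}{\cup}B$ with $R_{s}{=}(\{s\}{\times}\rho(s)){\cup}(B{\times}B)$, because $B{\times}B{\subseteq}R_{A,B}^{\rho}$ forces the whole of $B$ in as soon as $\rho(s){\not=}\emptyset$; this is just the Euclidean shape analysis already recorded in Lemma~\ref{lemma:generated:subframe:from:a:single:point:has:specific:shapes}.

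The key step is then to show that validity of $\varphi$ on all these single-point-generated subframes already implies validity on the whole galaxy. I would do this by a direct semantic argument: given a valuation $V$ on ${\mathcal F}_{A,B}^{\rho}$ and a point $w{\in}W_{A,B}^{\rho}{=}A{\cup}B$, I want $(W,R),V,w{\models}\varphi$. If $w{\in}A$, pick $s{=}w$; then $(W_{w},R_{w})$ is a generated subframe of ${\mathcal F}_{A,B}^{\rho}$ containing $w$, so by the Generated Subframe Lemma (Lemma~\ref{Generated:Subframe:Lemma}) truth at $w$ transfers: since $(W_{w},R_{w}){\models}\varphi$, in particular $(W_{w},R_{w}),V{\restriction}W_{w},w{\models}\varphi$, and by the standard invariance of satisfaction under generated subframes this gives $(W,R),V,w{\models}\varphi$. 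If $w{\in}B$, I instead pick any $s{\in}A$ with $\rho(s){\not=}\emptyset$ (available by headedness); then $w{\in}B{\subseteq}W_{s}$, $(W_{s},R_{s})$ is again a generated subframe containing $w$, and the same transfer argument applies, using $(W_{s},R_{s}){\models}\varphi$. Since $w$ and $V$ were arbitrary, ${\mathcal F}_{A,B}^{\rho}{\models}\varphi$, which is the desired contrapositive.

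The only mild subtlety — and the step I would be most careful about — is the case $w{\in}B$: one must be sure that $B$ really sits inside some $W_{s}$ with $s{\in}A$, i.e.\ that ``headed'' is exactly the hypothesis that makes this work (if no $s{\in}A$ saw any point of $B$, the points of $B$ would only be reachable from within $B$ itself, and a generated subframe containing a $B$-point would not be of the form $(W_{s},R_{s})$ for $s{\in}A$). Headedness removes this obstruction. Everything else is routine: the shape of $(W_{s},R_{s})$ is Lemma~\ref{lemma:generated:subframe:from:a:single:point:has:specific:shapes}, and the transfer of (failure of) truth along generated subframes is the Generated Subframe Lemma together with the pointwise modal invariance it is built from. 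I would phrase the final write-up in the contrapositive exactly as above, so that the conclusion ``there exists $s{\in}A$ with $(W_{s},R_{s}){\not\models}\varphi$'' falls out by taking $s{=}w$ when the falsifying point $w$ lies in $A$, and $s$ any headed witness when $w$ lies in $B$.
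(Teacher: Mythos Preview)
Your proof is correct. The paper does not actually give a proof of this lemma (it is one of the routine results left to the reader), and your argument is precisely the natural one: cover $A{\cup}B$ by point-generated subframes rooted in $A$, using headedness to ensure that some $(W_{s},R_{s})$ with $s{\in}A$ already contains all of $B$, and then invoke the pointwise invariance of satisfaction under generated subframes. One small remark: the paper's Lemma~\ref{Generated:Subframe:Lemma} is stated at the frame level ($(W',R'){\preceq}(W,R)$), whereas what you actually use is the underlying model-level invariance; this is of course the standard fact from which Lemma~\ref{Generated:Subframe:Lemma} is derived, so there is no gap.
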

%
%
%
%
%
%
%
%
\begin{lemma}\label{lemma:about:finite:subset:when:K5:shape:of:the:frame}
If $W{=}\{s\}{\cup}C$ and $R{=}(\{s\}{\times}B){\cup}(C{\times}C)$ for some sets $B,C$ and for some element $s$ such that $B{\not=}\emptyset$, $C{\setminus}B{\not=}\emptyset$, $B$ is finite, $B{\subseteq}C$ and $s{\not\in}C$ then for all modal formulas $\varphi$, if $(W,R){\not\models}\varphi$ then there exists a flower ${\mathcal F}_{m}^{n}$ such that $m{=}{\parallel}B{\parallel}$ and ${\mathcal F}_{m}^{n}{\not\models}\varphi$.
\end{lemma}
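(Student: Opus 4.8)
### Proof proposal

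The plan is to argue by cases on the structure of the generated subframes of $(W,R)$, using Lemma~\ref{lemma:sigma:about:K2:galaxies:and:generated:by:s:subframes} to localize the refutation of $\varphi$ to a single point-generated subframe, and then reduce that subframe to a flower using a bounded morphism. First I would note that $(W,R)$ is itself a headed galaxy: indeed $(W,R)={\mathcal F}_{\{s\},C}^{\rho}$ where $\rho(s)=B$, since $W=\{s\}\cup C$ with $s\notin C$, $R=(\{s\}\times B)\cup(C\times C)$, and $B\neq\emptyset$ shows the galaxy is headed (here the upper part is $\{s\}$ and the lower part is $C$, which is non-empty because $B\subseteq C$ and $B\neq\emptyset$). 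So Lemma~\ref{lemma:sigma:about:K2:galaxies:and:generated:by:s:subframes} applies: from $(W,R)\not\models\varphi$ we obtain some $t\in\{s\}$ — hence $t=s$ — with $(W_s,R_s)\not\models\varphi$, where $(W_s,R_s)$ is the least generated subframe containing $s$. But since $s$ is the only element of the upper part and every element of $C$ is $R$-related to every element of $C$ (and $B\subseteq C$), we have $W_s=W$ and $R_s=R$; so really $(W,R)\not\models\varphi$ directly, and the point of invoking the lemma is mainly to confirm we may work with the whole frame viewed as generated by $s$.

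Next I would apply Lemma~\ref{lemma:about:what:happens:when:it:exists:bmi:flower:integer} or, more directly, Lemma~\ref{lemma:about:k:s:new:element:bmi:of:W:R} to produce the desired flower. Set $k$ to be any integer with $k\geq 4$ and $k\geq{\parallel}B{\parallel}$; this is possible since $B$ is finite. I also need ${\parallel}C{\parallel}>k$, which can be arranged by enlarging $k$ only finitely — wait, ${\parallel}C{\parallel}$ may be smaller than ${\parallel}B{\parallel}$'s required bound, so instead I should be careful. The cleaner route: since $B\neq\emptyset$, $C\setminus B\neq\emptyset$, $B\subseteq C$, $s\notin C$, Lemma~\ref{lemma:generated:subframe:from:a:single:point:has:specific:shapes} already tells us $(W,R)$ has exactly the shape needed. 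Now I invoke Lemma~\ref{lemma:about:frames:C:B:smaller:greater} (the bounded-morphic-image lemma for galaxies of $\{s\}\cup C$ shape) with $q$ chosen large, or more simply build the bounded morphism by hand: fix any surjection $C\to\lsem1,{\parallel}B{\parallel}{+}n\rsem$ sending $B$ onto $\lsem1,{\parallel}B{\parallel}\rsem$ and $C\setminus B$ onto $\lsem{\parallel}B{\parallel}{+}1,{\parallel}B{\parallel}{+}n\rsem$ for suitable $n$, together with $s\mapsto 0$. The key combinatorial input is that ${\parallel}B{\parallel}$ is finite so $m={\parallel}B{\parallel}$ is a well-defined positive integer; $n$ will be $\min\{{\parallel}C\setminus B{\parallel},k\}$ for a large $k$ — but to get $m={\parallel}B{\parallel}$ exactly (not $\min\{{\parallel}B{\parallel},k\}$) I just take $k\geq{\parallel}B{\parallel}$, which is legitimate since $B$ is finite. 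If $C\setminus B$ is finite, take $n={\parallel}C\setminus B{\parallel}\geq 1$; if $C\setminus B$ is infinite, take $n$ to be any integer $\geq 1$ (say $n=1$, or $n=k$), and map $C\setminus B$ onto $\lsem{\parallel}B{\parallel}{+}1,{\parallel}B{\parallel}{+}n\rsem$ surjectively. Either way we get a surjective bounded morphism $(W,R)\to{\mathcal F}_m^n$ with $m={\parallel}B{\parallel}$ and $n\in\mathbf{N}^+$.

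Finally, by the Bounded Morphism Lemma~\ref{Bounded:Morphism:Lemma}, $(W,R)\preceq{\mathcal F}_m^n$; contrapositively, since ${\mathcal F}_m^n$ is not modally weaker than $(W,R)$ is the wrong direction — rather, $(W,R)\preceq{\mathcal F}_m^n$ means every formula valid in $(W,R)$ is valid in ${\mathcal F}_m^n$, so a formula refuted in ${\mathcal F}_m^n$ need not be refuted in $(W,R)$. I want the other implication, which is exactly what the Bounded Morphism Lemma gives read correctly: if ${\mathcal F}_m^n$ is a bounded morphic image of $(W,R)$ then $(W,R)\preceq{\mathcal F}_m^n$, i.e. $(W,R)\models\varphi$ implies ${\mathcal F}_m^n\models\varphi$; contrapositively, ${\mathcal F}_m^n\not\models\varphi$ implies $(W,R)\not\models\varphi$ — still the wrong direction. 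The correct move: we are given $(W,R)\not\models\varphi$ and want ${\mathcal F}_m^n\not\models\varphi$, so I actually need $(W,R)$ to be (or map onto something that maps onto) a frame refuting $\varphi$; the right tool is that validity is preserved under bounded morphic images in the direction source-to-image means refutation is preserved image-to-source, so to push the refutation from $(W,R)$ \emph{to} ${\mathcal F}_m^n$ I need ${\mathcal F}_m^n$ to be the source — but it is finite and $(W,R)$ may be infinite. The resolution, and the real heart of the argument, is to use the Ehrenfeucht–Fra\"iss\'e / $\alpha$- and $\gamma$-reduction machinery (Propositions~\ref{lemma:good:properties:of:alpha:reductions:indexed}--\ref{lemma:good:properties:of:delta:reductions:indexed}) together with the fact that $\varphi$ has bounded modal depth, to first find a \emph{finite} bounded morphic image $(W'',R'')$ of $(W,R)$ still refuting $\varphi$ and of the same $\{s\}\cup C$ shape with the same (finite) $B$; then apply Lemma~\ref{lemma:about:k:s:new:element:bmi:of:W:R} to $(W'',R'')$ to collapse it onto a flower ${\mathcal F}_m^{n}$ that is still a bounded morphic image of $(W'',R'')$, hence has $(W'',R'')\preceq{\mathcal F}_m^n$; and finally observe that because $(W'',R'')$ refutes $\varphi$ and ${\mathcal F}_m^n$ is rich enough (it maps \emph{onto} nothing smaller that we need), we argue instead that $(W'',R'')$ \emph{is a generated subframe or bounded morphic preimage} — the clean statement being: a flower ${\mathcal F}_m^n$ with the right $m,n$ \emph{has $(W,R)$ as a $p$-morphic image for $n$ large}, so by Generated Subframe / Bounded Morphism lemmas in the appropriate direction $\varphi$ fails in ${\mathcal F}_m^n$ too. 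I expect the main obstacle to be exactly this directionality bookkeeping: pinning down which of $(W,R)$ and ${\mathcal F}_m^n$ should be the bounded morphic image of the other so that refutation transfers the right way, and handling the infinite-$C$ case by the reduction lemmas so that we land on a \emph{finite} flower with $m$ equal to the prescribed ${\parallel}B{\parallel}$ rather than a truncated $\min\{{\parallel}B{\parallel},k\}$.
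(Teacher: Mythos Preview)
Your proposal has a genuine gap, and you in fact diagnose it yourself without resolving it: frame-level bounded morphisms and the $\alpha$-, $\gamma$-, $\delta$-reductions all transfer modal validity from the source to the image, i.e.\ refutation from the image back to the source. None of them lets you push a refutation \emph{from} $(W,R)$ \emph{to} a finite flower. Invoking the reduction machinery (Propositions~\ref{lemma:good:properties:of:alpha:reductions:indexed}--\ref{lemma:good:properties:of:delta:reductions:indexed}) does not help: Lemmas~\ref{lemma:alpha:reduction:modal:logic}--\ref{lemma:delta:reduction:modal:logic} give $(W,R)\preceq(W',R')$, which again moves refutation the wrong way, and the Ehrenfeucht--Fra\"iss\'e part concerns first-order sentences, not modal formulas.

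The missing idea is to work at the \emph{model} level rather than the frame level. Since $(W,R)\not\models\varphi$, fix a valuation $V$ and a world $w$ with $(W,R),V,w\not\models\varphi$. Define an equivalence $\equiv$ on $C\setminus B$ by: $u\equiv v$ iff $u$ and $v$ satisfy the same subformulas of $\varphi$ under $V$. Because $\varphi$ has finitely many subformulas, there are finitely many classes, say $n\geq 1$ of them (using $C\setminus B\neq\emptyset$). Let $f$ send $s\mapsto 0$, biject $B$ onto $\lsem1,m\rsem$ where $m={\parallel}B{\parallel}$ (using that $B$ is finite), and send each $c\in C\setminus B$ to the index of its $\equiv$-class in $\lsem m{+}1,m{+}n\rsem$. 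This $f$ is a surjective bounded morphism onto ${\mathcal F}_m^n$; define $V'$ on ${\mathcal F}_m^n$ by $V'(\mathbf{p})=f[V(\mathbf{p})]$ for $\mathbf{p}\in\vvar(\varphi)$. Because $\equiv$ was chosen to respect all subformulas of $\varphi$, a routine induction on subformulas $\psi$ of $\varphi$ shows $(W,R),V,u\models\psi$ iff ${\mathcal F}_m^n,V',f(u)\models\psi$ for every $u\in W$. In particular ${\mathcal F}_m^n,V',f(w)\not\models\varphi$, so ${\mathcal F}_m^n\not\models\varphi$ with $m={\parallel}B{\parallel}$, as required. (The paper states the lemma without proof; this filtration argument is the standard route, parallel to what underlies Lemmas~\ref{lemma:about:upper:bound:on:the:kernel:1} and~\ref{lemma:about:upper:bound:on:the:kernel:2}.)
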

\begin{lemma}\label{lemma:about:finite:subset:when:K5:shape:of:the:frame:bis}
If $W{=}\{s\}{\cup}C$ and $R{=}(\{s\}{\times}B){\cup}(C{\times}C)$ for some sets $B,C$ and for some element $s$ such that $B{\not=}\emptyset$, $C{\setminus}B{\not=}\emptyset$, $C{\setminus}B$ is finite, $B{\subseteq}C$ and $s{\not\in}C$ then for all modal formulas $\varphi$, if $(W,R){\not\models}\varphi$ then there exists a flower ${\mathcal F}_{m}^{n}$ such that $n{=}{\parallel}C{\setminus}B{\parallel}$ and ${\mathcal F}_{m}^{n}{\not\models}\varphi$.
\end{lemma}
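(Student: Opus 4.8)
The plan is to split on where $\varphi$ is refuted in $(W,R)$: either at the point $s$ that generates $(W,R)$, or at a point of the cluster $(C,C{\times}C)$ lying below it. In each case I would first collapse $(W,R)$, together with a refuting valuation, onto a \emph{finite} model of the appropriate shape by a ``profile-reducing'' bounded morphism, all the while keeping the cardinality of the part of the cluster unreachable from $s$ equal to ${\parallel}C{\setminus}B{\parallel}$. So suppose $(W,R){\not\models}\varphi$, fix a valuation $V$ on $(W,R)$ and $w{\in}W$ with $(W,R),V,w{\not\models}\varphi$, and set $n{=}{\parallel}C{\setminus}B{\parallel}$; since $C{\setminus}B$ is finite and non-empty, $n{\in}\mathbf{N}^{+}$, so ${\mathcal F}_{m}^{n}$ is a flower of the $n{\in}\N$ kind for every $m{\in}\mathbf{N}^{+}$. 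For a point $t$ of a model write $\tau(t){=}\{\mathbf{p}{\in}\vvar(\varphi):\ t{\in}V(\mathbf{p})\}{\in}\wp(\vvar(\varphi))$ for its \emph{$\varphi$-profile}; there are at most $2^{{\parallel}\vvar(\varphi){\parallel}}$ profiles, and the truth value of $\varphi$ (and of each of its subformulas) at any point depends on $V$ only through which $\varphi$-profiles occur, a fact which is exactly what the profile-reducing morphisms exploit.

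Assume first that $w{=}s$. Choose $B^{\prime}{\subseteq}B$ containing exactly one point of each $\varphi$-profile realized by $V$ on $B$; as $B{\not=}\emptyset$ we get ${\parallel}B^{\prime}{\parallel}{\geq}1$. Put $C^{\prime}{=}B^{\prime}{\cup}(C{\setminus}B)$, $W^{\prime}{=}\{s\}{\cup}C^{\prime}$, $R^{\prime}{=}(\{s\}{\times}B^{\prime}){\cup}(C^{\prime}{\times}C^{\prime})$, and let $V^{\prime}$ be the restriction of $V$ to $W^{\prime}$. Let $f{:}\ W{\longrightarrow}W^{\prime}$ be the identity on $W^{\prime}$ and send each $t{\in}B{\setminus}B^{\prime}$ to the unique element of $B^{\prime}$ with $\tau(t){=}\tau(f(t))$. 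A routine verification of the forth and back conditions shows that $f$ is a surjective bounded morphism from $(W,R)$ onto $(W^{\prime},R^{\prime})$ respecting $V$ on $\vvar(\varphi)$; hence $(W^{\prime},R^{\prime}),V^{\prime},s{\not\models}\varphi$, and so $(W^{\prime},R^{\prime}){\not\models}\varphi$. But $(W^{\prime},R^{\prime})$ is isomorphic with ${\mathcal F}_{{\parallel}B^{\prime}{\parallel}}^{n}$ (send $s$ to $0$, $B^{\prime}$ onto $\lsem1,{\parallel}B^{\prime}{\parallel}\rsem$ and $C{\setminus}B$ onto $\lsem{\parallel}B^{\prime}{\parallel}{+}1,{\parallel}B^{\prime}{\parallel}{+}n\rsem$), so ${\mathcal F}_{{\parallel}B^{\prime}{\parallel}}^{n}{\not\models}\varphi$ with $n{=}{\parallel}C{\setminus}B{\parallel}$, as required.

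Assume now that $w{\in}C$. Since $(C,C{\times}C)$ is a generated subframe of $(W,R)$, the Generated Subframe Lemma (Lemma~\ref{Generated:Subframe:Lemma}) and its pointwise counterpart for models (see~\cite[Chapter~$3$]{Blackburn:deRijke:Venema:2001}) give $(C,C{\times}C),V{\restriction}C,w{\not\models}\varphi$. Let $\Sigma$ be the set of $\varphi$-profiles realized by $V$ on $C$, so $1{\leq}{\parallel}\Sigma{\parallel}{\leq}2^{{\parallel}\vvar(\varphi){\parallel}}$, and put $m{=}\max\{1,{\parallel}\Sigma{\parallel}{-}n\}$, so that $m{\geq}1$ and $m{+}n{\geq}{\parallel}\Sigma{\parallel}$. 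On ${\mathcal F}_{m}^{n}$ take a valuation $V^{\prime}$ whose restriction to the cluster $\lsem1,m{+}n\rsem$ realizes exactly the profiles in $\Sigma$ (possible because ${\parallel}\Sigma{\parallel}{\leq}m{+}n$), its values at $0$ being arbitrary. Let $C^{\prime}{\subseteq}C$ contain exactly one point of each profile in $\Sigma$. Collapsing each point to the member of $C^{\prime}$ of its own profile gives a surjective bounded morphism of models from $(C,C{\times}C)$ with $V{\restriction}C$ onto $(C^{\prime},C^{\prime}{\times}C^{\prime})$ with $V{\restriction}C^{\prime}$, and likewise a surjective bounded morphism of models from the generated submodel of $({\mathcal F}_{m}^{n},V^{\prime})$ carried by $\lsem1,m{+}n\rsem$ onto the same $(C^{\prime},C^{\prime}{\times}C^{\prime})$ with $V{\restriction}C^{\prime}$. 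Transporting the refutation of $\varphi$ along the first of these and then pulling it back along the second, and invoking once more the generated submodel property of $\lsem1,m{+}n\rsem$ inside ${\mathcal F}_{m}^{n}$, we obtain ${\mathcal F}_{m}^{n}{\not\models}\varphi$, again with $n{=}{\parallel}C{\setminus}B{\parallel}$.

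The only step requiring real care is checking that these profile-identifying maps satisfy the forth and back conditions and are surjective; this is elementary but must be carried out explicitly, and it is precisely what licenses replacing the possibly infinite $(W,R)$ by a finite flower with the prescribed superscript $n{=}{\parallel}C{\setminus}B{\parallel}$ (a direct bounded morphism from a finite flower onto $(W,R)$ being impossible when $B$ is infinite). The companion Lemma~\ref{lemma:about:finite:subset:when:K5:shape:of:the:frame} is obtained by the same scheme, with the roles of $B$ and $C{\setminus}B$ exchanged.
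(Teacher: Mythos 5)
Your proof is correct. The paper states Lemma~\ref{lemma:about:finite:subset:when:K5:shape:of:the:frame:bis} without giving a proof (it is among the omitted ``relatively simple'' ones), so there is no official argument to compare against; your profile-collapsing construction is the natural way to establish it and it checks out. The essential point you get right is that the collapse must be carried out at the level of \emph{models}, i.e.\ by a surjective bounded morphism that respects the refuting valuation on $\vvar(\varphi)$: the frame-level Bounded Morphism Lemma transfers validity the wrong way here (refutation of $\varphi$ in $(W,R)$ would only transfer to a finite flower if $(W,R)$ were a bounded morphic image of that flower, which is impossible when $W$ is infinite), so one really does need the valuation-respecting identification of points with equal $\varphi$-profiles, keeping $C\setminus B$ untouched so that the lower parameter comes out exactly $n={\parallel}C{\setminus}B{\parallel}$. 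Both of your cases (refutation at the root $s$, refutation inside the cluster $C$) verify the forth and back conditions correctly, the representative sets are finite because there are at most $2^{{\parallel}\vvar(\varphi){\parallel}}$ profiles, and in the cluster case the choice $m{=}\max\{1,{\parallel}\Sigma{\parallel}{-}n\}$ together with the generated-submodel property of $\lsem1,m{+}n\rsem$ in ${\mathcal F}_{m}^{n}$ makes the transfer of the refutation legitimate.
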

Let $q{\geq}3$.
\\
\\
Let $\{{\mathcal F}_{A_{i},B_{i}}^{\rho_{i}}:\ i{\in}I\}$ be a disjoint indexed family of galaxies.
\begin{lemma}\label{lemma:alpha:reduction:modal:logic}
%
%
If $\{{\mathcal F}_{A_{i}^{\prime},B_{i}^{\prime}}^{\rho_{i}^{\prime}}:\ i{\in}I\}$ is a $q$-$\alpha$-reduction of $\{{\mathcal F}_{A_{i},B_{i}}^{\rho_{i}}:\ i{\in}I\}$, $(W,R)$ is the union of $\{{\mathcal F}_{A_{i},B_{i}}^{\rho_{i}}:\ i{\in}I\}$ and $(W^{\prime},R^{\prime})$ is the union of $\{{\mathcal F}_{A_{i}^{\prime},B_{i}^{\prime}}^{\rho_{i}^{\prime}}:\ i{\in}I\}$ then $(W,R){\preceq}(W^{\prime},R^{\prime})$.
\end{lemma}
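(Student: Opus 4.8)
The statement to prove is Lemma~\ref{lemma:alpha:reduction:modal:logic}: if $(W^\prime,R^\prime)$ is the union of a $q$-$\alpha$-reduction of $\{{\mathcal F}_{A_i,B_i}^{\rho_i}:\ i{\in}I\}$ and $(W,R)$ is the union of the original family, then $(W,R){\preceq}(W^\prime,R^\prime)$, i.e. every modal formula valid in $(W,R)$ is valid in $(W^\prime,R^\prime)$.

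\medskip

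The plan is to derive this directly from Proposition~\ref{lemma:good:properties:of:alpha:reductions:indexed} together with the Bounded Morphism Lemma. Indeed, part~(3) of Proposition~\ref{lemma:good:properties:of:alpha:reductions:indexed} states exactly that $(W^\prime,R^\prime)$ is a bounded morphic image of $(W,R)$. By Lemma~\ref{Bounded:Morphism:Lemma} (Bounded Morphism Lemma), whenever $(W^\prime,R^\prime)$ is a bounded morphic image of $(W,R)$ we have $(W,R){\preceq}(W^\prime,R^\prime)$. So the proof is a two-line invocation: apply Proposition~\ref{lemma:good:properties:of:alpha:reductions:indexed}(3), then apply Lemma~\ref{Bounded:Morphism:Lemma}.

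\medskip

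First I would check that the hypotheses of Proposition~\ref{lemma:good:properties:of:alpha:reductions:indexed} are exactly those assumed here: both require $\{{\mathcal F}_{A_i,B_i}^{\rho_i}:\ i{\in}I\}$ to be a disjoint indexed family of galaxies (fixed in the surrounding text just before the lemma), $\{{\mathcal F}_{A_i^\prime,B_i^\prime}^{\rho_i^\prime}:\ i{\in}I\}$ to be a $q$-$\alpha$-reduction of it, and $(W,R)$, $(W^\prime,R^\prime)$ to be the respective unions. These match verbatim, so no extra work is needed to set up the application.

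\medskip

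I do not expect any genuine obstacle here: the content of the lemma has already been done inside Lemma~\ref{lemma:the:alpha:reduction:is:always:dust:finite}(7), which constructs the surjective bounded morphism $f$ explicitly for a single galaxy, and Proposition~\ref{lemma:good:properties:of:alpha:reductions:indexed}(3) assembles these componentwise into a surjective bounded morphism between the unions (this is routine because bounded morphisms between the members of disjoint unions glue together, using Lemma~\ref{lemma:about:generated:subframe:and:disjoint:unions} for the back condition across components). The only thing worth stating carefully is that the direction of $\preceq$ is correct: a bounded morphic image validates at least the formulas valid in the pre-image, which is precisely $(W,R){\preceq}(W^\prime,R^\prime)$ as defined above. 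Hence the proof reads: ``By Proposition~\ref{lemma:good:properties:of:alpha:reductions:indexed}, $(W^\prime,R^\prime)$ is a bounded morphic image of $(W,R)$. Hence, by Lemma~\ref{Bounded:Morphism:Lemma}, $(W,R){\preceq}(W^\prime,R^\prime)$.''
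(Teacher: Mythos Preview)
Your proposal is correct and matches the paper's own proof exactly: the paper simply writes ``By Proposition~\ref{lemma:good:properties:of:alpha:reductions:indexed} and Lemma~\ref{Bounded:Morphism:Lemma}.'' Your additional remarks about hypothesis-checking and the direction of $\preceq$ are accurate and would only make the write-up more self-contained.
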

\begin{proof}
By Proposition~\ref{lemma:good:properties:of:alpha:reductions:indexed} and Lemma~\ref{Bounded:Morphism:Lemma}.
\medskip
\end{proof}
\begin{lemma}\label{lemma:gamma:reduction:modal:logic}
%
%
If $\{{\mathcal F}_{A_{i}^{\prime},B_{i}^{\prime}}^{\rho_{i}^{\prime}}:\ i{\in}I\}$ is a $q$-$\gamma$-reduction of $\{{\mathcal F}_{A_{i},B_{i}}^{\rho_{i}}:\ i{\in}I\}$, $(W,R)$ is the union of $\{{\mathcal F}_{A_{i},B_{i}}^{\rho_{i}}:\ i{\in}I\}$ and $(W^{\prime},R^{\prime})$ is the union of $\{{\mathcal F}_{A_{i}^{\prime},B_{i}^{\prime}}^{\rho_{i}^{\prime}}:\ i{\in}I\}$ then
$(W,R){\preceq}(W^{\prime},R^{\prime})$.
\end{lemma}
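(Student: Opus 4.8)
The plan is to mirror the $\alpha$-reduction argument of Lemma~\ref{lemma:alpha:reduction:modal:logic}, but to work locally: a $q$-$\gamma$-reduction need not be a \emph{global} bounded morphic image of $(W,R)$, so we cannot simply invoke the Bounded Morphism Lemma in one stroke. Instead I would exploit the fact that frame validity reduces to validity in point-generated subframes, and then transfer validity along the local bounded morphisms delivered by Proposition~\ref{lemma:good:properties:of:gamma:reductions:indexed}$(3)$, which says that every point-generated subframe of $(W^{\prime},R^{\prime})$ is a bounded morphic image of some point-generated subframe of $(W,R)$.

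Concretely, let $\varphi$ be a modal formula with $(W,R){\models}\varphi$; the goal is $(W^{\prime},R^{\prime}){\models}\varphi$. Suppose toward a contradiction that $(W^{\prime},R^{\prime}){\not\models}\varphi$, say $(W^{\prime},R^{\prime}),V^{\prime},s^{\prime}{\not\models}\varphi$ for some valuation $V^{\prime}$ and some $s^{\prime}{\in}W^{\prime}$. Since the satisfaction of a modal formula at a point depends only on the submodel generated by that point, restricting $V^{\prime}$ to $W^{\prime}_{s^{\prime}}$ still refutes $\varphi$ at $s^{\prime}$, hence $(W^{\prime}_{s^{\prime}},R^{\prime}_{s^{\prime}}){\not\models}\varphi$, where $(W^{\prime}_{s^{\prime}},R^{\prime}_{s^{\prime}})$ is the least generated subframe of $(W^{\prime},R^{\prime})$ containing $s^{\prime}$. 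By Proposition~\ref{lemma:good:properties:of:gamma:reductions:indexed}$(3)$ there is $s{\in}W$ such that $(W^{\prime}_{s^{\prime}},R^{\prime}_{s^{\prime}})$ is a bounded morphic image of $(W_{s},R_{s})$; so by the Bounded Morphism Lemma $(W_{s},R_{s}){\preceq}(W^{\prime}_{s^{\prime}},R^{\prime}_{s^{\prime}})$, and since $\varphi$ fails in the latter, $(W_{s},R_{s}){\not\models}\varphi$. But $(W_{s},R_{s})$ is by definition a generated subframe of $(W,R)$, so by the Generated Subframe Lemma $(W,R){\preceq}(W_{s},R_{s})$, whence $(W,R){\not\models}\varphi$, contradicting the hypothesis. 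Therefore $(W^{\prime},R^{\prime}){\models}\varphi$, and since $\varphi$ was arbitrary, $(W,R){\preceq}(W^{\prime},R^{\prime})$.

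In fact essentially all the substantive work is already done: the per-component construction and the case analysis $\mathbf{(i)}$--$\mathbf{(v)}$ reducing each point-generated subframe to one of the standard shapes (and invoking Lemmas~\ref{lemma:inclusion:Bprime:B:for:generated:subframes:is:bmi}--\ref{lemma:inclusion:Bprime:B:for:generated:subframes:is:bmi:quater}) were carried out in Lemma~\ref{lemma:gamma:reduction:prebounded:dust:quasi:root:bounded:OK} and packaged into Proposition~\ref{lemma:good:properties:of:gamma:reductions:indexed}$(3)$. The only genuinely new ingredient is the elementary observation that validity in a frame is equivalent to validity in all of its point-generated subframes; this is the one small point to spell out, and it is entirely routine (it follows from the locality of modal satisfaction, i.e. the generated-submodel argument underlying the Generated Subframe Lemma). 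So I do not expect any real obstacle here: the proof is simply ``by Proposition~\ref{lemma:good:properties:of:gamma:reductions:indexed}, the Generated Subframe Lemma and the Bounded Morphism Lemma''.
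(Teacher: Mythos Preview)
Your proof is correct and follows the route the paper intends: the paper's one-line proof ``By Proposition~\ref{lemma:good:properties:of:gamma:reductions:indexed} and Lemma~\ref{Bounded:Morphism:Lemma}'' is exactly the combination you use, and your argument is the natural unpacking of that citation. You are right that Proposition~\ref{lemma:good:properties:of:gamma:reductions:indexed}(3) only gives \emph{local} bounded morphisms onto point-generated subframes of $(W^{\prime},R^{\prime})$, so one cannot literally apply the Bounded Morphism Lemma to $(W,R)$ and $(W^{\prime},R^{\prime})$ globally; your use of the Generated Subframe Lemma (Lemma~\ref{Generated:Subframe:Lemma}) to pass between validity in $(W,R)$ and in $(W_{s},R_{s})$, and the locality of modal satisfaction to reduce failure in $(W^{\prime},R^{\prime})$ to failure in some $(W^{\prime}_{s^{\prime}},R^{\prime}_{s^{\prime}})$, is precisely the missing glue that the paper's terse citation leaves implicit.
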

\begin{proof}
By Proposition~\ref{lemma:good:properties:of:gamma:reductions:indexed} and Lemma~\ref{Bounded:Morphism:Lemma}.
\medskip
\end{proof}
\begin{lemma}\label{lemma:delta:reduction:modal:logic}
%
%
If $\{{\mathcal F}_{A_{j},B_{j}}^{\rho_{j}}:\ j{\in}J\}$ is a $q$-$\delta$-reduction of $\{{\mathcal F}_{A_{i},B_{i}}^{\rho_{i}}:\ i{\in}I\}$, $(W,R)$ is the union of $\{{\mathcal F}_{A_{i},B_{i}}^{\rho_{i}}:\ i{\in}I\}$ and $(W^{\prime},R^{\prime})$ is the union of $\{{\mathcal F}_{A_{j},B_{j}}^{\rho_{j}}:\ j{\in}J\}$ then $(W,R){\preceq}(W^{\prime},R^{\prime})$.
\end{lemma}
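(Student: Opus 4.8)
The plan is to proceed exactly as in the proofs of Lemmas~\ref{lemma:alpha:reduction:modal:logic} and~\ref{lemma:gamma:reduction:modal:logic}, reducing the claim to the structural fact already established about $\delta$-reductions together with the Bounded Morphism Lemma. In other words, the $\preceq$-statement is obtained for free once one knows that passing to the $\delta$-reduction produces a bounded morphic image.

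First I would invoke Proposition~\ref{lemma:good:properties:of:delta:reductions:indexed}, whose item~$(3)$ states precisely that when $(W,R)$ is the union of $\{{\mathcal F}_{A_{i},B_{i}}^{\rho_{i}}:\ i{\in}I\}$ and $(W^{\prime},R^{\prime})$ is the union of the $q$-$\delta$-reduction $\{{\mathcal F}_{A_{j},B_{j}}^{\rho_{j}}:\ j{\in}J\}$, the frame $(W^{\prime},R^{\prime})$ is a bounded morphic image of $(W,R)$. The relevant surjective bounded morphism there is the map $f$ assembled from isomorphisms $g_{i}$ between each galaxy ${\mathcal F}_{A_{i},B_{i}}^{\rho_{i}}$ and the chosen representative ${\mathcal F}_{A_{h(i)},B_{h(i)}}^{\rho_{h(i)}}$ of its $\simeq$-class; nothing further needs to be verified at this stage, since that check is already part of the Proposition.

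Then I would apply the Bounded Morphism Lemma (Lemma~\ref{Bounded:Morphism:Lemma}): from the fact that $(W^{\prime},R^{\prime})$ is a bounded morphic image of $(W,R)$ it follows that $(W,R){\preceq}(W^{\prime},R^{\prime})$, which is the assertion. The only point that requires a moment's care is the direction of $\preceq$: the Bounded Morphism Lemma, as stated in the excerpt, yields $(W^{\prime},R^{\prime}){\preceq}(W,R)$ when $(W,R)$ is the morphic image of $(W^{\prime},R^{\prime})$, so here one must read it with the two frames interchanged. There is no genuine obstacle in this lemma: all the combinatorial content — that $J$ can be chosen so the reduction inherits the desired boundedness properties, and that the assembled $f$ is a bona fide surjective bounded morphism — has already been packaged into Proposition~\ref{lemma:good:properties:of:delta:reductions:indexed}, so the proof is a two-line combination of that Proposition with Lemma~\ref{Bounded:Morphism:Lemma}.
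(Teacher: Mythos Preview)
Your proposal is correct and matches the paper's own proof exactly: the paper simply cites Proposition~\ref{lemma:good:properties:of:delta:reductions:indexed} and Lemma~\ref{Bounded:Morphism:Lemma}. Your remark about reading the Bounded Morphism Lemma with the roles of the two frames interchanged is the only subtlety, and you handled it correctly.
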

\begin{proof}
By Proposition~\ref{lemma:good:properties:of:delta:reductions:indexed} and Lemma~\ref{Bounded:Morphism:Lemma}.
\medskip
\end{proof}
\paragraph{Jankov-Fine formulas}
Now, let us adapt to the isolated point and flowers, the standard concept of Jankov-Fine formulas.
\\
\\
Let $m{\in}\mathbf{N}^{+}{\cup}\{0\}$ and $n{\in}\mathbf{N}^{-}$ be such that either $m{\not=}0$, or $n{=}0$.
\\
\\
Let $\chi_{m}^{n}$ be the modal formula defined as follows:
\begin{itemize}
\item the negation of modal formula
\begin{itemize}
\item $\Box\bot$,
\end{itemize}
when $m{=}0$,
\item the negation of the conjunction of the modal formulas
\begin{itemize}
\item $p_{0}$,
\item $\lbrack U\rbrack\bigvee\{p_{k}:\ 0{\leq}k{\leq}m{+}n\}$,
\item $\lbrack U\rbrack(p_{k}\rightarrow\neg p_{l})$ for each $k,l{\in}\N$ such that $0{\leq}k{<}l{\leq}m{+}n$,
\item $\langle U\rangle p_{k}$ for each $k{\in}\N$ such that $0{\leq}k{\leq}m{+}n$,
\item $\lbrack U\rbrack(p_{0}\rightarrow\Box\neg p_{0})$,
\item $\lbrack U\rbrack(p_{0}\rightarrow\Diamond p_{k})$ for each $k{\in}\N$ such that $1{\leq}k{\leq}m$,
\item $\lbrack U\rbrack(p_{0}\rightarrow\Box\neg p_{k})$ for each $k{\in}\N$ such that $m{+}1{\leq}k{\leq}m{+}n$,
\item $\lbrack U\rbrack(p_{k}\rightarrow\Box\neg p_{0})$ for each $k{\in}\N$ such that $1{\leq}k{\leq}m{+}n$,
\item $\lbrack U\rbrack(p_{k}\rightarrow\Diamond p_{l})$ for each $k,l{\in}\N$ such that $1{\leq}k,l{\leq}m{+}n$,
\end{itemize}
when $m{\not=}0$ and $n{\in}\N$,
\item the negation of the conjunction of the modal formulas
\begin{itemize}
\item $p_{1}$,
\item $\lbrack U\rbrack\bigvee\{p_{k}:\ 1{\leq}k{\leq}m\}$,
\item $\lbrack U\rbrack(p_{k}\rightarrow\neg p_{l})$ for each $k,l{\in}\N$ such that $1{\leq}k{<}l{\leq}m$,
\item $\langle U\rangle p_{k}$ for each $k{\in}\N$ such that $1{\leq}k{\leq}m$,
\item $\lbrack U\rbrack(p_{k}\rightarrow\Diamond p_{l})$ for each $k,l{\in}\N$ such that $1{\leq}k,l{\leq}m$,
\end{itemize}
when $m{\not=}0$ and $n{=}{-}1$.
\end{itemize}
The modal formula $\chi_{m}^{n}$ is called {\it Jankov-Fine formula of the frame ${\mathcal F}_{m}^{n}$.}
\begin{lemma}\label{lemma:Jankov:Fine:ONE}
For all Euclidean frames $(W,R)$ and for all $s{\in}W$, the following conditions are equivalent:
\begin{itemize}
\item there exists a valuation $V$ on $(W,R)$ such that $(W,R),V,s{\not\models}\chi_{m}^{n}$,
\item there exists a surjective bounded morphism $f$ from $(W_{s},R_{s})$ to ${\mathcal F}_{m}^{n}$ such that $f(s)$ generates ${\mathcal F}_{m}^{n}$.
\end{itemize}
\end{lemma}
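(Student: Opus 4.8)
The plan is to establish the two implications separately, after disposing of the degenerate case $m{=}0$, where the side condition forces $n{=}0$, the frame ${\mathcal F}_{0}^{0}$ is the isolated point and $\chi_{0}^{0}$ is $\neg\Box\bot$. In that case one observes that, for any Euclidean frame $(W,R)$ and any $s{\in}W$, each of the two conditions is equivalent to $R(s){=}\emptyset$: a valuation refuting $\neg\Box\bot$ at $s$ exists precisely when $R(s){=}\emptyset$ (and this does not depend on the valuation), whereas, since $(W_{s},R_{s})$ is the least generated subframe containing $s$, we have $R(s){=}\emptyset$ iff $W_{s}{=}\{s\}$ and $R_{s}{=}\emptyset$ iff $(W_{s},R_{s})$ is isomorphic to ${\mathcal F}_{0}^{0}$, which is exactly when a surjective bounded morphism onto ${\mathcal F}_{0}^{0}$ sending $s$ to a generating point exists. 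From now on $m{\not=}0$.

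For the direction from a bounded morphism to a refuting valuation, suppose $f$ is a surjective bounded morphism from $(W_{s},R_{s})$ onto ${\mathcal F}_{m}^{n}$ with $f(s)$ generating ${\mathcal F}_{m}^{n}$. When $n{\in}\N$ one checks that $0$ is the unique generating point of ${\mathcal F}_{m}^{n}$, so $f(s){=}0$; when $n{=}{-}1$ every point generates, and by composing $f$ with a transposition of $\lsem1,m\rsem$ I may assume $f(s){=}1$. I would then put $V(p_{k}){=}f^{{-}1}(\{k\})$ for each $k$ in $W_{m}^{n}$ (and $V$ arbitrary on the remaining variables; note $f^{{-}1}(\{k\}){\subseteq}W_{s}$), and verify conjunct by conjunct that the conjunction whose negation is $\chi_{m}^{n}$ holds at $s$ under $V$, using that $\lbrack U\rbrack$ and $\langle U\rangle$ quantify over $W_{s}$ on Euclidean frames: the conjunct $p_{0}$ (resp.\ $p_{1}$) holds by the choice of $f(s)$; the $\lbrack U\rbrack$-conjuncts saying the $p_{k}$'s partition the worlds hold because $f$ is a total function into $W_{m}^{n}$; each $\langle U\rangle p_{k}$ holds by surjectivity of $f$; each conjunct $\lbrack U\rbrack(p_{k}{\rightarrow}\Diamond p_{l})$ holds by the \emph{back} clause together with the fact that these conjuncts correspond exactly to the edges of $R_{m}^{n}$; and each conjunct of the form $\lbrack U\rbrack({\cdot}{\rightarrow}\Box\neg{\cdot})$ holds by the \emph{forth} clause together with the fact that the pairs forbidden by these conjuncts are precisely the non-edges of $R_{m}^{n}$. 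Hence $(W,R),V,s{\not\models}\chi_{m}^{n}$.

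For the converse, suppose $(W,R),V,s{\not\models}\chi_{m}^{n}$, i.e.\ the conjunction holds at $s$ under $V$. The conjunct $\lbrack U\rbrack\bigvee\{p_{k}\}$ together with the pairwise-exclusion conjuncts $\lbrack U\rbrack(p_{k}{\rightarrow}\neg p_{l})$ forces each world of $W_{s}$ to satisfy exactly one $p_{k}$ with $k$ in $W_{m}^{n}$; let $f\colon W_{s}{\longrightarrow}W_{m}^{n}$ send each world to its unique index, a surjection by the conjuncts $\langle U\rangle p_{k}$. The \emph{forth} clause then holds because, for any edge $w\mathrel{R_{s}}w'$, the $\Box\neg$-conjuncts rule out $(f(w),f(w'))$ being any non-edge of $R_{m}^{n}$ (and for $n{=}{-}1$ there are no non-edges, so $R_{m}^{n}$ is complete); the \emph{back} clause holds because, whenever $(f(w),t){\in}R_{m}^{n}$, the corresponding conjunct $\lbrack U\rbrack(p_{f(w)}{\rightarrow}\Diamond p_{t})$ yields an $R_{s}$-successor of $w$ with index $t$. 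Finally $s{\in}V(p_{0})$ (resp.\ $s{\in}V(p_{1})$) gives $f(s){=}0$ (resp.\ $1$), which generates ${\mathcal F}_{m}^{n}$: for $n{\in}\N$ because $0$ is its unique generating point, and for $n{=}{-}1$ because every point generates it.

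Conceptually this is the standard Jankov--Fine correspondence (cf.~\cite[Chapter~$3$]{Blackburn:deRijke:Venema:2001}), so I do not expect a genuine conceptual obstacle; the main effort is bookkeeping. One has to run the conjunct-by-conjunct verification twice, once for the shape $n{\in}\N$ and once for $n{=}{-}1$, matching every Boolean and modal conjunct of $\chi_{m}^{n}$ against the precise edge/non-edge pattern of ${\mathcal F}_{m}^{n}$, and one must keep track of the fact that $\lbrack U\rbrack$ and $\langle U\rangle$ range over $W_{s}$ rather than over all of $W$ — which is exactly where Euclideanness of $(W,R)$ enters. The only slightly delicate point is the treatment of the generating point: it is forced when $n{\in}\N$, whereas when $n{=}{-}1$ one must first normalise $f$ so that $f(s){=}1$.
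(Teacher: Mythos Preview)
Your proposal is correct and follows exactly the standard Jankov--Fine argument; the paper's own proof is simply the one-line reference ``See~\cite[Chapter~$3$]{Blackburn:deRijke:Venema:2001}'', and what you have written is precisely a careful unpacking of that reference adapted to the concrete description of $\chi_{m}^{n}$ and ${\mathcal F}_{m}^{n}$. There is no difference in approach to discuss.
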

\begin{proof}
See~\cite[Chapter~$3$]{Blackburn:deRijke:Venema:2001}.
\medskip
\end{proof}
\begin{lemma}\label{lemma:Jankov:Fine:TWO}
${\mathcal F}_{m}^{n}$ does not validate its Jankov-Fine formula.
\end{lemma}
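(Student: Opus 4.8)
The plan is to reduce the statement to Lemma~\ref{lemma:Jankov:Fine:ONE}, applied to the frame ${\mathcal F}_{m}^{n}$ itself together with a point that generates it. Recall that Lemma~\ref{lemma:Jankov:Fine:ONE} says, for an Euclidean frame $(W,R)$ and $s{\in}W$, that a valuation falsifying $\chi_{m}^{n}$ at $s$ exists if and only if there is a surjective bounded morphism $f$ from $(W_{s},R_{s})$ to ${\mathcal F}_{m}^{n}$ with $f(s)$ generating ${\mathcal F}_{m}^{n}$. So it suffices to exhibit $s{\in}W_{m}^{n}$ that generates ${\mathcal F}_{m}^{n}$: then $(W_{s},R_{s}){=}{\mathcal F}_{m}^{n}$, the identity $\id$ is a surjective bounded morphism from ${\mathcal F}_{m}^{n}$ to ${\mathcal F}_{m}^{n}$, and $f(s){=}s$ generates ${\mathcal F}_{m}^{n}$; Lemma~\ref{lemma:Jankov:Fine:ONE} then yields a valuation $V$ on ${\mathcal F}_{m}^{n}$ with ${\mathcal F}_{m}^{n},V,s{\not\models}\chi_{m}^{n}$, whence ${\mathcal F}_{m}^{n}{\not\models}\chi_{m}^{n}$.

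First I would isolate the generating point in each of the three cases permitted by the hypothesis $m{\not=}0$ or $n{=}0$. If $m{=}0$ and $n{=}0$, then ${\mathcal F}_{0}^{0}$ is the isolated point, $W_{0}^{0}{=}\{0\}$, and $0$ trivially generates it. If $m{\not=}0$ and $n{\in}\N$, then $0{\in}W_{m}^{n}$ and, reading off $R_{m}^{n}$, we have $R_{m}^{n}(0){=}\lsem1,m\rsem{\not=}\emptyset$ and $R_{m}^{n}(k){=}\lsem1,m{+}n\rsem$ for every $k{\in}\lsem1,m{+}n\rsem$; hence from $0$ one reaches $\lsem1,m\rsem$ and then, through any point of $\lsem1,m\rsem$, all of $\lsem1,m{+}n\rsem$, so $0$ generates ${\mathcal F}_{m}^{n}$. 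If $m{\not=}0$ and $n{=}{-}1$, then $W_{m}^{n}{=}\lsem1,m\rsem$ with $R_{m}^{n}{=}\lsem1,m\rsem{\times}\lsem1,m\rsem$, and any $s{\in}\lsem1,m\rsem$ generates ${\mathcal F}_{m}^{n}$. (This also matches the observation in the excerpt that a finite, rooted, Euclidean frame is the isolated point or a flower, so flowers and the isolated point are indeed all rooted.)

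With such an $s$ in hand, the argument of the first paragraph finishes the proof. I do not expect a genuine obstacle here: the only content is the routine verification, from the explicit description of $R_{m}^{n}$, that the chosen point generates the frame, plus a little care with the degenerate case $m{=}0$, where $\chi_{0}^{0}$ is just $\neg\Box\bot$ and the identity bounded morphism still does the job (equivalently, one sees directly that $\Box\bot$ holds vacuously at the dead-end $0$, so ${\mathcal F}_{0}^{0}{\not\models}\neg\Box\bot$). If one prefers to avoid Lemma~\ref{lemma:Jankov:Fine:ONE}, the same conclusion follows by a direct computation: take the valuation $V$ with $V(p_{k}){=}\{k\}$ for $k$ in the relevant range ($\lsem0,m{+}n\rsem$ when $n{\in}\N$, $\lsem1,m\rsem$ when $n{=}{-}1$) and $V(\mathbf{p}){=}\emptyset$ otherwise, and check that the conjunction whose negation is $\chi_{m}^{n}$ is satisfied at the generating point, each conjunct corresponding exactly to a clause in the definition of $R_{m}^{n}$.
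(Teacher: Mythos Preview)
Your proof is correct and follows exactly the paper's approach: the paper's proof is the single line ``By Lemma~\ref{lemma:Jankov:Fine:ONE}'', and you have simply unpacked what that citation means by exhibiting a generating point and applying the identity bounded morphism. Your treatment is more detailed than the paper's, but the underlying argument is identical.
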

\begin{proof}
By Lemma~\ref{lemma:Jankov:Fine:ONE}.
\medskip
\end{proof}
\paragraph{Normal modal logics}
A {\it normal modal logic}\/ is a set $\L$ of formulas such that
\begin{itemize}
\item $\L$ is closed under the rule of uniform substitution,
\item $\L$ contains all propositional tautologies,
\item for all modal formulas $\varphi,\psi$, $\Box(\varphi\rightarrow\psi)\rightarrow(\Box\varphi\rightarrow\Box\psi){\in}\L$,
\item for all modal formulas $\varphi,\psi$, if $\varphi{\in}\L$ and $\varphi\rightarrow\psi{\in}\L$ then $\psi{\in}\L$,
\item for all modal formulas $\varphi$, if $\varphi{\in}\L$ then $\Box\varphi{\in}\L$.
\end{itemize}
From now on, {\bf when we write ``modal logic'', we mean ``normal modal logic''.}
\\
\\
A modal logic $\L$ is {\it consistent}\/ if $\bot{\not\in}\L$.
\\
\\
A consistent modal logic $\L$ is {\it Euclidean}\/ if for all modal formulas $\varphi$, $\Diamond\varphi\rightarrow\Box\Diamond\varphi{\in}\L$~---~i.e. $\L$ is an extension of the modal logic $\K5$.
\\
\\
The extensions of the modal logic $\K5$ have been studied by Nagle~\cite{Nagle:1981} and Nagle and Thomason~\cite{Nagle:Thomason:1985}.
They possess remarkable properties: they have the poly-size model property, they are finitely axiomatizable, etc.
\\
\\
For all modal formulas $\varphi$, let $\K5{\oplus}\varphi$ be the least extension of $\K5$ containing $\varphi$.
\\
\\
For all Euclidean modal logics $\L$, let $\Fr(\L)$ be the class of all frames $(W,R)$ such that $(W,R){\models}\L$.
\\
\\
For all Euclidean modal logics $\L$, let
\begin{itemize}
\item $\mathtt{S}_{\L}{=}\{(m,n):\ m{\in}\mathbf{N}^{+},\ n{\in}\mathbf{N}^{-}$ and ${\mathcal F}_{m}^{n}{\models}\L\}$.
\end{itemize}
\begin{lemma}\label{lemma:S:L:is:closed:subset:of:N:plus:times:N:minus}
For all Euclidean modal logics $\L$, $\mathtt{S}_{\L}$ is a closed subset of $\mathbf{N}^{+}{\times}\mathbf{N}^{-}$.
\end{lemma}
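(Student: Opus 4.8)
The claim is that $\mathtt{S}_{\L}=\{(m,n):\ m\in\mathbf{N}^{+},\ n\in\mathbf{N}^{-}$ and ${\mathcal F}_{m}^{n}\models\L\}$ is a closed subset of $\mathbf{N}^{+}\times\mathbf{N}^{-}$, i.e. whenever $(m,n)\in\mathtt{S}_{\L}$ we also have $\pi(m,n)\subseteq\mathtt{S}_{\L}$. Unfolding the definition of $\pi$, this amounts to: if ${\mathcal F}_{m}^{n}\models\L$ and $(m^{\prime},n^{\prime})\ll(m,n)$ then ${\mathcal F}_{m^{\prime}}^{n^{\prime}}\models\L$. So the whole statement reduces to the single implication that validity of $\L$ is inherited downward along $\ll$.

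**Key step.** The tool for this is Lemma~\ref{lemma:about:bounded:morphisms:in:the:situation:of F:m:n:frames}, which says that for flowers, ${\mathcal F}_{m^{\prime}}^{n^{\prime}}$ is a bounded morphic image of ${\mathcal F}_{m}^{n}$ exactly when $(m^{\prime},n^{\prime})\ll(m,n)$, combined with the Bounded Morphism Lemma (Lemma~\ref{Bounded:Morphism:Lemma}), which gives ${\mathcal F}_{m}^{n}\preceq{\mathcal F}_{m^{\prime}}^{n^{\prime}}$, meaning every modal formula valid in ${\mathcal F}_{m}^{n}$ is valid in ${\mathcal F}_{m^{\prime}}^{n^{\prime}}$. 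Hence the plan is: fix an Euclidean modal logic $\L$, take $(m,n)\in\mathtt{S}_{\L}$, and let $(m^{\prime},n^{\prime})\in\pi(m,n)$ be arbitrary. Then $(m^{\prime},n^{\prime})\ll(m,n)$, so by Lemma~\ref{lemma:about:bounded:morphisms:in:the:situation:of F:m:n:frames}, ${\mathcal F}_{m^{\prime}}^{n^{\prime}}$ is a bounded morphic image of ${\mathcal F}_{m}^{n}$; by Lemma~\ref{Bounded:Morphism:Lemma}, ${\mathcal F}_{m}^{n}\preceq{\mathcal F}_{m^{\prime}}^{n^{\prime}}$; since ${\mathcal F}_{m}^{n}\models\L$, i.e. every formula of $\L$ is valid in ${\mathcal F}_{m}^{n}$, each such formula is valid in ${\mathcal F}_{m^{\prime}}^{n^{\prime}}$, so ${\mathcal F}_{m^{\prime}}^{n^{\prime}}\models\L$, giving $(m^{\prime},n^{\prime})\in\mathtt{S}_{\L}$. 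Since $(m^{\prime},n^{\prime})$ was arbitrary in $\pi(m,n)$, we conclude $\pi(m,n)\subseteq\mathtt{S}_{\L}$, which is precisely the closedness condition.

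**Caveats and the only mild obstacle.** One should double-check the edge cases hidden in the indexing: $\pi(m,n)$ ranges over pairs with $m^{\prime}\in\mathbf{N}^{+}$ and $n^{\prime}\in\mathbf{N}^{-}$, so $m^{\prime}\geq 1$ and $n^{\prime}\geq -1$, and for each such pair ${\mathcal F}_{m^{\prime}}^{n^{\prime}}$ is a genuine flower (never the isolated point, since $m^{\prime}\neq 0$), so Lemma~\ref{lemma:about:bounded:morphisms:in:the:situation:of F:m:n:frames} applies directly without a separate case for the isolated point. The direction of $\preceq$ also deserves a moment's care: "$(W^{\prime},R^{\prime})\preceq(W,R)$" is defined so that validity flows from $(W^{\prime},R^{\prime})$ to $(W,R)$, and the Bounded Morphism Lemma says that if $(W,R)$ is a bounded morphic image of $(W^{\prime},R^{\prime})$ then $(W^{\prime},R^{\prime})\preceq(W,R)$; instantiating with $(W^{\prime},R^{\prime})={\mathcal F}_{m}^{n}$ and $(W,R)={\mathcal F}_{m^{\prime}}^{n^{\prime}}$ gives exactly the transfer of validity we want. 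There is no real difficulty here — the statement is essentially a repackaging of the bounded-morphism transfer lemma through the combinatorial description of $\ll$ for flowers, so the proof is short and the main thing is just to cite the two earlier lemmas in the right order.
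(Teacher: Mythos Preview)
Your proposal is correct and takes essentially the same approach as the paper, which simply cites Lemmas~\ref{lemma:about:bounded:morphisms:in:the:situation:of F:m:n:frames} and~\ref{Bounded:Morphism:Lemma}. Your write-up just unpacks in detail how those two lemmas combine (including the direction check for $\preceq$ and the observation that no isolated-point case arises), which is exactly the intended argument.
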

\begin{proof}
By Lemmas~\ref{lemma:about:bounded:morphisms:in:the:situation:of F:m:n:frames} and~\ref{Bounded:Morphism:Lemma}.
\medskip
\end{proof}
\begin{lemma}\label{lemma:alpha}
Let ${\mathcal F}^{\rho}_{A,B}$ be a galaxy in ${\mathcal K}_{2}$.
If $\{2\}{\times}\mathbf{N}^{-}{\subseteq}\mathtt{S}_{\L}$ then ${\mathcal F}^{\rho}_{A,B}{\models}\L$.
\end{lemma}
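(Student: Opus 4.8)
The plan is to argue by contradiction: assume $\{2\}\times\mathbf{N}^{-}\subseteq\mathtt{S}_{\L}$ but ${\mathcal F}^{\rho}_{A,B}\not\models\L$, so that some modal formula $\varphi\in\L$ satisfies ${\mathcal F}^{\rho}_{A,B}\not\models\varphi$, and then produce some $n\in\mathbf{N}^{-}$ with ${\mathcal F}_{2}^{n}\not\models\varphi$. This suffices, because $(2,n)\in\{2\}\times\mathbf{N}^{-}\subseteq\mathtt{S}_{\L}$ gives ${\mathcal F}_{2}^{n}\models\L$, whence ${\mathcal F}_{2}^{n}\models\varphi$ since $\varphi\in\L$, contradicting ${\mathcal F}_{2}^{n}\not\models\varphi$.

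First I would observe that ${\mathcal F}^{\rho}_{A,B}$ is headed: being in ${\mathcal K}_{2}$ forces ${\parallel}A{\parallel}\geq4$ and ${\parallel}\rho(s){\parallel}=2\neq0$ for every $s\in A$. Hence Lemma~\ref{lemma:sigma:about:K2:galaxies:and:generated:by:s:subframes} applies, and from ${\mathcal F}^{\rho}_{A,B}\not\models\varphi$ it yields an $s\in A$ such that $(W_{s},R_{s})\not\models\varphi$, where $(W_{s},R_{s})$ is the least generated subframe of ${\mathcal F}^{\rho}_{A,B}$ containing $s$. Next I would describe this subframe explicitly: under $R_{A,B}^{\rho}$ the point $s$ reaches exactly $\rho(s)\subseteq B$, every point of $B$ reaches all of $B$ (since $B\times B\subseteq R_{A,B}^{\rho}$), and $s\notin B$; therefore $W_{s}=\{s\}\cup B$ and $R_{s}=(\{s\}\times\rho(s))\cup(B\times B)$.

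Then I would apply Lemma~\ref{lemma:about:finite:subset:when:K5:shape:of:the:frame} to $(W_{s},R_{s})$, with the set playing the role of that lemma's ``$B$'' taken to be $\rho(s)$ and the set playing the role of its ``$C$'' taken to be $B$. The hypotheses of that lemma hold: $\rho(s)\neq\emptyset$, $B\setminus\rho(s)\neq\emptyset$ (because ${\parallel}B{\parallel}\geq4>2={\parallel}\rho(s){\parallel}$), $\rho(s)$ is finite, $\rho(s)\subseteq B$, and $s\notin B$. The conclusion then converts $(W_{s},R_{s})\not\models\varphi$ into a flower ${\mathcal F}_{m}^{n}$ with $m={\parallel}\rho(s){\parallel}=2$ and ${\mathcal F}_{m}^{n}\not\models\varphi$; that is, ${\mathcal F}_{2}^{n}\not\models\varphi$ for some $n\in\mathbf{N}^{-}$, which is exactly what is needed to close the argument.

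I do not expect a genuine obstacle: everything is a chaining of earlier lemmas, and the only point requiring a little care is the explicit shape of $(W_{s},R_{s})$ for $s$ in the upper part of the galaxy, together with the matching of that shape to the hypothesis pattern of Lemma~\ref{lemma:about:finite:subset:when:K5:shape:of:the:frame}. It is worth noting that the definition of ${\mathcal K}_{2}$ --- in particular the requirement ${\parallel}\rho(s){\parallel}=2$ for all $s\in A$ --- is precisely what pins the resulting flower's first index to $m=2$, which explains why only $\{2\}\times\mathbf{N}^{-}$, rather than a larger subset of $\mathbf{N}^{+}\times\mathbf{N}^{-}$, is required in the hypothesis.
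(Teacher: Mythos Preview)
Your proposal is correct and follows essentially the same route as the paper's own proof: contradiction, headedness, Lemma~\ref{lemma:sigma:about:K2:galaxies:and:generated:by:s:subframes} to pass to a point-generated subframe, explicit identification of that subframe's shape, and then Lemma~\ref{lemma:about:finite:subset:when:K5:shape:of:the:frame} to produce a flower ${\mathcal F}_{2}^{n}$ refuting $\varphi$. The only cosmetic difference is that the paper renames $\rho(s)$ and $B$ as $B'$ and $C'$ before invoking Lemma~\ref{lemma:about:finite:subset:when:K5:shape:of:the:frame}, whereas you match them directly.
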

\begin{proof}
For the sake of the contradiction, suppose $\{2\}{\times}\mathbf{N}^{-}{\subseteq}\mathtt{S}_{\L}$ and ${\mathcal F}^{\rho}_{A,B}{\not\models}\L$.
Hence, there exists a modal formula $\varphi{\in}\L$ such that ${\mathcal F}^{\rho}_{A,B}{\not\models}\varphi$.
Since ${\mathcal F}^{\rho}_{A,B}$ is a galaxy in ${\mathcal K}_{2}$, then ${\mathcal F}^{\rho}_{A,B}$ is headed and by Lemma~\ref{lemma:sigma:about:K2:galaxies:and:generated:by:s:subframes}, there exists $s{\in}A$ such that $(W_{s},R_{s}){\not\models}\varphi$ where $(W_{s},R_{s})$ is the least generated subframe of ${\mathcal F}_{A,B}^{\rho}$ containing $s$.
Moreover, obviously, $W_{s}{=}\{s\}{\cup}C^{\prime}$ and $R{=}(\{s\}{\times}B^{\prime}){\cup}(C^{\prime}{\times}C^{\prime})$ for some sets $B^{\prime},C^{\prime}$ such that ${\parallel}B^{\prime}{\parallel}{=}2$, ${\parallel}C^{\prime}{\setminus}B^{\prime}{\parallel}{\geq}2$, $B^{\prime}{\subseteq}C^{\prime}$ and $s{\not\in}C^{\prime}$.
Thus, by Lemma~\ref{lemma:about:finite:subset:when:K5:shape:of:the:frame}, there exists a flower ${\mathcal F}_{m}^{n}$ such that $m{=}2$ and ${\mathcal F}_{m}^{n}{\not\models}\varphi$.
Since $\{2\}{\times}\mathbf{N}^{-}{\subseteq}\mathtt{S}_{\L}$, then $(m,n){\in}\mathtt{S}_{\L}$.
Consequently, ${\mathcal F}_{m}^{n}{\models}\L$.
Since $\varphi{\in}\L$, then ${\mathcal F}_{m}^{n}{\models}\varphi$: a contradiction.
\medskip
\end{proof}
\begin{lemma}\label{lemma:beta}
Let ${\mathcal F}^{\rho}_{A,B}$ be a galaxy in ${\mathcal L}_{2}$.
If $\mathbf{N}^{+}{\times}\{2\}{\subseteq}\mathtt{S}_{\L}$ then ${\mathcal F}^{\rho}_{A,B}{\models}\L$.
\end{lemma}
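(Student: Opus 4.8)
The plan is to follow, step for step, the proof of Lemma~\ref{lemma:alpha}, interchanging the roles of ``root'' and ``kernel'' and using Lemma~\ref{lemma:about:finite:subset:when:K5:shape:of:the:frame:bis} in place of Lemma~\ref{lemma:about:finite:subset:when:K5:shape:of:the:frame}. So, arguing for the sake of the contradiction, I would assume $\mathbf{N}^{+}{\times}\{2\}{\subseteq}\mathtt{S}_{\L}$ and ${\mathcal F}^{\rho}_{A,B}{\not\models}\L$, and pick a modal formula $\varphi{\in}\L$ such that ${\mathcal F}^{\rho}_{A,B}{\not\models}\varphi$.

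First I would observe that a galaxy ${\mathcal F}^{\rho}_{A,B}$ in ${\mathcal L}_{2}$ is headed: for each $s{\in}A$, since ${\parallel}B{\parallel}{\geq}4$ and ${\parallel}B{\setminus}\rho(s){\parallel}{=}2$, we have $\rho(s){\not=}\emptyset$. By Lemma~\ref{lemma:sigma:about:K2:galaxies:and:generated:by:s:subframes}, there is then some $s{\in}A$ whose least generated subframe $(W_{s},R_{s})$ in ${\mathcal F}^{\rho}_{A,B}$ satisfies $(W_{s},R_{s}){\not\models}\varphi$. Unwinding the definition of a galaxy (and using $\rho(s){\not=}\emptyset$, which rules out $R_{s}{=}W_{s}{\times}W_{s}$ in Lemma~\ref{lemma:generated:subframe:from:a:single:point:has:specific:shapes} because $s{\not\in}B$), this subframe has the form $W_{s}{=}\{s\}{\cup}B$ and $R_{s}{=}(\{s\}{\times}\rho(s)){\cup}(B{\times}B)$, with $\rho(s){\subseteq}B$, $s{\not\in}B$, $\rho(s){\not=}\emptyset$, and $B{\setminus}\rho(s)$ of cardinality $2$, hence nonempty and finite.

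Now I would apply Lemma~\ref{lemma:about:finite:subset:when:K5:shape:of:the:frame:bis} to $(W_{s},R_{s})$, instantiating the lemma's outer set by $B$ and the lemma's arrow target by $\rho(s)$; this yields a flower ${\mathcal F}_{m}^{n}$ with $n{=}{\parallel}B{\setminus}\rho(s){\parallel}{=}2$ and ${\mathcal F}_{m}^{n}{\not\models}\varphi$. Since $(m,2){\in}\mathbf{N}^{+}{\times}\{2\}{\subseteq}\mathtt{S}_{\L}$, we get ${\mathcal F}_{m}^{n}{\models}\L$, hence ${\mathcal F}_{m}^{n}{\models}\varphi$ because $\varphi{\in}\L$, contradicting ${\mathcal F}_{m}^{n}{\not\models}\varphi$.

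I do not anticipate a genuine obstacle: the argument is the exact ${\mathcal L}_{2}$-analogue of Lemma~\ref{lemma:alpha}. The only delicate point is pure bookkeeping, namely checking that the generated subframe of a galaxy in ${\mathcal L}_{2}$ matches the hypotheses of Lemma~\ref{lemma:about:finite:subset:when:K5:shape:of:the:frame:bis} with the \emph{finite side} being the complement $B{\setminus}\rho(s)$ rather than $\rho(s)$ itself; this is precisely why one invokes the ``bis'' lemma (which controls the kernel parameter, forcing $n{=}2$) instead of its unprimed counterpart (which controls the root parameter $m$, as used in Lemma~\ref{lemma:alpha}).
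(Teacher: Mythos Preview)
Your proof is correct and follows essentially the same approach as the paper's own proof: both argue by contradiction, use headedness of ${\mathcal L}_{2}$-galaxies to invoke Lemma~\ref{lemma:sigma:about:K2:galaxies:and:generated:by:s:subframes}, describe the generated subframe explicitly, and then apply Lemma~\ref{lemma:about:finite:subset:when:K5:shape:of:the:frame:bis} to obtain a flower with $n{=}2$ that refutes $\varphi$ while lying in $\mathtt{S}_{\L}$. Your version even spells out a bit more explicitly why the galaxy is headed and why the generated subframe has the stated shape, but the underlying argument is identical.
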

\begin{proof}
For the sake of the contradiction, suppose $\mathbf{N}^{+}{\times}\{2\}{\subseteq}\mathtt{S}_{\L}$ and ${\mathcal F}^{\rho}_{A,B}{\not\models}\L$.
Hence, there exists a modal formula $\varphi{\in}\L$ such that ${\mathcal F}^{\rho}_{A,B}{\not\models}\varphi$.
Since ${\mathcal F}^{\rho}_{A,B}$ is a galaxy in ${\mathcal L}_{2}$, then ${\mathcal F}^{\rho}_{A,B}$ is headed and by Lemma~\ref{lemma:sigma:about:K2:galaxies:and:generated:by:s:subframes}, there exists $s{\in}A$ such that $(W_{s},R_{s}){\not\models}\varphi$ where $(W_{s},R_{s})$ is the least generated subframe of ${\mathcal F}_{A,B}^{\rho}$ containing $s$.
Moreover, obviously, $W_{s}{=}\{s\}{\cup}C^{\prime}$ and $R{=}(\{s\}{\times}B^{\prime}){\cup}(C^{\prime}{\times}C^{\prime})$ for some sets $B^{\prime},C^{\prime}$ such that ${\parallel}B^{\prime}{\parallel}{\geq}2$, ${\parallel}C^{\prime}{\setminus}B^{\prime}{\parallel}{=}2$, $B^{\prime}{\subseteq}C^{\prime}$ and $s{\not\in}C^{\prime}$.
Thus, by Lemma~\ref{lemma:about:finite:subset:when:K5:shape:of:the:frame:bis}, there exists a flower ${\mathcal F}_{m}^{n}$ such that $n{=}2$ and ${\mathcal F}_{m}^{n}{\not\models}\varphi$.
Since $\mathbf{N}^{+}{\times}\{2\}{\subseteq}\mathtt{S}_{\L}$, then $(m,n){\in}\mathtt{S}_{\L}$.
Consequently, ${\mathcal F}_{m}^{n}{\models}\L$.
Since $\varphi{\in}\L$, then ${\mathcal F}_{m}^{n}{\models}\varphi$: a contradiction.
\medskip
\end{proof}
\section{Model Theory of Classical First-Order Logic}\label{section:first:order:syntax:and:semantics}
In this section, we introduce tools and techniques in Model Theory of Classical First-Order Logic that we will need later in our proofs about modal definability.
\paragraph{First-order syntax}
Let us consider a countably infinite set $\IVAR$ of {\it individual variables}\/ (denoted $\mathbf{x}$, $\mathbf{y}$, $\ldots$).
\\
\\
The {\it first-order formulas}\/ (denoted $A$, $B$, $\ldots$) are inductively defined as follows:
\begin{itemize}
\item $A,B::=\mathbf{R}(\mathbf{x},\mathbf{y})\mid\mathbf{x}{=}\mathbf{y}\mid\neg A\mid(A\vee B)\mid\forall\mathbf{x}A$.
\end{itemize}
We adopt the standard rules for omission of the parentheses.
\\
\\
We define the other Boolean constructs as usual.
The first-order formula $\exists\mathbf{x}A$ is obtained as the well-known abbreviation
\begin{itemize}
\item $\exists\mathbf{x}A::=\neg\forall\mathbf{x}\neg A$.
\end{itemize}
We will also use the abbreviations
\begin{itemize}
\item $\exists^{{=}1}\mathbf{x}A(\mathbf{x})::=\exists\mathbf{x}(A(\mathbf{x})\wedge\forall\mathbf{y}(A(\mathbf{y})\rightarrow\mathbf{y}{=}\mathbf{x}))$,
\item $\exists^{{=}2}\mathbf{x}A(\mathbf{x})::=\exists\mathbf{x_{1}}\exists\mathbf{x_{2}}(A(\mathbf{x_{1}})\wedge A(\mathbf{x_{2}})\wedge\mathbf{x_{1}}{\not=}\mathbf{x_{2}}\wedge\forall\mathbf{y}(A(\mathbf{y})\rightarrow\mathbf{y}{=}\mathbf{x_{1}}\vee\mathbf{y}{=}\mathbf{x_{2}}))$,
\item $\{\mathbf{x_{1}},\mathbf{x_{2}}\}{\cap}\{\mathbf{y_{1}},\mathbf{y_{2}}\}{=}\emptyset::=\mathbf{x_{1}}{\not=}\mathbf{y_{1}}\wedge\mathbf{x_{1}}{\not=}\mathbf{y_{2}}\wedge\mathbf{x_{2}}{\not=}\mathbf{y_{1}}\wedge\mathbf{x_{2}}{\not=}\mathbf{y_{2}}$,
\item $\mathbf{R}(\mathbf{z}){=}\{\mathbf{x_{i}},\mathbf{y_{j}}\}::=\forall\mathbf{t}(\mathbf{R}(\mathbf{z},\mathbf{t})\leftrightarrow\mathbf{t}{=}\mathbf{x_{i}}\vee\mathbf{t}{=}\mathbf{y_{j}})$,
\item $\mathbf{R}(\mathbf{z}){=}\{\mathbf{x_{i}},\mathbf{y_{j}}\}^{\mathbf{c}}::=\forall\mathbf{t}(\mathbf{R}(\mathbf{z},\mathbf{t})\leftrightarrow\mathbf{R}(\mathbf{t},\mathbf{t})\wedge\mathbf{t}{\not=}\mathbf{x_{i}}\wedge\mathbf{t}{\not=}\mathbf{y_{j}})$,
\item $\{\mathbf{x_{1}},\mathbf{x_{2}}\}{=}\{\mathbf{y_{1}},\mathbf{y_{2}}\}::=(\mathbf{x_{1}}{=}\mathbf{y_{1}}\wedge\mathbf{x_{2}}{=}\mathbf{y_{2}})\vee(\mathbf{x_{1}}{=}\mathbf{y_{2}}\wedge\mathbf{x_{2}}{=}\mathbf{y_{1}})$.
\end{itemize}
The {\it set of all free individual variables occurring in the first-order formula $A$}\/ (denoted $\fiv(A)$) is defined as usual.
The first-order formula $A$ is a {\it sentence}\/ if $\fiv(A){=}\emptyset$.
\\
\\
For all first-order formulas $A$, let $\size(A)$ be the number of symbols occurring in $A$.
\\
\\
The {\it quantifier depth of the first-order formula $A$}\/ (denoted $\qd(A)$) is defined as usual.
For all first-order formulas $A$, let $\qdd(A){=}\max\{\qd(A),3\}$.
\\
\\
Obviously, for all first-order formulas $A$, $\qd(A){\leq}\qdd(A)$ and $\qdd(A){\geq}3$.
\\
\\
For all integers $k$ such that $k{\geq}4$ and for all first-order formulas $A$, let $\Pi_{k}^{A}{=}\{(m,
$\linebreak$
n):\ m{\in}\mathbf{N}^{+}$ and $n{\in}\mathbf{N}^{-}$ are such that if $m{\geq}2$ and $n{\geq}2$ then $m{+}n{\leq}k$, if $m{=}1$ then $n{\leq}\qdd(A)$ and if either $n{=}{-}1$, or $n{=}0$, or $n{=}1$ then $m{\leq}\qdd(A)\}$.
\begin{lemma}\label{lemma:ll:finitely:many:PI:PI:pairs}
For all integers $k$ such that $k{\geq}4$ and for all first-order formulas $A$, $\Pi_{k}^{A}{\subseteq}\lsem1,\max\{k,\qdd(A)\}\rsem{\times}\lsem{-}1,\max\{k,\qdd(A)\}\rsem$.
\end{lemma}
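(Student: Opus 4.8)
The plan is to fix an integer $k{\geq}4$ and a first-order formula $A$, abbreviate $M{=}\max\{k,\qdd(A)\}$, take an arbitrary pair $(m,n){\in}\Pi_{k}^{A}$, and verify directly that $m{\in}\lsem1,M\rsem$ and $n{\in}\lsem{-}1,M\rsem$. The lower bounds are immediate from the membership conditions built into $\Pi_{k}^{A}$: since $m{\in}\mathbf{N}^{+}$ we have $m{\geq}1$, and since $n{\in}\mathbf{N}^{-}$ we have $n{\geq}{-}1$. So the content of the proof is the two upper bounds $m{\leq}M$ and $n{\leq}M$, which I would obtain by a case split mirroring the three defining clauses of $\Pi_{k}^{A}$.

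First I would handle the case $m{=}1$. Here the clause ``if $m{=}1$ then $n{\leq}\qdd(A)$'' applies, giving $n{\leq}\qdd(A){\leq}M$, while $m{=}1{\leq}M$ holds because $M{\geq}\qdd(A){\geq}3$ by the observation recorded just before the statement. Next, the case $m{\geq}2$ and $n{\in}\{{-}1,0,1\}$: now the clause ``if either $n{=}{-}1$, or $n{=}0$, or $n{=}1$ then $m{\leq}\qdd(A)$'' applies, giving $m{\leq}\qdd(A){\leq}M$, and $n{\leq}1{\leq}M$ is clear. Finally, the case $m{\geq}2$ and $n{\geq}2$: here the clause ``if $m{\geq}2$ and $n{\geq}2$ then $m{+}n{\leq}k$'' applies, so $m{+}n{\leq}k{\leq}M$, whence $m{\leq}m{+}n{\leq}M$ (using $n{\geq}0$) and $n{\leq}m{+}n{\leq}M$ (using $m{\geq}0$). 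Since every pair with $m{\geq}1$ and $n{\geq}{-}1$ falls under exactly one of ``$m{=}1$'', ``$m{\geq}2$ and $n{\leq}1$'', ``$m{\geq}2$ and $n{\geq}2$'', these three cases are exhaustive, and in each of them $(m,n){\in}\lsem1,M\rsem{\times}\lsem{-}1,M\rsem$, which is the claimed inclusion.

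I do not expect a substantial obstacle: the argument is a routine finite case analysis that follows the shape of the definition of $\Pi_{k}^{A}$ and uses no earlier lemma. The only points that need a little care are (i) confirming that the case split is genuinely exhaustive, so that no pair of $\Pi_{k}^{A}$ escapes all three bounds, and (ii) explicitly invoking $\qdd(A){\geq}3$, hence $M{\geq}3$, so that the trivial estimates such as $m{=}1{\leq}M$ and $n{\leq}1{\leq}M$ are justified cleanly.
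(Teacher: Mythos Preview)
Your proof is correct. The paper states this lemma without proof, treating it as an immediate consequence of the definition of $\Pi_{k}^{A}$; your case analysis is exactly the routine verification one would expect and there is nothing to compare.
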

\paragraph{First-order semantics}
Let $(W,R)$ be a frame.
\\
\\
An {\it assignment on $(W,R)$}\/ is a function $g:\ \IVAR{\longrightarrow}W$.
The {\it update of an assignment $g$ on $(W,R)$ with respect to $s{\in}W$ and an individual variable $\mathbf{x}$}\/ is the assignment $g_{s}^{\mathbf{x}}$ on $(W,R)$ such that for all individual variables $\mathbf{y}$,
\begin{itemize}
\item $g_{s}^{\mathbf{x}}(\mathbf{y}){=}s$ when $\mathbf{y}{=}\mathbf{x}$,
\item $g_{s}^{\mathbf{x}}(\mathbf{y}){=}g(\mathbf{y})$ when $\mathbf{y}{\not=}\mathbf{x}$.
\end{itemize}
The {\it satisfiability of a first-order formula $A$ with respect to an assignment $g$ on $(W,R)$}\/ (denoted $(W,R),g\models A$) is inductively defined as follows:
\begin{itemize}
\item $(W,R),g{\models}\mathbf{R}(\mathbf{x},\mathbf{y})$ if and only if $g(\mathbf{x}){R}g(\mathbf{y})$,
\item $(W,R),g{\models}\mathbf{x}{=}\mathbf{y}$ if and only if $g(\mathbf{x}){=}g(\mathbf{y})$,
%
%
%
%
\item $(W,R),g{\models}\neg A$ if and only if $(W,R),g{\not\models}A$,
\item $(W,R),g{\models}A\vee B$ if and only if $(W,R),g{\models}A$ or $(W,R),g{\models}B$,
\item $(W,R),g{\models}\forall\mathbf{x}A$ if and only if for all $s{\in}W$, $(W,R),g_{s}^{\mathbf{x}}{\models}A$.
\end{itemize}
As a result,
\begin{itemize}
\item $(W,R),g{\models}\exists\mathbf{x}A$ if and only if there exists $s{\in}W$ such that $(W,R),g_{s}^{\mathbf{x}}{\models}A$.
\end{itemize}
Moreover,
\begin{itemize}
\item $(W,R),g{\models}\exists^{{=}1}\mathbf{x}A(\mathbf{x})$ if and only if there exists exactly one $s{\in}W$ such that $(W,R),g_{s}^{\mathbf{x}}{\models}A$,
\item $(W,R),g{\models}\exists^{{=}2}\mathbf{x}A(\mathbf{x})$ if and only if there exists exactly two $s{\in}W$ such that $(W,R),g_{s}^{\mathbf{x}}{\models}A$,
\item $(W,R),g{\models}\{\mathbf{x_{1}},\mathbf{x_{2}}\}{\cap}\{\mathbf{y_{1}},\mathbf{y_{2}}\}{=}\emptyset$ if and only if $\{g(\mathbf{x_{1}}),g(\mathbf{x_{2}})\}{\cap}\{g(\mathbf{y_{1}}),
$\linebreak$
g(\mathbf{y_{2}})\}{=}\emptyset$,
\item $(W,R),g{\models}\mathbf{R}(\mathbf{z}){=}\{\mathbf{x_{i}},\mathbf{y_{j}}\}$ if and only if $R(g(\mathbf{z})){=}\{g(\mathbf{x_{i}}),g(\mathbf{y_{j}})\}$,
\item $(W,R),g{\models}\mathbf{R}(\mathbf{z}){=}\{\mathbf{x_{i}},\mathbf{y_{j}}\}^{\mathbf{c}}$ if and only if $R(g(\mathbf{z})){=}\{t{\in}W:\ t{R}t\}{\setminus}\{g(\mathbf{x_{i}}),
$\linebreak$
g(\mathbf{y_{j}})\}$
%
%
\item $(W,R),g{\models}\{\mathbf{x_{1}},\mathbf{x_{2}}\}{=}\{\mathbf{y_{1}},\mathbf{y_{2}}\}$ if and only if $\{g(\mathbf{x_{1}}),g(\mathbf{x_{2}})\}{=}\{g(\mathbf{y_{1}}),
$\linebreak$
g(\mathbf{y_{2}})\}$.
\end{itemize}
From now on, for all first-order formulas $A(\mathbf{x_{1}},\ldots,\mathbf{x_{m}})$ with free individual variables $\mathbf{x_{1}},\ldots,\mathbf{x_{m}}$ and for all $s_{1},\ldots,s_{m}\in W$, {\bf when we write ``$(W,R){\models}A(\mathbf{x_{1}},
$\linebreak$
\ldots,\mathbf{x_{m}})\ \lbrack s_{1},\ldots,s_{m}\rbrack$'', we shall mean ``for all assignments $g$ on $(W,R)$, if $g(\mathbf{x_{1}}){=}s_{1}$, $\ldots$, $g(\mathbf{x_{m}}){=}s_{m}$ then $(W,R),g{\models}A(\mathbf{x_{1}},\ldots,\mathbf{x_{m}})$''.}
\\
\\
A first-order formula $A$ is {\it valid in $(W,R)$}\/ (denoted $(W,R){\models}A$) if $A$ is satisfied with respect to all assignments on $(W,R)$.
A set $\Sigma$ of first-order formulas is {\it valid in $(W,R)$}\/ (denoted $(W,R){\models}\Sigma$) if for all first-order formulas $A$ in $\Sigma$, $(W,R){\models}A$.
%
%
\begin{lemma}\label{lemma:validity:in:a:given:frame:is:in:coNP:fol}
The following decision problem is in $\PSPACE$:
\begin{description}
\item[input:] a sentence $A$ and a finite frame $(W^{\prime\prime},R^{\prime\prime})$,
\item[output:] determine whether $(W^{\prime\prime},R^{\prime\prime}){\models}A$.
\end{description}
\end{lemma}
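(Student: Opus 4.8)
The plan is to describe a deterministic recursive algorithm deciding the problem in polynomial working space, thereby giving an \emph{ad hoc} proof of the classical fact that first-order model checking over finite structures is in $\PSPACE$. Since the input $A$ is a sentence, $(W^{\prime\prime},R^{\prime\prime}){\models}A$ holds if and only if $(W^{\prime\prime},R^{\prime\prime}),g{\models}A$ for some (equivalently, for every) assignment $g$ on $(W^{\prime\prime},R^{\prime\prime})$, so it suffices to evaluate the satisfiability relation at a single, arbitrarily fixed assignment. First I would introduce a recursive procedure $\mathtt{eval}(B,h)$ whose arguments are a subformula $B$ of $A$ together with a function $h$ mapping each free individual variable of $B$ to an element of $W^{\prime\prime}$, and which returns the truth value of $(W^{\prime\prime},R^{\prime\prime}),h{\models}B$. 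The inductive clauses defining satisfiability dictate the procedure directly: on the atomic formulas $\mathbf{R}(\mathbf{x},\mathbf{y})$ and $\mathbf{x}{=}\mathbf{y}$ the answer is read off from $R^{\prime\prime}$ and from $h$ in polynomial time; on $\neg B$ and $B{\vee}C$ the procedure recurses on the immediate subformulas with the appropriate restriction of $h$; and on $\forall\mathbf{x}B$ it loops over all $s{\in}W^{\prime\prime}$, computing $\mathtt{eval}(B,h_{s}^{\mathbf{x}})$ for each, and answers ``true'' exactly when every one of these recursive calls does. It is enough to treat these primitive constructs, since the defined symbols such as $\exists$, $\exists^{{=}1}$, $\exists^{{=}2}$ and $\mathbf{R}(\mathbf{z}){=}\{\mathbf{x_{i}},\mathbf{y_{j}}\}$ merely abbreviate formulas built from them. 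The top-level call is $\mathtt{eval}(A,g)$ for the fixed assignment $g$.

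Correctness would follow by a routine induction on the structure of $B$, matching the cases of the procedure against the clauses defining $(W^{\prime\prime},R^{\prime\prime}),h{\models}B$ in Section~\ref{section:first:order:syntax:and:semantics}. For the space bound, the key observation is that the depth of the recursion stack never exceeds $\size(A)$, and that at each level the procedure stores only: a pointer to the current subformula of $A$, occupying $O(\log\size(A))$ bits; the current assignment of the at most $\size(A)$ free variables of that subformula to elements of $W^{\prime\prime}$, occupying $O(\size(A)\cdot\log{\parallel}W^{\prime\prime}{\parallel})$ bits; and, while a universal quantifier is being processed, a counter ranging over $W^{\prime\prime}$, occupying a further $O(\log{\parallel}W^{\prime\prime}{\parallel})$ bits. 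Hence the whole computation runs in space $O(\size(A)^{2}\cdot\log{\parallel}W^{\prime\prime}{\parallel})$, which is polynomial in the length of the input, and the problem is therefore in $\PSPACE$.

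The one delicate point — standard, yet the crux of the matter — is this space accounting: a recursive procedure whose recursion depth is $d$ and which uses $s$ cells of working memory per invocation, apart from the (here single-bit) values returned by its recursive calls, runs in total space $O(d\cdot s)$; it is precisely this that turns the naive, exponential-time evaluation of the Tarski truth conditions into a polynomial-space algorithm. Since this is the classical result that model checking for first-order logic over finite structures lies in $\PSPACE$, one could instead simply invoke the literature, in the spirit of the references cited for Lemma~\ref{lemma:validity:in:a:given:frame:is:in:coNP}; I would include the direct argument here for the sake of completeness, as is done elsewhere in the paper for results of this kind.
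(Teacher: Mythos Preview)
Your proof is correct. The paper, however, does not give an argument at all: its proof consists solely of the citation ``See~\cite{Vardi:1982}.'' You anticipated this possibility in your final paragraph, and indeed that is exactly what the authors do. Your direct recursive-evaluation argument is the standard unfolding of that citation; it has the advantage of being self-contained and of making the polynomial space bound explicit, at the cost of a paragraph the paper chose to outsource. Either route is acceptable for a result this classical.
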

\begin{proof}
See~\cite{Vardi:1982}.
\medskip
\end{proof}
%
%
The {\it theory of a set $T$ of sentences}\/ is the set of all sentences satisfied on every frame satisfying $T$'s sentences.
A set of sentences is a {\it theory}\/ if it is the theory of a set of sentences.
\\
\\
A theory $T$ is {\it hereditarily undecidable}\/ if for all theories $T^{\prime}$, if $T{\supseteq}T^{\prime}$ then $T^{\prime}$ is undecidable.
\\
\\
For all classes ${\mathcal C}$ of frames, let $\Th({\mathcal C})$ be the set (called {\it theory of ${\mathcal C}$}) of all sentences which are valid in every ${\mathcal C}$'s frame.
\begin{lemma}\label{lemma:gamma}
If $\{2\}{\times}\mathbf{N}^{-}{\subseteq}\mathtt{S}_{\L}$ then $\Th(\Fr(\L))$ is contained in the first-order
\linebreak
theory of ${\mathcal K}_{2}$.
\end{lemma}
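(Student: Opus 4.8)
The plan is to reduce the statement to Lemma~\ref{lemma:alpha}, which already contains all the work; what remains is pure bookkeeping with the definitions of $\Fr$ and $\Th$.

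The key general fact is that $\Th(\cdot)$ is antitone for class inclusion: if ${\mathcal C}\subseteq{\mathcal D}$ are classes of frames, then a sentence valid in every frame of ${\mathcal D}$ is a fortiori valid in every frame of ${\mathcal C}$, so $\Th({\mathcal D})\subseteq\Th({\mathcal C})$. Since the first-order theory of ${\mathcal K}_{2}$ is exactly $\Th({\mathcal K}_{2})$, it therefore suffices to prove ${\mathcal K}_{2}\subseteq\Fr(\L)$, i.e.\ that every galaxy in ${\mathcal K}_{2}$ validates $\L$.

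So first I would take an arbitrary galaxy ${\mathcal F}^{\rho}_{A,B}$ in ${\mathcal K}_{2}$. By the hypothesis $\{2\}{\times}\mathbf{N}^{-}{\subseteq}\mathtt{S}_{\L}$, Lemma~\ref{lemma:alpha} applies and yields ${\mathcal F}^{\rho}_{A,B}{\models}\L$, that is ${\mathcal F}^{\rho}_{A,B}{\in}\Fr(\L)$. As the galaxy was arbitrary, ${\mathcal K}_{2}\subseteq\Fr(\L)$. Combining this with the antitonicity of $\Th$: any sentence in $\Th(\Fr(\L))$ is valid in every frame of $\Fr(\L)$, hence in every galaxy of ${\mathcal K}_{2}$, hence lies in $\Th({\mathcal K}_{2})$. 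This gives $\Th(\Fr(\L))\subseteq\Th({\mathcal K}_{2})$, which is the desired containment in the first-order theory of ${\mathcal K}_{2}$.

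There is no genuine obstacle at this level; the substance is hidden inside Lemma~\ref{lemma:alpha}, which in turn leans on Lemma~\ref{lemma:sigma:about:K2:galaxies:and:generated:by:s:subframes} (a galaxy in ${\mathcal K}_{2}$ refuting a formula has a generated subframe, rooted at a point of its upper part, that already refutes it) and Lemma~\ref{lemma:about:finite:subset:when:K5:shape:of:the:frame} (such a generated subframe, having a two-element finite ``$B$''-part, can be cut down to a flower ${\mathcal F}_{2}^{n}$ refuting the formula), together with the closure of $\mathtt{S}_{\L}$. Once those are granted, the present lemma is immediate.
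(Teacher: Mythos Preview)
Your proof is correct and is essentially the same argument as the paper's: both reduce to Lemma~\ref{lemma:alpha} to obtain ${\mathcal K}_{2}\subseteq\Fr(\L)$, from which the theory inclusion is immediate. The only cosmetic difference is that the paper phrases it as a proof by contradiction while you give the direct contrapositive-style argument via antitonicity of $\Th$.
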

\begin{proof}
For the sake of the contradiction, suppose $\{2\}{\times}\mathbf{N}^{-}{\subseteq}\mathtt{S}_{\L}$ and $\Th(\Fr(\L))$ is not contained in the first-order theory of ${\mathcal K}_{2}$.
Hence, there exists a sentence $A$ such that $A$ is valid in every $\Fr(\L)$'s frame and $A$ is not valid in some ${\mathcal K}_{2}$'s galaxy.
Thus, let ${\mathcal F}_{A,B}^{\rho}$ be a ${\mathcal K}_{2}$'s galaxy such that ${\mathcal F}_{A,B}^{\rho}{\not\models}A$.
Since $\{2\}{\times}\mathbf{N}^{-}{\subseteq}\mathtt{S}_{\L}$, then by Lemma~\ref{lemma:alpha}, ${\mathcal F}_{A,B}^{\rho}{\models}\L$.
Since $A$ is valid in every $\Fr(\L)$'s frame, then ${\mathcal F}_{A,B}^{\rho}{\models}A$: a contradiction.
\medskip
\end{proof}
\begin{lemma}\label{lemma:delta}
If $\mathbf{N}^{+}{\times}\{2\}{\subseteq}\mathtt{S}_{\L}$ then $\Th(\Fr(\L))$ is contained in the first-order
\linebreak
theory of ${\mathcal L}_{2}$.
\end{lemma}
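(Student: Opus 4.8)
The plan is to argue exactly as in the proof of Lemma~\ref{lemma:gamma}, replacing the class ${\mathcal K}_{2}$ by ${\mathcal L}_{2}$ and the appeal to Lemma~\ref{lemma:alpha} by the corresponding appeal to Lemma~\ref{lemma:beta}. So I would proceed by contradiction: assume $\mathbf{N}^{+}{\times}\{2\}{\subseteq}\mathtt{S}_{\L}$ while $\Th(\Fr(\L))$ is \emph{not} contained in the first-order theory of ${\mathcal L}_{2}$. Unwinding the definition of the first-order theory of a class of frames, this gives a sentence $A$ that is valid in every frame of $\Fr(\L)$ but fails to be valid in some galaxy belonging to ${\mathcal L}_{2}$.

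Fix such a witness: let ${\mathcal F}_{A,B}^{\rho}$ be a galaxy in ${\mathcal L}_{2}$ with ${\mathcal F}_{A,B}^{\rho}{\not\models}A$. Here the whole point is that ${\mathcal L}_{2}$ is precisely the class on which Lemma~\ref{lemma:beta} applies: since $\mathbf{N}^{+}{\times}\{2\}{\subseteq}\mathtt{S}_{\L}$, Lemma~\ref{lemma:beta} yields ${\mathcal F}_{A,B}^{\rho}{\models}\L$, i.e.\ ${\mathcal F}_{A,B}^{\rho}{\in}\Fr(\L)$. But $A$ was assumed valid in every $\Fr(\L)$'s frame, so ${\mathcal F}_{A,B}^{\rho}{\models}A$, contradicting the choice of ${\mathcal F}_{A,B}^{\rho}$. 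Hence $\Th(\Fr(\L))$ is contained in the first-order theory of ${\mathcal L}_{2}$.

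Since all of the real content has been pushed into Lemma~\ref{lemma:beta}, there is essentially no obstacle at this level; the proof is a two-line contradiction argument that is the mirror image of Lemma~\ref{lemma:gamma}'s. If anything merited care, it would be making sure the definitional unwinding is correct (``$A\in\Th(\Fr(\L))$'' means $A$ is valid in every frame validating $\L$, and ``not in the first-order theory of ${\mathcal L}_{2}$'' means $A$ fails in some ${\mathcal L}_{2}$-galaxy), but both are immediate from the definitions given earlier in the paper. The heavy lifting — that a galaxy in ${\mathcal L}_{2}$ reduces, via generated subframes and Lemmas~\ref{lemma:sigma:about:K2:galaxies:and:generated:by:s:subframes} and~\ref{lemma:about:finite:subset:when:K5:shape:of:the:frame:bis}, to a flower ${\mathcal F}_m^2$, so that $\mathbf{N}^{+}{\times}\{2\}{\subseteq}\mathtt{S}_{\L}$ forces validity of $\L$ — is already done in Lemma~\ref{lemma:beta}, which I would simply cite.
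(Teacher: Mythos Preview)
Your proposal is correct and follows essentially the same approach as the paper: a contradiction argument that produces a sentence $A\in\Th(\Fr(\L))$ failing in some ${\mathcal L}_{2}$-galaxy ${\mathcal F}_{A,B}^{\rho}$, then invokes Lemma~\ref{lemma:beta} to place ${\mathcal F}_{A,B}^{\rho}$ in $\Fr(\L)$, yielding the contradiction. The paper's proof is verbatim the mirror image of Lemma~\ref{lemma:gamma}'s proof with Lemma~\ref{lemma:beta} in place of Lemma~\ref{lemma:alpha}, exactly as you describe.
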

\begin{proof}
For the sake of the contradiction, suppose $\mathbf{N}^{+}{\times}\{2\}{\subseteq}\mathtt{S}_{\L}$ and $\Th(\Fr(\L))$ is not contained in the first-order theory of ${\mathcal L}_{2}$.
Hence, there exists a sentence $A$ such that $A$ is valid in every $\Fr(\L)$'s frame and $A$ is not valid in some ${\mathcal L}_{2}$'s galaxy.
Thus, let ${\mathcal F}_{A,B}^{\rho}$ be a ${\mathcal L}_{2}$'s galaxy such that ${\mathcal F}_{A,B}^{\rho}{\not\models}A$.
Since $\mathbf{N}^{+}{\times}\{2\}{\subseteq}\mathtt{S}_{\L}$, then by Lemma~\ref{lemma:beta}, ${\mathcal F}_{A,B}^{\rho}{\models}\L$.
Since $A$ is valid in every $\Fr(\L)$'s frame, then ${\mathcal F}_{A,B}^{\rho}{\models}A$: a contradiction.
\medskip
\end{proof}
For all $q{\in}\N$, a frame $(W^{\prime},R^{\prime})$ is {\it elementary $q$-equivalent to $(W,R)$}\/ (denoted $(W^{\prime},R^{\prime}){\equiv_{q}}(W,R)$) if for all sentences $A$, if $\qd(A){\leq}q$ then $(W^{\prime},R^{\prime}){\models}A$ if and only if $(W,R){\models}A$.
\begin{lemma}\label{ehrenfeucht:theorem}
For all $q{\in}\N$ and for all frames $(W^{\prime},R^{\prime})$, if the second player has a winning strategy in the Ehrenfeucht-Fra\"\i ss\'e game ${\mathcal G}_{q}((W,R),(W^{\prime},R^{\prime}))$ then $(W^{\prime},R^{\prime}){\equiv_{q}}(W,R)$.
\end{lemma}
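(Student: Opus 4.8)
The plan is to prove the standard Ehrenfeucht--Fra\"\i ss\'e theorem by induction, keeping careful track of the correspondence between the number of remaining moves and the quantifier depth of first-order formulas. Fix a winning strategy $\sigma$ for the second player in ${\mathcal G}_{q}((W,R),(W^{\prime},R^{\prime}))$. For each $q^{\prime}{\in}\lsem0,q\rsem$, call a position $((s_{1},\ldots,s_{q-q^{\prime}}),(s^{\prime}_{1},\ldots,s^{\prime}_{q-q^{\prime}}))$ \emph{$q^{\prime}$-good} if it is reachable by a play in which the second player has always followed $\sigma$; then the second player wins this position and can keep winning for the remaining $q^{\prime}$ moves. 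The key claim, proved by induction on $q^{\prime}$, is: for every $q^{\prime}$-good position and every first-order formula $A(\mathbf{x}_{1},\ldots,\mathbf{x}_{q-q^{\prime}})$ with $\qd(A){\leq}q^{\prime}$, $(W,R){\models}A(\mathbf{x}_{1},\ldots,\mathbf{x}_{q-q^{\prime}})\ \lbrack s_{1},\ldots,s_{q-q^{\prime}}\rbrack$ if and only if $(W^{\prime},R^{\prime}){\models}A(\mathbf{x}_{1},\ldots,\mathbf{x}_{q-q^{\prime}})\ \lbrack s^{\prime}_{1},\ldots,s^{\prime}_{q-q^{\prime}}\rbrack$.

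For the base case $q^{\prime}{=}0$, only quantifier-free formulas are involved; since the signature has just the binary symbol $\mathbf{R}$ and equality, every such formula is a Boolean combination of atoms $\mathbf{x}_{a}{=}\mathbf{x}_{b}$ and $\mathbf{R}(\mathbf{x}_{a},\mathbf{x}_{b})$, and the two conditions in the definition of the second player winning a position say precisely that $s_{a}{=}s_{b}$ iff $s^{\prime}_{a}{=}s^{\prime}_{b}$ and $s_{a}{R}s_{b}$ iff $s^{\prime}_{a}{R^{\prime}}s^{\prime}_{b}$; the Boolean connectives are immediate. For the inductive step, the cases $\neg A$ and $A\vee B$ follow from the induction hypothesis since $\qd$ does not increase. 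The interesting case is $\exists\mathbf{x}B$ with $\qd(B){\leq}q^{\prime}-1$: if $(W,R){\models}\exists\mathbf{x}B\ \lbrack s_{1},\ldots,s_{q-q^{\prime}}\rbrack$, pick a witness $s{\in}W$, let the first player play $s$ in $(W,R)$, and let the second player answer via $\sigma$ with some $s^{\prime}{\in}W^{\prime}$; the resulting position is $(q^{\prime}-1)$-good, so the induction hypothesis gives $(W^{\prime},R^{\prime}){\models}B\ \lbrack s^{\prime}_{1},\ldots,s^{\prime}_{q-q^{\prime}},s^{\prime}\rbrack$, whence $(W^{\prime},R^{\prime}){\models}\exists\mathbf{x}B\ \lbrack s^{\prime}_{1},\ldots,s^{\prime}_{q-q^{\prime}}\rbrack$. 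For the converse direction the first player instead plays a witness $s^{\prime}{\in}W^{\prime}$ in $(W^{\prime},R^{\prime})$ and $\sigma$ answers with some $s{\in}W$; this is exactly the point where the rules allowing the first player to choose the structure at each move are used.

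Finally, applying the claim to the empty position --- which is $q$-good, since the second player, following $\sigma$, wins it vacuously --- and to an arbitrary sentence $A$ with $\qd(A){\leq}q$ yields $(W,R){\models}A$ iff $(W^{\prime},R^{\prime}){\models}A$, i.e. $(W^{\prime},R^{\prime}){\equiv_{q}}(W,R)$. I expect no genuine obstacle: the only point requiring care is the bookkeeping linking ``remaining moves'' to ``quantifier depth'' together with the symmetric handling of $\exists$ made possible by letting the first player move in either frame. This is the standard argument, transcribed for frames regarded as structures over a signature with one binary relation, as in~\cite[Chapter~$2$]{Ebbinghaus:Flum:1999}.
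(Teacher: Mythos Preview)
Your proof is correct: it is the standard inductive argument for the Ehrenfeucht--Fra\"\i ss\'e theorem, matching quantifier depth to remaining moves and using the first player's freedom to choose either structure to handle both directions of the existential case. The paper does not spell out a proof at all but simply cites~\cite[Theorem~$2.2.8$]{Ebbinghaus:Flum:1999}, so your write-up is exactly an unpacking of that reference rather than a different approach.
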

\begin{proof}
See~\cite[Theorem~$2.2.8$ at Page~$18$]{Ebbinghaus:Flum:1999}.
\medskip
\end{proof}
Let $q{\geq}3$.
\\
\\
Let $\{{\mathcal F}_{A_{i},B_{i}}^{\rho_{i}}:\ i{\in}I\}$ be a disjoint indexed family of galaxies.
\begin{lemma}\label{lemma:alpha:reduction:fol}
%
%
If $\{{\mathcal F}_{A_{i}^{\prime},B_{i}^{\prime}}^{\rho_{i}^{\prime}}:\ i{\in}I\}$ is a $q$-$\alpha$-reduction of $\{{\mathcal F}_{A_{i},B_{i}}^{\rho_{i}}:\ i{\in}I\}$, $(W,R)$ is the union of $\{{\mathcal F}_{A_{i},B_{i}}^{\rho_{i}}:\ i{\in}I\}$ and $(W^{\prime},R^{\prime})$ is the union of $\{{\mathcal F}_{A_{i}^{\prime},B_{i}^{\prime}}^{\rho_{i}^{\prime}}:\ i{\in}I\}$ then $(W,R){\equiv_{q}}(W^{\prime},R^{\prime})$.
\end{lemma}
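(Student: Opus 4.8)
The plan is to obtain the statement as an immediate consequence of the Ehrenfeucht-Fra\"\i ss\'e machinery already established. First I would invoke Proposition~\ref{lemma:good:properties:of:alpha:reductions:indexed}, item~(4): under the standing hypotheses that $\{{\mathcal F}_{A_{i},B_{i}}^{\rho_{i}}:\ i{\in}I\}$ is a disjoint indexed family of galaxies, that $\{{\mathcal F}_{A_{i}^{\prime},B_{i}^{\prime}}^{\rho_{i}^{\prime}}:\ i{\in}I\}$ is a $q$-$\alpha$-reduction of it, and that $(W,R)$ and $(W^{\prime},R^{\prime})$ are the respective unions, the second player has a winning strategy in the Ehrenfeucht-Fra\"\i ss\'e game ${\mathcal G}_{q}((W,R),(W^{\prime},R^{\prime}))$. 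This is precisely the place where the real combinatorial work sits, but it has already been done: Proposition~\ref{lemma:good:properties:of:alpha:reductions:indexed}(4) is proved from Lemma~\ref{lemma:the:alpha:reduction:is:always:dust:finite}(8), the per-galaxy winning-strategy statement for a single $\alpha$-reduction, by playing the game componentwise across the disjoint union.

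Next I would apply Lemma~\ref{ehrenfeucht:theorem} (Ehrenfeucht's theorem, in the form stated here for frames): the existence of a winning strategy for the second player in ${\mathcal G}_{q}((W,R),(W^{\prime},R^{\prime}))$ yields $(W^{\prime},R^{\prime}){\equiv_{q}}(W,R)$. Finally, since the relation $\equiv_{q}$ is symmetric directly from its definition~---~$(W^{\prime},R^{\prime}){\equiv_{q}}(W,R)$ means that for every sentence $A$ with $\qd(A){\leq}q$ we have $(W^{\prime},R^{\prime}){\models}A$ if and only if $(W,R){\models}A$, a biconditional~---~we conclude $(W,R){\equiv_{q}}(W^{\prime},R^{\prime})$, which is exactly the claim.

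There is essentially no obstacle at the level of this lemma: the statement is just the model-theoretic repackaging, via Ehrenfeucht's theorem, of the game-theoretic Proposition~\ref{lemma:good:properties:of:alpha:reductions:indexed}(4). The only points worth a sentence of care are that the hypotheses of Proposition~\ref{lemma:good:properties:of:alpha:reductions:indexed} are indeed all in force here (they are, being the standing assumptions introduced just before the lemma, together with $q{\geq}3$), and the harmless symmetry step mentioned above. Accordingly the proof will be a one-line deduction citing Proposition~\ref{lemma:good:properties:of:alpha:reductions:indexed} and Lemma~\ref{ehrenfeucht:theorem}.
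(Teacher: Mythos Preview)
Your proposal is correct and matches the paper's own proof exactly: the paper proves this lemma in one line, citing Proposition~\ref{lemma:good:properties:of:alpha:reductions:indexed} and Lemma~\ref{ehrenfeucht:theorem}.
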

\begin{proof}
By Proposition~\ref{lemma:good:properties:of:alpha:reductions:indexed} and Lemma~\ref{ehrenfeucht:theorem}.
\medskip
\end{proof}
%
%
%
%
%
%
%
%
\begin{lemma}\label{lemma:gamma:reduction:fol}
%
%
If $\{{\mathcal F}_{A_{i}^{\prime},B_{i}^{\prime}}^{\rho_{i}^{\prime}}:\ i{\in}I\}$ is a $q$-$\gamma$-reduction of $\{{\mathcal F}_{A_{i},B_{i}}^{\rho_{i}}:\ i{\in}I\}$, $(W,R)$ is the union of $\{{\mathcal F}_{A_{i},B_{i}}^{\rho_{i}}:\ i{\in}I\}$ and $(W^{\prime},R^{\prime})$ is the union of $\{{\mathcal F}_{A_{i}^{\prime},B_{i}^{\prime}}^{\rho_{i}^{\prime}}:\ i{\in}I\}$ then $(W,R){\equiv_{q}}(W^{\prime},R^{\prime})$.
\end{lemma}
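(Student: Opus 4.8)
The plan is to reduce the statement directly to the Ehrenfeucht--Fra\"\i ss\'e machinery already established. Recall that $q{\geq}3$, so in particular $q{\in}\N$ and Lemma~\ref{ehrenfeucht:theorem} is applicable. Under the hypotheses of the present lemma, Proposition~\ref{lemma:good:properties:of:gamma:reductions:indexed}, item~$(4)$, tells us that the second player has a winning strategy in the Ehrenfeucht--Fra\"\i ss\'e game ${\mathcal G}_{q}((W,R),(W^{\prime},R^{\prime}))$. Hence the first thing I would do is invoke that item, checking only that the standing assumptions match: $\{{\mathcal F}_{A_{i}^{\prime},B_{i}^{\prime}}^{\rho_{i}^{\prime}}:\ i{\in}I\}$ is a $q$-$\gamma$-reduction of the disjoint indexed family $\{{\mathcal F}_{A_{i},B_{i}}^{\rho_{i}}:\ i{\in}I\}$, and $(W,R)$, $(W^{\prime},R^{\prime})$ are the respective unions.

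Next I would apply Lemma~\ref{ehrenfeucht:theorem} with the two frames $(W,R)$ and $(W^{\prime},R^{\prime})$: since the second player wins ${\mathcal G}_{q}((W,R),(W^{\prime},R^{\prime}))$, we conclude $(W^{\prime},R^{\prime}){\equiv_{q}}(W,R)$. Finally, elementary $q$-equivalence is a symmetric relation on frames, since the defining condition ``for all sentences $A$ with $\qd(A){\leq}q$, $(W^{\prime},R^{\prime}){\models}A$ iff $(W,R){\models}A$'' is visibly symmetric in the two frames; therefore $(W,R){\equiv_{q}}(W^{\prime},R^{\prime})$, which is exactly the desired conclusion. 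This mirrors the proof of Lemma~\ref{lemma:alpha:reduction:fol} verbatim, with Proposition~\ref{lemma:good:properties:of:alpha:reductions:indexed} replaced by Proposition~\ref{lemma:good:properties:of:gamma:reductions:indexed}.

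There is essentially no obstacle at this level: all of the real work has already been discharged in Lemma~\ref{lemma:gamma:reduction:prebounded:dust:quasi:root:bounded:OK}, item~$(8)$, whose lengthy case analysis supplies the winning strategy for a single $\gamma$-reduction, and in Proposition~\ref{lemma:good:properties:of:gamma:reductions:indexed}, item~$(4)$, which lifts that strategy to the disjoint unions. The only point worth a sentence of care is that $q$-$\gamma$-reduction is defined component-wise, treating simple components via a genuine $\gamma$-reduction and leaving non-simple components untouched; but this is precisely the setting in which Proposition~\ref{lemma:good:properties:of:gamma:reductions:indexed} is stated, so nothing further is needed. The proof is therefore a one-line citation: ``By Proposition~\ref{lemma:good:properties:of:gamma:reductions:indexed} and Lemma~\ref{ehrenfeucht:theorem}.''
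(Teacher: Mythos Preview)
Your proposal is correct and follows exactly the paper's own proof, which is literally the one-line citation ``By Proposition~\ref{lemma:good:properties:of:gamma:reductions:indexed} and Lemma~\ref{ehrenfeucht:theorem}.'' Your elaboration of why this citation suffices is accurate and matches the intended argument precisely.
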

\begin{proof}
By Proposition~\ref{lemma:good:properties:of:gamma:reductions:indexed} and Lemma~\ref{ehrenfeucht:theorem}.
\medskip
\end{proof}
\begin{lemma}\label{lemma:delta:reduction:fol}
%
%
If $\{{\mathcal F}_{A_{j},B_{j}}^{\rho_{j}}:\ j{\in}J\}$ is a $q$-$\delta$-reduction of $\{{\mathcal F}_{A_{i},B_{i}}^{\rho_{i}}:\ i{\in}I\}$, $(W,R)$ is the union of $\{{\mathcal F}_{A_{i},B_{i}}^{\rho_{i}}:\ i{\in}I\}$ and $(W^{\prime},R^{\prime})$ is the union of $\{{\mathcal F}_{A_{j},B_{j}}^{\rho_{j}}:\ j{\in}J\}$ then $(W,R){\equiv_{q}}(W^{\prime},R^{\prime})$.
\end{lemma}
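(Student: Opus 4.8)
The plan is to follow verbatim the pattern already used for the two companion statements, Lemma~\ref{lemma:alpha:reduction:fol} and Lemma~\ref{lemma:gamma:reduction:fol}. Recall that, by definition, $(W,R){\equiv_{q}}(W^{\prime},R^{\prime})$ means that $(W,R)$ and $(W^{\prime},R^{\prime})$ agree on all sentences of quantifier depth at most $q$; and by Ehrenfeucht's theorem (Lemma~\ref{ehrenfeucht:theorem}), this follows as soon as the second player has a winning strategy in the $q$-move Ehrenfeucht-Fra\"\i ss\'e game ${\mathcal G}_{q}((W,R),(W^{\prime},R^{\prime}))$ (note that $\equiv_{q}$ is symmetric, since it is defined by a biconditional). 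So the whole content of the lemma is the existence of such a winning strategy.

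That existence, however, is precisely item (4) of Proposition~\ref{lemma:good:properties:of:delta:reductions:indexed}, under exactly the hypotheses assumed here (a disjoint indexed family of galaxies, its $q$-$\delta$-reduction indexed by $J$, and $(W,R)$, $(W^{\prime},R^{\prime})$ their respective unions). Therefore the proof is one line: combine Proposition~\ref{lemma:good:properties:of:delta:reductions:indexed} with Lemma~\ref{ehrenfeucht:theorem}. Concretely, I would write: \emph{By Proposition~\ref{lemma:good:properties:of:delta:reductions:indexed} and Lemma~\ref{ehrenfeucht:theorem}.}

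There is no real obstacle at the level of this lemma; all the work has been front-loaded into Proposition~\ref{lemma:good:properties:of:delta:reductions:indexed}, whose part (4) builds the second player's strategy explicitly by transporting positions back and forth through the fixed isomorphisms $f_{i,j}$ between isomorphic galaxies of the family. If one had to reprove that strategy from scratch, the only delicate point would be that whenever the first player plays inside a galaxy ${\mathcal F}_{A_{i},B_{i}}^{\rho_{i}}$ (respectively ${\mathcal F}_{A_{j},B_{j}}^{\rho_{j}}$) whose isomorphism type has not yet been visited, the second player must still have an unused copy of that type available on the other side; this is guaranteed because the $\delta$-reduction retains $q$ copies of each isomorphism class, and the game lasts only $q$ moves, so after $q-1$ moves at least one copy of the required type remains free. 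But for the present statement it suffices to cite the two earlier results.
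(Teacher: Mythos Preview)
Your proposal is correct and matches the paper's proof exactly: the paper's proof is the single line ``By Proposition~\ref{lemma:good:properties:of:delta:reductions:indexed} and Lemma~\ref{ehrenfeucht:theorem}.'' Your surrounding explanation of why item~(4) of the proposition suffices, and of the counting argument behind the strategy, is accurate and mirrors what the paper establishes in the proof of Proposition~\ref{lemma:good:properties:of:delta:reductions:indexed} itself.
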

\begin{proof}
By Proposition~\ref{lemma:good:properties:of:delta:reductions:indexed} and Lemma~\ref{ehrenfeucht:theorem}.
\medskip
\end{proof}
\paragraph{Rooted translation}
The {\it rooted translation of a first-order formula $A$ with respect to an individual variable $\mathbf{x}$ not occurring in $A$}\/ (denoted $\tau(\mathbf{x},A)$) is inductively defined as follows:
\begin{itemize}
\item $\tau(\mathbf{x},\mathbf{R}(\mathbf{y},\mathbf{z}))$ is $\mathbf{R}(\mathbf{y},\mathbf{z})$,
\item $\tau(\mathbf{x},\mathbf{y}=\mathbf{z})$ is $\mathbf{y}=\mathbf{z}$,
\item $\tau(\mathbf{x},\neg A)$ is $\neg\tau(\mathbf{x},A)$,
\item $\tau(\mathbf{x},A\vee B)$ is $\tau(\mathbf{x},A)\vee\tau(\mathbf{x},B)$,
%
%
%
%
\item $\tau(\mathbf{x},\forall\mathbf{y}A)$ is $\forall\mathbf{y}(\mathbf{x}=\mathbf{y}\vee\exists\mathbf{z}(\mathbf{R}(\mathbf{x},\mathbf{z})\wedge\mathbf{R}(\mathbf{z},\mathbf{y}))\rightarrow\tau(\mathbf{x},A))$.
\end{itemize}
%
%
%
%
%
%
%
%
%
%
%
%
%
%
%
%
\begin{lemma}\label{lemma:about:the:rooted:translation}
Let $(W,R)$ be an Euclidean frame.
For all sentences $A$ and for all individual variables $\mathbf{x}$ not occurring in $A$, the following conditions are equivalent:
\begin{itemize}
\item $(W,R){\models}\forall\mathbf{x}\tau(\mathbf{x},A)$,
\item for all $s{\in}W$, $(W_{s},R_{s}){\models}A$.
\end{itemize}
\end{lemma}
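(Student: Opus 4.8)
The plan is to carry out the standard relativization argument: $\tau(\mathbf{x},A)$ relativizes $A$ to the set cut out by the ``guard'' $\delta(\mathbf{x},\mathbf{y}):=\mathbf{x}{=}\mathbf{y}\vee\exists\mathbf{z}(\mathbf{R}(\mathbf{x},\mathbf{z})\wedge\mathbf{R}(\mathbf{z},\mathbf{y}))$ appearing in the clause for $\forall$, so that $\tau(\mathbf{x},\forall\mathbf{y}A)$ is $\forall\mathbf{y}(\delta(\mathbf{x},\mathbf{y})\rightarrow\tau(\mathbf{x},A))$. The whole point is to show that in an Euclidean frame this guard defines precisely the domain $W_{s}$ of the subframe generated by $s{=}g(\mathbf{x})$; everything else is bookkeeping.

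First I would establish the following claim: for every $s{\in}W$, $W_{s}{=}\{s\}{\cup}\{t{\in}W:\ \exists z(s{R}z\wedge z{R}t)\}$, and $R(u){=}R_{s}(u)$ for every $u{\in}W_{s}$. The second assertion is immediate from the definition of generated subframe. For the first I invoke Lemma~\ref{lemma:generated:subframe:from:a:single:point:has:specific:shapes}. If $R_{s}{=}W_{s}{\times}W_{s}$, then $s{R}s$ and $s{R}t$ for all $t{\in}W_{s}$, so $z{=}s$ witnesses $W_{s}{\subseteq}\{s\}{\cup}\{t:\exists z(s{R}z\wedge z{R}t)\}$, the reverse inclusion being clear since $(W_{s},R_{s})$ is generated. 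Otherwise $W_{s}{=}\{s\}{\cup}C$ and $R_{s}{=}(\{s\}{\times}B){\cup}(C{\times}C)$ with $B{\subseteq}C$, $s{\not\in}C$, and either $B{=}C{=}\emptyset$ — whence $R(s){=}\emptyset$ and the claim is trivial — or $B{\not=}\emptyset$: then for any $w{\in}B{\subseteq}C$ we have $s{R}w$ and $w{R}t$ for all $t{\in}C$, so $C{\subseteq}\{t:\exists z(s{R}z\wedge z{R}t)\}$, while conversely $s{R}z$ forces $z{\in}R_{s}(s){=}B$ and then $z{R}t$ forces $t{\in}R_{s}(z){=}C$. Either way the claim holds; in particular, for every assignment $h$ with $h(\mathbf{x}){=}s$ we get $(W,R),h{\models}\delta(\mathbf{x},\mathbf{y})$ iff $h(\mathbf{y}){\in}W_{s}$.

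Next, fixing $s{\in}W$, I would prove by induction on the first-order formula $A$ in which $\mathbf{x}$ does not occur that, for every assignment $g$ on $(W,R)$ with $g(\mathbf{x}){=}s$ and $g(\IVAR){\subseteq}W_{s}$,
\[
(W,R),g{\models}\tau(\mathbf{x},A)\quad\text{iff}\quad(W_{s},R_{s}),g{\models}A.
\]
For atomic $A$ the translation is the identity and the equivalence is just the restriction property $R(u){=}R_{s}(u)$ (together with $W_{s}{\subseteq}W$ for equalities); the Boolean cases are routine. For $A{=}\forall\mathbf{y}B$ — where $\mathbf{y}{\not=}\mathbf{x}$ since $\mathbf{x}$ does not occur in $A$ — the left-hand side says that for every $t{\in}W$ with $(W,R),g_{t}^{\mathbf{y}}{\models}\delta(\mathbf{x},\mathbf{y})$ we have $(W,R),g_{t}^{\mathbf{y}}{\models}\tau(\mathbf{x},B)$; by the claim the side condition is exactly $t{\in}W_{s}$, and for such $t$ the assignment $g_{t}^{\mathbf{y}}$ again sends $\mathbf{x}$ to $s$ and $\IVAR$ into $W_{s}$, so the induction hypothesis for $B$ applies and yields $(W_{s},R_{s}),g{\models}\forall\mathbf{y}B$.

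Finally I would specialize to a sentence $A$: then $\fiv(A){=}\emptyset$, so $\tau(\mathbf{x},A)$ has at most $\mathbf{x}$ free, and unfolding the outer $\forall\mathbf{x}$ and using (for each $s$) the constant-$s$ assignment gives $(W,R){\models}\forall\mathbf{x}\tau(\mathbf{x},A)$ iff for every $s{\in}W$, $(W_{s},R_{s}){\models}A$, which is the statement. The step I expect to be the real obstacle is the claim: it is the only place Euclideanness is used, and it relies on the case analysis of Lemma~\ref{lemma:generated:subframe:from:a:single:point:has:specific:shapes}; once it is in hand, the rest is a textbook relativization induction.
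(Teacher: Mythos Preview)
Your proposal is correct and is exactly the standard relativization argument one would expect here: identify the guard $\delta(\mathbf{x},\mathbf{y})$ with membership in $W_{s}$ via Lemma~\ref{lemma:generated:subframe:from:a:single:point:has:specific:shapes}, then run the induction on $A$ and specialize to sentences. The paper itself states Lemma~\ref{lemma:about:the:rooted:translation} without proof, so there is nothing to compare against; your argument would serve perfectly well as the omitted proof.
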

\section{Definability problems}\label{section:definability:problems}
In this section, we introduce $3$~definability problems: first-order definability problem, modal definability problem and correspondence problem.
We remind what is known concerning their computability.
%
%
%
%
\\
\\
%
%
A modal formula $\varphi$ is {\it first-order definable in a class ${\mathcal C}$ of frames}\/ if there exists a sentence $A$ such that for all frames ${\mathcal F}$ in ${\mathcal C}$, ${\mathcal F}{\models}\varphi$ if and only if ${\mathcal F}{\models}A$.
As is well-known, in the class of all frames, $\Box\Diamond p\rightarrow\Diamond\Box p$ is not first-order definable~\cite[Chapter~$3$]{Blackburn:deRijke:Venema:2001}.
The {\it first-order definability problem in a class ${\mathcal C}$ of frames}\/ is the following decision problem: determine whether a given modal formula is first-order definable in ${\mathcal C}$.
\\
\\
%
%
%
%
%
%
%
%
A sentence $A$ is {\it modally definable in a class ${\mathcal C}$ of frames}\/ if there exists a modal formula $\varphi$ such that for all frames ${\mathcal F}$ in ${\mathcal C}$, ${\mathcal F}{\models}\varphi$ if and only if ${\mathcal F}{\models}A$.
As is well-known, in the class of all frames, $\exists\mathbf{x}\mathbf{R}(\mathbf{x},\mathbf{x})$ is not modally definable~\cite[Chapter~$3$]{Blackburn:deRijke:Venema:2001}.
The {\it modal definability problem in a class ${\mathcal C}$ of frames}\/ is the following decision problem: determine whether a given sentence is modally definable in ${\mathcal C}$.
%
%
%
%
%
%
%
%
\\
\\
%
%
A modal formula $\varphi$ and a sentence $A$ {\it correspond in a class ${\mathcal C}$ of frames}\/ if for all frames ${\mathcal F}$ in ${\mathcal C}$, ${\mathcal F}{\models}\varphi$ if and only if ${\mathcal F}{\models}A$.
As is well-known, in the class of all frames, $\Diamond p\rightarrow\Box\Diamond p$ corresponds with $\forall\mathbf{x}\forall\mathbf{y}\forall\mathbf{z}(\mathbf{R}(\mathbf{x},\mathbf{y})\wedge\mathbf{R}(\mathbf{x},\mathbf{z})\rightarrow\mathbf{R}(\mathbf{y},\mathbf{z})\wedge\mathbf{R}(\mathbf{z},\mathbf{y}))$~\cite[Chapter~$3$]{Blackburn:deRijke:Venema:2001}.
The {\it correspondence problem in a class ${\mathcal C}$ of frames}\/ is the following decision problem: determine whether a given modal formula and a given sentence correspond in ${\mathcal C}$.
%
%
%
%
\\
\\
Owing to the significance of Correspondence Theory~\cite{vanBenthem:1983,vanBenthem:1984}, it is natural to ask whether the first-order definability problem, the modal definability problem and the correspondence problem are decidable.
%
%
%
%
The answers to these questions have been firstly obtained by Chagrova in her doctoral thesis~\cite{Chagrova:1989} and then further developed in~\cite{Chagrov:Chagrova:1995,Chagrov:Chagrova:2006,Chagrov:Chagrova:2007,Chagrova:1991}: the first-order definability problem, the modal definability problem and the correspondence problem are undecidable.
%
%
%
%
%
%
%
%
%
%
%
%
\\
\\
Chagrova's results concern the class of all frames.
They have been obtained by reductions from accessibility problems in Minsky machines~\cite{Chagrov:Chagrova:2006}.
Although Chagrov and Chagrova~\cite{Chagrov:Chagrova:1995} have also considered the first-order definability problem with respect to the class of all irreflexive transitive frames, it is not obvious how similar results can be obtained with respect to restricted classes of frames such as the class of all transitive frames, the class of all symmetric frames, etc.
\\
\\
With respect to the class of frames determined by $\S5$, it follows from $\S5$~Reduction Theorem in~\cite[Page~$51$]{Hughes:Cresswell:1968} and Lemma~$9.7$ in~\cite[Page~$99$]{vanBenthem:1983} that every modal formula is first-order definable.
Moreover, as proved in~\cite{Balbiani:Tinchev:2005,Balbiani:Tinchev:2006}, modal definability with respect to the class of frames determined by $\S5$ is $\PSPACE$-complete.
%
%
%
%
With respect to the classes of frames determined by $\KD45$ and $\K45$, similar results hold as well~\cite{Georgiev:2017a,Georgiev:2017b}.
\\
\\
Therefore, a first question naturally comes: is there an extension of $\K5$ such that the class of frames it determines gives rise both to a trivial first-order definability problem and an undecidable modal definability problem?
This question has been positively answered: with respect to the class of frames determined by $\K5$, first-order definability is trivial, whereas modal definability is undecidable~\cite{Balbiani:Georgiev:Tinchev:2018}.
\\
\\
The truth is that the classes of frames determined by all extensions of $\K5$ give rise to a trivial first-order definability problem.
Therefore, a second question naturally comes: characterize those extensions of $\K5$ such that the classes of frames they determine give rise to an undecidable modal definability problem.
Our answer will be given in Theorem~\ref{proposition:finale}: for all Euclidean modal logics $\L$, the problem of deciding the modal definability of sentences with respect to $\Fr(\L)$ is undecidable when $\mathtt{S}_{\L}{\setminus}((\{1\}{\times}\mathbf{N}^{-}){\cup}(\mathbf{N}^{+}{\times}\{{-}1,0,1\}))$ is infinite and decidable in $\EXPSPACE$ otherwise.
%
%
%
%
\\
\\
The following result joins up the modal definability character of a sentence and the behaviour of this sentence with respect to generated subframes and bounded morphic images.
We will need it later in our proofs about modal definability.
\begin{proposition}\label{proposition:characterizing:modal:definability}
Let $\L$ be an Euclidean modal logic.
For all sentences $A$, $A$ is modally definable in $\Fr(\L)$ if and only if
\begin{itemize}
\item for all Euclidean frames $(W,R)$, if $(W,R){\models}\L$ then $(W,R){\models}A$ if and only if for all $s{\in}W$, $(W_{s},R_{s}){\models}A$,
\item for all flowers ${\mathcal F}_{m}^{n},{\mathcal F}_{m^{\prime}}^{n^{\prime}}$, if ${\mathcal F}_{m}^{n}{\models}\L$, ${\mathcal F}_{m}^{n}{\models}A$ and ${\mathcal F}_{m^{\prime}}^{n^{\prime}}$ is a bounded morphic image of ${\mathcal F}_{m}^{n}$ then ${\mathcal F}_{m^{\prime}}^{n^{\prime}}{\models}A$.
\end{itemize}
\end{proposition}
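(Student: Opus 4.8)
\emph{Plan.} The idea is to prove necessity of the two conditions by the standard preservation lemmas, and sufficiency by exhibiting an explicit defining formula built from Jankov--Fine formulas of small flowers. For necessity, suppose $A$ is defined in $\Fr(\L)$ by a modal formula $\varphi$; I would use the well-known fact that a modal formula is valid in a frame exactly when it is valid in every point-generated subframe (the forward implication being Lemma~\ref{Generated:Subframe:Lemma}, the backward one being locality of satisfaction). Given an Euclidean frame $(W,R)\models\L$, every $(W_{s},R_{s})$ is a generated subframe of $(W,R)$ and hence belongs to $\Fr(\L)$ by Lemma~\ref{Generated:Subframe:Lemma}; so $(W,R)\models A$ iff $(W,R)\models\varphi$ iff every $(W_{s},R_{s})\models\varphi$ iff every $(W_{s},R_{s})\models A$, which is the first bullet. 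For the second bullet, if ${\mathcal F}_{m}^{n}\models\L$ and ${\mathcal F}_{m}^{n}\models A$ then ${\mathcal F}_{m}^{n}\models\varphi$, and by Lemma~\ref{Bounded:Morphism:Lemma} any bounded morphic image ${\mathcal F}_{m^{\prime}}^{n^{\prime}}$ of it validates $\L$, validates $\varphi$, and therefore validates $A$.

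For sufficiency, assume both bullets and put $q=\qdd(A)$, so that $q\geq3$ and $\qd(A)\leq q$. The central device is a reduction assigning to each point-generated Euclidean subframe $(W_{s},R_{s})$ a small frame $F(s)$. By Lemma~\ref{lemma:generated:subframe:from:a:single:point:has:specific:shapes}, $(W_{s},R_{s})$ is either a cluster $W_{s}\times W_{s}$, or the isolated point, or of the form $\{s\}\cup C$ with $R_{s}=(\{s\}\times B)\cup(C\times C)$, $\emptyset\neq B\subseteq C$, $s\notin C$; correspondingly I let $F(s)$ be the isolated point in the middle case, the cluster ${\mathcal F}_{m^{\prime}}^{-1}$ with $m^{\prime}=\min\{{\parallel}W_{s}{\parallel},q\}$ in the first case, and the flower ${\mathcal F}_{m^{\prime}}^{n^{\prime}}$ with $m^{\prime}=\min\{{\parallel}B{\parallel},q\}$ and $n^{\prime}=\min\{{\parallel}C\setminus B{\parallel},q\}$ in the last case, so that $F(s)$ is always the isolated point or a flower whose parameters are bounded by $q$. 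By Lemmas~\ref{lemma:about:partitions:q:greater:smaller}, \ref{lemma:about:frames:C:B:smaller:greater} and~\ref{Bounded:Morphism:Lemma}, $F(s)$ is a surjective bounded morphic image of $(W_{s},R_{s})$ whose morphism maps $s$ to a generator (a surjective bounded morphism always carries a generator to a generator), so $(W_{s},R_{s})\models\L$ implies $F(s)\models\L$; and by Lemmas~\ref{lemma:about:partitions:q:greater:smaller:games}, \ref{lemma:about:frames:C:B:smaller:greater:games} and~\ref{ehrenfeucht:theorem}, $(W_{s},R_{s})\equiv_{q}F(s)$, so $(W_{s},R_{s})\models A$ iff $F(s)\models A$. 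Now let $S$ be the finite set of pairs $(m^{\prime},n^{\prime})$ with $1\leq m^{\prime}\leq q$, $-1\leq n^{\prime}\leq q$, ${\mathcal F}_{m^{\prime}}^{n^{\prime}}\models\L$ and ${\mathcal F}_{m^{\prime}}^{n^{\prime}}\not\models A$, and define $\varphi$ as the conjunction of the Jankov--Fine formulas $\chi_{m^{\prime}}^{n^{\prime}}$ for $(m^{\prime},n^{\prime})\in S$, together with $\chi_{0}^{0}$ in case ${\mathcal F}_{0}^{0}\models\L$ and ${\mathcal F}_{0}^{0}\not\models A$.

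It remains to check that this finite $\varphi$ defines $A$ in $\Fr(\L)$; so fix $(W,R)\models\L$. If $(W,R)\models\varphi$, then for each $s$, were $F(s)\not\models A$ to hold, the Jankov--Fine formula of $F(s)$ would be one of the conjuncts of $\varphi$ and, by Lemma~\ref{lemma:Jankov:Fine:ONE}, would be falsified at $s$, contradicting $(W,R)\models\varphi$; hence $F(s)\models A$, so $(W_{s},R_{s})\models A$ by $\equiv_{q}$, so $(W,R)\models A$ by the first bullet. Conversely, if $(W,R)\models A$ but some conjunct $\chi_{m^{\prime}}^{n^{\prime}}$ (with $(m^{\prime},n^{\prime})\in S$) fails at $s$, then by Lemma~\ref{lemma:Jankov:Fine:ONE} the flower ${\mathcal F}_{m^{\prime}}^{n^{\prime}}$ is a bounded morphic image of $(W_{s},R_{s})$ mapping $s$ to a generator; combining this with the known shape of $(W_{s},R_{s})$ and Lemmas~\ref{lemma:about:what:happens:when:it:exists:bmi:flower:integer}, \ref{lemma:about:what:happens:when:it:exists:bmi:flower:minus:one} and~\ref{lemma:about:bounded:morphisms:in:the:situation:of F:m:n:frames}, one deduces that ${\mathcal F}_{m^{\prime}}^{n^{\prime}}$ is a bounded morphic image of the flower $F(s)$ (which is not the isolated point in this case). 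But the first bullet gives $(W_{s},R_{s})\models A$, hence $F(s)\models A$ and $F(s)\models\L$, so the second bullet yields ${\mathcal F}_{m^{\prime}}^{n^{\prime}}\models A$, contradicting $(m^{\prime},n^{\prime})\in S$; the case of the conjunct $\chi_{0}^{0}$ is analogous and immediate, since then $(W_{s},R_{s})$ must itself be the isolated point. Hence $(W,R)\models\varphi$.

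I expect the main obstacle to be this reduction together with the case bookkeeping around it: one must shrink the possibly infinite ``root'' and ``kernel'' parts of a point-generated Euclidean subframe down to size at most $q$ while simultaneously preserving $\L$-validity (via a bounded morphism onto the shrunk frame) and the truth value of $A$ (via an $\equiv_{q}$-equivalence), and then match the resulting trichotomy --- cluster versus flower versus isolated point, and within the flower case $n^{\prime}\geq0$ versus $n^{\prime}=-1$ --- against the available lemmas on bounded morphisms among the ${\mathcal F}_{m}^{n}$. It is precisely at the ``bounded morphic image of $F(s)$'' step that the second bullet of the statement is used, so the argument also makes transparent why that clause is not superfluous.
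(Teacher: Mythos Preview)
Your proof is correct and follows essentially the same approach as the paper's: both directions use the same preservation lemmas for necessity, and for sufficiency both construct the defining formula as the conjunction of Jankov--Fine formulas $\chi_{m}^{n}$ over the finite set of small frames ${\mathcal F}_{m}^{n}$ (parameters bounded by $q=\qdd(A)$) that lie in $\Fr(\L)$ and refute $A$, then verify the correspondence via the same bounded-morphism/Ehrenfeucht--Fra\"\i ss\'e reduction of point-generated subframes to small flowers. Your organization is slightly cleaner in that you define the reduction $F(s)$ once and reuse it, and you absorb the paper's separate case ``$\Fr(\L){\models}A$'' into the empty-conjunction case, but the substance is the same.
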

\begin{proof}
Let $A$ be a sentence.
\\
\\
$(\Rightarrow)$
By Lemmas~\ref{Generated:Subframe:Lemma} and~\ref{Bounded:Morphism:Lemma}.
\\
\\
$(\Leftarrow)$
Suppose
\begin{itemize}
\item for all Euclidean frames $(W,R)$, if $(W,R){\models}\L$ then $(W,R){\models}A$ if and only if for all $s{\in}W$, $(W_{s},R_{s}){\models}A$,
\item for all flowers ${\mathcal F}_{m}^{n},{\mathcal F}_{m^{\prime}}^{n^{\prime}}$, if ${\mathcal F}_{m}^{n}{\models}\L$, ${\mathcal F}_{m}^{n}{\models}A$ and ${\mathcal F}_{m^{\prime}}^{n^{\prime}}$ is a bounded morphic image of ${\mathcal F}_{m}^{n}$ then ${\mathcal F}_{m^{\prime}}^{n^{\prime}}{\models}A$.
\end{itemize}
We consider the following $2$~cases.
\\
\\
{\bf Case where $\Fr(\L){\models}A$:}
Hence, the modal formula $\top$ and the sentence $A$ correspond in $\Fr(\L)$.
Thus, $A$ is modally definable in $\Fr(\L)$.
\\
\\
{\bf Case where $\Fr(\L){\not\models}A$:}
Consequently, let $(W^{\prime},R^{\prime})$ be an Euclidean frame such that $(W^{\prime},R^{\prime}){\models}\L$ and $(W^{\prime},R^{\prime}){\not\models}A$.
Since for all Euclidean frames $(W,R)$, if $(W,R){\models}\L$ then $(W,R){\models}A$ if and only if for all $s{\in}W$, $(W_{s},R_{s}){\models}A$, then let $s^{\prime}{\in}W^{\prime}$ be such that $(W^{\prime}_{s^{\prime}},R^{\prime}_{s^{\prime}}){\not\models}A$.
Since $(W^{\prime},R^{\prime}){\models}\L$, then by Lemma~\ref{Generated:Subframe:Lemma}, $(W^{\prime}_{s^{\prime}},R^{\prime}_{s^{\prime}}){\models}\L$.
Moreover, by Lemma~\ref{lemma:generated:subframe:from:a:single:point:has:specific:shapes}, either $R^{\prime}_{s^{\prime}}{=}W^{\prime}_{s^{\prime}}{\times}W^{\prime}_{s^{\prime}}$, or $W^{\prime}_{s^{\prime}}{=}\{s^{\prime}\}{\cup}C^{\prime}$ and $R^{\prime}_{s^{\prime}}{=}(\{s^{\prime}\}{\times}B^{\prime}){\cup}(C^{\prime}{\times}C^{\prime})$ for some sets $B^{\prime},C^{\prime}$ such that either $B^{\prime}{\not=}\emptyset$, or $C^{\prime}{=}\emptyset$, $B^{\prime}{\subseteq}C^{\prime}$ and $s^{\prime}{\not\in}C^{\prime}$.
Let ${\mathcal C}$ be the class of all rooted Euclidean frames $(W,R)$ such that $(W,R){\models}\L$ and $(W,R){\not\models}A$.
Since $(W^{\prime}_{s^{\prime}},R^{\prime}_{s^{\prime}}){\not\models}A$ and $(W^{\prime}_{s^{\prime}},R^{\prime}_{s^{\prime}}){\models}\L$, then $(W^{\prime}_{s^{\prime}},R^{\prime}_{s^{\prime}})$ is in ${\mathcal C}$.
Let $q{=}\qdd(A)$.
Let
\begin{itemize}
\item $\mathtt{S}_{\L}^{A}{=}\{(m,n):\ m{\in}\lsem0,q\rsem$, $n{\in}\lsem{-}1,q\rsem$, either $m{\not=}0$, or $n{=}0$, ${\mathcal F}_{m}^{n}{\models}\L$ and ${\mathcal F}_{m}^{n}{\not\models}A\}$.
\end{itemize}
Obviously, $\mathtt{S}_{\L}^{A}{\subseteq}\mathtt{S}_{\L}{\cup}\{(0,0\}$.
\\
\\
We claim that $\mathtt{S}_{\L}^{A}$ is non-empty.
We consider the following $2$~cases.
\\
\\
{\bf Case where $R^{\prime}_{s^{\prime}}{=}W^{\prime}_{s^{\prime}}{\times}W^{\prime}_{s^{\prime}}$:}
Let $(W^{\prime\prime},R^{\prime\prime})$ be an Euclidean frame obtained from $(W^{\prime}_{s^{\prime}},R^{\prime}_{s^{\prime}})$ as follows:
\begin{itemize}
\item if ${\parallel}W^{\prime}_{s^{\prime}}{\parallel}{\leq}q$ then let $W^{\prime\prime}$ be $W^{\prime}_{s^{\prime}}$ else let $W^{\prime\prime}$ be an arbitrary subset of $W^{\prime}_{s^{\prime}}$ of cardinality $q$,
\item let $R^{\prime\prime}$ be $W^{\prime\prime}{\times}W^{\prime\prime}$.
\end{itemize}
Obviously, either ${\parallel}W^{\prime}_{s^{\prime}}{\parallel}{\leq}q$ and ${\parallel}W^{\prime\prime}{\parallel}{=}{\parallel}W^{\prime}_{s^{\prime}}{\parallel}$, or ${\parallel}W^{\prime}_{s^{\prime}}{\parallel}{>}q$ and ${\parallel}W^{\prime\prime}{\parallel}{=}q$.
Hen\-ce, by Lemma~\ref{lemma:about:partitions:q:greater:smaller}, $(W^{\prime\prime},R^{\prime\prime})$ is a bounded morphic image of $(W^{\prime}_{s^{\prime}},R^{\prime}_{s^{\prime}})$.
Moreover, by Lemma~\ref{lemma:about:partitions:q:greater:smaller:games}, the second player has a winning strategy in the Ehrenfeucht-Fra\"\i ss\'e game ${\mathcal G}_{q}((W^{\prime}_{s^{\prime}},R^{\prime}_{s^{\prime}}),(W^{\prime\prime},R^{\prime\prime}))$.
Thus, by Lemmas~\ref{Bounded:Morphism:Lemma} and~\ref{ehrenfeucht:theorem}, $(W^{\prime}_{s^{\prime}},R^{\prime}_{s^{\prime}})
$\linebreak$
{\preceq}(W^{\prime\prime},R^{\prime\prime})$ and $(W^{\prime\prime},R^{\prime\prime}){\equiv_{q}}(W^{\prime}_{s^{\prime}},R^{\prime}_{s^{\prime}})$.
Since $(W^{\prime}_{s^{\prime}},R^{\prime}_{s^{\prime}}){\not\models}A$ and $(W^{\prime}_{s^{\prime}},R^{\prime}_{s^{\prime}}){\models}\L$, then $(W^{\prime\prime},R^{\prime\prime}){\not\models}A$ and $(W^{\prime\prime},R^{\prime\prime}){\models}\L$.
Let $m{=}{\parallel}W^{\prime\prime}{\parallel}$ and $n{=}{-}1$.
Obviously, $m{\in}\lsem0,q\rsem$, $n{\in}\lsem{-}1,q\rsem$ and either $m{\not=}0$, or $n{=}0$.
Moreover, $(W^{\prime\prime},R^{\prime\prime})$ is isomorphic with ${\mathcal F}_{m}^{n}$.
Since $(W^{\prime\prime},R^{\prime\prime}){\not\models}A$ and $(W^{\prime\prime},R^{\prime\prime}){\models}\L$, then ${\mathcal F}_{m}^{n}{\not\models}A$ and ${\mathcal F}_{m}^{n}{\models}\L$.
Since $m{\in}\lsem0,q\rsem$, $n{\in}\lsem{-}1,q\rsem$ and either $m{\not=}0$, or $n{=}0$, then $(m,n)$ is in $\mathtt{S}_{\L}^{A}$.
%
%
\\
\\
{\bf Case where $W^{\prime}_{s^{\prime}}{=}\{s^{\prime}\}{\cup}C^{\prime}$ and $R^{\prime}_{s^{\prime}}{=}(\{s^{\prime}\}{\times}B^{\prime}){\cup}(C^{\prime}{\times}C^{\prime})$ for some sets $B^{\prime},
$\linebreak$
C^{\prime}$ such that either $B^{\prime}{\not=}\emptyset$, or $C^{\prime}{=}\emptyset$, $B^{\prime}{\subseteq}C^{\prime}$ and $s^{\prime}{\not\in}C^{\prime}$:}
Let $(W^{\prime\prime},R^{\prime\prime})$ be an Euclidean frame obtained from $(W^{\prime}_{s^{\prime}},R^{\prime}_{s^{\prime}})$ as follows:
\begin{itemize}
\item if ${\parallel}B^{\prime}{\parallel}{\leq}q$ then let $B^{\prime\prime}$ be $B^{\prime}$ else let $B^{\prime\prime}$ be an arbitrary subset of $B^{\prime}$ of cardinality $q$,
\item if ${\parallel}C^{\prime}{\setminus}B^{\prime}{\parallel}{\leq}q$ then let $D^{\prime\prime}$ be $C^{\prime}{\setminus}B^{\prime}$ else let $D^{\prime\prime}$ be an arbitrary subset of $C^{\prime}{\setminus}B^{\prime}$ of cardinality $q$,
\item let $C^{\prime\prime}$ be $B^{\prime\prime}{\cup}D^{\prime\prime}$,
\item let $W^{\prime\prime}$ be $\{s^{\prime}\}{\cup}C^{\prime\prime}$,
\item let $R^{\prime\prime}$ be $(\{s^{\prime}\}{\times}B^{\prime\prime}){\cup}(C^{\prime\prime}{\times}C^{\prime\prime})$.
\end{itemize}
Obviously, either ${\parallel}B^{\prime}{\parallel}{\leq}q$ and ${\parallel}B^{\prime}{\parallel}{=}{\parallel}B^{\prime\prime}{\parallel}$, or ${\parallel}B^{\prime}{\parallel}{>}q$ and ${\parallel}B^{\prime\prime}{\parallel}{=}q$ and either ${\parallel}C^{\prime}{\setminus}B^{\prime}{\parallel}{\leq}q$ and ${\parallel}C^{\prime}{\setminus}B^{\prime}{\parallel}{=}{\parallel}C^{\prime\prime}{\setminus}B^{\prime\prime}{\parallel}$, or ${\parallel}C^{\prime}{\setminus}B^{\prime}{\parallel}{>}q$ and ${\parallel}C^{\prime\prime}{\setminus}B^{\prime\prime}{\parallel}{=}q$.
Consequently, by Lemma~\ref{lemma:about:frames:C:B:smaller:greater}, $(W^{\prime\prime},R^{\prime\prime})$ is a bounded morphic image of $(W^{\prime}_{s^{\prime}},R^{\prime}_{s^{\prime}})$.
Moreover, by Lemma~\ref{lemma:about:frames:C:B:smaller:greater:games}, the second player has a winning strategy in the Ehrenfeucht-Fra\"\i ss\'e game ${\mathcal G}_{q}((W^{\prime}_{s^{\prime}},R^{\prime}_{s^{\prime}}),(W^{\prime\prime},R^{\prime\prime}))$.
Hence, by Lemmas~\ref{Bounded:Morphism:Lemma}
\linebreak
and~\ref{ehrenfeucht:theorem}, $(W^{\prime}_{s^{\prime}},R^{\prime}_{s^{\prime}}){\preceq}(W^{\prime\prime},R^{\prime\prime})$ and $(W^{\prime\prime},R^{\prime\prime}){\equiv_{q}}(W^{\prime}_{s^{\prime}},R^{\prime}_{s^{\prime}})$.
Since $(W^{\prime}_{s^{\prime}},R^{\prime}_{s^{\prime}}){\not\models}A$ and $(W^{\prime}_{s^{\prime}},R^{\prime}_{s^{\prime}}){\models}\L$, then $(W^{\prime\prime},R^{\prime\prime}){\not\models}A$ and $(W^{\prime\prime},R^{\prime\prime}){\models}\L$.
Let $m{=}{\parallel}B^{\prime\prime}{\parallel}$ and $n{=}{\parallel}D^{\prime\prime}{\parallel}$.
Obviously, $m{\in}\lsem0,q\rsem$, $n{\in}\lsem{-}1,q\rsem$ and either $m{\not=}0$, or $n{=}0$.
Moreover, $(W^{\prime\prime},R^{\prime\prime})$ is isomorphic with ${\mathcal F}_{m}^{n}$.
Since $(W^{\prime\prime},R^{\prime\prime}){\not\models}A$ and $(W^{\prime\prime},R^{\prime\prime}){\models}\L$, then ${\mathcal F}_{m}^{n}{\not\models}A$ and ${\mathcal F}_{m}^{n}{\models}\L$.
Since $m{\in}\lsem0,q\rsem$, $n{\in}\lsem{-}1,q\rsem$ and either $m{\not=}0$, or $n{=}0$, then $(m,n)$ is in $\mathtt{S}_{\L}^{A}$.
%
%
\\
\\
Thus, $\mathtt{S}_{\L}^{A}$ is non-empty.
\\
\\
We claim that $\mathtt{S}_{\L}^{A}$ is finite.
Obviously, $\mathtt{S}_{\L}^{A}{\subseteq}\lsem0,q\rsem{\times}\lsem{-}1,q\rsem$.
Consequently, $\mathtt{S}_{\L}^{A}$ is finite.
\\
\\
Let
\begin{itemize}
\item $\varphi_{\L}^{A}{=}\bigwedge\{\chi_{m}^{n}:\ (m,n){\in}\mathtt{S}_{\L}^{A}\}$.
\end{itemize}
%
%
%
%
We claim that the modal formula $\varphi_{\L}^{A}$ and the sentence $A$ correspond in $\Fr(\L)$.
If not, let $(W^{\prime},R^{\prime})$ be an Euclidean frame such that $(W^{\prime},R^{\prime}){\models}\L$ and either $(W^{\prime},R^{\prime}){\models}\varphi_{\L}^{A}$ and $(W^{\prime},R^{\prime}){\not\models}A$, or $(W^{\prime},R^{\prime}){\not\models}\varphi_{\L}^{A}$ and $(W^{\prime},R^{\prime}){\models}A$.
In the former case, since for all Euclidean frames $(W,R)$, if $(W,R){\models}\L$ then $(W,R){\models}A$ if and only if for all $s{\in}W$, $(W_{s},R_{s}){\models}A$, then let $s^{\prime}{\in}W^{\prime}$ be such that $(W^{\prime}_{s^{\prime}},R^{\prime}_{s^{\prime}}){\not\models}A$.
Since $(W^{\prime},R^{\prime}){\models}\L$ and $(W^{\prime},R^{\prime}){\models}\varphi_{\L}^{A}$, then by Lemma~\ref{Generated:Subframe:Lemma}, $(W^{\prime}_{s^{\prime}},R^{\prime}_{s^{\prime}}){\models}\L$ and $(W^{\prime}_{s^{\prime}},R^{\prime}_{s^{\prime}}){\models}\varphi_{\L}^{A}$.
Since $(W^{\prime}_{s^{\prime}},R^{\prime}_{s^{\prime}}){\not\models}A$, then let $(W^{\prime\prime},R^{\prime\prime})$ be the Euclidean frame and $(m,n)$ be the couple of integers obtained from $(W^{\prime}_{s^{\prime}},R^{\prime}_{s^{\prime}})$ as above in the argument showing that $\mathtt{S}_{\L}^{A}$ is non-empty.
Remind from this argument that $(W^{\prime}_{s^{\prime}},R^{\prime}_{s^{\prime}}){\preceq}(W^{\prime\prime},R^{\prime\prime})$ and $(W^{\prime\prime},R^{\prime\prime}){\equiv_{q}}(W^{\prime}_{s^{\prime}},R^{\prime}_{s^{\prime}})$.
Moreover, $m{\in}\lsem0,q\rsem$, $n{\in}\lsem{-}1,q\rsem$ and either $m{\not=}0$, or $n{=}0$.
In other respect, $(W^{\prime\prime},R^{\prime\prime})$ is isomorphic with ${\mathcal F}_{m}^{n}$.
Since $(W^{\prime}_{s^{\prime}},R^{\prime}_{s^{\prime}}){\not\models}A$, $(W^{\prime}_{s^{\prime}},R^{\prime}_{s^{\prime}}){\models}\L$ and $(W^{\prime}_{s^{\prime}},R^{\prime}_{s^{\prime}}){\models}\varphi_{\L}^{A}$, then ${\mathcal F}_{m}^{n}{\not\models}A$, ${\mathcal F}_{m}^{n}{\models}\L$ and ${\mathcal F}_{m}^{n}{\models}\varphi_{\L}^{A}$.
Since $m{\in}\lsem0,q\rsem$, $n{\in}\lsem{-}1,q\rsem$ and either $m{\not=}0$, or $n{=}0$, then $(m,n){\in}\mathtt{S}_{\L}^{A}$.
Since ${\mathcal F}_{m}^{n}{\models}\varphi_{\L}^{A}$, then ${\mathcal F}_{m}^{n}{\models}\chi_{m}^{n}$: a contradiction with Lemma~\ref{lemma:Jankov:Fine:TWO}.
In the latter case, there exists a valuation $V^{\prime}$ on $(W^{\prime},R^{\prime})$ and there exists $s^{\prime}{\in}W^{\prime}$ such that $(W^{\prime},R^{\prime}),V^{\prime},s^{\prime}{\not\models}\varphi_{\L}^{A}$.
Moreover, since for all Euclidean frames $(W,R)$, if $(W,R){\models}\L$ then $(W,R){\models}A$ if and only if for all $s{\in}W$, $(W_{s},R_{s}){\models}A$, then $(W^{\prime}_{s^{\prime}},R^{\prime}_{s^{\prime}}){\models}A$.
Hence, there exists $(m,n){\in}\mathtt{S}_{\L}^{A}$ such that $(W^{\prime},R^{\prime}),V^{\prime},s^{\prime}{\not\models}\chi_{m}^{n}$.
Thus, by Lemma~\ref{lemma:Jankov:Fine:ONE}, there exists a surjective bounded morphism $f$ from $(W^{\prime}_{s^{\prime}},R^{\prime}_{s^{\prime}})$ to ${\mathcal F}_{m}^{n}$ such that $f(s^{\prime})$ generates ${\mathcal F}_{m}^{n}$.
We consider the following $3$~cases.
\\
\\
{\bf Case where $m{=}0$:}
Since either $m{\not=}0$, or $n{=}0$, then $n{=}0$.
Since $(m,n){\in}\mathtt{S}_{\L}^{A}$ and $m{=}0$, then ${\mathcal F}_{0}^{0}{\not\models}A$.
Moreover, since $(W^{\prime},R^{\prime}),V^{\prime},s^{\prime}{\not\models}\chi_{m}^{n}$ and $m{=}0$, then $(W^{\prime},R^{\prime}),V^{\prime},s^{\prime}{\models}\Box\bot$.
Consequently, $(W^{\prime}_{s^{\prime}},R^{\prime}_{s^{\prime}})$ is isomorphic with ${\mathcal F}_{0}^{0}$.
Since $(W^{\prime}_{s^{\prime}},R^{\prime}_{s^{\prime}}){\models}A$, then ${\mathcal F}_{0}^{0}{\models}A$: a contradiction.
\\
\\
{\bf Case where $m{\not=}0$ and $n{\in}\N$:}
Since $f$ is a surjective bounded morphism from $(W^{\prime}_{s^{\prime}},R^{\prime}_{s^{\prime}})$ to ${\mathcal F}_{m}^{n}$ such that $f(s^{\prime})$ generates ${\mathcal F}_{m}^{n}$, then by Lemma~\ref{lemma:about:what:happens:when:it:exists:bmi:flower:integer}, $W^{\prime}_{s^{\prime}}{=}\{s^{\prime}\}{\cup}C^{\prime}$ and $R^{\prime}_{s^{\prime}}{=}(\{s^{\prime}\}{\times}B^{\prime}){\cup}(C^{\prime}{\times}C^{\prime})$ for some sets $B^{\prime},C^{\prime}$ such that $B^{\prime}{\subseteq}
$\linebreak$
C^{\prime}$, $s^{\prime}{\not\in}C^{\prime}$, ${\parallel}B^{\prime}{\parallel}{\geq}m$ and ${\parallel}C^{\prime}{\setminus}B^{\prime}{\parallel}{\geq}n$.
Since $(m,n){\in}\mathtt{S}_{\L}^{A}$, then ${\mathcal F}_{m}^{n}{\not\models}A$, $m{\leq}q$ and $n{\leq}q$.
Let $m^{\prime}{=}\min\{{\parallel}B^{\prime}{\parallel},q\}$ and $n^{\prime}{=}\min\{{\parallel}C^{\prime}{\setminus}B^{\prime}{\parallel},q\}$.
Obviously, either ${\parallel}B^{\prime}{\parallel}{\leq}q$ and ${\parallel}B^{\prime}{\parallel}{=}m^{\prime}$, or ${\parallel}B^{\prime}{\parallel}{>}q$ and $m^{\prime}{=}q$ and either ${\parallel}C^{\prime}{\setminus}B^{\prime}{\parallel}{\leq}q$ and ${\parallel}C^{\prime}{\setminus}B^{\prime}{\parallel}
$\linebreak$
{=}n^{\prime}$, or ${\parallel}C^{\prime}{\setminus}B^{\prime}{\parallel}{>}q$ and $n^{\prime}{=}q$.
Hence, by Lemma~\ref{lemma:about:frames:C:B:smaller:greater:games}, the second player has a winning strategy in the Ehren\-feucht-Fra\"\i ss\'e game ${\mathcal G}_{q}((W_{s^{\prime}}^{\prime},R_{s^{\prime}}^{\prime}),{\mathcal F}_{m^{\prime}}^{n^{\prime}})$.
Thus, by Lemma~\ref{ehrenfeucht:theorem}, $(W_{s^{\prime}}^{\prime},R_{s^{\prime}}^{\prime}){\equiv_{q}}{\mathcal F}_{m^{\prime}}^{n^{\prime}}$.
Moreover, since ${\parallel}B^{\prime}{\parallel}{\geq}m$, ${\parallel}C^{\prime}{\setminus}B^{\prime}{\parallel}{\geq}n$, $m{\leq}q$ and $n{\leq}q$, then $m{\leq}m^{\prime}$ and $n{\leq}n^{\prime}$.
Consequently, $(m,n){\ll}(m^{\prime},n^{\prime})$.
Hence, by Lemma~\ref{lemma:about:bounded:morphisms:in:the:situation:of F:m:n:frames}, ${\mathcal F}_{m}^{n}$ is a bounded morphic image of ${\mathcal F}_{m^{\prime}}^{n^{\prime}}$.
Since $(W^{\prime}_{s^{\prime}},R^{\prime}_{s^{\prime}}){\models}A$ and $(W^{\prime}_{s^{\prime}},R^{\prime}_{s^{\prime}}){\equiv_{q}}{\mathcal F}_{m^{\prime}}^{n^{\prime}}$, then ${\mathcal F}_{m^{\prime}}^{n^{\prime}}{\models}A$.
Since ${\mathcal F}_{m}^{n}$ is a bounded morphic image of ${\mathcal F}_{m^{\prime}}^{n^{\prime}}$, then ${\mathcal F}_{m}^{n}{\models}A$: a contradiction.
\\
\\
{\bf Case where $m{\not=}0$ and $n{=}{-}1$:}
Since $f$ is a surjective bounded morphism from $(W^{\prime}_{s^{\prime}},R^{\prime}_{s^{\prime}})$ to ${\mathcal F}_{m}^{n}$ such that $f(s^{\prime})$ generates ${\mathcal F}_{m}^{n}$, then by Lemma~\ref{lemma:about:what:happens:when:it:exists:bmi:flower:minus:one}, either $R^{\prime}_{s^{\prime}}{=}W^{\prime}_{s^{\prime}}{\times}W^{\prime}_{s^{\prime}}$ and ${\parallel}W^{\prime}_{s^{\prime}}{\parallel}{\geq}m$, or $W^{\prime}_{s^{\prime}}{=}\{s^{\prime}\}{\cup}C^{\prime}$ and $R^{\prime}_{s^{\prime}}{=}(\{s^{\prime}\}{\times}B^{\prime}){\cup}(C^{\prime}{\times}C^{\prime})$ for some sets $B^{\prime},C^{\prime}$ such that $B^{\prime}{\subseteq}C^{\prime}$, $s^{\prime}{\not\in}C^{\prime}$ and ${\parallel}B^{\prime}{\parallel}{\geq}m$.
In the former case, since $(m,n){\in}\mathtt{S}_{\L}^{A}$ and $n{=}{-}1$, then ${\mathcal F}_{m}^{{-}1}{\not\models}A$ and $m{\leq}q$.
Let $m^{\prime}{=}\min\{{\parallel}W^{\prime}_{s^{\prime}}{\parallel},q\}$.
Obviously, either ${\parallel}W^{\prime}_{s^{\prime}}{\parallel}{\leq}q$ and ${\parallel}W^{\prime}_{s^{\prime}}{\parallel}{=}m^{\prime}$, or ${\parallel}W^{\prime}_{s^{\prime}}{\parallel}{>}q$ and $m^{\prime}{=}q$.
Thus, by Lemma~\ref{lemma:about:partitions:q:greater:smaller:games}, the second player has a winning strategy in the Ehren\-feucht-Fra\"\i ss\'e game ${\mathcal G}_{q}((W^{\prime}_{s^{\prime}},R^{\prime}_{s^{\prime}}),{\mathcal F}_{m^{\prime}}^{{-}1})$.
Consequently, by Lemma~\ref{ehrenfeucht:theorem}, $(W^{\prime}_{s^{\prime}},R^{\prime}_{s^{\prime}}){\equiv_{q}}{\mathcal F}_{m^{\prime}}^{{-}1}$.
Moreover, since ${\parallel}W^{\prime}_{s^{\prime}}{\parallel}{\geq}m$ and $m{\leq}q$, then $m{\leq}m^{\prime}$.
Hence, $(m,{-}1){\ll}(m^{\prime},{-}1)$.
Thus, by Lemma~\ref{lemma:about:bounded:morphisms:in:the:situation:of F:m:n:frames}, ${\mathcal F}_{m}^{{-}1}$ is a bounded morphic image of ${\mathcal F}_{m^{\prime}}^{{-}1}$.
Since $(W^{\prime}_{s^{\prime}},R^{\prime}_{s^{\prime}})
$\linebreak$
{\models}A$ and $(W^{\prime}_{s^{\prime}},R^{\prime}_{s^{\prime}}){\equiv_{q}}{\mathcal F}_{m^{\prime}}^{{-}1}$, then ${\mathcal F}_{m^{\prime}}^{{-}1}{\models}A$.
Since ${\mathcal F}_{m}^{{-}1}$ is a bounded morphic image of ${\mathcal F}_{m^{\prime}}^{{-}1}$, then ${\mathcal F}_{m}^{{-}1}{\models}A$: a contradiction.
In the latter case, since $(m,n){\in}\mathtt{S}_{\L}^{A}$ and $n{=}{-}1$, then ${\mathcal F}_{m}^{{-}1}{\not\models}A$ and $m{\leq}q$.
Let $m^{\prime}{=}\min\{{\parallel}C^{\prime}{\parallel},q\}$.
Obviously, either ${\parallel}C^{\prime}{\parallel}{\leq}q$ and ${\parallel}C^{\prime}{\parallel}{=}m^{\prime}$, or ${\parallel}C^{\prime}{\parallel}{>}q$ and $m^{\prime}{=}q$.
Consequently, by Lemma~\ref{lemma:about:frames:C:B:smaller:greater:games}, the second player has a winning strategy in the Ehren\-feucht-Fra\"\i ss\'e game ${\mathcal G}_{q}((C^{\prime},C^{\prime}
$\linebreak$
{\times}C^{\prime}),{\mathcal F}_{m^{\prime}}^{{-}1})$.
Hence, by Lemma~\ref{ehrenfeucht:theorem}, $(C^{\prime},C^{\prime}{\times}C^{\prime}){\equiv_{q}}{\mathcal F}_{m^{\prime}}^{{-}1}$.
Moreover, since $B^{\prime}{\subseteq}C^{\prime}$, ${\parallel}B^{\prime}{\parallel}{\geq}m$ and $m{\leq}q$, then $m{\leq}m^{\prime}$.
Thus, $(m,{-}1){\ll}(m^{\prime},{-}1)$.
Consequently, by Lemma~\ref{lemma:about:bounded:morphisms:in:the:situation:of F:m:n:frames}, ${\mathcal F}_{m}^{{-}1}$ is a bounded morphic image of ${\mathcal F}_{m^{\prime}}^{{-}1}$.
Obviously, $(C^{\prime},C^{\prime}{\times}C^{\prime})$ is a bounded morphic image of $(W^{\prime}_{s^{\prime}},R^{\prime}_{s^{\prime}})$.
Since $(W^{\prime}_{s^{\prime}},R^{\prime}_{s^{\prime}}){\models}A$ and $(C^{\prime},C^{\prime}{\times}C^{\prime}){\equiv_{q}}
$\linebreak$
{\mathcal F}_{m^{\prime}}^{{-}1}$, then ${\mathcal F}_{m^{\prime}}^{{-}1}{\models}A$.
Since ${\mathcal F}_{m}^{{-}1}$ is a bounded morphic image of ${\mathcal F}_{m^{\prime}}^{{-}1}$, then ${\mathcal F}_{m}^{{-}1}{\models}A$: a contradiction.
\\
\\
Hence, the modal formula $\varphi_{\L}^{A}$ and the sentence $A$ correspond in $\Fr(\L)$.
\\
\\
Thus, $A$ is modally definable in $\Fr(\L)$.
\medskip
\end{proof}
\section{Relative interpretations}\label{section:about:relative:interpretations}
In this section, we adapt to our purpose a theorem about relative interpretations.
See~\cite[Page~$272$]{Ershov:1980} for a more general presentation.
We will use it later for the proofs that the theories of the classes of frames determined by some Euclidean modal logics are undecidable.
%
%
\\
\\
Let ${\mathcal C}$ and ${\mathcal C}^{\prime}$ be classes of frames.
\\
\\
We say that ${\mathcal C}$ is {\it relatively elementary definable in ${\mathcal C}^{\prime}$}\/ if there exists first-order formulas $\mathbf{U}(\mathbf{x_{1}},\mathbf{x_{2}})$, $\mathbf{E}(\mathbf{x_{1}},\mathbf{x_{2}},\mathbf{x_{3}},\mathbf{x_{4}})$ and $\mathbf{A}(\mathbf{\mathbf{x_{1}},\mathbf{x_{2}},x_{3}},\mathbf{x_{4}})$ such that for all frames $(W,R)$ in ${\mathcal C}$, there exists a frame $(W^{\prime},R^{\prime})$ in ${\mathcal C}^{\prime}$ such that
\begin{itemize}
\item $X$ being the set of all couples $(s_{1},s_{2})$ such that $s_{1},s_{2}{\in}W^{\prime}$ are such that $(W^{\prime},R^{\prime}){\models}\mathbf{U}(\mathbf{x_{1}},\mathbf{x_{2}})\ \lbrack s_{1},s_{2}\rbrack$,
\item $\eta$ being the binary relation on $X$ such that for all $(s_{1},s_{2}),(t_{1},t_{2})$ in $X$, $(s_{1},s_{2})\eta(t_{1},t_{2})$ if and only if $(W^{\prime},R^{\prime}){\models}\mathbf{E}(\mathbf{x_{1}},\mathbf{x_{2}},\mathbf{y_{1}},\mathbf{y_{2}})\ \lbrack s_{1},s_{2},t_{1},t_{2}\rbrack$,
\item $S$ being the binary relation on $X$ such that for all $(s_{1},s_{2}),(t_{1},t_{2})$ in $X$, $(s_{1},s_{2})S(t_{1},t_{2})$ if and only if $(W^{\prime},R^{\prime}){\models}\mathbf{A}(\mathbf{x_{1}},\mathbf{x_{2}},\mathbf{y_{1}},\mathbf{y_{2}})\ \lbrack s_{1},s_{2},t_{1},t_{2}\rbrack$,
\end{itemize}
$X$ is non-empty, $\eta$ is a congruence on $(X,S)$ and $(X,S)/\eta$ is isomorphic to $(W,R)$.
\begin{proposition}[Relative elementary definability]\label{Syntactic:Restriction:Lemma:hereditarily:undecidable}
%
%
If ${\mathcal C}$ is relatively
\linebreak
elementary definable in ${\mathcal C}^{\prime}$ and $\Th({\mathcal C})$ is hereditarily undecidable then $\Th({\mathcal C}^{\prime})$ is hereditarily undecidable.
\end{proposition}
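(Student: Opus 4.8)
The plan is to exhibit, from the relative elementary definability data, a computable translation of sentences in the signature of $\mathcal{C}$ into sentences in the signature of $\mathcal{C}'$ that preserves truth in the relevant frames, and then argue that this translation carries hereditary undecidability of $\Th(\mathcal{C})$ up to $\Th(\mathcal{C}')$. Concretely, suppose $\mathbf{U}$, $\mathbf{E}$, $\mathbf{A}$ witness that $\mathcal{C}$ is relatively elementary definable in $\mathcal{C}'$. For each frame $(W,R)$ in $\mathcal{C}$, pick the associated $(W',R')$ in $\mathcal{C}'$; then $X$, $\eta$, $S$ are first-order definable inside $(W',R')$, $X$ is non-empty, $\eta$ is a congruence on $(X,S)$, and $(X,S)/\eta \cong (W,R)$. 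The classical relative interpretation technique (see the reference to~\cite[Page~$272$]{Ershov:1980}) gives, for every sentence $B$ in the language of frames, a sentence $B^{\ast}$ (obtained by relativizing quantifiers to pairs satisfying $\mathbf{U}$, replacing $\mathbf{R}(\mathbf{x},\mathbf{y})$ by $\mathbf{A}$, and replacing $=$ by $\mathbf{E}$) such that $(W',R') \models B^{\ast}$ if and only if $(X,S)/\eta \models B$, hence if and only if $(W,R) \models B$. The map $B \mapsto B^{\ast}$ is clearly effective.

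First I would make precise the claim that $B \mapsto B^{\ast}$ transports theories correctly. Let $T = \Th(\mathcal{C})$ and $T' = \Th(\mathcal{C}')$. Using the equivalence above, one checks that for every sentence $B$, if $B \in T$ then $B^{\ast} \in T'$: indeed any $(W',R') \in \mathcal{C}'$ either fails to carry a well-behaved interpretation — but here we must be slightly careful, because relative definability only asserts that \emph{for each} $(W,R)\in\mathcal{C}$ \emph{there exists} a suitable $(W',R')\in\mathcal{C}'$, not that every frame of $\mathcal{C}'$ is such an interpretation. The clean way around this is to incorporate into $B^{\ast}$ a guard: let $\Delta$ be the sentence (over the language of $\mathcal{C}'$) asserting that $X$ is non-empty and that $\eta$ is a congruence on $(X,S)$ — this is first-order expressible since $X$, $S$, $\eta$ are all given by the fixed formulas $\mathbf{U}$, $\mathbf{A}$, $\mathbf{E}$. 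Then set $B^{\circ} := \Delta \rightarrow B^{\ast}$. Now on any frame $(W',R')\in\mathcal{C}'$ that is a genuine interpretation of some $(W,R)\in\mathcal{C}$ we have $(W',R')\models B^{\circ}$ iff $(W,R)\models B$, and on frames where $\Delta$ fails, $B^{\circ}$ holds vacuously. Hence $B\in T \Rightarrow B^{\circ}\in T'$, and conversely $B^{\circ}\in T' \Rightarrow$ (evaluating on the interpreting frames, which exist for every $(W,R)\in\mathcal{C}$) $B\in T$. So $B \in T \iff B^{\circ} \in T'$.

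Next I would set up the reduction proving hereditary undecidability. Let $T'_1$ be any theory with $T' \subseteq T'_1$; we must show $T'_1$ is undecidable. Put $T_1 := \{ B : B^{\circ} \in T'_1 \}$. Since $B^{\circ}\in T'$ whenever $B\in T$, and $T'\subseteq T'_1$, we get $T \subseteq T_1$. One checks $T_1$ is deductively closed (it is the preimage of a theory under a translation commuting with the relevant logical operations in the appropriate sense — this is the routine part, using that $B \mapsto B^{\circ}$ respects the consequences needed for $T_1$ to be a theory; alternatively one observes $T_1$ equals the theory of the class $\mathcal{C}$ together with whatever extra sentences $T'_1$ forces, and appeals directly to the hypothesis that $\Th(\mathcal{C})$ is hereditarily undecidable applied to $T_1$). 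Since $\Th(\mathcal{C})=T$ is hereditarily undecidable and $T \subseteq T_1$, the theory $T_1$ is undecidable. But $B \in T_1 \iff B^{\circ} \in T'_1$ and $B \mapsto B^{\circ}$ is computable, so a decision procedure for $T'_1$ would yield one for $T_1$ — contradiction. Hence $T'_1$ is undecidable, and $\Th(\mathcal{C}')$ is hereditarily undecidable.

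**The main obstacle** I anticipate is verifying that $T_1$ is genuinely a \emph{theory} (i.e., the theory of some class of frames) rather than merely a consistent deductively-closed set, since the definition of "hereditarily undecidable" in the excerpt quantifies over theories, and a theory is defined as the set of sentences valid on every frame satisfying some set of sentences. The fix is to show $T_1 = \Th(\mathcal{D})$ where $\mathcal{D}$ is the class of all frames $(W,R)$ admitting an interpretation $(W',R')$ with $(W',R')\models T'_1$ — equivalently, to argue directly that $\{B : B^{\circ}\in T'_1\}$ is closed under semantic consequence using the soundness/completeness bridge supplied by the interpretation. This bookkeeping, together with the precise formulation of the guard sentence $\Delta$ and checking that relativization commutes with the Boolean connectives and quantifiers exactly as in the standard relative-interpretation lemma, is where the real work lies; everything else is the routine transfer of a many-one reduction along a computable translation.
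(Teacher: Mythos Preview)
The paper does not actually prove this proposition --- it simply cites \cite[Page~272]{Ershov:1980} --- so you are attempting more than the paper does. Your overall strategy (build a computable translation $B \mapsto B^{\circ}$ via relativization to $\mathbf{U}$, replacing $\mathbf{R}$ by $\mathbf{A}$ and $=$ by $\mathbf{E}$, guarded by a sentence $\Delta$ asserting the interpretation is well-formed) is the standard one and is correct in outline.

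However, you have the direction of hereditary undecidability reversed. By the paper's definition, a theory $T$ is hereditarily undecidable when every \emph{subtheory} $T'' \subseteq T$ is undecidable, not every extension. So to show $\Th(\mathcal{C}')$ is hereditarily undecidable you must take an arbitrary theory $T'_1$ with $T'_1 \subseteq T'$, not $T' \subseteq T'_1$. Correspondingly, from $T_1 := \{B : B^{\circ} \in T'_1\}$ you should be deducing $T_1 \subseteq T$ (if $B^{\circ} \in T'_1 \subseteq T'$, then $B^{\circ}$ holds on every frame in $\mathcal{C}'$, in particular on each interpreting frame, whence $B$ holds on every frame in $\mathcal{C}$), and then invoking hereditary undecidability of $T$ to conclude $T_1$ is undecidable. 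Your argument as written instead derives $T \subseteq T_1$ and then claims ``since $T$ is hereditarily undecidable and $T \subseteq T_1$, $T_1$ is undecidable'' --- but that inference is simply false for the notion defined in the paper. Relatedly, your claimed biconditional $B \in T \iff B^{\circ} \in T'$ is only half-justified: the direction $B \in T \Rightarrow B^{\circ} \in T'$ fails in general, because a frame $(W',R') \in \mathcal{C}'$ satisfying $\Delta$ need not have $(X,S)/\eta \in \mathcal{C}$. Fortunately, once the inclusions are reversed, only the sound direction $B^{\circ} \in T' \Rightarrow B \in T$ is needed. The residual issue you flag --- checking that $T_1$ is a theory in the paper's sense --- remains, and is indeed where the bookkeeping lies.
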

\begin{proof}
See~\cite[Page~$272$]{Ershov:1980}.
\medskip
\end{proof}
\begin{lemma}\label{lemma:relatively:elementary:definable:K2}
The class of all irreflexive symmetric frames with at least $2$~elements is relatively elementary definable in ${\mathcal K}_{2}$.
\end{lemma}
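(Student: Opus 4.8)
**The plan is to exhibit an explicit relative elementary interpretation of the class of irreflexive symmetric frames with at least two points inside the class $\mathcal K_2$.** Recall that a galaxy $\mathcal F^\rho_{A,B}$ belongs to $\mathcal K_2$ exactly when $\|A\|\geq 4$, $\|B\|\geq 4$, and $\rho(s)$ is a two-element subset of $B$ for every $s\in A$. The key observation is that each point $s$ of the upper part $A$ of such a galaxy is canonically tagged by the unordered pair $\rho(s)\subseteq B$, so that the upper part of a $\mathcal K_2$-galaxy encodes a set of (possibly repeated) two-element subsets of a $4$-or-more element "palette" $B$ — and an arbitrary irreflexive symmetric frame $(V,E)$ with $\|V\|\geq 2$ is precisely an irreflexive symmetric relation, i.e. a set of two-element subsets of $V$. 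So I would use $B$ as the vertex set $V$ (well, as a superset of it) and use the elements of $A$ as "edge witnesses".

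\textbf{First I would set up the three interpreting formulas.} The palette $B$ is first-order definable inside any $\mathcal K_2$-galaxy as $\{\mathbf{x}:\mathbf R(\mathbf x,\mathbf x)\}$, since $B\times B\subseteq R$ and no element of $A$ is reflexive; call this formula $\mathbf{Reflx}(\mathbf x)$. I would then let $\mathbf{U}(\mathbf{x_1},\mathbf{x_2})$ say roughly: $\mathbf{x_1}$ is reflexive, and $\mathbf{x_2}$ is non-reflexive, and $\mathbf R(\mathbf{x_2},\mathbf{x_1})$ holds — so that a pair $(s_1,s_2)\in X$ consists of a palette element $s_1\in B$ together with an upper element $s_2\in A$ that "sees" $s_1$, i.e. with $s_1\in\rho(s_2)$. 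Then the equivalence relation $\eta$ should identify $(s_1,s_2)$ with $(t_1,t_2)$ when $s_1=t_1$ (we only care which palette element is named; the choice of witnessing upper point is irrelevant), so $\mathbf{E}(\mathbf{x_1},\mathbf{x_2},\mathbf{y_1},\mathbf{y_2})$ is just $\mathbf{x_1}=\mathbf{y_1}$. For the accessibility relation $S$ on $X$ I would put $(s_1,s_2)\mathrel S(t_1,t_2)$ iff $s_1\neq t_1$ and there exists an upper point $u$ with $\rho(u)=\{s_1,t_1\}$; using the defined predicate $\mathbf R(\mathbf z)=\{\mathbf{x_i},\mathbf{y_j}\}$ already introduced in the excerpt, $\mathbf{A}(\mathbf{x_1},\mathbf{x_2},\mathbf{y_1},\mathbf{y_2})$ becomes $\mathbf{x_1}\neq\mathbf{y_1}\wedge\exists\mathbf z(\mathbf R(\mathbf z)=\{\mathbf{x_1},\mathbf{y_1}\})$, which is manifestly first-order. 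One checks immediately that $S$ so defined is irreflexive and symmetric, and that $\eta$ is a congruence for $S$ (since $S$ only depends on the first coordinates).

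\textbf{Next I would verify the two halves of "relatively elementary definable".} Given a $\mathcal K_2$-galaxy $\mathcal F^\rho_{A,B}$, the quotient $(X,S)/\eta$ has underlying set in bijection with $\{b\in B: b\in\rho(s)\text{ for some }s\in A\}$ and edge relation $\{(b,b'): b\neq b',\ \{b,b'\}=\rho(s)\text{ for some }s\in A\}$; this is an irreflexive symmetric frame, and since $\|B\|\geq 4$ and the $\rho(s)$ range over two-element subsets, $X$ is non-empty and has at least two points as soon as at least one $\rho(s)$ is realized — which is automatic. Conversely, given any irreflexive symmetric $(V,E)$ with $\|V\|\geq 2$, I must produce a $\mathcal K_2$-galaxy whose interpretation is isomorphic to it: take $B$ to be $V$ together with enough "dummy" palette elements to guarantee $\|B\|\geq 4$ (these dummies are never named by any $\rho(s)$, so they contribute isolated vertices to the quotient — here I need to be slightly careful, since isolated vertices would then appear in $X$ only if some $\rho(s)$ hits them; if no $\rho(s)$ hits a dummy it simply doesn't appear, which is exactly what I want), and take $A$ to be a set of size $\geq 4$ of edge-witnesses, one for each edge of $E$ (padding $A$ with extra copies of existing edges, or with witnesses for edges of $E$, to reach size $4$), with $\rho$ sending each witness to the corresponding edge $\{v,v'\}\subseteq B$. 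If $E$ has fewer than $4$ edges one must pad both $A$ and $B$ judiciously; the padding of $A$ can repeat an existing edge-witness, and if $E$ is empty then $\|V\|\geq 2$ is still fine — but wait, then $X$ would be empty. I'd handle the edgeless case by noting that an edgeless irreflexive symmetric frame on $\geq 2$ points is still required, so I must instead pick the witnesses to realize a slightly larger graph — actually the cleanest fix is to allow $B$ to be $V\sqcup\{*\}$ and force one extra vertex $*$ adjacent to nothing, but then $*$ must be hit...

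\textbf{The main obstacle I anticipate is exactly this bookkeeping around small and edgeless graphs:} ensuring the interpretation is \emph{surjective} onto the class of \emph{all} irreflexive symmetric frames with $\geq 2$ elements, including the edgeless ones and those with very few edges, while staying within the cardinality constraints ($\|A\|\geq 4$, $\|B\|\geq 4$) defining $\mathcal K_2$. The resolution will be to refine $\mathbf U$ so that $X$ includes a pair for \emph{every} palette vertex, not only those lying on some edge — e.g. let a pair be $(s_1,s_2)$ with $s_1$ reflexive and $s_2$ an arbitrary fixed auxiliary element, using a definable-by-formula "anchor" rather than requiring $s_2$ to see $s_1$; then dummy vertices do survive into the quotient and one can freely pad $B$. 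Once $\mathbf U$ is chosen so that $X/\eta$ has underlying set in bijection with all of $B$, padding $B$ up to size $4$ with isolated vertices is harmless \emph{provided} the target graph already has $\geq 4$ vertices, and for targets with $2$ or $3$ vertices one simply adds isolated vertices to the target too — but the target is fixed, so instead I realize a graph with extra isolated vertices and... no: I must realize the target \emph{exactly}. The honest fix: make $\mathbf U$ pick out one pair per palette-vertex-that-is-named-by-some-$\rho(s)$ \emph{or} is reachable; combined with allowing $A$ to contain witnesses $u$ with $\rho(u)=\{v,v'\}$ chosen so that every $v\in V$ is covered — which is impossible if $v$ is isolated. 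Therefore the genuinely correct move is: let $X$'s vertices correspond to $B$ minus unnamed dummies, pad $A$ with $\geq 4$ edge-witnesses all of whose endpoints already lie in $V$ (possible when $\|E\|\geq 1$, splitting one edge's witness into four copies), and for the edgeless case observe it cannot arise as an \emph{irreflexive symmetric frame that is validated somewhere} in the relevant reduction — but the lemma as stated has no such escape. I expect the authors handle the edgeless and tiny cases by a direct ad hoc construction of small $\mathcal K_2$-galaxies, and I would devote the bulk of the write-up to that case analysis; the generic case ($\|V\|\geq 4$, $\|E\|\geq 4$) is routine once $\mathbf U,\mathbf E,\mathbf A$ are fixed as above. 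With the interpretation in hand, Proposition~\ref{Syntactic:Restriction:Lemma:hereditarily:undecidable} is not yet invoked here — this lemma only asserts the interpretability — so the proof ends once surjectivity onto the target class and the congruence/isomorphism conditions are checked.
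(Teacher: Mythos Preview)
Your sketch has a genuine gap that you yourself identify but do not close: with your encoding, a vertex $v\in V$ appears in the interpreted structure only if $v$ lies on some edge, so isolated vertices vanish and the edgeless graph on $\geq 2$ points cannot be realized at all (you need every $\rho(s)$ to be a two-element subset of $B$, so $A$ cannot be built when $E=\emptyset$). Your proposed patches are not worked out: padding $A$ with copies of an existing edge presupposes an edge; padding $B$ with dummies makes those dummies invisible to your $\mathbf U$; and the suggestion that the edgeless case ``cannot arise'' is, as you note, unavailable since the lemma is stated for \emph{all} irreflexive symmetric frames with $\geq 2$ elements.

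The paper's construction avoids the problem by a different encoding rather than by case analysis. Each vertex $s\in W$ is represented by \emph{two} palette points $(s,1),(s,2)\in B$ and by \emph{two} dedicated upper witnesses $(s,3),(s,4)\in A$ with $\rho((s,3))=\rho((s,4))=\{(s,1),(s,2)\}$; each edge $\{s_1,s_2\}$ contributes a single upper witness with $\rho$-image $\{(s_1,1),(s_2,1)\}$. The formula $\mathbf U(\mathbf{x_1},\mathbf{x_2})$ then picks out pairs of reflexive points having \emph{exactly two} upper witnesses---precisely the vertex-pairs $\{(s,1),(s,2)\}$---while $\mathbf A$ detects an edge between two such pairs by the existence of \emph{exactly one} upper witness whose $\rho$-image meets both. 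Because the vertex-witnesses $(s,3),(s,4)$ are present for every $s\in W$ regardless of edges, isolated vertices survive into the interpretation, and since $|A|\geq 2|W|\geq 4$ and $|B|=2|W|\geq 4$ the galaxy is in $\mathcal K_2$ even when $E=\emptyset$. The ``exactly two versus exactly one'' count is what cleanly separates vertex-encoding pairs from edge-encoding pairs; your scheme has no analogous mechanism, which is why you were forced toward ad hoc patches.
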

\begin{proof}
Let $(W,R)$ be an arbitrary irreflexive symmetric frame with at least $2$ elements.
Let $E$ be $\{\{s_{1},s_{2}\}:\ s_{1},s_{2}{\in}W$ and $s_{1}{R}s_{2}\}$.
Notice that for all $s_{1},s_{2}{\in}W$, if $\{s_{1},s_{2}\}{\in}E$ then $s_{1}{\not=}s_{2}$.
Without loss of generality, we can assume that $E$ and $W{\times}\{(3,4)\}$ are disjoint.
Let $A$ be $E{\cup}(W{\times}\{3,4\})$.
Let $B$ be $W{\times}\{1,2\}$.
Let ${\mathcal F}^{\rho}_{A,B}$ be the galaxy such that
\begin{itemize}
\item for all $\{s_{1},s_{2}\}{\in}E$, $\rho(\{s_{1},s_{2}\}){=}\{(s_{1},1),(s_{2},1)\}$,
\item for all $s{\in}W$, $\rho((s,3)){=}\{(s,1),(s,2)\}$,
\item for all $s{\in}W$, $\rho((s,4)){=}\{(s,1),(s,2)\}$.
\end{itemize}
%
%
Obviously, ${\mathcal F}^{\rho}_{A,B}$ is in ${\mathcal K}_{2}$.
\\
\\
Let $X$ be $\{(a_{1},a_{2}):\ a_{1},a_{2}{\in}W^{\rho}_{A,B}$ and ${\mathcal F}^{\rho}_{A,B}{\models}\mathbf{U}(\mathbf{x_{1}},\mathbf{x_{2}})\ \lbrack a_{1},a_{2}\rbrack\}$ where
\begin{itemize}
\item $\mathbf{U}(\mathbf{x_{1}},\mathbf{x_{2}})::=\mathbf{x_{1}}{\not=}\mathbf{x_{2}}\wedge\mathbf{R}(\mathbf{x_{1}},\mathbf{x_{2}})\wedge\mathbf{R}(\mathbf{x_{2}},\mathbf{x_{1}})\wedge\exists^{{=}2}\mathbf{y}(\mathbf{R}(\mathbf{y}){=}\{\mathbf{x_{1}},\mathbf{x_{2}}\}\wedge\neg\exists\mathbf{z}\mathbf{R}(\mathbf{z},\mathbf{y}))$.
\end{itemize}
Obviously, for all $s{\in}W$, $((s,1),(s,2)),((s,2),(s,1)){\in}X$.
Hence, $X$ is non-empty.
\\
\\
Obviously, for all $(a_{1},a_{2}){\in}X$, there exists $s{\in}W$ such that $\{a_{1},a_{2}\}{=}\{(s,1),(s,
$\linebreak$
2)\}$.
\\
\\
Let $\bowtie$ be the binary relation on $X$ such that for all $(a_{1},a_{2}),(b_{1},b_{2}){\in}X$, $(a_{1},a_{2}){\bowtie}
$\linebreak$
(b_{1},b_{2})$ if and only if ${\mathcal F}^{\rho}_{A,B}{\models}\mathbf{E}(\mathbf{x_{1}},\mathbf{x_{2}},\mathbf{y_{1}},\mathbf{y_{2}})\ \lbrack a_{1},a_{2},b_{1},b_{2}\rbrack$ where
\begin{itemize}
\item $\mathbf{E}(\mathbf{x_{1}},\mathbf{x_{2}},\mathbf{y_{1}},\mathbf{y_{2}})::=\mathbf{U}(\mathbf{x_{1}},\mathbf{x_{2}})\wedge\mathbf{U}(\mathbf{y_{1}},\mathbf{y_{2}})\wedge\{\mathbf{x_{1}},\mathbf{x_{2}}\}{=}\{\mathbf{y_{1}},\mathbf{y_{2}}\}$.
\end{itemize}
Obviously, $\bowtie$ is an equivalence relation on $X$.
For all $(a_{1},a_{2}){\in}X$, the equivalence class of $(a_{1},a_{2})$ modulo $\bowtie$ will be denoted $\lbrack(a_{1},a_{2})\rbrack$.
\\
\\
Since for all $(a_{1},a_{2}){\in}X$, there exists $s{\in}W$ such that $\{a_{1},a_{2}\}{=}\{(s,1),(s,2)\}$, then for all $(a_{1},a_{2}),(b_{1},b_{2}){\in}X$, $(a_{1},a_{2}){\bowtie}(b_{1},b_{2})$ if and only if there exists $s{\in}W$ such that $\{a_{1},a_{2}\}{=}\{(s,1),(s,2)\}$ and $\{b_{1},b_{2}\}{=}\{(s,1),(s,2)\}$.
\\
\\
Let $S$ be the binary relation on $X$ such that for all $(a_{1},a_{2}),(b_{1},b_{2}){\in}X$, $(a_{1},a_{2}){S}
$\linebreak$
(b_{1},b_{2})$ if and only if ${\mathcal F}^{\rho}_{A,B}{\models}\mathbf{A}(\mathbf{x_{1}},\mathbf{x_{2}},\mathbf{y_{1}},\mathbf{y_{2}})\ \lbrack a_{1},a_{2},b_{1},b_{2}\rbrack$ where
\begin{itemize}
\item $\mathbf{A}(\mathbf{x_{1}},\mathbf{x_{2}},\mathbf{y_{1}},\mathbf{y_{2}})::=\mathbf{U}(\mathbf{x_{1}},\mathbf{x_{2}})\wedge\mathbf{U}(\mathbf{y_{1}},\mathbf{y_{2}})\wedge\{\mathbf{x_{1}},\mathbf{x_{2}}\}{\cap}\{\mathbf{y_{1}},\mathbf{y_{2}}\}{=}\emptyset\wedge
$\linebreak$
\bigvee\{\exists^{{=}1}\mathbf{z}(\mathbf{R}(\mathbf{z}){=}\{\mathbf{x_{i}},\mathbf{y_{j}}\}\wedge\neg\exists\mathbf{t}\mathbf{R}(\mathbf{t},\mathbf{z})):\ 1{\leq}i,j{\leq}2\}$.
\end{itemize}
We claim that for all $(a_{1},a_{2}),(b_{1},b_{2}),(c_{1},c_{2}),(d_{1},d_{2}){\in}X$, if $(a_{1},a_{2}){\bowtie}(c_{1},c_{2})$ and $(b_{1},b_{2}){\bowtie}(d_{1},d_{2})$ then $(a_{1},a_{2}){S}(b_{1},b_{2})$ if and only if $(c_{1},c_{2}){S}(d_{1},d_{2})$.
If not, let $(a_{1},a_{2}),(b_{1},b_{2}),(c_{1},c_{2}),(d_{1},d_{2}){\in}X$ be such that $(a_{1},a_{2}){\bowtie}(c_{1},c_{2})$, $(b_{1},b_{2}){\bowtie}(d_{1},
$\linebreak$
d_{2})$ and either $(a_{1},a_{2}){S}(b_{1},b_{2})$ and not $(c_{1},c_{2}){S}(d_{1},d_{2})$, or not $(a_{1},a_{2}){S}(b_{1},b_{2})$ and $(c_{1},c_{2}){S}(d_{1},d_{2})$.
Without loss of generality, suppose $(a_{1},a_{2}){S}(b_{1},b_{2})$ and not $(c_{1},c_{2}){S}(d_{1},d_{2})$.
Thus, ${\mathcal F}^{\rho}_{A,B}{\models}\mathbf{A}(\mathbf{x_{1}},\mathbf{x_{2}},\mathbf{y_{1}},\mathbf{y_{2}})\ \lbrack a_{1},a_{2},b_{1},b_{2}\rbrack$ and ${\mathcal F}^{\rho}_{A,B}{\not\models}
$\linebreak$
\mathbf{A}(\mathbf{x_{1}},\mathbf{x_{2}},\mathbf{y_{1}},\mathbf{y_{2}})\ \lbrack c_{1},c_{2},d_{1},d_{2}\rbrack$.
Consequently,
\begin{description}
\item[$(C_{1})$] $\{a_{1},a_{2}\}{\cap}\{b_{1},b_{2}\}{=}\emptyset$,
\item[$(C_{2})$] there exists $i,j{\in}\{1,2\}$ such that for exactly $1$~$e$ in ${\mathcal F}_{A,B}^{\rho}$, $R_{A,B}^{\rho}(e){=}\{a_{i},b_{j}\}$ and ${R_{A,B}^{\rho}}^{{-}1}(e){=}\emptyset$,
\item[$(C_{3})$] at least one of the following conditions holds:
\begin{description}
\item[$(C_{3}^{a})$] $\{c_{1},c_{2}\}{\cap}\{d_{1},d_{2}\}{\not=}\emptyset$,
\item[$(C_{3}^{b})$] for all $k,l{\in}\{1,2\}$, either for all $f$ in ${\mathcal F}_{A,B}^{\rho}$, if $R_{A,B}^{\rho}(f){=}\{c_{k},d_{l}\}$ then ${R_{A,B}^{\rho}}^{{-}1}(f){\not=}\emptyset$, or there exists $f^{\prime},f^{\prime\prime}$ in ${\mathcal F}_{A,B}^{\rho}$ such that $f^{\prime}{\not=}f^{\prime\prime}$,
\linebreak$
R_{A,B}^{\rho}(f^{\prime}){=}\{c_{k},d_{l}\}$, $R_{A,B}^{\rho}(f^{\prime\prime}){=}\{c_{k},d_{l}\}$, ${R_{A,B}^{\rho}}^{{-}1}(f^{\prime}){=}\emptyset$ and
\linebreak$
{R_{A,B}^{\rho}}^{{-}1}(f^{\prime\prime}){=}\emptyset$.
\end{description}
\end{description}
%
%
Let $s,t,u,v{\in}W$ be such that $\{a_{1},a_{2}\}{=}\{(s,1),(s,2)\}$, $\{b_{1},b_{2}\}{=}\{(t,1),(t,2)\}$, $\{c_{1},c_{2}\}{=}\{(u,1),(u,2)\}$ and $\{d_{1},d_{2}\}{=}\{(v,1),(v,2)\}$.
Since $(a_{1},a_{2}){\bowtie}(c_{1},c_{2})$ and $(b_{1},b_{2}){\bowtie}(d_{1},d_{2})$, then ${\mathcal F}^{\rho}_{A,B}{\models}\mathbf{E}(\mathbf{x_{1}},\mathbf{x_{2}},\mathbf{y_{1}},\mathbf{y_{2}})\ \lbrack a_{1},a_{2},c_{1},c_{2}\rbrack$ and ${\mathcal F}^{\rho}_{A,B}{\models}\mathbf{E}(\mathbf{x_{1}},
$\linebreak$
\mathbf{x_{2}},\mathbf{y_{1}},\mathbf{y_{2}})\ \lbrack b_{1},b_{2},d_{1},d_{2}\rbrack$.
Hence, 
$\{a_{1},a_{2}\}{=}\{c_{1},c_{2}\}$ and $\{b_{1},b_{2}\}{=}\{d_{1},d_{2}\}$.
Since $\{a_{1},a_{2}\}{=}\{(s,1),(s,2)\}$, $\{b_{1},b_{2}\}{=}\{(t,1),(t,2)\}$, $\{c_{1},c_{2}\}{=}\{(u,1),(u,2)\}$ and $\{d_{1},
$\linebreak$
d_{2}\}{=}\{(v,1),(v,2)\}$, then $s{=}u$ and $t{=}v$.
Moreover, by $(C_{1})$, $s{\not=}t$.
In other respect, by $(C_{2})$, there exists $k^{\prime},l^{\prime}{\in}\{1,2\}$ such that $a_{i}{=}(s,k^{\prime})$, $b_{j}{=}(t,l^{\prime})$ and for exactly $1$~$e$ in ${\mathcal F}_{A,B}^{\rho}$, $R_{A,B}^{\rho}(e){=}\{(s,k^{\prime}),(t,l^{\prime})\}$ and ${R_{A,B}^{\rho}}^{{-}1}(e){=}\emptyset$.
Since $\{c_{1},c_{2}\}{=}\{(u,1),(u,2)\}$, $s{=}u$, $\{d_{1},d_{2}\}{=}\{(v,1),(v,2)\}$, $t{=}v$ and $s{\not=}t$, then $\{c_{1},
$\linebreak$
c_{2}\}{\cap}\{d_{1},d_{2}\}{=}\emptyset$.
Moreover, there exists $k,l{\in}\{1,2\}$ such that $c_{k}{=}(s,k^{\prime})$ and $d_{l}{=}(t,l^{\prime})$.
Hence, $(C_{3}^{a})$ does not hold.
Thus, by $(C_{3})$, $(C_{3}^{b})$ holds and either for all $f$ in ${\mathcal F}_{A,B}^{\rho}$, if $R_{A,B}^{\rho}(f){=}\{(s,k^{\prime}),(t,l^{\prime})\}$ then ${R_{A,B}^{\rho}}^{{-}1}(f){\not=}\emptyset$, or there exists $f^{\prime},f^{\prime\prime}$ in ${\mathcal F}_{A,B}^{\rho}$ such that $f^{\prime}{\not=}f^{\prime\prime}$, $R_{A,B}^{\rho}(f^{\prime}){=}\{(s,k^{\prime}),(t,l^{\prime})\}$, $R_{A,B}^{\rho}(f^{\prime\prime}){=}\{(s,k^{\prime}),(t,l^{\prime})\}$, ${R_{A,B}^{\rho}}^{{-}1}(f^{\prime}){=}\emptyset$ and ${R_{A,B}^{\rho}}^{{-}1}(f^{\prime\prime}){=}\emptyset$.
In the former case, since $R_{A,B}^{\rho}(e){=}\{(s,k^{\prime}),
$\linebreak$
(t,l^{\prime})\}$, then ${R_{A,B}^{\rho}}^{{-}1}(e){\not=}\emptyset$: a contradiction.
In the latter case, since $a_{i}{=}(s,k^{\prime})$, $b_{j}{=}(t,l^{\prime})$, then there exists $f^{\prime},f^{\prime\prime}$ in ${\mathcal F}_{A,B}^{\rho}$ such that $f^{\prime}{\not=}f^{\prime\prime}$, $R_{A,B}^{\rho}(f^{\prime}){=}\{a_{i},b_{j}\}$, $R_{A,B}^{\rho}(f^{\prime\prime}){=}\{a_{i},b_{j}\}$, ${R_{A,B}^{\rho}}^{{-}1}(f^{\prime}){=}\emptyset$ and ${R_{A,B}^{\rho}}^{{-}1}(f^{\prime\prime}){=}\emptyset$: a contradiction.
Consequently, for all $(a_{1},a_{2}),(b_{1},b_{2}),(c_{1},c_{2}),(d_{1},d_{2}){\in}X$, if $(a_{1},a_{2}){\bowtie}(c_{1},c_{2})$ and $(b_{1},b_{2}){\bowtie}(d_{1},d_{2})$ then $(a_{1},a_{2}){S}(b_{1},b_{2})$ if and only if $(c_{1},c_{2}){S}(d_{1},d_{2})$.
\\
\\
Let $(W^{\prime},R^{\prime})$ be the quotient of $(X,S)$ modulo $\bowtie$~---~i.e.
\begin{itemize}
\item $W^{\prime}$ is the quotient set of $X$ modulo $\bowtie$,
\item $R^{\prime}$ is the binary relation on $W^{\prime}$ such that for all $(a_{1},a_{2}),(b_{1},b_{2}){\in}X$, $\lbrack(a_{1},a_{2})\rbrack{R^{\prime}}\lbrack(b_{1},b_{2})\rbrack$ if and only if $(a_{1},a_{2}){S}(b_{1},b_{2})$.
\end{itemize}
Since for all $(a_{1},a_{2}){\in}X$, there exists $s{\in}W$ such that $\{a_{1},a_{2}\}{=}\{(s,1),(s,2)\}$ and for all $(a_{1},a_{2}),(b_{1},b_{2}){\in}X$, $(a_{1},a_{2}){\bowtie}(b_{1},b_{2})$ if and only if there exists $s{\in}W$ such that $\{a_{1},a_{2}\}{=}\{(s,1),(s,2)\}$ and $\{b_{1},b_{2}\}{=}\{(s,1),(s,2)\}$, then let $\kappa:\ W^{\prime}{\longrightarrow}W$ be the function such that for all $(a_{1},a_{2}){\in}X$, $\kappa(\lbrack(a_{1},a_{2})\rbrack)$ is the unique $s{\in}W$ such that $\{a_{1},a_{2}\}{=}\{(s,1),(s,2)\}$.
Obviously, $\kappa$ is bijective.
%
\\
\\
We claim that for all $(a_{1},a_{2}),(b_{1},b_{2}){\in}X$, $\lbrack(a_{1},a_{2})\rbrack{R^{\prime}}\lbrack(b_{1},b_{2})\rbrack$ if and only if $\kappa(\lbrack(a_{1},a_{2})\rbrack){R}\kappa(\lbrack(b_{1},b_{2})\rbrack)$.
Let $(a_{1},a_{2}),(b_{1},b_{2}){\in}X$.
Let $s$ be the unique element in $W$ such that $\{a_{1},a_{2}\}{=}\{(s,1),(s,2)\}$ and $t$ be the unique element in $W$ such that $\{b_{1},b_{2}\}{=}\{(t,1),(t,2)\}$.
Obviously, the following conditions are equivalent:
\begin{itemize}
\item $\lbrack(a_{1},a_{2})\rbrack{R^{\prime}}\lbrack(b_{1},b_{2})\rbrack$,
\item $(a_{1},a_{2}){S}(b_{1},b_{2})$,
\item ${\mathcal F}^{\rho}_{A,B}{\models}\mathbf{A}(\mathbf{x_{1}},\mathbf{x_{2}},\mathbf{y_{1}},\mathbf{y_{2}})\ \lbrack a_{1},a_{2},b_{1},b_{2}\rbrack$,
\item $s{\not=}t$ and there exists $i,j{\in}\{1,2\}$ such that for exactly $1$~$e$ in ${\mathcal F}_{A,B}^{\rho}$, $R_{A,B}^{\rho}(e){=}
$\linebreak$
\{a_{i},b_{j}\}$ and ${R_{A,B}^{\rho}}^{{-}1}(e){=}\emptyset$,
\item $\kappa(\lbrack(a_{1},a_{2})\rbrack){R}\kappa(\lbrack(b_{1},b_{2})\rbrack)$.
\end{itemize}
Hence, for all $(a_{1},a_{2}),(b_{1},b_{2}){\in}X$, $\lbrack(a_{1},a_{2})\rbrack{R^{\prime}}\lbrack(b_{1},b_{2})\rbrack$ if and only if $\kappa(\lbrack(a_{1},a_{2})\rbrack){R}
$\linebreak$
\kappa(\lbrack(b_{1},b_{2})\rbrack)$.
\\
\\
All in all, we have proved that, $(W^{\prime},R^{\prime})$ is isomorphic with $(W,R)$.
\\
\\
Thus, we have shown that the class of all irreflexive symmetric frames with at least $2$ elements is relatively elementary definable in ${\mathcal K}_{2}$.
\medskip
\end{proof}
\begin{lemma}\label{lemma:relatively:elementary:definable:L2}
The class of all irreflexive symmetric frames with at least $2$~elements is relatively elementary definable in ${\mathcal L}_{2}$.
\end{lemma}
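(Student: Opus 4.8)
The plan is to mirror the proof of Lemma~\ref{lemma:relatively:elementary:definable:K2}, replacing the ``forward'' encoding of an upper element by its ``complementary'' encoding. First I would record the key dictionary fact: in every galaxy ${\mathcal F}_{A,B}^{\rho}$ the set $\{t:\ t{R}t\}$ of reflexive points is exactly the lower part $B$. Indeed, every $t{\in}B$ is related to itself since $B{\times}B{\subseteq}R_{A,B}^{\rho}$, whereas every $t{\in}A$ satisfies $R_{A,B}^{\rho}(t){=}\rho(t){\subseteq}B$ while $t{\not\in}B$, so $t$ is irreflexive. Consequently, for all assignments $g$ and all individual variables $\mathbf{z},\mathbf{x_{i}},\mathbf{y_{j}}$, we have ${\mathcal F}_{A,B}^{\rho}{\models}\mathbf{R}(\mathbf{z}){=}\{\mathbf{x_{i}},\mathbf{y_{j}}\}^{\mathbf{c}}\ \lbrack g(\mathbf{z}),g(\mathbf{x_{i}}),g(\mathbf{y_{j}})\rbrack$ if and only if $\rho(g(\mathbf{z})){=}B{\setminus}\{g(\mathbf{x_{i}}),g(\mathbf{y_{j}})\}$; when $\parallel B\parallel{\geq}4$ and the two named points are distinct, this in particular forces $g(\mathbf{z}){\in}A$ and $g(\mathbf{x_{i}}),g(\mathbf{y_{j}}){\in}B$. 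Thus ``$\mathbf{R}(\mathbf{z}){=}\{\cdot,\cdot\}^{\mathbf{c}}$'' plays, for galaxies in ${\mathcal L}_{2}$, exactly the r\^ole that ``$\mathbf{R}(\mathbf{z}){=}\{\cdot,\cdot\}$'' played for galaxies in ${\mathcal K}_{2}$.

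Next, given an arbitrary irreflexive symmetric frame $(W,R)$ with at least $2$ elements, I would build a galaxy ${\mathcal F}_{A,B}^{\rho}{\in}{\mathcal L}_{2}$ by dualizing the ${\mathcal K}_{2}$ construction: put $B{=}W{\times}\{1,2\}$, $E{=}\{\{s_{1},s_{2}\}:\ s_{1},s_{2}{\in}W$ and $s_{1}{R}s_{2}\}$ and (assuming without loss of generality that $E$, $W{\times}\{3,4\}$ and $B$ are pairwise disjoint) $A{=}E{\cup}(W{\times}\{3,4\})$, and set $\rho(\{s_{1},s_{2}\}){=}B{\setminus}\{(s_{1},1),(s_{2},1)\}$ for each $\{s_{1},s_{2}\}{\in}E$ and $\rho((s,3)){=}\rho((s,4)){=}B{\setminus}\{(s,1),(s,2)\}$ for each $s{\in}W$. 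Since $\parallel W\parallel{\geq}2$ we get $\parallel B\parallel{=}2{\times}\parallel W\parallel{\geq}4$ and $\parallel A\parallel{\geq}\parallel W{\times}\{3,4\}\parallel{\geq}4$, while irreflexivity of $R$ guarantees that, for each $u{\in}A$, the set $B{\setminus}\rho(u)$ consists of two distinct points; hence ${\mathcal F}_{A,B}^{\rho}{\in}{\mathcal L}_{2}$. I would then take for $\mathbf{U}$, $\mathbf{E}$, $\mathbf{A}$ the very formulas used in the proof of Lemma~\ref{lemma:relatively:elementary:definable:K2} with every occurrence of ``$\mathbf{R}(\mathbf{y}){=}\{\ldots\}$'' rewritten as ``$\mathbf{R}(\mathbf{y}){=}\{\ldots\}^{\mathbf{c}}$''.

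The verification then proceeds exactly as in the ${\mathcal K}_{2}$ case, by the dictionary fact above: for a two-element subset $\{b_{1},b_{2}\}$ of $B$, the number of $u{\in}A$ with $\rho(u){=}B{\setminus}\{b_{1},b_{2}\}$ equals $2$ precisely when $\{b_{1},b_{2}\}{=}\{(s,1),(s,2)\}$ for some $s{\in}W$ (witnessed by $(s,3)$ and $(s,4)$) and is $0$ or $1$ otherwise. Hence $X{=}\{((s,1),(s,2)),((s,2),(s,1)):\ s{\in}W\}$ is non-empty, $\bowtie$ identifies the two orderings of each pair $\{(s,1),(s,2)\}$, and $(a_{1},a_{2}){S}(b_{1},b_{2})$ holds if and only if the underlying vertices $s,t$ are distinct and $s{R}t$, the edge $\{s,t\}$ being the unique $u{\in}A$ with $\rho(u){=}B{\setminus}\{(s,1),(t,1)\}$. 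One checks, exactly as in Lemma~\ref{lemma:relatively:elementary:definable:K2}, that $\bowtie$ is a congruence on $(X,S)$ and that sending $\lbrack(a_{1},a_{2})\rbrack$ to the unique $s$ with $\{a_{1},a_{2}\}{=}\{(s,1),(s,2)\}$ is an isomorphism from $(X,S)/{\bowtie}$ onto $(W,R)$. The only point requiring care, and the main (mild) obstacle, is the opening observation that the reflexive points of a galaxy are exactly its lower part, since this is what makes the complement abbreviation ``$^{\mathbf{c}}$'' capture the intended complementary encoding; once this dictionary is in place, the combinatorial counting and the isomorphism argument are word-for-word those of Lemma~\ref{lemma:relatively:elementary:definable:K2}.
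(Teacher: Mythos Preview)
Your proposal is correct and follows essentially the same approach as the paper: the paper builds exactly the galaxy you describe (with $\rho$ taking complementary values in $B$), uses precisely the formulas obtained from the ${\mathcal K}_{2}$ proof by replacing each ``$\mathbf{R}(\mathbf{y}){=}\{\ldots\}$'' by ``$\mathbf{R}(\mathbf{y}){=}\{\ldots\}^{\mathbf{c}}$'', and carries out the congruence and isomorphism verification verbatim. Your explicit ``dictionary fact'' that the reflexive points of a galaxy coincide with its lower part $B$ is left implicit in the paper but is indeed the reason the abbreviation $\{\cdot,\cdot\}^{\mathbf{c}}$ does what is needed.
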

\begin{proof}
Let $(W,R)$ be an arbitrary irreflexive symmetric frame with at least $2$ elements.
Let $E$ be $\{\{s_{1},s_{2}\}:\ s_{1},s_{2}{\in}W$ and $s_{1}{R}s_{2}\}$.
Notice that for all $s_{1},s_{2}{\in}W$, if $\{s_{1},s_{2}\}{\in}E$ then $s_{1}{\not=}s_{2}$.
Without loss of generality, we can assume that $E$ and $W{\times}\{(3,4)\}$ are disjoint.
Let $A$ be $E{\cup}(W{\times}\{3,4\})$.
Let $B$ be $W{\times}\{1,2\}$.
Let ${\mathcal F}^{\rho}_{A,B}$ be the galaxy such that
\begin{itemize}
\item for all $\{s_{1},s_{2}\}{\in}E$, $\rho(\{s_{1},s_{2}\}){=}B{\setminus}\{(s_{1},1),(s_{2},1)\}$,
\item for all $s{\in}W$, $\rho((s,3)){=}B{\setminus}\{(s,1),(s,2)\}$,
\item for all $s{\in}W$, $\rho((s,4)){=}B{\setminus}\{(s,1),(s,2)\}$.
\end{itemize}
%
%
Obviously, ${\mathcal F}^{\rho}_{A,B}$ is in ${\mathcal L}_{2}$.
\\
\\
Let $X$ be $\{(a_{1},a_{2}):\ a_{1},a_{2}{\in}W^{\rho}_{A,B}$ and ${\mathcal F}^{\rho}_{A,B}{\models}\mathbf{U}(\mathbf{x_{1}},\mathbf{x_{2}})\ \lbrack a_{1},a_{2}\rbrack\}$ where
\begin{itemize}
\item $\mathbf{U}(\mathbf{x_{1}},\mathbf{x_{2}})::=\mathbf{x_{1}}{\not=}\mathbf{x_{2}}\wedge\mathbf{R}(\mathbf{x_{1}},\mathbf{x_{2}})\wedge\mathbf{R}(\mathbf{x_{2}},\mathbf{x_{1}})\wedge\exists^{{=}2}\mathbf{y}(\mathbf{R}(\mathbf{y}){=}\{\mathbf{x_{1}},\mathbf{x_{2}}\}^{\mathbf{c}}\wedge\neg\exists\mathbf{z}\mathbf{R}(\mathbf{z},\mathbf{y}))$.
\end{itemize}
Obviously, for all $s{\in}W$, $((s,1),(s,2)),((s,2),(s,1)){\in}X$.
Hence, $X$ is non-empty.
\\
\\
Obviously, for all $(a_{1},a_{2}){\in}X$, there exists $s{\in}W$ such that $\{a_{1},a_{2}\}{=}\{(s,1),(s,
$\linebreak$
2)\}$.
\\
\\
Let $\bowtie$ be the binary relation on $X$ such that for all $(a_{1},a_{2}),(b_{1},b_{2}){\in}X$, $(a_{1},a_{2}){\bowtie}
$\linebreak$
(b_{1},b_{2})$ if and only if ${\mathcal F}^{\rho}_{A,B}{\models}\mathbf{E}(\mathbf{x_{1}},\mathbf{x_{2}},\mathbf{y_{1}},\mathbf{y_{2}})\ \lbrack a_{1},a_{2},b_{1},b_{2}\rbrack$ where
\begin{itemize}
\item $\mathbf{E}(\mathbf{x_{1}},\mathbf{x_{2}},\mathbf{y_{1}},\mathbf{y_{2}})::=\mathbf{U}(\mathbf{x_{1}},\mathbf{x_{2}})\wedge\mathbf{U}(\mathbf{y_{1}},\mathbf{y_{2}})\wedge\{\mathbf{x_{1}},\mathbf{x_{2}}\}{=}\{\mathbf{y_{1}},\mathbf{y_{2}}\}$.
\end{itemize}
Obviously, $\bowtie$ is an equivalence relation on $X$.
For all $(a_{1},a_{2}){\in}X$, the equivalence class of $(a_{1},a_{2})$ modulo $\bowtie$ will be denoted $\lbrack(a_{1},a_{2})\rbrack$.
\\
\\
Since for all $(a_{1},a_{2}){\in}X$, there exists $s{\in}W$ such that $\{a_{1},a_{2}\}{=}\{(s,1),(s,2)\}$, then for all $(a_{1},a_{2}),(b_{1},b_{2}){\in}X$, $(a_{1},a_{2}){\bowtie}(b_{1},b_{2})$ if and only if there exists $s{\in}W$ such that $\{a_{1},a_{2}\}{=}\{(s,1),(s,2)\}$ and $\{b_{1},b_{2}\}{=}\{(s,1),(s,2)\}$.
\\
\\
Let $S$ be the binary relation on $X$ such that for all $(a_{1},a_{2}),(b_{1},b_{2}){\in}X$, $(a_{1},a_{2}){S}
$\linebreak$
(b_{1},b_{2})$ if and only if ${\mathcal F}^{\rho}_{A,B}{\models}\mathbf{A}(\mathbf{x_{1}},\mathbf{x_{2}},\mathbf{y_{1}},\mathbf{y_{2}})\ \lbrack a_{1},a_{2},b_{1},b_{2}\rbrack$ where
\begin{itemize}
\item $\mathbf{A}(\mathbf{x_{1}},\mathbf{x_{2}},\mathbf{y_{1}},\mathbf{y_{2}})::=\mathbf{U}(\mathbf{x_{1}},\mathbf{x_{2}})\wedge\mathbf{U}(\mathbf{y_{1}},\mathbf{y_{2}})\wedge\{\mathbf{x_{1}},\mathbf{x_{2}}\}{\cap}\{\mathbf{y_{1}},\mathbf{y_{2}}\}{=}\emptyset\wedge
$\linebreak$
\bigvee\{\exists^{{=}1}\mathbf{z}(\mathbf{R}(\mathbf{z}){=}\{\mathbf{x_{i}},\mathbf{y_{j}}\}^{\mathbf{c}}\wedge\neg\exists\mathbf{t}\mathbf{R}(\mathbf{t},\mathbf{z})):\ 1{\leq}i,j{\leq}2\}$.
\end{itemize}
We claim that for all $(a_{1},a_{2}),(b_{1},b_{2}),(c_{1},c_{2}),(d_{1},d_{2}){\in}X$, if $(a_{1},a_{2}){\bowtie}(c_{1},c_{2})$ and $(b_{1},b_{2}){\bowtie}(d_{1},d_{2})$ then $(a_{1},a_{2}){S}(b_{1},b_{2})$ if and only if $(c_{1},c_{2}){S}(d_{1},d_{2})$.
If not, let $(a_{1},a_{2}),(b_{1},b_{2}),(c_{1},c_{2}),(d_{1},d_{2}){\in}X$ be such that $(a_{1},a_{2}){\bowtie}(c_{1},c_{2})$, $(b_{1},b_{2}){\bowtie}(d_{1},
$\linebreak$
d_{2})$ and either $(a_{1},a_{2}){S}(b_{1},b_{2})$ and not $(c_{1},c_{2}){S}(d_{1},d_{2})$, or not $(a_{1},a_{2}){S}(b_{1},b_{2})$ and $(c_{1},c_{2}){S}(d_{1},d_{2})$.
Without loss of generality, suppose $(a_{1},a_{2}){S}(b_{1},b_{2})$ and not $(c_{1},c_{2}){S}(d_{1},d_{2})$.
Thus, ${\mathcal F}^{\rho}_{A,B}{\models}\mathbf{A}(\mathbf{x_{1}},\mathbf{x_{2}},\mathbf{y_{1}},\mathbf{y_{2}})\ \lbrack a_{1},a_{2},b_{1},b_{2}\rbrack$ and ${\mathcal F}^{\rho}_{A,B}{\not\models}
$\linebreak$
\mathbf{A}(\mathbf{x_{1}},\mathbf{x_{2}},\mathbf{y_{1}},\mathbf{y_{2}})\ \lbrack c_{1},c_{2},d_{1},d_{2}\rbrack$.
Consequently,
\begin{description}
\item[$(C_{1})$] $\{a_{1},a_{2}\}{\cap}\{b_{1},b_{2}\}{=}\emptyset$,
\item[$(C_{2})$] there exists $i,j{\in}\{1,2\}$ such that for exactly $1$~$e$ in ${\mathcal F}_{A,B}^{\rho}$, $R_{A,B}^{\rho}(e){=}B{\setminus}\{a_{i},
$\linebreak$
b_{j}\}$ and ${R_{A,B}^{\rho}}^{{-}1}(e){=}\emptyset$,
\item[$(C_{3})$] at least one of the following conditions holds:
\begin{description}
\item[$(C_{3}^{a})$] $\{c_{1},c_{2}\}{\cap}\{d_{1},d_{2}\}{\not=}\emptyset$,
\item[$(C_{3}^{b})$] for all $k,l{\in}\{1,2\}$, either for all $f$ in ${\mathcal F}_{A,B}^{\rho}$, if $R_{A,B}^{\rho}(f){=}B{\setminus}\{c_{k},d_{l}\}$ then ${R_{A,B}^{\rho}}^{{-}1}(f){\not=}\emptyset$, or there exists $f^{\prime},f^{\prime\prime}$ in ${\mathcal F}_{A,B}^{\rho}$ such that $f^{\prime}{\not=}f^{\prime\prime}$, $R_{A,B}^{\rho}(f^{\prime}){=}B{\setminus}\{c_{k},d_{l}\}$, $R_{A,B}^{\rho}(f^{\prime\prime}){=}B{\setminus}\{c_{k},d_{l}\}$, ${R_{A,B}^{\rho}}^{{-}1}(f^{\prime}){=}\emptyset$ and
\linebreak$
{R_{A,B}^{\rho}}^{{-}1}(f^{\prime\prime}){=}\emptyset$.
\end{description}
\end{description}
%
%
Let $s,t,u,v{\in}W$ be such that $\{a_{1},a_{2}\}{=}\{(s,1),(s,2)\}$, $\{b_{1},b_{2}\}{=}\{(t,1),(t,2)\}$, $\{c_{1},c_{2}\}{=}\{(u,1),(u,2)\}$ and $\{d_{1},d_{2}\}{=}\{(v,1),(v,2)\}$.
Since $(a_{1},a_{2}){\bowtie}(c_{1},c_{2})$ and $(b_{1},b_{2}){\bowtie}(d_{1},d_{2})$, then ${\mathcal F}^{\rho}_{A,B}{\models}\mathbf{E}(\mathbf{x_{1}},\mathbf{x_{2}},\mathbf{y_{1}},\mathbf{y_{2}})\ \lbrack a_{1},a_{2},c_{1},c_{2}\rbrack$ and ${\mathcal F}^{\rho}_{A,B}{\models}\mathbf{E}(\mathbf{x_{1}},
$\linebreak$
\mathbf{x_{2}},\mathbf{y_{1}},\mathbf{y_{2}})\ \lbrack b_{1},b_{2},d_{1},d_{2}\rbrack$.
Hence, 
$\{a_{1},a_{2}\}{=}\{c_{1},c_{2}\}$ and $\{b_{1},b_{2}\}{=}\{d_{1},d_{2}\}$.
Since $\{a_{1},a_{2}\}{=}\{(s,1),(s,2)\}$, $\{b_{1},b_{2}\}{=}\{(t,1),(t,2)\}$, $\{c_{1},c_{2}\}{=}\{(u,1),(u,2)\}$ and $\{d_{1},
$\linebreak$
d_{2}\}{=}\{(v,1),(v,2)\}$, then $s{=}u$ and $t{=}v$.
Moreover, by $(C_{1})$, $s{\not=}t$.
In other respect, by $(C_{2})$, there exists $k^{\prime},l^{\prime}{\in}\{1,2\}$ such that $a_{i}{=}(s,k^{\prime})$, $b_{j}{=}(t,l^{\prime})$ and for exactly $1$~$e$ in ${\mathcal F}_{A,B}^{\rho}$, $R_{A,B}^{\rho}(e){=}B{\setminus}\{(s,k^{\prime}),(t,l^{\prime})\}$ and ${R_{A,B}^{\rho}}^{{-}1}(e){=}\emptyset$.
Since $\{c_{1},c_{2}\}{=}\{(u,1),(u,2)\}$, $s{=}u$, $\{d_{1},d_{2}\}{=}\{(v,1),(v,2)\}$, $t{=}v$ and $s{\not=}t$, then $\{c_{1},
$\linebreak$
c_{2}\}{\cap}\{d_{1},d_{2}\}{=}\emptyset$.
Moreover, there exists $k,l{\in}\{1,2\}$ such that $c_{k}{=}(s,k^{\prime})$ and $d_{l}{=}(t,l^{\prime})$.
Thus $(C_{3}^{a})$ does not hold.
Consequently, by $(C_{3})$, $(C_{3}^{b})$ holds and either for all $f$ in ${\mathcal F}_{A,B}^{\rho}$, if $R_{A,B}^{\rho}(f){=}B{\setminus}\{(s,k^{\prime}),(t,l^{\prime})\}$ then ${R_{A,B}^{\rho}}^{{-}1}(f){\not=}\emptyset$, or there exists $f^{\prime},f^{\prime\prime}$ in ${\mathcal F}_{A,B}^{\rho}$ such that $f^{\prime}{\not=}f^{\prime\prime}$, $R_{A,B}^{\rho}(f^{\prime}){=}B{\setminus}\{(s,k^{\prime}),(t,l^{\prime})\}$, $R_{A,B}^{\rho}(f^{\prime\prime}){=}
$\linebreak$
B{\setminus}\{(s,k^{\prime}),(t,l^{\prime})\}$, ${R_{A,B}^{\rho}}^{{-}1}(f^{\prime}){=}\emptyset$ and ${R_{A,B}^{\rho}}^{{-}1}(f^{\prime\prime}){=}\emptyset$.
In the former case, since $R_{A,B}^{\rho}(e){=}B{\setminus}\{(s,k^{\prime}),(t,l^{\prime})\}$, then ${R_{A,B}^{\rho}}^{{-}1}(e){\not=}\emptyset$: a contradiction.
In the latter case, since $a_{i}{=}(s,k^{\prime})$, $b_{j}{=}(t,l^{\prime})$, then there exists $f^{\prime},f^{\prime\prime}$ in ${\mathcal F}_{A,B}^{\rho}$ such that $f^{\prime}{\not=}f^{\prime\prime}$, $R_{A,B}^{\rho}(f^{\prime}){=}B{\setminus}\{a_{i},b_{j}\}$, $R_{A,B}^{\rho}(f^{\prime\prime}){=}B{\setminus}\{a_{i},b_{j}\}$, ${R_{A,B}^{\rho}}^{{-}1}(f^{\prime}){=}\emptyset$ and ${R_{A,B}^{\rho}}^{{-}1}(f^{\prime\prime}){=}
$\linebreak$
\emptyset$: a contradiction.
Hence, for all $(a_{1},a_{2}),(b_{1},b_{2}),(c_{1},c_{2}),(d_{1},d_{2}){\in}X$, if $(a_{1},a_{2})
$\linebreak$
{\bowtie}(c_{1},c_{2})$ and $(b_{1},b_{2}){\bowtie}(d_{1},d_{2})$ then $(a_{1},a_{2}){S}(b_{1},b_{2})$ if and only if $(c_{1},c_{2}){S}(d_{1},
$\linebreak$
d_{2})$.
\\
\\
Let $(W^{\prime},R^{\prime})$ be the quotient of $(X,S)$ modulo $\bowtie$~---~i.e.
\begin{itemize}
\item $W^{\prime}$ is the quotient set of $X$ modulo $\bowtie$,
\item $R^{\prime}$ is the binary relation on $W^{\prime}$ such that for all $(a_{1},a_{2}),(b_{1},b_{2}){\in}X$, $\lbrack(a_{1},a_{2})\rbrack{R^{\prime}}\lbrack(b_{1},b_{2})\rbrack$ if and only if $(a_{1},a_{2}){S}(b_{1},b_{2})$.
\end{itemize}
Since for all $(a_{1},a_{2}){\in}X$, there exists $s{\in}W$ such that $\{a_{1},a_{2}\}{=}\{(s,1),(s,2)\}$ and for all $(a_{1},a_{2}),(b_{1},b_{2}){\in}X$, $(a_{1},a_{2}){\bowtie}(b_{1},b_{2})$ if and only if there exists $s{\in}W$ such that $\{a_{1},a_{2}\}{=}\{(s,1),(s,2)\}$ and $\{b_{1},b_{2}\}{=}\{(s,1),(s,2)\}$, then let $\kappa:\ W^{\prime}{\longrightarrow}W$ be the function such that for all $(a_{1},a_{2}){\in}X$, $\kappa(\lbrack(a_{1},a_{2})\rbrack)$ is the unique $s{\in}W$ such that $\{a_{1},a_{2}\}{=}\{(s,1),(s,2)\}$.
Obviously, $\kappa$ is bijective.
%
\\
\\
We claim that for all $(a_{1},a_{2}),(b_{1},b_{2}){\in}X$, $\lbrack(a_{1},a_{2})\rbrack{R^{\prime}}\lbrack(b_{1},b_{2})\rbrack$ if and only if $\kappa(\lbrack(a_{1},a_{2})\rbrack){R}\kappa(\lbrack(b_{1},b_{2})\rbrack)$.
Let $(a_{1},a_{2}),(b_{1},b_{2}){\in}X$.
Let $s$ be the unique element in $W$ such that $\{a_{1},a_{2}\}{=}\{(s,1),(s,2)\}$ and $t$ be the unique element in $W$ such that $\{b_{1},b_{2}\}{=}\{(t,1),(t,2)\}$.
Obviously, the following conditions are equivalent:
\begin{itemize}
\item $\lbrack(a_{1},a_{2})\rbrack{R^{\prime}}\lbrack(b_{1},b_{2})\rbrack$,
\item $(a_{1},a_{2}){S}(b_{1},b_{2})$,
\item ${\mathcal F}^{\rho}_{A,B}{\models}\mathbf{A}(\mathbf{x_{1}},\mathbf{x_{2}},\mathbf{y_{1}},\mathbf{y_{2}})\ \lbrack a_{1},a_{2},b_{1},b_{2}\rbrack$,
\item $s{\not=}t$ and there exists $i,j{\in}\{1,2\}$ such that for exactly $1$~$e$ in ${\mathcal F}_{A,B}^{\rho}$, $R_{A,B}^{\rho}(e){=}
$\linebreak$
B{\setminus}\{a_{i},b_{j}\}$ and ${R_{A,B}^{\rho}}^{{-}1}(e){=}\emptyset$,
\item $\kappa(\lbrack(a_{1},a_{2})\rbrack){R}\kappa(\lbrack(b_{1},b_{2})\rbrack)$.
\end{itemize}
Thus, for all $(a_{1},a_{2}),(b_{1},b_{2}){\in}X$, $\lbrack(a_{1},a_{2})\rbrack{R^{\prime}}\lbrack(b_{1},b_{2})\rbrack$ if and only if $\kappa(\lbrack(a_{1},a_{2})\rbrack){R}
$\linebreak$
\kappa(\lbrack(b_{1},b_{2})\rbrack)$.
\\
\\
All in all, we have proved that, $(W^{\prime},R^{\prime})$ is isomorphic with $(W,R)$.
\\
\\
Consequently, we have shown that the class of all irreflexive symmetric frames with at least $2$ elements is relatively elementary definable in ${\mathcal L}_{2}$.
\medskip
\end{proof}
\begin{lemma}\label{lemma:first:case:2:N:moins}
Let $\L$ be an Euclidean modal logic.
If $\{2\}{\times}\mathbf{N}^{-}{\subseteq}\mathtt{S}_{\L}$ then $\Th(\Fr(\L))$ is undecidable.
\end{lemma}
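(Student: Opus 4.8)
Looking at this, I need to prove that if $\{2\}\times\mathbf{N}^- \subseteq \mathtt{S}_\L$, then $\Th(\Fr(\L))$ is undecidable.

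The plan is to chain together the tools already developed. First I would invoke Lemma~\ref{lemma:gamma}, which says that under the hypothesis $\{2\}\times\mathbf{N}^-\subseteq\mathtt{S}_\L$, the theory $\Th(\Fr(\L))$ is contained in the first-order theory of ${\mathcal K}_2$. Then I would use Lemma~\ref{lemma:relatively:elementary:definable:K2}, which states that the class of all irreflexive symmetric frames with at least $2$ elements is relatively elementary definable in ${\mathcal K}_2$. Combining this with Proposition~\ref{Syntactic:Restriction:Lemma:hereditarily:undecidable} (relative elementary definability preserves hereditary undecidability upward), it would suffice to know that the theory of the class of all irreflexive symmetric frames with at least $2$ elements is hereditarily undecidable.

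The main obstacle — the piece that is not a one-line citation of an earlier lemma in the excerpt — is establishing that $\Th(\mathcal{C})$ is hereditarily undecidable, where $\mathcal{C}$ is the class of irreflexive symmetric frames with at least $2$ elements. This is essentially the classical fact that the first-order theory of (undirected, loopless) graphs is hereditarily undecidable; it follows from the finite undecidability results of Trakhtenbrot/Lavrov/Ershov, since one can relatively interpret an undecidable finitely axiomatizable theory (or arbitrary graphs, which carry a hereditarily undecidable theory) inside graphs. I would cite Ershov~\cite{Ershov:1980} for this. Here I must be slightly careful: the restriction ``at least $2$ elements'' and the irreflexivity/symmetry are exactly what makes the class fit ${\mathcal K}_2$; the relative-definability machinery handles the bookkeeping, so no new interpretation needs to be constructed beyond what Lemma~\ref{lemma:relatively:elementary:definable:K2} already provides.

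Concretely, the proof runs: assume $\{2\}\times\mathbf{N}^-\subseteq\mathtt{S}_\L$. By Lemma~\ref{lemma:gamma}, $\Th(\Fr(\L))$ is contained in the first-order theory of ${\mathcal K}_2$, i.e. $\Th(\Fr(\L))\subseteq\Th({\mathcal K}_2)$. The theory of the class of all irreflexive symmetric frames with at least $2$ elements is hereditarily undecidable (classical; see~\cite{Ershov:1980}). By Lemma~\ref{lemma:relatively:elementary:definable:K2}, this class is relatively elementary definable in ${\mathcal K}_2$, so by Proposition~\ref{Syntactic:Restriction:Lemma:hereditarily:undecidable}, $\Th({\mathcal K}_2)$ is hereditarily undecidable. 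Since $\Th(\Fr(\L))$ is a theory contained in $\Th({\mathcal K}_2)$, hereditary undecidability of the latter forces $\Th(\Fr(\L))$ to be undecidable. Writing this out:

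\begin{proof}
Suppose $\{2\}{\times}\mathbf{N}^{-}{\subseteq}\mathtt{S}_{\L}$.
By Lemma~\ref{lemma:gamma}, $\Th(\Fr(\L))$ is contained in the first-order theory of ${\mathcal K}_{2}$~---~i.e. $\Th(\Fr(\L)){\subseteq}\Th({\mathcal K}_{2})$.
As is well-known, the theory of the class of all irreflexive symmetric frames with at least $2$~elements is hereditarily undecidable; see~\cite{Ershov:1980}.
By Lemma~\ref{lemma:relatively:elementary:definable:K2}, the class of all irreflexive symmetric frames with at least $2$~elements is relatively elementary definable in ${\mathcal K}_{2}$.
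Hence, by Proposition~\ref{Syntactic:Restriction:Lemma:hereditarily:undecidable}, $\Th({\mathcal K}_{2})$ is hereditarily undecidable.
Since $\Th(\Fr(\L))$ is a theory and $\Th(\Fr(\L)){\subseteq}\Th({\mathcal K}_{2})$, then $\Th(\Fr(\L))$ is undecidable.
\end{proof}
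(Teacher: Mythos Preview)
Your proof is correct and follows essentially the same approach as the paper's own proof: both invoke the hereditary undecidability of the theory of irreflexive symmetric frames with at least two elements (citing Ershov), transfer it to $\Th({\mathcal K}_{2})$ via Lemma~\ref{lemma:relatively:elementary:definable:K2} and Proposition~\ref{Syntactic:Restriction:Lemma:hereditarily:undecidable}, and then conclude using the containment $\Th(\Fr(\L)){\subseteq}\Th({\mathcal K}_{2})$ from Lemma~\ref{lemma:gamma}. The only difference is the order in which the steps are presented.
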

\begin{proof}
Suppose $\{2\}{\times}\mathbf{N}^{-}{\subseteq}\mathtt{S}_{\L}$.
Since the first-order theory of the class of all irreflexive symmetric frames with at least $2$ elements is hereditarily undecidable~\cite[Page~$273$]{Ershov:1980}, then by Proposition~\ref{Syntactic:Restriction:Lemma:hereditarily:undecidable} and Lemma~\ref{lemma:relatively:elementary:definable:K2}, the first-order theory of ${\mathcal K}_{2}$ is hereditarily undecidable.
Since $\{2\}{\times}\mathbf{N}^{-}{\subseteq}\mathtt{S}_{\L}$, then by Lemma~\ref{lemma:gamma}, $\Th(\Fr(\L))$ is contained in the first-order theory of ${\mathcal K}_{2}$.
Since the first-order theory of ${\mathcal K}_{2}$ is hereditarily undecidable, then $\Th(\Fr(\L))$ is undecidable.
\medskip
\end{proof}
\begin{lemma}\label{lemma:second:case:N:plus:1}
Let $\L$ be an Euclidean modal logic.
If $\mathbf{N}^{+}{\times}\{2\}{\subseteq}\mathtt{S}_{\L}$ then $\Th(\Fr(\L))$ is undecidable.
\end{lemma}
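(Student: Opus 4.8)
The final statement to prove is Lemma~\ref{lemma:second:case:N:plus:1}: if $\mathbf{N}^{+}{\times}\{2\}{\subseteq}\mathtt{S}_{\L}$ then $\Th(\Fr(\L))$ is undecidable. This is the exact mirror image of the preceding Lemma~\ref{lemma:first:case:2:N:moins}, with the role of ${\mathcal K}_{2}$ replaced by ${\mathcal L}_{2}$, of Lemma~\ref{lemma:relatively:elementary:definable:K2} replaced by Lemma~\ref{lemma:relatively:elementary:definable:L2}, and of Lemma~\ref{lemma:gamma} replaced by Lemma~\ref{lemma:delta}. So the plan is simply to replay the argument of Lemma~\ref{lemma:first:case:2:N:moins} in the ${\mathcal L}_{2}$ setting.

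Concretely, I would proceed as follows. Assume $\mathbf{N}^{+}{\times}\{2\}{\subseteq}\mathtt{S}_{\L}$. First, invoke the fact that the first-order theory of the class of all irreflexive symmetric frames with at least $2$~elements is hereditarily undecidable~\cite[Page~$273$]{Ershov:1980}. By Lemma~\ref{lemma:relatively:elementary:definable:L2}, this class is relatively elementary definable in ${\mathcal L}_{2}$, so Proposition~\ref{Syntactic:Restriction:Lemma:hereditarily:undecidable} yields that the first-order theory of ${\mathcal L}_{2}$ is hereditarily undecidable. Next, since $\mathbf{N}^{+}{\times}\{2\}{\subseteq}\mathtt{S}_{\L}$, Lemma~\ref{lemma:delta} gives that $\Th(\Fr(\L))$ is contained in the first-order theory of ${\mathcal L}_{2}$. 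Finally, since the first-order theory of ${\mathcal L}_{2}$ is hereditarily undecidable and $\Th(\Fr(\L))$ is a subtheory of it, hereditary undecidability forces $\Th(\Fr(\L))$ to be undecidable. (Strictly, one should note that $\Th(\Fr(\L))$ is indeed a theory in the sense defined above, being the theory of a class of frames, so the definition of ``hereditarily undecidable'' applies to it.)

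There is essentially no obstacle here: all the substantive work has already been localized into Lemmas~\ref{lemma:relatively:elementary:definable:L2} and~\ref{lemma:delta}, Proposition~\ref{Syntactic:Restriction:Lemma:hereditarily:undecidable}, and the cited result of Ershov. The only thing to be careful about is the direction of the containment in Lemma~\ref{lemma:delta} ($\Th(\Fr(\L))$ is the \emph{smaller} theory, contained in $\Th({\mathcal L}_2)$), which is precisely what is needed for hereditary undecidability to bite. The proof is therefore four sentences long and structurally identical to that of Lemma~\ref{lemma:first:case:2:N:moins}.

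\begin{proof}
Suppose $\mathbf{N}^{+}{\times}\{2\}{\subseteq}\mathtt{S}_{\L}$. Since the first-order theory of the class of all irreflexive symmetric frames with at least $2$ elements is hereditarily undecidable~\cite[Page~$273$]{Ershov:1980}, then by Proposition~\ref{Syntactic:Restriction:Lemma:hereditarily:undecidable} and Lemma~\ref{lemma:relatively:elementary:definable:L2}, the first-order theory of ${\mathcal L}_{2}$ is hereditarily undecidable. Since $\mathbf{N}^{+}{\times}\{2\}{\subseteq}\mathtt{S}_{\L}$, then by Lemma~\ref{lemma:delta}, $\Th(\Fr(\L))$ is contained in the first-order theory of ${\mathcal L}_{2}$. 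Since the first-order theory of ${\mathcal L}_{2}$ is hereditarily undecidable, then $\Th(\Fr(\L))$ is undecidable.

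\medskip

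\end{proof}
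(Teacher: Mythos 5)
Your proof is correct and is essentially identical to the paper's own proof: the same chain through Ershov's hereditary undecidability result, Proposition~\ref{Syntactic:Restriction:Lemma:hereditarily:undecidable} with Lemma~\ref{lemma:relatively:elementary:definable:L2}, and then Lemma~\ref{lemma:delta} to conclude. Nothing further is needed.
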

\begin{proof}
Suppose $\mathbf{N}^{+}{\times}\{2\}{\subseteq}\mathtt{S}_{\L}$.
Since the first-order theory of the class of all irreflexive symmetric frames with at least $2$ elements is hereditarily undecidable~\cite[Page~$273$]{Ershov:1980}, then by Proposition~\ref{Syntactic:Restriction:Lemma:hereditarily:undecidable} and Lemma~\ref{lemma:relatively:elementary:definable:L2}, the first-order theory of ${\mathcal L}_{2}$ is hereditarily undecidable.
Since $\mathbf{N}^{+}{\times}\{2\}{\subseteq}\mathtt{S}_{\L}$, then by Lemma~\ref{lemma:delta}, $\Th(\Fr(\L))$ is contained in the first-order theory of ${\mathcal L}_{2}$.
Since the first-order theory of ${\mathcal L}_{2}$ is hereditarily undecidable, then $\Th(\Fr(\L))$ is undecidable.
\medskip
\end{proof}
\section{Relativizations}\label{section:about:relativization}
In this section, we adapt to our purpose a theorem about relativizations.
See~\cite[Theorem $5.1.1$ at Page~$203$]{Hodges:1993} for a more general presentation.
As far as we know, the first systematic treatment of the concept of relativization can be seen in the book of Tarski, Mostowski and Robinson about {\it Undecidable Theories}\/~\cite{Tarski:Mostowski:Robinson:1953}.
We will use it later for the proof that the modal definability problem with respect to classes of frames determined by some Euclidean modal logics is undecidable.
\\
\\
The {\it relativization of a first-order formula $C$ with respect to a first-order formula $A$ and an individual variable $\mathbf{x}$}\/ (denoted $(C)_{\mathbf{x}}^{A}$) is inductively defined as follows:
\begin{itemize}
\item $(\mathbf{R}(\mathbf{y},\mathbf{z}))_{\mathbf{x}}^{A}$ is $\mathbf{R}(\mathbf{y},\mathbf{z})$,
\item $(\mathbf{y}=\mathbf{z})_{\mathbf{x}}^{A}$ is $\mathbf{y}=\mathbf{z}$,
\item $(\neg C)_{\mathbf{x}}^{A}$ is $\neg(C)_{\mathbf{x}}^{A}$,
\item $(C\vee D)_{\mathbf{x}}^{A}$ is $(C)_{\mathbf{x}}^{A}\vee(D)_{\mathbf{x}}^{A}$,
\item $(\forall\mathbf{y}C)_{\mathbf{x}}^{A}$ is $\forall\mathbf{y}(A\lbrack\mathbf{x}/\mathbf{y}\rbrack\rightarrow(C)_{\mathbf{x}}^{A})$.
\end{itemize}
In the above definition, $A\lbrack\mathbf{x}/\mathbf{y}\rbrack$ denotes the first-order formula obtained from the first-order formula $A$ by replacing every free occurrence of the individual variable $\mathbf{x}$ in $A$ by the individual variable $\mathbf{y}$.
\\
\\
From now on, {\bf when we write $(C)_{\mathbf{x}}^{A}$, we shall assume that the sets of individual variables occurring in $A$ and $C$ are disjoint.}
\\
\\
The reader may easily verify by induction on the first-order formula $C$ that $fiv((C)_{\mathbf{x}}^{A})\subseteq(fiv(A)\setminus\{\mathbf{x}\}){\cup}fiv(C)$.
Hence, if $C$ is a sentence then $fiv((C)_{\mathbf{x}}^{A})\subseteq fiv(A)\setminus\{\mathbf{x}\}$.
\\
\\
Let $(W,R),(W^{\prime},R^{\prime})$ be frames.
\\
\\
$(W^{\prime},R^{\prime})$ is a {\it relativized reduct of $(W,R)$}\/ if there exists a first-order formula $A(\mathbf{x_{1}},\ldots,\mathbf{x_{m}},\mathbf{y})$ and there exists $s_{1},\ldots,s_{m}{\in}W$ such that $(W^{\prime},R^{\prime})$ is the restriction of $(W,R)$ to the set of all $t{\in}W$ such that $(W,R)\models A(\mathbf{x_{1}},\ldots,\mathbf{x_{m}},\mathbf{y})\ \lbrack s_{1},\ldots,
$\linebreak$
s_{m},t\rbrack$.
In this case, $(W^{\prime},R^{\prime})$ is the {\it relativized reduct of $(W,R)$ with respect to the first-order formula $A(\mathbf{x_{1}},\ldots,\mathbf{x_{m}},\mathbf{y})$ and $s_{1},\ldots,s_{m}{\in}W$.}
Obviously, $(W,R)$ possesses a relativized reduct with respect to a first-order formula $A(\mathbf{x_{1}},\ldots,\mathbf{x_{m}},\mathbf{y})$ and $s_{1},\ldots,s_{m}{\in}W$ if and only if $(W,R)\models\exists\mathbf{y}A(\mathbf{x_{1}},\ldots,\mathbf{x_{m}},\mathbf{y})\ \lbrack s_{1},\ldots,s_{m}\rbrack$.
\\
\\
As proved in~\cite[Theorem $5.1.1$ at Page~$203$]{Hodges:1993}, for all first-order formulas $A(\mathbf{x_{1}},
$\linebreak$
\ldots,\mathbf{x_{m}},\mathbf{y})$ and for all $s_{1},\ldots,s_{m}{\in}W$, if $(W^{\prime},R^{\prime})$ is the relativized reduct of $(W,R)$ with respect to $A(\mathbf{x_{1}},\ldots,\mathbf{x_{m}},\mathbf{y})$ and $s_{1},\ldots,s_{m}$ then for all first-order formulas $C(\mathbf{z_{1}},\ldots,\mathbf{z_{n}})$ and for all $u^{\prime}_{1},\ldots,u^{\prime}_{n}{\in}W^{\prime}$, the following conditions are equivalent:
\begin{itemize}
\item $(W,R)\models(C(\mathbf{z_{1}},\ldots,\mathbf{z_{n}}))_{\mathbf{y}}^{A(\mathbf{x_{1}},\ldots,\mathbf{x_{m}},\mathbf{y})}\ \lbrack s_{1},\ldots,s_{m},u^{\prime}_{1},\ldots,u^{\prime}_{n}\rbrack$,
\item $(W^{\prime},R^{\prime})\models C(\mathbf{z_{1}},\ldots,\mathbf{z_{n}})\ \lbrack u^{\prime}_{1},\ldots,u^{\prime}_{n}\rbrack$.
\end{itemize}
%
%
%
%
%
%
%
%
%
%
%
%
%
%
The importance of the concept of relativization lies in the fact that it is at the heart of the concept of stability~\cite{Balbiani:Tinchev:2017,Balbiani:Tinchev:2019}.
\\
\\
A class ${\cal C}$ of frames is {\it stable}\/ if there exists a first-order formula $A(\mathbf{x_{1}},\ldots,\mathbf{x_{m}},\mathbf{y})$ and there exists a sentence $B$ such that
\begin{itemize}
\item for all frames $(W^{\prime},R^{\prime})$ in ${\cal C}$ and for all $s^{\prime}_{1},\ldots,s^{\prime}_{m}{\in}W^{\prime}$, the relativized reduct of $(W^{\prime},R^{\prime})$ with respect to $A(\mathbf{x_{1}},\ldots,\mathbf{x_{m}},\mathbf{y})$ and $s^{\prime}_{1},\ldots,s^{\prime}_{m}$ is in ${\cal C}$,
\item for all frames $(W_{0},R_{0})$ in ${\cal C}$, there exists frames $(W^{\prime},R^{\prime}),(W^{\prime\prime},R^{\prime\prime})$ in ${\cal C}$ and there exists $s^{\prime}_{1},\ldots,s^{\prime}_{m}{\in}W^{\prime}$ such that $(W_{0},R_{0})$ is the relativized reduct of $(W^{\prime},R^{\prime})$ with respect to $A(\mathbf{x_{1}},\ldots,\mathbf{x_{m}},\mathbf{y})$ and $s^{\prime}_{1},\ldots,s^{\prime}_{m}$, $(W^{\prime},R^{\prime})\models B$, $(W^{\prime\prime},R^{\prime\prime})\not\models B$ and $(W^{\prime},R^{\prime})\preceq(W^{\prime\prime},R^{\prime\prime})$.
\end{itemize}
In this case, $(A(\mathbf{x_{1}},\ldots,\mathbf{x_{m}},\mathbf{y}),B)$ is called {\it witness of the stability of ${\cal C}$.}
\\
\\
The importance of the concept of stability lies in its capacity to connect the problem of deciding the validity of sentences to the problem of deciding the modal definability of sentences.
\begin{proposition}\label{Validity:And:Definability}
Let ${\mathcal C}$ be a class of frames.
If ${\cal C}$ is stable then the problem of deciding the validity of sentences in ${\cal C}$ is reducible to the problem of deciding the modal definability of sentences with respect to ${\cal C}$.
\end{proposition}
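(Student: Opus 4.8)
The plan is to produce a computable map $C\mapsto\Psi(C)$ from sentences to sentences such that $C$ is valid in ${\mathcal C}$ if and only if $\Psi(C)$ is modally definable in ${\mathcal C}$; running a hypothetical decision procedure for modal definability in ${\mathcal C}$ on $\Psi(C)$ then decides validity in ${\mathcal C}$. Since ${\mathcal C}$ is stable, fix a witness $(A(\mathbf{x_{1}},\ldots,\mathbf{x_{m}},\mathbf{y}),B)$ of its stability. Given an input sentence $C$, whose individual variables we may assume disjoint from those of $A$, set $\Theta_{C}:=\forall\mathbf{x_{1}}\cdots\forall\mathbf{x_{m}}\,(C)_{\mathbf{y}}^{A(\mathbf{x_{1}},\ldots,\mathbf{x_{m}},\mathbf{y})}$; this is a sentence because $\fiv((C)_{\mathbf{y}}^{A})\subseteq\{\mathbf{x_{1}},\ldots,\mathbf{x_{m}}\}$, and the map $C\mapsto\Theta_{C}$ is clearly computable. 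Put $\Psi(C):=B\vee\Theta_{C}$. The first clause of stability says every relativized reduct of a frame of ${\mathcal C}$ with respect to $A$ is again in ${\mathcal C}$ (in particular such reducts always exist), so by the relativization theorem recalled in Section~\ref{section:about:relativization} we have, for every frame $(W,R)$ in ${\mathcal C}$: $(W,R)\models\Theta_{C}$ if and only if every relativized reduct of $(W,R)$ with respect to $A$ satisfies $C$.

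For the direction ``$C$ valid in ${\mathcal C}\Rightarrow\Psi(C)$ modally definable in ${\mathcal C}$'', assume $(W,R)\models C$ for all $(W,R)$ in ${\mathcal C}$. Given any $(W,R)$ in ${\mathcal C}$, each of its relativized reducts with respect to $A$ belongs to ${\mathcal C}$ and hence satisfies $C$, so by the equivalence above $(W,R)\models\Theta_{C}$, and therefore $(W,R)\models\Psi(C)$. Thus $\Psi(C)$ is valid in every frame of ${\mathcal C}$, so the modal formula $\top$ and the sentence $\Psi(C)$ correspond in ${\mathcal C}$; hence $\Psi(C)$ is modally definable in ${\mathcal C}$.

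For the converse I argue by contraposition. Suppose $C$ is not valid in ${\mathcal C}$ and fix $(W_{0},R_{0})$ in ${\mathcal C}$ with $(W_{0},R_{0})\not\models C$. By the second clause of stability there are frames $(W',R'),(W'',R'')$ in ${\mathcal C}$ and $s'_{1},\ldots,s'_{m}\in W'$ such that $(W_{0},R_{0})$ is the relativized reduct of $(W',R')$ with respect to $A$ and $s'_{1},\ldots,s'_{m}$, $(W',R')\models B$, $(W'',R'')\not\models B$, and $(W',R')\preceq(W'',R'')$. Then $(W',R')\models\Psi(C)$ because $(W',R')\models B$. The key point is that $(W'',R'')\not\models\Psi(C)$: since $(W'',R'')\not\models B$ this reduces to $(W'',R'')\not\models\Theta_{C}$, i.e.\ to exhibiting a relativized reduct of $(W'',R'')$ with respect to $A$ that refutes $C$. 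Granting this, if some modal formula $\varphi$ modally defined $\Psi(C)$ in ${\mathcal C}$, then $(W',R')\models\Psi(C)$ would give $(W',R')\models\varphi$, whence $(W',R')\preceq(W'',R'')$ would force $(W'',R'')\models\varphi$ and so $(W'',R'')\models\Psi(C)$, a contradiction. Hence $\Psi(C)$ is not modally definable in ${\mathcal C}$, as required.

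The hard part is precisely the step $(W'',R'')\not\models\Theta_{C}$: the abstract witness only records that $(W'',R'')$ violates $B$ and lies $\preceq$-above $(W',R')$, which does not by itself pin down the relativized reducts of $(W'',R'')$. I expect to close this gap by choosing the $C$-counterexample and the two stability witnesses coherently — arranging, as the concrete constructions of stable classes in the cited literature on stability do, that the frame refuting $C$ also occurs as a relativized reduct of $(W'',R'')$, using that a relativized reduct of a relativized reduct is again a relativized reduct together with $(W',R')\preceq(W'',R'')$ — or, alternatively, by iterating the second clause of stability and a well-foundedness argument on the witnesses. Everything else is routine, relying only on the relativization theorem and on Lemmas~\ref{Generated:Subframe:Lemma} and~\ref{Bounded:Morphism:Lemma} through the relation $\preceq$.
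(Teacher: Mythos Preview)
Your reduction has the right architecture, but the specific sentence $\Psi(C)=B\vee\Theta_{C}$ is the wrong Boolean combination, and the gap you flag in the converse direction is not closable along the lines you suggest. Concretely, take the witness from Lemma~\ref{lemma:L:modal:logic:Fr:L:is:stable} in the case $\Diamond\top\notin\L$: there $(W'',R'')$ is a single isolated point, and for the unique choice of parameters the formula $A$ cuts out the empty set, so for any universally quantified $C$ (for instance seriality, which is genuinely not valid in such a $\Fr(\L)$) one gets $(W'',R'')\models\Theta_{C}$ vacuously, hence $(W'',R'')\models\Psi(C)$, and your pair $(W',R')\preceq(W'',R'')$ yields no contradiction. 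Nor can ``choosing the witnesses coherently'' help: the second stability clause only asserts existence of $(W',R'),(W'',R'')$ and imposes no constraint whatsoever on the relativized reducts of $(W'',R'')$; in the concrete instances of Lemma~\ref{lemma:L:modal:logic:Fr:L:is:stable} the frame $(W'',R'')$ is a fixed one-point frame, independent of $(W_{0},R_{0})$, so it certainly does not contain $(W_{0},R_{0})$ as a reduct.

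The fix is a one-line change: take $\Psi(C):=B\wedge\neg\Theta_{C}$ instead. If $C$ is valid in ${\mathcal C}$ then, exactly as you argued, $\Theta_{C}$ holds in every frame of ${\mathcal C}$, so $\Psi(C)$ fails everywhere in ${\mathcal C}$ and is modally defined by $\bot$. If $C$ is not valid, pick $(W_{0},R_{0})\in{\mathcal C}$ with $(W_{0},R_{0})\not\models C$ and apply the second stability clause. Since $(W_{0},R_{0})$ is the relativized reduct of $(W',R')$ at $s'_{1},\ldots,s'_{m}$ and $(W_{0},R_{0})\not\models C$, the relativization theorem gives $(W',R')\not\models(C)_{\mathbf{y}}^{A}\ [s'_{1},\ldots,s'_{m}]$, hence $(W',R')\not\models\Theta_{C}$; together with $(W',R')\models B$ this yields $(W',R')\models\Psi(C)$. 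On the other hand $(W'',R'')\not\models B$ immediately gives $(W'',R'')\not\models\Psi(C)$, with no need to know anything about $\Theta_{C}$ on $(W'',R'')$. Now $(W',R')\preceq(W'',R'')$, $(W',R')\models\Psi(C)$ and $(W'',R'')\not\models\Psi(C)$ rule out modal definability of $\Psi(C)$ exactly via the preservation argument in your last paragraph. (The paper itself does not spell out a proof; it defers to~\cite{Balbiani:Tinchev:2017,Balbiani:Tinchev:2019}.)
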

\begin{proof}
See~\cite{Balbiani:Tinchev:2017,Balbiani:Tinchev:2019}.
\medskip
\end{proof}
By Proposition~\ref{Validity:And:Definability}, Balbiani and Tinchev~\cite{Balbiani:Tinchev:2017,Balbiani:Tinchev:2019} have proved that the problem of deciding the modal definability of sentences is undecidable with respect to classes of frames such as the class of all transitive frames, the class of all symmetric frames, etc.
\begin{lemma}\label{lemma:L:modal:logic:Fr:L:is:stable}
Let $\L$ be an Euclidean modal logic.
Then, $\Fr(\L)$ is stable.
\end{lemma}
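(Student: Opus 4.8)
\textbf{Proof proposal for Lemma~\ref{lemma:L:modal:logic:Fr:L:is:stable}.}

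The plan is to exhibit an explicit witness $(A(\mathbf{x_{1}},\ldots,\mathbf{x_{m}},\mathbf{y}),B)$ of the stability of $\Fr(\L)$. For the relativizing formula I would take the rooted-neighbourhood formula
\[
A(\mathbf{x},\mathbf{y})::=\mathbf{x}{=}\mathbf{y}\vee\exists\mathbf{z}(\mathbf{R}(\mathbf{x},\mathbf{z})\wedge\mathbf{R}(\mathbf{z},\mathbf{y})),
\]
with a single parameter $\mathbf{x}$ (so $m{=}1$), so that for a frame $(W,R)$ and a point $s{\in}W$ the relativized reduct of $(W,R)$ with respect to $A$ and $s$ is exactly the generated subframe $(W_{s},R_{s})$ in the Euclidean case. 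The first clause of stability then follows from the Generated Subframe Lemma (Lemma~\ref{Generated:Subframe:Lemma}): if $(W^{\prime},R^{\prime}){\models}\L$ then every $(W^{\prime}_{s^{\prime}},R^{\prime}_{s^{\prime}}){\models}\L$, so it remains in $\Fr(\L)$; and one checks that this relativized reduct is Euclidean whenever $(W^{\prime},R^{\prime})$ is. Care is needed here to verify that $A(\mathbf{x},\mathbf{y})$ really defines $(W_{s},R_{s})$ on Euclidean frames — by Lemma~\ref{lemma:generated:subframe:from:a:single:point:has:specific:shapes} the generated subframe from $s$ is either a universal cluster containing $s$ (in which case $\mathbf{R}(\mathbf{x},\mathbf{z})\wedge\mathbf{R}(\mathbf{z},\mathbf{y})$ reaches everything, including $s$ via a loop) or of the shape $\{s\}{\cup}C$ with $s{\not\in}C$ and an edge from $s$ into $B{\subseteq}C$, in which case the disjunct $\mathbf{x}{=}\mathbf{y}$ is needed to pick up $s$ itself and the $\exists\mathbf{z}$ part picks up all of $C$.

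For the second clause I would take $B$ to be a sentence that is true on a ``large enough'' rooted Euclidean frame in $\Fr(\L)$ but false on some bounded-morphic image of it which is also in $\Fr(\L)$; concretely, for a suitable constant $N$ depending only on $\L$, let $B$ assert that the kernel has size at most $N$ (expressible as a first-order sentence, using the kernel-membership predicate $\exists\mathbf{z}\mathbf{R}(\mathbf{z},\mathbf{y})$). Given an arbitrary $(W_{0},R_{0}){\in}\Fr(\L)$, I would build $(W^{\prime},R^{\prime})$ by taking a disjoint union of $(W_{0},R_{0})$ with a fresh root $r$ whose successor set is exactly (a copy of) $W_{0}$, arranging the fresh part to be a flower or galaxy so that $(W^{\prime},R^{\prime})$ is still Euclidean, still in $\Fr(\L)$ (using Lemmas~\ref{lemma:S:L:is:closed:subset:of:N:plus:times:N:minus}, \ref{lemma:alpha}, \ref{lemma:beta} and the fact that the reductions of Section~5 preserve $\L$-validity via the Bounded Morphism Lemma), and such that the relativized reduct of $(W^{\prime},R^{\prime})$ at $r$ is isomorphic to $(W_{0},R_{0})$. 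Then $(W^{\prime\prime},R^{\prime\prime})$ is obtained from $(W^{\prime},R^{\prime})$ by a bounded morphism (an $\alpha$/$\gamma$/$\delta$-reduction, or a collapse of the new cluster) that shrinks the kernel below $N$, so $(W^{\prime},R^{\prime})\not\models B$ while $(W^{\prime\prime},R^{\prime\prime})\models B$ — or the other way around, whichever the sign of $B$ forces — and $(W^{\prime},R^{\prime})\preceq(W^{\prime\prime},R^{\prime\prime})$ holds by Lemma~\ref{Bounded:Morphism:Lemma}.

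The main obstacle I anticipate is not the relativization bookkeeping but the simultaneous fit of three requirements on the frames $(W^{\prime},R^{\prime}),(W^{\prime\prime},R^{\prime\prime})$: they must stay inside $\Fr(\L)$, they must be related by $\preceq$ (equivalently, $(W^{\prime\prime},R^{\prime\prime})$ must be a bounded morphic image of $(W^{\prime},R^{\prime})$), and $B$ must separate them. Because $\mathtt{S}_{\L}$ can be an arbitrary closed subset of $\mathbf{N}^{+}{\times}\mathbf{N}^{-}$, the construction of the ``padding'' frame and of the collapsing bounded morphism has to be done uniformly in $\L$; I would handle this by a case split according to Lemma~\ref{lemma:if:S:L:setminus:some:pairs:is:infinite:then:two:consequences} (whether $\mathtt{S}_{\L}$ contains a whole column $\{2\}{\times}\mathbf{N}^{-}$, a whole row $\mathbf{N}^{+}{\times}\{2\}$, or is ``thin'' — in the thin case $\mathtt{S}_{\L}$ is essentially bounded and one argues directly with finitely many flowers), choosing $N$ and the padding structure accordingly, and in each case verifying the three conditions using the reduction propositions of Section~5 and Lemmas~\ref{lemma:alpha}--\ref{lemma:beta}. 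Once the witness is in hand, the lemma is immediate from the definition of stability.
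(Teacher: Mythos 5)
Your choice of relativizing formula creates a gap that cannot be repaired within your plan. With $A(\mathbf{x},\mathbf{y})::=\mathbf{x}{=}\mathbf{y}\vee\exists\mathbf{z}(\mathbf{R}(\mathbf{x},\mathbf{z})\wedge\mathbf{R}(\mathbf{z},\mathbf{y}))$ the relativized reduct of an Euclidean frame at a point $s$ is exactly the point-generated subframe $(W_{s},R_{s})$. But the second clause of stability quantifies over \emph{all} frames $(W_{0},R_{0})$ in $\Fr(\L)$, and requires each of them to occur as a relativized reduct of some $(W^{\prime},R^{\prime})$ in $\Fr(\L)$. A typical member of $\Fr(\L)$ is a disjoint union of many galaxies and is not rooted, hence is not isomorphic to any point-generated subframe of any Euclidean frame; for such $(W_{0},R_{0})$ no witnessing $(W^{\prime},R^{\prime})$ and parameter exist, and the clause fails. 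Your concrete construction also breaks down: a fresh root $r$ whose successor set is a copy of $W_{0}$ is in general not Euclidean unless $W_{0}$ is a single cluster, and in any case the reduct at $r$ under your $A$ contains $r$ itself (via the disjunct $\mathbf{x}{=}\mathbf{y}$), so it is not $(W_{0},R_{0})$. Finally, the proposed separating sentence $B$ (``the kernel has at most $N$ elements'') cannot work uniformly, since $(W^{\prime},R^{\prime})$ must contain a copy of $(W_{0},R_{0})$, whose kernel may exceed any fixed $N$; note also that the definition forces the polarity $(W^{\prime},R^{\prime})\models B$ and $(W^{\prime\prime},R^{\prime\prime})\not\models B$, so ``whichever the sign forces'' is not available.

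The paper's proof is much lighter and sidesteps all of this. It splits on whether $\Diamond\top{\in}\L$. In the serial case the relativization, with two parameters, simply deletes the two parameter points when they are reflexive singletons (points related only to themselves), and $B$ says ``there exist two distinct reflexive singletons''; given any $(W_{0},R_{0}){\in}\Fr(\L)$ one takes $(W^{\prime},R^{\prime})$ to be $(W_{0},R_{0})$ plus two fresh reflexive singletons (still Euclidean and in $\Fr(\L)$), whose reduct at the two markers is $(W_{0},R_{0})$, and $(W^{\prime\prime},R^{\prime\prime})$ is a single reflexive point, which falsifies $B$ and satisfies $(W^{\prime},R^{\prime}){\preceq}(W^{\prime\prime},R^{\prime\prime})$ because it is (isomorphic to) a generated subframe of $(W^{\prime},R^{\prime})$. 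The non-serial case is identical with isolated endpoints in place of reflexive singletons. None of the reduction machinery of Section~5, nor Lemmas~\ref{lemma:if:S:L:setminus:some:pairs:is:infinite:then:two:consequences}, \ref{lemma:alpha} or~\ref{lemma:beta}, is needed: padding by two marker points and the Generated Subframe Lemma suffice. If you want to salvage your approach, you must replace your $A$ by a relativization whose reducts can be arbitrary (non-rooted) members of $\Fr(\L)$, which essentially forces a marker-deletion scheme of the paper's kind.
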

\begin{proof}
We consider the following $2$~cases.
\\
\\
{\bf Case where $\Diamond\top{\in}\L$:}
Let
\begin{itemize}
\item $A(\mathbf{x_{1}},\mathbf{x_{2}},\mathbf{y})$ be the first-order formula $\forall\mathbf{x}(\mathbf{R}(\mathbf{x},\mathbf{x_{1}})\vee\mathbf{R}(\mathbf{x_{1}},\mathbf{x})\leftrightarrow\mathbf{x}=\mathbf{x_{1}})\wedge\forall\mathbf{x}(\mathbf{R}(\mathbf{x},\mathbf{x_{2}})\vee\mathbf{R}(\mathbf{x_{2}},\mathbf{x})\leftrightarrow\mathbf{x}=\mathbf{x_{2}})\rightarrow\mathbf{y}\not=\mathbf{x_{1}}\wedge\mathbf{y}\not=\mathbf{x_{2}}$,
\item $B$ be the sentence $\exists\mathbf{x_{1}}\exists\mathbf{x_{2}}(\forall\mathbf{x}(\mathbf{R}(\mathbf{x},\mathbf{x_{1}})\vee\mathbf{R}(\mathbf{x_{1}},\mathbf{x})\leftrightarrow\mathbf{x}=\mathbf{x_{1}})\wedge\forall\mathbf{x}(\mathbf{R}(\mathbf{x},
$\linebreak$
\mathbf{x_{2}})\vee\mathbf{R}(\mathbf{x_{2}},\mathbf{x})\leftrightarrow\mathbf{x}=\mathbf{x_{2}})\wedge\mathbf{x_{1}}\not=\mathbf{x_{2}})$.
\end{itemize}
Obviously, for all Euclidean frames $(W,R)$ in $\Fr(\L)$, for all $s_{1},s_{2}{\in}W$ and for all Euclidean frames $(W^{\prime},R^{\prime})$, if $(W^{\prime},R^{\prime})$ is the relativized reduct of $(W,R)$ with respect to $A(\mathbf{x_{1}},\mathbf{x_{2}},\mathbf{y})$ and $s_{1},s_{2}$ then $(W^{\prime},R^{\prime})$ is in $\Fr(\L)$.
Now, let $(W_{0},R_{0})$ be an Euclidean frame in $\Fr(\L)$.
Let $(W,R)$ be the Euclidean frame in $\Fr(\L)$ such that
\begin{itemize}
\item $W{=}W_{0}{\cup}\{s_{1},s_{2}\}$,
\item $R{=}R_{0}{\cup}\{(s_{1},s_{1}),(s_{2},s_{2})\}$,
\end{itemize}
$s_{1}$ and $s_{2}$ being $2$~new distinct elements.
Let $(W^{\prime},R^{\prime})$ be the Euclidean frame in $\Fr(\L)$ such that
\begin{itemize}
\item $W^{\prime}{=}\{s^{\prime}\}$,
\item $R^{\prime}{=}\{(s^{\prime},s^{\prime})\}$.
\end{itemize}
Obviously, $(W_{0},R_{0})$ is the relativized reduct of $(W,R)$ with respect to $A(\mathbf{x_{1}},\mathbf{x_{2}},
$\linebreak$
\mathbf{y})$ and $s_{1},s_{2}$, $(W,R)\models B$, $(W^{\prime},R^{\prime})\not\models B$ and $(W,R)\preceq(W^{\prime},R^{\prime})$.
Hence, $(A(\mathbf{x_{1}},\mathbf{x_{2}},\mathbf{y}),B)$ is a witness of the stability of $\Fr(\L)$.
\\
\\
{\bf Case where $\Diamond\top{\not\in}\L$:}
Let
\begin{itemize}
\item $A(\mathbf{x_{1}},\mathbf{x_{2}},\mathbf{y})$ be the first-order formula $\forall\mathbf{x}\neg\mathbf{R}(\mathbf{x_{1}},\mathbf{x})\wedge\forall\mathbf{x}\neg\mathbf{R}(\mathbf{x_{2}},\mathbf{x})\rightarrow\mathbf{y}\not=\mathbf{x_{1}}\wedge\mathbf{y}\not=\mathbf{x_{2}}$,
\item $B$ be the sentence $\exists\mathbf{x_{1}}\exists\mathbf{x_{2}}(\forall\mathbf{x}\neg\mathbf{R}(\mathbf{x_{1}},\mathbf{x})\wedge\forall\mathbf{x}\neg\mathbf{R}(\mathbf{x_{2}},\mathbf{x})\wedge\mathbf{x_{1}}\not=\mathbf{x_{2}})$.
\end{itemize}
Obviously, for all Euclidean frames $(W,R)$ in $\Fr(\L)$, for all $s_{1},s_{2}{\in}W$ and for all Euclidean frames $(W^{\prime},R^{\prime})$, if $(W^{\prime},R^{\prime})$ is the relativized reduct of $(W,R)$ with respect to $A(\mathbf{x_{1}},\mathbf{x_{2}},\mathbf{y})$ and $s_{1},s_{2}$ then $(W^{\prime},R^{\prime})$ is in $\Fr(\L)$.
Now, let $(W_{0},R_{0})$ be an Euclidean frame in $\Fr(\L)$.
Let $(W,R)$ be the Euclidean frame in $\Fr(\L)$ such that
\begin{itemize}
\item $W{=}W_{0}{\cup}\{s_{1},s_{2}\}$,
\item $R{=}R_{0}$,
\end{itemize}
$s_{1}$ and $s_{2}$ being $2$~new distinct elements.
Let $(W^{\prime},R^{\prime})$ be the Euclidean frame in $\Fr(\L)$ such that
\begin{itemize}
\item $W^{\prime}{=}\{s^{\prime}\}$,
\item $R^{\prime}{=}\emptyset$.
\end{itemize}
Obviously, $(W_{0},R_{0})$ is the relativized reduct of $(W,R)$ with respect to $A(\mathbf{x_{1}},\mathbf{x_{2}},
$\linebreak$
\mathbf{y})$ and $s_{1},s_{2}$, $(W,R)\models B$, $(W^{\prime},R^{\prime})\not\models B$ and $(W,R)\preceq(W^{\prime},R^{\prime})$.
Thus, $(A(\mathbf{x_{1}},\mathbf{x_{2}},\mathbf{y}),B)$ is a witness of the stability of $\Fr(\L)$.
\medskip
\end{proof}
\begin{lemma}\label{Lemma:53:bis}
Let $\L$ be an Euclidean modal logic.
If $\Th(\Fr(\L))$ is undecidable then the problem of deciding the modal definability of sentences with respect to $\Fr(\L)$ is undecidable.
\end{lemma}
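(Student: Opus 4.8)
The plan is to chain together the two earlier structural results about $\Fr(\L)$, namely Lemma~\ref{lemma:L:modal:logic:Fr:L:is:stable} (stability of $\Fr(\L)$) and Proposition~\ref{Validity:And:Definability} (stability yields a reduction from validity to modal definability). The only conceptual point that needs to be made explicit is that the hypothesis ``$\Th(\Fr(\L))$ is undecidable'' is nothing but the statement that the problem of deciding the validity of sentences in $\Fr(\L)$ is undecidable. Indeed, by the definition of the theory of a class of frames, a sentence $A$ belongs to $\Th(\Fr(\L))$ if and only if $A$ is valid in every frame of $\Fr(\L)$; hence an algorithm deciding validity in $\Fr(\L)$ would be an algorithm deciding membership in $\Th(\Fr(\L))$, and conversely.

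First I would observe, using Lemma~\ref{lemma:L:modal:logic:Fr:L:is:stable}, that $\Fr(\L)$ is stable. Next I would invoke Proposition~\ref{Validity:And:Definability} with ${\mathcal C}{=}\Fr(\L)$ to obtain a computable reduction from the problem of deciding the validity of sentences in $\Fr(\L)$ to the problem of deciding the modal definability of sentences with respect to $\Fr(\L)$. Finally, I would argue by contraposition on the target problem: if the modal definability problem with respect to $\Fr(\L)$ were decidable, then composing the decision procedure with the reduction supplied by Proposition~\ref{Validity:And:Definability} would yield a decision procedure for the validity of sentences in $\Fr(\L)$, i.e.\ for membership in $\Th(\Fr(\L))$, contradicting the assumption that $\Th(\Fr(\L))$ is undecidable. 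Therefore the modal definability problem with respect to $\Fr(\L)$ is undecidable.

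There is essentially no obstacle here: the work has already been done in Lemma~\ref{lemma:L:modal:logic:Fr:L:is:stable} and Proposition~\ref{Validity:And:Definability}, and the proof is a two-line bookkeeping argument transporting undecidability along the reduction. The only thing to be slightly careful about is to spell out the identification of ``$\Th(\Fr(\L))$ is undecidable'' with ``the validity problem in $\Fr(\L)$ is undecidable'', since Proposition~\ref{Validity:And:Definability} is phrased in terms of the latter; this identification is immediate from the definition of the theory of a class of frames given in Section~\ref{section:first:order:syntax:and:semantics}.
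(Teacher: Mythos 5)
Your proposal is correct and follows exactly the paper's route: the paper proves this lemma by citing Proposition~\ref{Validity:And:Definability} together with Lemma~\ref{lemma:L:modal:logic:Fr:L:is:stable}, and your argument merely spells out the same reduction-plus-contraposition bookkeeping, including the harmless identification of undecidability of $\Th(\Fr(\L))$ with undecidability of the validity problem in $\Fr(\L)$.
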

\begin{proof}
By Proposition~\ref{Validity:And:Definability} and Lemma~\ref{lemma:L:modal:logic:Fr:L:is:stable}.
\medskip
\end{proof}
%
%
%
%
%
%
%
%
%
%
\section{Computability of modal definability}\label{section:about:computability:of:modal:definability}
The following results show that for all Euclidean modal logics $\L$, the computability of $\Th(\Fr(\L))$ only depends on the cardinality of $\mathtt{S}_{\L}{\setminus}((\{1\}{\times}\mathbf{N}^{-}){\cup}(\mathbf{N}^{+}{\times}\{{-}1,
$\linebreak$
0,1\}))$.
\begin{lemma}\label{decidability:if:setminus:is:finite}
%
%
%
%
%
%
%
%
%
%
%
%
%
%
%
%
%
%
%
%
%
%
%
%
%
%
Let $\L$ be an Euclidean modal logic such that $\mathtt{S}_{\L}{\setminus}((\{1\}{\times}
$\linebreak$
\mathbf{N}^{-}){\cup}(\mathbf{N}^{+}{\times}\{{-}1,0,1\}))$ is finite.
Let $\varphi$ be a modal formula such that $\L{=}\K5{\oplus}\varphi$.
Let $k$ be the least integer such that $k{\geq}4$ and $(\ast)$~for all $m{\in}\mathbf{N}^{+}$ and for all $n{\in}\mathbf{N}^{-}$, if $(m,n){\in}\mathtt{S}_{\L}{\setminus}((\{1\}{\times}\mathbf{N}^{-}){\cup}(\mathbf{N}^{+}{\times}\{{-}1,0,1\}))$ then $m{+}n{\leq}k$.
For all sentences $A$, the following conditions are equivalent:
\begin{itemize}
\item $A{\in}\Th(\Fr(\L))$,
\item for all finite Euclidean frames $(W,R)$, if ${\parallel}W{\parallel}{\leq}2{\times}q{\times}((Q{\times}K){+}1)^{2}{\times}
$\linebreak$
2^{(Q{\times}K)^{2}}{\times}Q{\times}K$ and $(W,R)$ validates $\varphi$ then $(W,R)$ validates $A$,
\end{itemize}
$q$ denoting $\qdd(A)$, $Q$ denoting $2{\times}q{\times}(q{\times}(q{+}1)^{2}+1)$ and $K$ denoting $2^{k}$.
\end{lemma}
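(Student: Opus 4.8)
The plan is to establish the two directions separately. The implication from $A{\in}\Th(\Fr(\L))$ to the second condition is immediate: any finite Euclidean frame $(W,R)$ that validates $\varphi$ also validates $\K5$ (being Euclidean), hence validates $\L{=}\K5{\oplus}\varphi$, so $(W,R){\in}\Fr(\L)$ and therefore $(W,R){\models}A$; the size hypothesis plays no role here. All the work is in the converse, which I would prove contrapositively: assuming $A{\notin}\Th(\Fr(\L))$, I will build a finite Euclidean frame of the required size that validates $\varphi$ but not $A$.

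So fix a frame $(W,R){\models}\L$ with $(W,R){\not\models}A$; since $\L$ extends $\K5$, $(W,R)$ is Euclidean, so by Lemma~\ref{lemma:every:frame:disjoint:union:galaxies} it is the union of a disjoint indexed family $\{{\mathcal F}_{A_i,B_i}^{\rho_i}:\ i{\in}I\}$ of galaxies; let $I^{ns}$ be the set of $i{\in}I$ such that ${\mathcal F}_{A_i,B_i}^{\rho_i}$ is non-simple. By Lemma~\ref{lemma:about:generated:subframe:and:disjoint:unions} and the Generated Subframe Lemma, each ${\mathcal F}_{A_i,B_i}^{\rho_i}$ validates $\L$. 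The crucial structural step --- and, I expect, the main obstacle --- is to show that $\{{\mathcal F}_{A_i,B_i}^{\rho_i}:\ i{\in}I^{ns}\}$ is $k$-kernel-bounded. Suppose not: take $i{\in}I^{ns}$ with $\|B_i\|{>}k$. Then $B_i{\neq}\emptyset$, so non-simplicity yields $s{\in}A_i$ with $\|\rho_i(s)\|{\geq}2$ and $\|B_i{\setminus}\rho_i(s)\|{\geq}2$; the least generated subframe $(W_s,R_s)$ of ${\mathcal F}_{A_i,B_i}^{\rho_i}$ containing $s$ is $(\{s\}{\cup}B_i,(\{s\}{\times}\rho_i(s)){\cup}(B_i{\times}B_i))$, so by Lemma~\ref{lemma:about:k:s:new:element:bmi:of:W:R} there is a surjective bounded morphism from $(W_s,R_s)$ onto ${\mathcal F}_m^n$, where $m{=}\min\{\|\rho_i(s)\|,k\}{\geq}2$ and $n{=}\min\{\|B_i{\setminus}\rho_i(s)\|,k\}{\geq}2$. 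As $(W_s,R_s)$ is a generated subframe of ${\mathcal F}_{A_i,B_i}^{\rho_i}$ it validates $\L$, so by the Bounded Morphism Lemma ${\mathcal F}_m^n{\models}\L$, that is $(m,n){\in}\mathtt{S}_{\L}$; since $m,n{\geq}2$, the pair $(m,n)$ lies outside $(\{1\}{\times}\mathbf{N}^{-}){\cup}(\mathbf{N}^{+}{\times}\{{-}1,0,1\})$, so $m{+}n{\leq}k$ by $(\ast)$. But a short case analysis contradicts this: if $\|\rho_i(s)\|{\leq}k$ and $\|B_i{\setminus}\rho_i(s)\|{\leq}k$ then $m{+}n{=}\|B_i\|{>}k$, while if either of them exceeds $k$ then $m{+}n{\geq}k{+}2$. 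Hence every non-simple galaxy in the family has kernel of size at most $k$. This is the one place where the finiteness of $\mathtt{S}_{\L}{\setminus}((\{1\}{\times}\mathbf{N}^{-}){\cup}(\mathbf{N}^{+}{\times}\{{-}1,0,1\}))$, encoded as $(\ast)$, is used.

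With this in hand, I would run the three reductions introduced earlier in succession, always with parameter $q{=}\qdd(A){\geq}3$, tracking along the way that $\preceq$ carries validity of $\L$ forward (via the Bounded Morphism Lemma) and that $\equiv_q$ carries failure of $A$ forward (since $\qd(A){\leq}\qdd(A){=}q$). A $q$-$\alpha$-reduction of the family, with union $(W_1,R_1)$, gives $(W_1,R_1){\models}\L$ and $(W_1,R_1){\not\models}A$ by Lemmas~\ref{lemma:alpha:reduction:modal:logic} and~\ref{lemma:alpha:reduction:fol}; moreover, by Lemma~\ref{lemma:the:alpha:reduction:is:always:dust:finite}, every reduced galaxy now satisfies conditions $(4)$ and $(5)$ of that lemma, each non-simple reduced galaxy stays non-simple with kernel still $\leq k$ and with upper part of size $\leq q\cdot 2^{\|B_i\|}{\leq}q\cdot 2^{k}{=}qK$. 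A subsequent $q$-$\gamma$-reduction --- which leaves the non-simple galaxies untouched and acts only on the simple ones, whose conditions $(4),(5)$ now hold --- with union $(W_2,R_2)$, gives $(W_2,R_2){\models}\L$ and $(W_2,R_2){\not\models}A$ by Lemmas~\ref{lemma:gamma:reduction:modal:logic} and~\ref{lemma:gamma:reduction:fol}, and by Lemma~\ref{lemma:gamma:reduction:prebounded:dust:quasi:root:bounded:OK}$(6)$ each reduced simple galaxy has upper part $\leq Q$ and lower part $\leq q(q{+}1)^2$. Hence after these two steps every galaxy of the family has both parts of size $\leq QK$. A final $q$-$\delta$-reduction, with union $(W'',R'')$, gives $(W'',R''){\models}\L$ and $(W'',R''){\not\models}A$ by Lemmas~\ref{lemma:delta:reduction:modal:logic} and~\ref{lemma:delta:reduction:fol}, and by Proposition~\ref{lemma:good:properties:of:delta:reductions:indexed}$(2)$, applied with its size parameter instantiated by $QK$, the index set of the $\delta$-reduced family has cardinality $\leq q(QK{+}1)^2\,2^{(QK)^2}$.

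It remains to observe that $(W'',R'')$ is a finite Euclidean frame with $\|W''\|{\leq}q(QK{+}1)^2\,2^{(QK)^2}\cdot 2QK{=}2q(QK{+}1)^2\,2^{(QK)^2}QK$, which is precisely the bound in the statement; and $(W'',R''){\models}\L$ forces $(W'',R''){\models}\varphi$, while $(W'',R''){\not\models}A$. Thus the second condition of the equivalence fails, which completes the contrapositive and hence the proof. As noted, the genuinely delicate ingredient is the $k$-kernel bound for non-simple galaxies; the remainder is a matter of threading the already-proved preservation properties of the $\alpha$-, $\gamma$- and $\delta$-reductions through the chain and verifying that the nested size estimates collapse to the stated value.
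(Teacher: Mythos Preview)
Your proposal is correct and follows essentially the same approach as the paper's own proof: you establish the $k$-kernel bound for non-simple galaxies via Lemma~\ref{lemma:about:k:s:new:element:bmi:of:W:R} and the choice of $k$ in $(\ast)$, then apply the $q$-$\alpha$-, $q$-$\gamma$-, and $q$-$\delta$-reductions in succession, using the paired preservation lemmas (Lemmas~\ref{lemma:alpha:reduction:modal:logic}/\ref{lemma:alpha:reduction:fol}, \ref{lemma:gamma:reduction:modal:logic}/\ref{lemma:gamma:reduction:fol}, \ref{lemma:delta:reduction:modal:logic}/\ref{lemma:delta:reduction:fol}) to carry $\L$-validity forward and failure of $A$ forward, and track the size bounds exactly as the paper does. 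The only cosmetic difference is that you explicitly spell out the trivial forward direction, which the paper leaves implicit.
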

\begin{proof}
Since~$(\ast)$, then for all $m{\in}\mathbf{N}^{+}$ and for all $n{\in}\mathbf{N}^{-}$, if ${\mathcal F}_{m}^{n}{\models}\L$ then either $m{=}1$, or $n{=}{-}1$, or $n{=}0$, or $n{=}1$, or $m{\not=}1$, $n{\not=}{-}1$, $n{\not=}0$, $n{\not=}1$ and $m{+}n{\leq}k$.
Thus, by Lemma~\ref{lemma:about:flowers:simple:galaxies}, for all $m{\in}\mathbf{N}^{+}$ and for all $n{\in}\mathbf{N}^{-}$, if ${\mathcal F}_{m}^{n}{\models}\L$ then either ${\mathcal F}_{m}^{n}$ is simple, or ${\mathcal F}_{m}^{n}$ is non-simple and $m{+}n{\leq}k$.
\\
\\
We claim that for all galaxies ${\mathcal F}_{A,B}^{\rho}$, if ${\mathcal F}_{A,B}^{\rho}{\models}\L$ then either ${\mathcal F}_{A,B}^{\rho}$ is simple, or ${\mathcal F}_{A,B}^{\rho}$ is non-simple and ${\parallel}B{\parallel}{\leq}k$.
If not, let ${\mathcal F}_{A,B}^{\rho}$ be a galaxy such that ${\mathcal F}_{A,B}^{\rho}{\models}\L$, ${\mathcal F}_{A,B}^{\rho}$ is non-simple and ${\parallel}B{\parallel}{>}k$.
Consequently, there exists $s{\in}A$ such that ${\parallel}\rho(s){\parallel}{>}1$ and ${\parallel}B{\setminus}\rho(s){\parallel}{>}1$.
Let $(W_{s},R_{s})$ be the least generated subframe of ${\mathcal F}_{A,B}^{\rho}$ containing $s$.
Obviously, $W_{s}{=}\{s\}{\cup}B$ and $R_{s}{=}(\{s\}{\times}\rho(s)){\cup}(B{\times}B)$.
Moreover, since ${\mathcal F}_{A,B}^{\rho}{\models}\L$, then by Lemma~\ref{Generated:Subframe:Lemma}, $(W_{s},R_{s}){\models}\L$.
Let $m^{\prime}{=}\min\{{\parallel}\rho(s){\parallel},k\}$ and $n^{\prime}{=}\min\{{\parallel}B{\setminus}\rho(s){\parallel},k\}$.
Obviously, $m^{\prime}{>}1$, $n^{\prime}{>}1$ and $m^{\prime}{+}n^{\prime}{>}k$.
Moreover, since ${\parallel}B{\parallel}{>}k$, ${\parallel}\rho(s){\parallel}{>}1$ and ${\parallel}B{\setminus}\rho(s){\parallel}{>}1$, then by Lemma~\ref{lemma:about:k:s:new:element:bmi:of:W:R}, ${\mathcal F}_{m^{\prime}}^{n^{\prime}}$ is a bounded morphic image of $(W_{s},R_{s})$.
Since $(W_{s},R_{s}){\models}\L$, then by Lemma~\ref{Bounded:Morphism:Lemma}, ${\mathcal F}_{m^{\prime}}^{n^{\prime}}{\models}\L$.
Since $m^{\prime}{>}1$ and $n^{\prime}{>}1$, then by Lemma~\ref{lemma:about:flowers:simple:galaxies}, ${\mathcal F}_{m^{\prime}}^{n^{\prime}}$ is non-simple.
Since for all $m{\in}\mathbf{N}^{+}$ and for all $n{\in}\mathbf{N}^{-}$, if ${\mathcal F}_{m}^{n}{\models}\L$ then either ${\mathcal F}_{m}^{n}$ is simple, or ${\mathcal F}_{m}^{n}$ is non-simple and $m{+}n{\leq}k$, then either ${\mathcal F}_{m^{\prime}}^{n^{\prime}}{\not\models}\L$, or $m^{\prime}{+}n^{\prime}{\leq}k$: a contradiction.
Hence, for all galaxies ${\mathcal F}_{A,B}^{\rho}$, if ${\mathcal F}_{A,B}^{\rho}{\models}\L$ then either ${\mathcal F}_{A,B}^{\rho}$ is simple, or ${\mathcal F}_{A,B}^{\rho}$ is non-simple and ${\parallel}B{\parallel}{\leq}k$.
\\
\\
Thus, by Lemmas~\ref{lemma:about:generated:subframe:and:disjoint:unions} and~\ref{Generated:Subframe:Lemma},
\begin{itemize}
\item for all disjoint indexed families $\{{\mathcal F}_{A_{i},B_{i}}^{\rho_{i}}:\ i{\in}I\}$ of galaxies, if the union of $\{{\mathcal F}_{A_{i},B_{i}}^{\rho_{i}}:\ i{\in}I\}$ validates $\L$ then for all $i{\in}I$, either ${\mathcal F}_{A_{i},B_{i}}^{\rho_{i}}$ is simple, or ${\mathcal F}_{A_{i},B_{i}}^{\rho_{i}}$ is non-simple and ${\parallel}B_{i}{\parallel}{\leq}k$.
\end{itemize}
Now, let $A$ be an arbitrary sentence.
Suppose $A{\not\in}\Th(\Fr(\L))$.
Consequently, there exists an Euclidean frame $(W,R)$ such that $(W,R)$ validates $\L$ and $(W,R)$ does not validate $A$.
Hence, by Lemma~\ref{lemma:every:frame:disjoint:union:galaxies}, there exists a disjoint indexed family $\{{\mathcal F}_{A_{i},B_{i}}^{\rho_{i}}:\ i{\in}I\}$ of galaxies such that the union $(W,R)$ of $\{{\mathcal F}_{A_{i},B_{i}}^{\rho_{i}}:\ i{\in}I\}$ validates $\L$ and $(W,R)$ does not validate $A$.
Let $q{=}\qdd(A)$.
%
%
\\
\\
Let $\{{\mathcal F}_{A_{i}^{\prime},B_{i}^{\prime}}^{\rho_{i}^{\prime}}:\ i{\in}I\}$ be a $q$-$\alpha$-reduction of $\{{\mathcal F}_{A_{i},B_{i}}^{\rho_{i}}:\ i{\in}I\}$.
Since $(W,R)$ validates $\L$, then for all $i{\in}I$, either ${\mathcal F}_{A_{i},B_{i}}^{\rho_{i}}$ is simple, or ${\mathcal F}_{A_{i},B_{i}}^{\rho_{i}}$ is non-simple and ${\parallel}B_{i}{\parallel}{\leq}k$.
Let $I^{ns}$ be the set of all $i{\in}I$ such that ${\mathcal F}_{A_{i},B_{i}}^{\rho_{i}}$ is non-simple.
Thus, $\{{\mathcal F}_{A_{i},B_{i}}^{\rho_{i}}:\ i{\in}I^{ns}\}$ is kernel-bounded.
Consequently, by Lemma~\ref{lemma:the:alpha:reduction:is:always:dust:finite}, for all $i{\in}I$,
\begin{itemize}
\item $A_{i}^{\prime}{\subseteq}A_{i}$ and if ${\parallel}A_{i}^{\prime}{\parallel}{\leq}2$ then $A_{i}^{\prime}{=}A_{i}$,
\item $B_{i}^{\prime}{=}B_{i}$,
\item for all $s^{\prime}{\in}A_{i}^{\prime}$, $\rho_{i}^{\prime}(s^{\prime}){=}\rho_{i}(s^{\prime})$,
\item ${\parallel}{\rho_{i}^{\prime}}^{{-}1}(\emptyset){\parallel}{\leq}q$ and ${\parallel}{\rho_{i}^{\prime}}^{{-}1}(B_{i}^{\prime}){\parallel}{\leq}q$,
\item for all $t^{\prime}{\in}B_{i}^{\prime}$, ${\parallel}{\rho_{i}^{\prime}}^{{-}1}(\{t^{\prime}\}){\parallel}{\leq}q$ and for all $t^{\prime}{\in}B_{i}^{\prime}$, ${\parallel}{\rho_{i}^{\prime}}^{{-}1}(B_{i}^{\prime}{\setminus}\{t^{\prime}\}){\parallel}{\leq}q$,
\item either ${\mathcal F}_{A_{i}^{\prime},B_{i}^{\prime}}^{\rho_{i}^{\prime}}$ is simple, or ${\mathcal F}_{A_{i}^{\prime},B_{i}^{\prime}}^{\rho_{i}^{\prime}}$ is non-simple, ${\parallel}B_{i}^{\prime}{\parallel}{\leq}k$ and ${\parallel}A_{i}^{\prime}{\parallel}{\leq}q{\times}2^{k}$.
%
%
%
%
%
%
\end{itemize}
Moreover, by Proposition~\ref{lemma:good:properties:of:alpha:reductions:indexed}, $\{{\mathcal F}_{A_{i}^{\prime},B_{i}^{\prime}}^{\rho_{i}^{\prime}}:\ i{\in}I\}$ is dust-bounded and $\{{\mathcal F}_{A_{i}^{\prime},B_{i}^{\prime}}^{\rho_{i}^{\prime}}:\ i{\in}I^{ns}\}$ is root-bounded and kernel-bounded.
In other respect, $(W^{\prime},R^{\prime})$ being the union of $\{{\mathcal F}_{A_{i}^{\prime},B_{i}^{\prime}}^{\rho_{i}^{\prime}}:\ i{\in}I\}$, $(W^{\prime},R^{\prime})$ is a bounded morphic image of $(W,R)$ and the second player has a winning strategy in the Ehrenfeucht-Fra\"\i ss\'e game ${\mathcal G}_{q}((W,R),(W^{\prime},R^{\prime}))$.
As a result, by Lemmas~\ref{lemma:alpha:reduction:modal:logic} and~\ref{lemma:alpha:reduction:fol}, $(W,R){\preceq}(W^{\prime},R^{\prime})$ and $(W,R){\equiv_{q}}(W^{\prime},R^{\prime})$.
\\
\\
Let $\{{\mathcal F}_{A_{i}^{\prime\prime},B_{i}^{\prime\prime}}^{\rho_{i}^{\prime\prime}}:\ i{\in}I\}$ be a $q$-$\gamma$-reduction of $\{{\mathcal F}_{A_{i}^{\prime},B_{i}^{\prime}}^{\rho_{i}^{\prime}}:\ i{\in}I\}$.
Hence, by Lemma~\ref{lemma:gamma:reduction:prebounded:dust:quasi:root:bounded:OK}, for all $i{\in}I$,
\begin{itemize}
\item $A_{i}^{\prime\prime}{\subseteq}A_{i}^{\prime}$,
\item $B_{i}^{\prime\prime}{\subseteq}B_{i}^{\prime}$ and if ${\parallel}B_{i}^{\prime\prime}{\parallel}{\leq}2$ then $B_{i}^{\prime\prime}{=}B_{i}^{\prime}$,
\item for all $s^{\prime\prime}{\in}A_{i}^{\prime\prime}$, $\rho_{i}^{\prime\prime}(s^{\prime\prime}){\subseteq}\rho_{i}^{\prime}(s^{\prime\prime})$,
\item ${\parallel}{\rho_{i}^{\prime\prime}}^{{-}1}(\emptyset){\parallel}{\leq}q$ and ${\parallel}{\rho_{i}^{\prime\prime}}^{{-}1}(B_{i}^{\prime\prime}){\parallel}{\leq}q$,
\item for all $t^{\prime\prime}{\in}B_{i}^{\prime\prime}$, ${\parallel}{\rho_{i}^{\prime\prime}}^{{-}1}(\{t^{\prime\prime}\}){\parallel}{\leq}q$ and for all $t^{\prime\prime}{\in}B_{i}^{\prime\prime}$, ${\parallel}{\rho_{i}^{\prime\prime}}^{{-}1}(B_{i}^{\prime\prime}{\setminus}\{t^{\prime\prime}\}){\parallel}{\leq}q$,
\item either ${\mathcal F}_{A_{i}^{\prime\prime},B_{i}^{\prime\prime}}^{\rho_{i}^{\prime\prime}}$ is simple, ${\parallel}B_{i}^{\prime\prime}{\parallel}{\leq}q{\times}(q{+}1)^{2}$ and ${\parallel}A_{i}^{\prime\prime}{\parallel}{\leq}2{\times}q{\times}(q{\times}(q{+}1)^{2}+1)$, or ${\mathcal F}_{A_{i}^{\prime\prime},B_{i}^{\prime\prime}}^{\rho_{i}^{\prime\prime}}$ is non-simple, ${\parallel}B_{i}^{\prime\prime}{\parallel}{\leq}k$ and ${\parallel}A_{i}^{\prime\prime}{\parallel}{\leq}q{\times}2^{k}$.
%
%
%
%
%
%
\end{itemize}
As a result, $\{{\mathcal F}_{A_{i}^{\prime\prime},B_{i}^{\prime\prime}}^{\rho_{i}^{\prime\prime}}:\ i{\in}I\}$ is dust-bounded, root-bounded and kernel-bounded, seeing that for all $i{\in}I$, ${\parallel}A_{i}^{\prime\prime}{\parallel}{\leq}Q{\times}K$ and ${\parallel}B_{i}^{\prime\prime}{\parallel}{\leq}Q{\times}K$ where $Q{=}2{\times}q{\times}(q{\times}(q{+}
$\linebreak$
1)^{2}+1)$ and $K{=}2^{k}$.
Moreover, $(W^{\prime\prime},R^{\prime\prime})$ being the union of $\{{\mathcal F}_{A_{i}^{\prime\prime},B_{i}^{\prime\prime}}^{\rho_{i}^{\prime\prime}}:\ i{\in}I\}$, by Lemmas~\ref{lemma:gamma:reduction:modal:logic} and~\ref{lemma:gamma:reduction:fol}, $(W^{\prime},R^{\prime}){\preceq}(W^{\prime\prime},R^{\prime\prime})$ and $(W^{\prime},R^{\prime}){\equiv_{q}}(W^{\prime\prime},R^{\prime\prime})$.
%
%
%
%
%
%
%
%
%
%
\\
\\
Let $\{{\mathcal F}_{A_{j}^{\prime\prime},B_{j}^{\prime\prime}}^{\rho_{j}^{\prime\prime}}:\ j{\in}J\}$ be a $q$-$\delta$-reduction of $\{{\mathcal F}_{A_{i}^{\prime\prime},B_{i}^{\prime\prime}}^{\rho_{i}^{\prime\prime}}:\ i{\in}I\}$.
Since $\{{\mathcal F}_{A_{i}^{\prime\prime},B_{i}^{\prime\prime}}^{\rho_{i}^{\prime\prime}}:\ i{\in}I\}$ is dust-bounded, root-bounded and kernel-bounded, then by Proposition~\ref{lemma:good:properties:of:delta:reductions:indexed}, $\{{\mathcal F}_{A_{j}^{\prime\prime},B_{j}^{\prime\prime}}^{\rho_{j}^{\prime\prime}}:\ j{\in}J\}$ is dust-bounded, root-bounded and kernel-bounded.
Moreover, since for all $i{\in}I$, ${\parallel}A_{i}^{\prime\prime}{\parallel}{\leq}Q{\times}K$ and ${\parallel}B_{i}^{\prime\prime}{\parallel}{\leq}Q{\times}K$, then by Proposition~\ref{lemma:good:properties:of:delta:reductions:indexed}, ${\parallel}J{\parallel}{\leq}q{\times}(Q{\times}K{+}1)^{2}{\times}2^{(Q{\times}K)^{2}}$.
In other respect, $(W^{\prime\prime\prime},R^{\prime\prime\prime})$ being the union of $\{{\mathcal F}_{A_{j}^{\prime\prime},B_{j}^{\prime\prime}}^{\rho_{j}^{\prime\prime}}:\ j{\in}J\}$, $(W^{\prime\prime\prime},R^{\prime\prime\prime})$ is a bounded morphic image of $(W^{\prime\prime},R^{\prime\prime})$ and the second player has a winning strategy in the Ehrenfeucht-Fra\"\i ss\'e game ${\mathcal G}_{q}((W^{\prime\prime},R^{\prime\prime}),(W^{\prime\prime\prime},R^{\prime\prime\prime}))$.
Hence, by Lemmas~\ref{lemma:delta:reduction:modal:logic} and~\ref{lemma:delta:reduction:fol}, $(W^{\prime\prime},R^{\prime\prime}){\preceq}(W^{\prime\prime\prime},R^{\prime\prime\prime})$ and $(W^{\prime\prime},R^{\prime\prime}){\equiv_{q}}(W^{\prime\prime\prime},R^{\prime\prime\prime})$.
\\
\\
Since $(W,R)$ validates $\L$, $(W,R)$ does not validate $A$, $(W,R){\preceq}(W^{\prime},R^{\prime})$, $(W,R)
$\linebreak$
{\equiv_{q}}(W^{\prime},R^{\prime})$, $(W^{\prime},R^{\prime}){\preceq}(W^{\prime\prime},R^{\prime\prime})$ and $(W^{\prime},R^{\prime}){\equiv_{q}}(W^{\prime\prime},R^{\prime\prime})$, then $(W^{\prime\prime\prime},R^{\prime\prime\prime})$ validates $\L$ and $(W^{\prime\prime\prime},R^{\prime\prime\prime})$ does not validate $A$.
In other respect, ${\parallel}W^{\prime\prime\prime}{\parallel}{\leq}2{\times}q{\times}(Q{\times}
$\linebreak$
K{+}1)^{2}{\times}2^{(Q{\times}K)^{2}}{\times}Q{\times}K$.
\\
\\
All in all, we have proved that for all sentences $A$, $q$ denoting $\qdd(A)$, $Q$ denoting $2{\times}q{\times}(q{\times}(q{+}1)^{2}+1)$ and $K$ denoting $2^{k}$, the following conditions are equivalent:
\begin{itemize}
\item $A{\in}\Th(\Fr(\L))$,
\item for all finite Euclidean frames $(W,R)$, if ${\parallel}W{\parallel}{\leq}2{\times}q{\times}((Q{\times}K){+}1)^{2}{\times}
$\linebreak$
2^{(Q{\times}K)^{2}}{\times}Q{\times}K$ and $(W,R)$ validates $\varphi$ then $(W,R)$ validates $A$.
\end{itemize}
\medskip
\end{proof}
\begin{lemma}\label{lemma:infinite:undecidable:case}
Let $\L$ be an Euclidean modal logic.
If $\mathtt{S}_{\L}{\setminus}((\{1\}{\times}\mathbf{N}^{-}){\cup}(\mathbf{N}^{+}{\times}\{{-}1,
$\linebreak$
0,1\}))$ is infinite then $\Th(\Fr(\L))$ is undecidable.
\end{lemma}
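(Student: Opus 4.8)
The plan is simply to assemble three already-established facts: the structural fact that $\mathtt{S}_{\L}$ is closed, the combinatorial dichotomy for infinite sets of this shape, and the two case-specific undecidability results. First I would invoke Lemma~\ref{lemma:S:L:is:closed:subset:of:N:plus:times:N:minus} to record that $\mathtt{S}_{\L}$ is a closed subset of $\mathbf{N}^{+}{\times}\mathbf{N}^{-}$. Then, using the hypothesis that $\mathtt{S}_{\L}{\setminus}((\{1\}{\times}\mathbf{N}^{-}){\cup}(\mathbf{N}^{+}{\times}\{{-}1,0,1\}))$ is infinite, I would apply Lemma~\ref{lemma:if:S:L:setminus:some:pairs:is:infinite:then:two:consequences} to conclude that either $\{2\}{\times}\mathbf{N}^{-}{\subseteq}\mathtt{S}_{\L}$ or $\mathbf{N}^{+}{\times}\{2\}{\subseteq}\mathtt{S}_{\L}$.

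Next I would split on these two cases. In the case $\{2\}{\times}\mathbf{N}^{-}{\subseteq}\mathtt{S}_{\L}$, Lemma~\ref{lemma:first:case:2:N:moins} immediately gives that $\Th(\Fr(\L))$ is undecidable. In the case $\mathbf{N}^{+}{\times}\{2\}{\subseteq}\mathtt{S}_{\L}$, Lemma~\ref{lemma:second:case:N:plus:1} gives the same conclusion. Since these two cases are exhaustive, $\Th(\Fr(\L))$ is undecidable, which is what we want.

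There is essentially no obstacle at this stage: the genuine work has already been done in the preparatory material, in particular the relative-elementary-definability constructions of Lemmas~\ref{lemma:relatively:elementary:definable:K2} and~\ref{lemma:relatively:elementary:definable:L2}, which embed the hereditarily undecidable first-order theory of the class of irreflexive symmetric frames with at least two elements into the first-order theories of ${\mathcal K}_{2}$ and ${\mathcal L}_{2}$, together with Lemmas~\ref{lemma:first:case:2:N:moins} and~\ref{lemma:second:case:N:plus:1} which transfer this undecidability to $\Th(\Fr(\L))$. The only point requiring care is purely bookkeeping: one must check that the two alternatives produced by Lemma~\ref{lemma:if:S:L:setminus:some:pairs:is:infinite:then:two:consequences} are literally the hypotheses of Lemmas~\ref{lemma:first:case:2:N:moins} and~\ref{lemma:second:case:N:plus:1}, which they are verbatim.
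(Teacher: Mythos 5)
Your proposal is correct and follows exactly the paper's own argument: closedness of $\mathtt{S}_{\L}$ (Lemma~\ref{lemma:S:L:is:closed:subset:of:N:plus:times:N:minus}) combined with Lemma~\ref{lemma:if:S:L:setminus:some:pairs:is:infinite:then:two:consequences} yields the dichotomy, and Lemmas~\ref{lemma:first:case:2:N:moins} and~\ref{lemma:second:case:N:plus:1} handle the two cases. No differences worth noting.
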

\begin{proof}
Suppose $\mathtt{S}_{\L}{\setminus}((\{1\}{\times}\mathbf{N}^{-}){\cup}(\mathbf{N}^{+}{\times}\{{-}1,0,1\}))$ is infinite.
Hence, by Lemmas~\ref{lemma:if:S:L:setminus:some:pairs:is:infinite:then:two:consequences} and~\ref{lemma:S:L:is:closed:subset:of:N:plus:times:N:minus}, either $\{2\}{\times}\mathbf{N}^{-}{\subseteq}\mathtt{S}_{\L}$, or $\mathbf{N}^{+}{\times}\{2\}{\subseteq}\mathtt{S}_{\L}$.
Thus, by Lemmas~\ref{lemma:first:case:2:N:moins} and~\ref{lemma:second:case:N:plus:1}, $\Th(\Fr(\L))$ is undecidable.
\medskip
\end{proof}
\begin{proposition}\label{proposition:beta:beta:nsc:for:decidability:theory}
For all Euclidean modal logics $\L$, $\Th(\Fr(\L))$ is decidable if and only if $\mathtt{S}_{\L}{\setminus}((\{1\}{\times}\mathbf{N}^{-}){\cup}(\mathbf{N}^{+}{\times}\{{-}1,0,1\}))$ is finite.
In that case, $\Th(\Fr(\L))$ is in $\EXPSPACE$.
\end{proposition}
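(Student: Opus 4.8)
The plan is to prove the two directions of the equivalence separately, folding the complexity bound into the first one. The whole argument is essentially a packaging of Lemma~\ref{decidability:if:setminus:is:finite} (for the ``finite'' case) and Lemma~\ref{lemma:infinite:undecidable:case} (for the ``infinite'' case), so the only genuine work is the complexity bookkeeping for the first direction.

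\medskip

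\textbf{Finite case $\Rightarrow$ $\Th(\Fr(\L))\in\EXPSPACE$.} Suppose $\mathtt{S}_{\L}{\setminus}((\{1\}{\times}\mathbf{N}^{-}){\cup}(\mathbf{N}^{+}{\times}\{{-}1,0,1\}))$ is finite. Using the finite axiomatizability of the extensions of $\K5$, fix a modal formula $\varphi$ with $\L{=}\K5{\oplus}\varphi$, and let $k$ be the least integer $\geq4$ witnessing condition $(\ast)$ of Lemma~\ref{decidability:if:setminus:is:finite} (it exists precisely because that set is finite). Both $\varphi$ and $k$ are constants depending only on $\L$. By Lemma~\ref{decidability:if:setminus:is:finite}, for every sentence $A$ we have $A{\in}\Th(\Fr(\L))$ if and only if every finite Euclidean frame $(W,R)$ with ${\parallel}W{\parallel}{\leq}N(A)$ that validates $\varphi$ validates $A$, where $N(A){=}2{\times}q{\times}((Q{\times}K){+}1)^{2}{\times}2^{(Q{\times}K)^{2}}{\times}Q{\times}K$, $q{=}\qdd(A)$, $Q{=}2{\times}q{\times}(q{\times}(q{+}1)^{2}{+}1)$ and $K{=}2^{k}$. (Here one uses that an Euclidean frame validates $\L$ iff it validates $\varphi$, since every Euclidean frame validates $\K5$ and $\L$ is the normal logic generated over $\K5$ by $\varphi$.) The point of the complexity analysis is that $q$ is linear in ${\parallel}A{\parallel}$, $Q$ is polynomial in ${\parallel}A{\parallel}$, and $K$ is a constant, so $N(A)$ is singly exponential in ${\parallel}A{\parallel}$ and hence the binary encoding of a frame on $N(A)$ points has size $2^{\mathcal{O}(\mathrm{poly}({\parallel}A{\parallel}))}$. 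The decision procedure then enumerates, one at a time and reusing work space, all frames on $\{1,\ldots,\ell\}$ for $\ell{\leq}N(A)$; for each it checks Euclideanness (trivial), checks whether $(W,R){\models}\varphi$ (in $\coNP$ in the input size by Lemma~\ref{lemma:validity:in:a:given:frame:is:in:coNP}, hence in $\PSPACE$, hence in space polynomial in the frame's encoding, i.e.\ in $\EXPSPACE$ in ${\parallel}A{\parallel}$), and checks whether $(W,R){\models}A$ (in $\PSPACE$ in the input size by Lemma~\ref{lemma:validity:in:a:given:frame:is:in:coNP:fol}, again $\EXPSPACE$ in ${\parallel}A{\parallel}$); it rejects as soon as it finds a frame validating $\varphi$ but not $A$, and otherwise accepts. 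Since the enumeration counter, the current frame, and both sub-checks each fit in space exponential in ${\parallel}A{\parallel}$, this shows $\Th(\Fr(\L)){\in}\EXPSPACE$, and in particular $\Th(\Fr(\L))$ is decidable.

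\medskip

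\textbf{Converse direction.} It suffices to prove the contrapositive: if $\mathtt{S}_{\L}{\setminus}((\{1\}{\times}\mathbf{N}^{-}){\cup}(\mathbf{N}^{+}{\times}\{{-}1,0,1\}))$ is infinite then $\Th(\Fr(\L))$ is undecidable. This is exactly Lemma~\ref{lemma:infinite:undecidable:case}. Combining the two directions yields the stated equivalence, and the $\EXPSPACE$ bound for the decidable case is supplied by the first direction.

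\medskip

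I expect the only real work to be the bookkeeping in the first part: checking that $N(A)$ is indeed singly exponential in ${\parallel}A{\parallel}$, and that ``$\PSPACE$ on an input of exponential size'' collapses to $\EXPSPACE$ in the original parameter. The substantive mathematical content (the reduction to validity on finitely many bounded frames, and the undecidability construction) is already isolated in Lemmas~\ref{decidability:if:setminus:is:finite} and~\ref{lemma:infinite:undecidable:case}; the one additional point to state carefully is the harmless interchange of ``$(W,R){\models}\L$'' and ``$(W,R)$ validates $\varphi$'' for Euclidean frames.
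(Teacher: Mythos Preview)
Your proof is correct and follows essentially the same approach as the paper: the paper's proof simply cites Lemmas~\ref{lemma:validity:in:a:given:frame:is:in:coNP}, \ref{lemma:validity:in:a:given:frame:is:in:coNP:fol}, \ref{decidability:if:setminus:is:finite} and~\ref{lemma:infinite:undecidable:case}, which is exactly the decomposition you use, with your additional contribution being an explicit unpacking of the $\EXPSPACE$ bookkeeping that the paper leaves implicit.
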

\begin{proof}
By Lemmas~\ref{lemma:validity:in:a:given:frame:is:in:coNP}, \ref{lemma:validity:in:a:given:frame:is:in:coNP:fol}, \ref{decidability:if:setminus:is:finite} and~\ref{lemma:infinite:undecidable:case}.
\medskip
\end{proof}
%
%
%
%
%
%
%
%
%
%
%
%
%
%
%
%
%
%
%
%
%
%
%
%
%
%
%
%
%
%
%
%
%
%
%
%
%
%
Remind that for all Euclidean modal logics $\L$, the problem of deciding the first-order definability of modal formulas with respect to $\Fr(\L)$ is trivial~\cite{Balbiani:Georgiev:Tinchev:2018}.
The following results show that for all Euclidean modal logics $\L$, the computability of the problem of deciding the modal definability of sentences with respect to $\Fr(\L)$ only depends on the cardinality of $\mathtt{S}_{\L}{\setminus}((\{1\}{\times}\mathbf{N}^{-}){\cup}(\mathbf{N}^{+}{\times}\{{-}1,0,1\}))$.
\begin{lemma}\label{proposition:gamma:gamma:L:finite:sl:setminus:ensemble:TFCAE:sentence:A}
Let $\L$ be an Euclidean modal logic such that $\mathtt{S}_{\L}{\setminus}((\{1\}{\times}\mathbf{N}^{-}){\cup}(\mathbf{N}^{+}{\times}
$\linebreak$
\{{-}1,0,1\}))$ is finite.
Let $\varphi$ be a modal formula such that $\L{=}\K5{\oplus}\varphi$.
Let $k$ be the least integer such that $k{\geq}4$ and $(\ast)$~for all $m{\in}\mathbf{N}^{+}$ and for all $n{\in}\mathbf{N}^{-}$, if $(m,n){\in}\mathtt{S}_{\L}{\setminus}((\{1\}{\times}\mathbf{N}^{-}){\cup}(\mathbf{N}^{+}{\times}\{{-}1,0,1\}))$ then $m{+}n{\leq}k$.
For all sentences $A$ and for all individual variables $\mathbf{x}$ not occurring in $A$, the following conditions are equivalent:
\begin{itemize}
\item $A$ is modally definable in $\Fr(\L)$,
\item $A{\leftrightarrow}\forall\mathbf{x}\tau(\mathbf{x},A)$ is in $\Th(\Fr(\L))$ and for all $m,m^{\prime}{\in}\mathbf{N}^{+}$ and for all $n,n^{\prime}{\in}\mathbf{N}^{-}$, if ${\mathcal F}_{m}^{n}{\models}\varphi$, $(m,n){\in}\Pi_{k}^{A}$, $m^{\prime}{\leq}m$, $n^{\prime}{\leq}n$ and ${\mathcal F}_{m}^{n}{\models}A$ then ${\mathcal F}_{m^{\prime}}^{n^{\prime}}{\models}A$.
\end{itemize}
\end{lemma}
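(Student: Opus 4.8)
The goal is to characterize modal definability of a sentence $A$ in $\Fr(\L)$ when $\mathtt{S}_{\L}{\setminus}((\{1\}{\times}\mathbf{N}^{-}){\cup}(\mathbf{N}^{+}{\times}\{{-}1,0,1\}))$ is finite, via the criterion of Proposition~\ref{proposition:characterizing:modal:definability}. That proposition reduces modal definability to two conditions: a ``generated subframe'' condition (for all Euclidean $(W,R){\models}\L$, we have $(W,R){\models}A$ iff $(W_s,R_s){\models}A$ for all $s{\in}W$) and a ``bounded morphic image'' condition on flowers validating both $\L$ and $A$. The plan is to show that, under the finiteness hypothesis on $\mathtt{S}_{\L}$, the first condition is equivalent to $A{\leftrightarrow}\forall\mathbf{x}\tau(\mathbf{x},A)$ lying in $\Th(\Fr(\L))$, and that the second condition is equivalent to the bounded combinatorial condition involving only $(m,n){\in}\Pi_k^A$.

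Let me detail the two halves. For the generated subframe condition: by Lemma~\ref{lemma:about:the:rooted:translation}, for any Euclidean frame $(W,R)$ we have $(W,R){\models}\forall\mathbf{x}\tau(\mathbf{x},A)$ iff $(W_s,R_s){\models}A$ for all $s{\in}W$. So the biconditional ``$(W,R){\models}A$ iff for all $s$, $(W_s,R_s){\models}A$'' holding throughout $\Fr(\L)$ is exactly ``$(W,R){\models}A{\leftrightarrow}\forall\mathbf{x}\tau(\mathbf{x},A)$ for all $(W,R){\in}\Fr(\L)$'', i.e.\ $A{\leftrightarrow}\forall\mathbf{x}\tau(\mathbf{x},A){\in}\Th(\Fr(\L))$. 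This part is essentially immediate from Lemma~\ref{lemma:about:the:rooted:translation} and needs only a careful unpacking; no finiteness hypothesis is used here.

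For the bounded morphic image condition: we must show that ``for all flowers ${\mathcal F}_m^n, {\mathcal F}_{m'}^{n'}$ with ${\mathcal F}_m^n{\models}\L$, ${\mathcal F}_m^n{\models}A$ and ${\mathcal F}_{m'}^{n'}$ a bounded morphic image of ${\mathcal F}_m^n$, we have ${\mathcal F}_{m'}^{n'}{\models}A$'' is equivalent to the stated restricted condition quantifying over $(m,n){\in}\Pi_k^A$ with $m'{\leq}m$, $n'{\leq}n$. By Lemma~\ref{lemma:about:bounded:morphisms:in:the:situation:of F:m:n:frames}, ${\mathcal F}_{m'}^{n'}$ is a bounded morphic image of ${\mathcal F}_m^n$ iff $(m',n'){\ll}(m,n)$, i.e.\ $m'{\leq}m$ and $n'{\leq}n$; so the phrasings about bounded morphic images and about the order coincide. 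The real work is showing we may restrict attention to $(m,n){\in}\Pi_k^A$. The idea: given an arbitrary flower ${\mathcal F}_m^n{\models}\L$ with ${\mathcal F}_m^n{\models}A$ and a bounded morphic image ${\mathcal F}_{m'}^{n'}$, we want to replace ${\mathcal F}_m^n$ by a ``small'' flower ${\mathcal F}_{\tilde m}^{\tilde n}$ with $(\tilde m,\tilde n){\in}\Pi_k^A$, still validating $\L$ and $A$, and with ${\mathcal F}_{m'}^{n'}$ still a bounded morphic image. Here the finiteness hypothesis (via $(\ast)$ and Lemma~\ref{lemma:about:flowers:simple:galaxies}) is used: any flower validating $\L$ is either simple (so $m{=}1$, or $n{\in}\{-1,0,1\}$) or has $m{+}n{\leq}k$. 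In the non-simple case we already have $m{+}n{\leq}k$ so $(m,n){\in}\Pi_k^A$ directly. In the simple cases we truncate the ``large'' dimension down to $\max$ of the quantifier-depth bound $\qdd(A)$ and what is needed to keep ${\mathcal F}_{m'}^{n'}$ a bounded morphic image: using Lemma~\ref{lemma:about:partitions:q:greater:smaller:games} or Lemma~\ref{lemma:about:frames:C:B:smaller:greater:games} to get $q$-equivalence (hence $A$-preservation via Lemma~\ref{ehrenfeucht:theorem}), Lemma~\ref{lemma:about:bounded:morphisms:in:the:situation:of F:m:n:frames} and Lemma~\ref{Bounded:Morphism:Lemma} to keep $\L$-validity and the bounded-morphism relationship, and the definition of $\Pi_k^A$ to land in the right set. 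Conversely, the restricted condition trivially follows from the full one since $\Pi_k^A$-flowers are a special case.

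The main obstacle will be the truncation argument for simple flowers: one must check that shrinking $m$ (when $n{\in}\{-1,0,1\}$) or $n$ (when $m{=}1$) down to roughly $\max\{\qdd(A), n'\}$ (resp.\ $\max\{\qdd(A), m'\}$) simultaneously (i) preserves $\L$-validity --- which follows because $\mathtt{S}_{\L}$ is a closed subset (Lemma~\ref{lemma:S:L:is:closed:subset:of:N:plus:times:N:minus}) so every flower below an $\L$-flower in the $\ll$ order also validates $\L$ --- (ii) preserves non-validity of $A$ on the target image, handled by $q$-equivalence and the Ehrenfeucht--Fra\"\i ss\'e machinery, and (iii) keeps the truncated flower $\ll$-above ${\mathcal F}_{m'}^{n'}$, which constrains how far we can truncate and is why the bound in $\Pi_k^A$ uses $\qdd(A)$ rather than a smaller number. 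Once these three are aligned, both implications of the biconditional follow by combining the two equivalences above with Proposition~\ref{proposition:characterizing:modal:definability}.
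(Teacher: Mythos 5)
Your overall decomposition is the paper's: the first conjunct is obtained from Proposition~\ref{proposition:characterizing:modal:definability} together with Lemma~\ref{lemma:about:the:rooted:translation} exactly as you say, and the second conjunct is handled with the same ingredients the paper uses (the classification via $(\ast)$ and Lemma~\ref{lemma:about:flowers:simple:galaxies}, closedness of $\mathtt{S}_{\L}$, Lemma~\ref{lemma:about:bounded:morphisms:in:the:situation:of F:m:n:frames}, and Ehrenfeucht--Fra\"\i ss\'e equivalence of large flowers). The paper merely organizes the hard direction as a $\ll$-minimal counterexample descent rather than your one-shot truncation; that difference is cosmetic.

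However, the truncation step as you state it would fail in one sub-case. You propose to replace ${\mathcal F}_{m}^{n}$ by ${\mathcal F}_{\tilde m}^{\tilde n}$ with $(\tilde m,\tilde n){\in}\Pi_{k}^{A}$, $\L$- and $A$-valid, and still $\ll$-above $(m^{\prime},n^{\prime})$, truncating down to $\max\{\qdd(A),m^{\prime}\}$ (resp.\ $\max\{\qdd(A),n^{\prime}\}$). But when the target's unbounded coordinate exceeds $\qdd(A)$~---~say $n{\in}\{{-}1,0,1\}$ and $m^{\prime}{>}\qdd(A)$~---~\emph{no} element of $\Pi_{k}^{A}$ lies $\ll$-above $(m^{\prime},n^{\prime})$: any pair in $\Pi_{k}^{A}$ with first coordinate ${\geq}m^{\prime}{>}\qdd(A)$ must have second coordinate ${\geq}2$, which is incompatible with $\tilde n{\leq}n{\leq}1$. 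So ``land in $\Pi_{k}^{A}$ while keeping ${\mathcal F}_{m^{\prime}}^{n^{\prime}}$ a bounded morphic image'' is impossible there, and your claim (iii) silently assumes $m^{\prime}{\leq}\qdd(A)$ (similarly $n^{\prime}{\leq}\qdd(A)$ in the $m{=}1$ family). Note that in this sub-case you also cannot conclude directly by $q$-equivalence of source and target, since $n^{\prime}$ may be strictly below $n$.

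The gap is repairable with the tools you already cite: writing $q{=}\qdd(A)$, truncate the source to ${\mathcal F}_{q}^{n}$ (which is in $\Pi_{k}^{A}$, validates $\varphi$ by Lemma~\ref{lemma:S:L:is:closed:subset:of:N:plus:times:N:minus}, and validates $A$ since ${\mathcal F}_{q}^{n}{\equiv_{q}}{\mathcal F}_{m}^{n}$ by Lemmas~\ref{lemma:about:frames:C:B:smaller:greater:games} and~\ref{ehrenfeucht:theorem}), apply the restricted condition to the pair $(q,n)$, $(q,n^{\prime})$ to get ${\mathcal F}_{q}^{n^{\prime}}{\models}A$, and then transfer to ${\mathcal F}_{m^{\prime}}^{n^{\prime}}$ along ${\mathcal F}_{m^{\prime}}^{n^{\prime}}{\equiv_{q}}{\mathcal F}_{q}^{n^{\prime}}$; i.e.\ you must also replace the \emph{target} by a $q$-equivalent small flower, not only the source. (The analogous $m{=}1$, $n^{\prime}{>}\qdd(A)$ sub-case is easier, since there ${\mathcal F}_{1}^{n^{\prime}}{\equiv_{q}}{\mathcal F}_{1}^{n}$ directly.) The paper's $\ll$-minimality argument absorbs exactly this bookkeeping; as written, your proposal misses it.
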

\begin{proof}
Let $A$ be a sentence and $\mathbf{x}$ be an individual variable not occurring in $A$.
\\
\\
$\mathbf{(\Rightarrow)}$~Suppose $A$ is modally definable in $\Fr(\L)$.
Hence, by Proposition~\ref{proposition:characterizing:modal:definability}, $\mathbf{(i)}$~for all Euclidean frames $(W,R)$, if $(W,R){\models}\L$ then $(W,R){\models}A$ if and only if for all $s{\in}W$, $(W_{s},R_{s}){\models}A$ and $\mathbf{(ii)}$~for all flowers ${\mathcal F}_{m^{\prime\prime}}^{n^{\prime\prime}},{\mathcal F}_{m^{\prime\prime\prime}}^{n^{\prime\prime\prime}}$, if ${\mathcal F}_{m^{\prime\prime}}^{n^{\prime\prime}}{\models}\L$, ${\mathcal F}_{m^{\prime\prime}}^{n^{\prime\prime}}{\models}A$ and ${\mathcal F}_{m^{\prime\prime\prime}}^{n^{\prime\prime\prime}}$ is a bounded morphic image of ${\mathcal F}_{m^{\prime\prime}}^{n^{\prime\prime}}$ then ${\mathcal F}_{m^{\prime\prime\prime}}^{n^{\prime\prime\prime}}{\models}A$.
Thus, by Lemma~\ref{lemma:about:the:rooted:translation}, for all Euclidean frames $(W,R)$, if $(W,R){\models}\L$ then $(W,R){\models}A$ if and only if $(W,R){\models}\forall\mathbf{x}\tau(\mathbf{x},A)$.
Consequently, $A{\leftrightarrow}\forall\mathbf{x}\tau(\mathbf{x},A)$ is in $\Th(\Fr(\L))$.
For the sake of the contradiction, suppose there exists $m,m^{\prime}{\in}\mathbf{N}^{+}$ and there exists $n,n^{\prime}{\in}\mathbf{N}^{-}$ such that ${\mathcal F}_{m}^{n}{\models}\varphi$, $(m,n){\in}\Pi_{k}^{A}$, $m^{\prime}{\leq}m$, $n^{\prime}{\leq}n$, ${\mathcal F}_{m}^{n}{\models}A$ and ${\mathcal F}_{m^{\prime}}^{n^{\prime}}{\not\models}A$.
Hence, ${\mathcal F}_{m}^{n}{\models}\L$.
Moreover, by Lemma~\ref{lemma:about:bounded:morphisms:in:the:situation:of F:m:n:frames}, ${\mathcal F}_{m^{\prime}}^{n^{\prime}}$ is a bounded morphic image of ${\mathcal F}_{m}^{n}$.
Since ${\mathcal F}_{m}^{n}{\models}A$, then by $\mathbf{(ii)}$, ${\mathcal F}_{m^{\prime}}^{n^{\prime}}{\models}A$: a contradiction.
\\
\\
$\mathbf{(\Leftarrow)}$~Suppose $\mathbf{(iii)}$~$A{\leftrightarrow}\forall\mathbf{x}\tau(\mathbf{x},A)$ is in $\Th(\Fr(\L))$ and $\mathbf{(iv)}$~for all $m,m^{\prime}{\in}\mathbf{N}^{+}$ and for all $n,n^{\prime}{\in}\mathbf{N}^{-}$, if ${\mathcal F}_{m}^{n}{\models}\varphi$, $(m,n){\in}\Pi_{k}^{A}$, $m^{\prime}{\leq}m$, $n^{\prime}{\leq}n$ and ${\mathcal F}_{m}^{n}{\models}A$ then ${\mathcal F}_{m^{\prime}}^{n^{\prime}}{\models}A$.
For the sake of the contradiction, suppose $A$ is not modally definable in $\Fr(\L)$.
Since $\mathbf{(iii)}$, then for all Euclidean frames $(W,R)$, if $(W,R){\models}\L$ then $(W,R){\models}A$ if and only if $(W,R){\models}\forall\mathbf{x}\tau(\mathbf{x},A)$.
Thus, by Lemma~\ref{lemma:about:the:rooted:translation}, for all Euclidean frames $(W,R)$, if $(W,R){\models}\L$ then $(W,R){\models}A$ if and only if for all $s{\in}W$, $(W_{s},R_{s}){\models}A$.
Since $A$ is not modally definable in $\Fr(\L)$, then by Proposition~\ref{proposition:characterizing:modal:definability}, there exists flowers ${\mathcal F}_{m^{\prime\prime}}^{n^{\prime\prime}},{\mathcal F}_{m^{\prime\prime\prime}}^{n^{\prime\prime\prime}}$ such that ${\mathcal F}_{m^{\prime\prime}}^{n^{\prime\prime}}{\models}\L$, ${\mathcal F}_{m^{\prime\prime}}^{n^{\prime\prime}}{\models}A$, ${\mathcal F}_{m^{\prime\prime\prime}}^{n^{\prime\prime\prime}}$ is a bounded morphic image of ${\mathcal F}_{m^{\prime\prime}}^{n^{\prime\prime}}$ and ${\mathcal F}_{m^{\prime\prime\prime}}^{n^{\prime\prime\prime}}{\not\models}A$.
Consequently, ${\mathcal F}_{m^{\prime\prime}}^{n^{\prime\prime}}{\models}\varphi$.
Moreover, by Lemma~\ref{lemma:about:bounded:morphisms:in:the:situation:of F:m:n:frames}, $m^{\prime\prime\prime}{\leq}m^{\prime\prime}$ and $n^{\prime\prime\prime}{\leq}n^{\prime\prime}$.
Let ${\mathcal S}{=}\{(m,n):\ m{\in}\mathbf{N}^{+},\ n{\in}\mathbf{N}^{-},\ {\mathcal F}_{m}^{n}{\models}\varphi,\ {\mathcal F}_{m}^{n}{\models}A$ and there exists $m^{\prime}{\in}\mathbf{N}^{+}$ and $n^{\prime}{\in}\mathbf{N}^{-}$ such that $m^{\prime}{\leq}m$, $n^{\prime}{\leq}n$ and ${\mathcal F}_{m^{\prime}}^{n^{\prime}}{\not\models}A\}$.
Since ${\mathcal F}_{m^{\prime\prime}}^{n^{\prime\prime}}{\models}A$, ${\mathcal F}_{m^{\prime\prime\prime}}^{n^{\prime\prime\prime}}{\not\models}A$, ${\mathcal F}_{m^{\prime\prime}}^{n^{\prime\prime}}{\models}\varphi$, $m^{\prime\prime\prime}{\leq}m^{\prime\prime}$ and $n^{\prime\prime\prime}{\leq}n^{\prime\prime}$, then $(m^{\prime\prime},n^{\prime\prime}){\in}{\mathcal S}$.
Let $(m,n)$ be a $\ll$-minimal element in ${\mathcal S}$.
Hence, ${\mathcal F}_{m}^{n}{\models}\varphi$, ${\mathcal F}_{m}^{n}{\models}A$ and there exists $m^{\prime}{\in}\mathbf{N}^{+}$ and $n^{\prime}{\in}\mathbf{N}^{-}$ such that $m^{\prime}{\leq}m$, $n^{\prime}{\leq}n$ and ${\mathcal F}_{m^{\prime}}^{n^{\prime}}{\not\models}A$.
Thus, $(m,n){\in}\mathtt{S}_{\L}$.
Moreover,
by $\mathbf{(iv)}$, $(m,n){\not\in}\Pi_{k}^{A}$.
Thus, either $m{\geq}2$, $n{\geq}2$ and $m{+}n{>}k$, or $m{=}1$ and $n{>}\qdd(A)$, or either $n{=}{-}1$, or $n{=}0$, or $n{=}1$ and $m{>}\qdd(A)$.
In the first case, since $(m,n){\in}\mathtt{S}_{\L}$, then $(m,n){\in}\mathtt{S}_{\L}{\setminus}((\{1\}{\times}\mathbf{N}^{-}){\cup}(\mathbf{N}^{+}{\times}\{{-}1,0,1\}))$.
Consequently, by~$(\ast)$, $m{+}n{\leq}k$: a contradiction.
In the second case, since ${\mathcal F}_{m}^{n}{\models}A$ and ${\mathcal F}_{m^{\prime}}^{n^{\prime}}{\not\models}A$, then ${\mathcal F}_{m}^{n}{\not\equiv_{\qdd(A)}}{\mathcal F}_{m^{\prime}}^{n^{\prime}}$.
Since $m^{\prime}{\leq}m$, $n^{\prime}{\leq}n$, $m{=}1$ and $n{>}\qdd(A)$, then $(m,n{-}1){\in}{\mathcal S}$: a contradiction with the $\ll$-minimality of $(m,n)$ in ${\mathcal S}$.
In the third case, since ${\mathcal F}_{m}^{n}{\models}A$ and ${\mathcal F}_{m^{\prime}}^{n^{\prime}}{\not\models}A$, then ${\mathcal F}_{m}^{n}{\not\equiv_{\qdd(A)}}{\mathcal F}_{m^{\prime}}^{n^{\prime}}$.
Since $m^{\prime}{\leq}m$, $n^{\prime}{\leq}n$, either $n{=}{-}1$, or $n{=}0$, or $n{=}1$ and $m{>}\qdd(A)$, then $(m{-}1,n){\in}{\mathcal S}$: a contradiction with the $\ll$-minimality of $(m,n)$ in ${\mathcal S}$.
%
%
\medskip
\end{proof}
%
%
%
%
%
%
%
%
%
%
%
%
%
%
%
%
%
%
%
%
%
%
%
%
%
%
%
%
%
%
%
%
\begin{lemma}\label{proposition:undecidability:of:modal:definability}
Let $\L$ be an Euclidean modal logic.
If $\mathtt{S}_{\L}{\setminus}((\{1\}{\times}\mathbf{N}^{-}){\cup}(\mathbf{N}^{+}{\times}\{{-}1,
$\linebreak$
0,1\}))$ is infinite then the problem of deciding the modal definability of sentences with respect to $\Fr(\L)$ is undecidable.
\end{lemma}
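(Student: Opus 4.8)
The plan is to derive this statement immediately from two results already established in the excerpt, chaining them together. First I would invoke Lemma~\ref{lemma:infinite:undecidable:case}: under the hypothesis that $\mathtt{S}_{\L}{\setminus}((\{1\}{\times}\mathbf{N}^{-}){\cup}(\mathbf{N}^{+}{\times}\{{-}1,0,1\}))$ is infinite, it follows that $\Th(\Fr(\L))$ is undecidable. (Recall the internal mechanism there: by Lemmas~\ref{lemma:if:S:L:setminus:some:pairs:is:infinite:then:two:consequences} and~\ref{lemma:S:L:is:closed:subset:of:N:plus:times:N:minus}, infinitude of that set forces either $\{2\}{\times}\mathbf{N}^{-}{\subseteq}\mathtt{S}_{\L}$ or $\mathbf{N}^{+}{\times}\{2\}{\subseteq}\mathtt{S}_{\L}$, and then Lemmas~\ref{lemma:first:case:2:N:moins} and~\ref{lemma:second:case:N:plus:1} give undecidability of $\Th(\Fr(\L))$ via relative elementary definability of the hereditarily undecidable class of irreflexive symmetric frames with at least two elements inside ${\mathcal K}_{2}$ or ${\mathcal L}_{2}$. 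But for the present lemma I only need the conclusion of Lemma~\ref{lemma:infinite:undecidable:case}, not its proof.)

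Second I would apply Lemma~\ref{Lemma:53:bis}, which says precisely that for an Euclidean modal logic $\L$, if $\Th(\Fr(\L))$ is undecidable then the problem of deciding the modal definability of sentences with respect to $\Fr(\L)$ is undecidable. This step ultimately rests on Lemma~\ref{lemma:L:modal:logic:Fr:L:is:stable} (every $\Fr(\L)$ is stable) together with Proposition~\ref{Validity:And:Definability} (for stable classes, validity reduces to modal definability); combining these reductions, an algorithm for modal definability would yield one for validity of sentences, i.e.\ for membership in $\Th(\Fr(\L))$, contradicting the undecidability obtained in the first step.

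Putting the two together: assume $\mathtt{S}_{\L}{\setminus}((\{1\}{\times}\mathbf{N}^{-}){\cup}(\mathbf{N}^{+}{\times}\{{-}1,0,1\}))$ is infinite; then $\Th(\Fr(\L))$ is undecidable by Lemma~\ref{lemma:infinite:undecidable:case}, hence the modal definability problem with respect to $\Fr(\L)$ is undecidable by Lemma~\ref{Lemma:53:bis}. There is no real obstacle here — the lemma is a corollary of the two cited results, and the work has already been done upstream (the genuinely hard parts being the relative interpretation constructions in Lemmas~\ref{lemma:relatively:elementary:definable:K2} and~\ref{lemma:relatively:elementary:definable:L2}, and the stability argument behind Proposition~\ref{Validity:And:Definability}). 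The only thing to be careful about is that the hypothesis of Lemma~\ref{Lemma:53:bis} is exactly ``$\Th(\Fr(\L))$ is undecidable'', which is the verbatim conclusion of Lemma~\ref{lemma:infinite:undecidable:case}, so the chaining is immediate.
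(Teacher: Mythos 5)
Your proposal is correct and matches the paper's own proof exactly: the paper derives this lemma directly by combining Lemma~\ref{Lemma:53:bis} and Lemma~\ref{lemma:infinite:undecidable:case}, precisely the chaining you describe.
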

\begin{proof}
By Lemmas~\ref{Lemma:53:bis} and~\ref{lemma:infinite:undecidable:case}.
\medskip
\end{proof}
\begin{theorem}\label{proposition:finale}
For all Euclidean modal logics $\L$, the problem of deciding the modal definability of sentences with respect to $\Fr(\L)$ is decidable if and only if $\mathtt{S}_{\L}{\setminus}((\{1\}{\times}\mathbf{N}^{-}){\cup}(\mathbf{N}^{+}{\times}\{{-}1,0,1\}))$ is finite.
In that case,
%
%
%
%
%
%
%
%
%
%
%
%
%
%
%
%
%
%
%
%
%
%
%
%
%
%
%
%
%
%
the problem of deciding the modal definability of sentences with respect to $\Fr(\L)$ is in $\EXPSPACE$.
\end{theorem}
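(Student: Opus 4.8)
The plan is to establish the two implications ``finite $\Rightarrow$ decidable in $\EXPSPACE$'' and ``infinite $\Rightarrow$ undecidable'' separately; together they yield the stated biconditional and its complexity refinement. Fix an Euclidean modal logic $\L$. The second implication is immediate: if $\mathtt{S}_{\L}{\setminus}((\{1\}{\times}\mathbf{N}^{-}){\cup}(\mathbf{N}^{+}{\times}\{{-}1,0,1\}))$ is infinite, then by Lemma~\ref{proposition:undecidability:of:modal:definability} the problem of deciding the modal definability of sentences with respect to $\Fr(\L)$ is undecidable.

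For the first implication, suppose that set is finite. Since every extension of $\K5$ is finitely axiomatizable~\cite{Nagle:1981,Nagle:Thomason:1985}, I would fix a modal formula $\varphi$ with $\L{=}\K5{\oplus}\varphi$ and take $k$ to be the least integer $\geq 4$ satisfying condition~$(\ast)$ of Lemma~\ref{proposition:gamma:gamma:L:finite:sl:setminus:ensemble:TFCAE:sentence:A}; such a $k$ exists exactly because the set above is finite, and both $\varphi$ and $k$ are constants determined by $\L$, independent of the sentence being tested. Given a sentence $A$ and an individual variable $\mathbf{x}$ not occurring in $A$, Lemma~\ref{proposition:gamma:gamma:L:finite:sl:setminus:ensemble:TFCAE:sentence:A} reduces modal definability of $A$ in $\Fr(\L)$ to the conjunction of $(\mathbf{a})$~$A{\leftrightarrow}\forall\mathbf{x}\tau(\mathbf{x},A){\in}\Th(\Fr(\L))$ and $(\mathbf{b})$~the assertion that for all $m,m^{\prime}{\in}\mathbf{N}^{+}$ and all $n,n^{\prime}{\in}\mathbf{N}^{-}$ with $(m,n){\in}\Pi_{k}^{A}$, $m^{\prime}{\leq}m$ and $n^{\prime}{\leq}n$, if ${\mathcal F}_{m}^{n}{\models}\varphi$ and ${\mathcal F}_{m}^{n}{\models}A$ then ${\mathcal F}_{m^{\prime}}^{n^{\prime}}{\models}A$. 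Condition~$(\mathbf{a})$ is decidable in $\EXPSPACE$ by Proposition~\ref{proposition:beta:beta:nsc:for:decidability:theory} (applicable since the set is finite) together with the observation that $\size(A{\leftrightarrow}\forall\mathbf{x}\tau(\mathbf{x},A))$ is linear in $\size(A)$. For condition~$(\mathbf{b})$, Lemma~\ref{lemma:ll:finitely:many:PI:PI:pairs} places $\Pi_{k}^{A}$ inside $\lsem1,\max\{k,\qdd(A)\}\rsem{\times}\lsem{-}1,\max\{k,\qdd(A)\}\rsem$, so $k$ being fixed, this index set, the pairs $(m^{\prime},n^{\prime})$ under each of its members, and the flowers ${\mathcal F}_{m}^{n}$ and ${\mathcal F}_{m^{\prime}}^{n^{\prime}}$ involved all have size polynomial in $\size(A)$; testing ${\mathcal F}_{m}^{n}{\models}\varphi$ is then in $\coNP$ by Lemma~\ref{lemma:validity:in:a:given:frame:is:in:coNP} and testing ${\mathcal F}_{m}^{n}{\models}A$ and ${\mathcal F}_{m^{\prime}}^{n^{\prime}}{\models}A$ is in $\PSPACE$ by Lemma~\ref{lemma:validity:in:a:given:frame:is:in:coNP:fol}, so $(\mathbf{b})$ can be checked in polynomial space. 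Combining $(\mathbf{a})$ and $(\mathbf{b})$, the modal definability problem with respect to $\Fr(\L)$ lies in $\EXPSPACE$, hence is decidable.

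I expect the only delicate part to be the complexity bookkeeping rather than any conceptual step: one must make sure $\varphi$ and $k$ genuinely do not enter the input size, that the enumeration over the doubly-exponentially many but individually exponential-size finite Euclidean frames furnished by Lemma~\ref{decidability:if:setminus:is:finite} runs in exponential space, and that the enumeration over $\Pi_{k}^{A}$ and the pairs below its members runs in polynomial space. Once the cardinality bound on $\Pi_{k}^{A}$ and the polynomial size bound on the flowers in play are made explicit, the decidability and the $\EXPSPACE$ bound follow by direct appeal to Proposition~\ref{proposition:beta:beta:nsc:for:decidability:theory} and Lemmas~\ref{lemma:validity:in:a:given:frame:is:in:coNP} and~\ref{lemma:validity:in:a:given:frame:is:in:coNP:fol}.
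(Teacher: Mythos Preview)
Your proposal is correct and follows essentially the same route as the paper's own proof, which simply cites Lemmas~\ref{lemma:validity:in:a:given:frame:is:in:coNP}, \ref{lemma:ll:finitely:many:PI:PI:pairs}, \ref{lemma:validity:in:a:given:frame:is:in:coNP:fol}, \ref{proposition:gamma:gamma:L:finite:sl:setminus:ensemble:TFCAE:sentence:A}, \ref{proposition:undecidability:of:modal:definability} and Proposition~\ref{proposition:beta:beta:nsc:for:decidability:theory}. You have in fact spelled out exactly how these ingredients combine---splitting into the infinite/undecidable and finite/$\EXPSPACE$ directions, reducing the latter via Lemma~\ref{proposition:gamma:gamma:L:finite:sl:setminus:ensemble:TFCAE:sentence:A} to a $\Th(\Fr(\L))$ membership test (handled by Proposition~\ref{proposition:beta:beta:nsc:for:decidability:theory}) plus a polynomially bounded enumeration over $\Pi_{k}^{A}$ (handled by Lemmas~\ref{lemma:ll:finitely:many:PI:PI:pairs}, \ref{lemma:validity:in:a:given:frame:is:in:coNP}, \ref{lemma:validity:in:a:given:frame:is:in:coNP:fol})---so nothing is missing.
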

\begin{proof}
By Lemmas~\ref{lemma:validity:in:a:given:frame:is:in:coNP}, \ref{lemma:ll:finitely:many:PI:PI:pairs}, \ref{lemma:validity:in:a:given:frame:is:in:coNP:fol}, \ref{proposition:gamma:gamma:L:finite:sl:setminus:ensemble:TFCAE:sentence:A} and~\ref{proposition:undecidability:of:modal:definability} and Proposition~\ref{proposition:beta:beta:nsc:for:decidability:theory}.
\medskip
\end{proof}
%
%
%
%
%
%
%
%
%
%
%
%
%
%
%
%
%
%
%
%
%
%
%
%
%
%
%
%
%
%
%
%
%
%
%
%
%
%
%
%
%
%
%
%
%
%
%
%
%
%
%
%
%
%
%
%
%
%
%
%
%
%
%
%
%
%
%
%
\begin{theorem}\label{proposition:finale:correspondence:problem}
For all Euclidean modal logics $\L$, the problem of deciding the correspondence of modal formulas and sentences with respect to $\Fr(\L)$ is decidable if and only if $\mathtt{S}_{\L}{\setminus}((\{1\}{\times}\mathbf{N}^{-}){\cup}(\mathbf{N}^{+}{\times}\{{-}1,0,1\}))$ is finite.
In that case,
%
%
%
%
%
%
%
%
%
%
%
%
%
%
%
%
%
%
%
%
%
%
%
%
%
%
%
%
%
%
the problem of deciding the correspondence of modal formulas and sentences with respect to $\Fr(\L)$ is in $\EXPSPACE$.
\end{theorem}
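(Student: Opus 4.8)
The plan is to derive Theorem~\ref{proposition:finale:correspondence:problem} from the material already in hand: Theorem~\ref{proposition:finale} on the computability of modal definability, Proposition~\ref{proposition:beta:beta:nsc:for:decidability:theory} on the computability of $\Th(\Fr(\L))$, the construction in the proof of Proposition~\ref{proposition:characterizing:modal:definability} of an explicit modal formula $\varphi_{\L}^{A}$ witnessing modal definability, and Lemma~\ref{lemma:infinite:undecidable:case}. The starting remark is that, for a modal formula $\varphi$ and a sentence $A$, ``$\varphi$ and $A$ correspond in $\Fr(\L)$'' says exactly that $A$ is modally definable in $\Fr(\L)$ \emph{and} $\varphi$ is one of the modal formulas that defines it; in particular, since $\top$ is valid in every frame, $\top$ and $A$ correspond in $\Fr(\L)$ if and only if $A{\in}\Th(\Fr(\L))$. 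Hence $A{\mapsto}(\top,A)$ is a many-one reduction of the membership problem for $\Th(\Fr(\L))$ to the correspondence problem for $\Fr(\L)$; as $\Th(\Fr(\L))$ is undecidable whenever $\mathtt{S}_{\L}{\setminus}((\{1\}{\times}\mathbf{N}^{-}){\cup}(\mathbf{N}^{+}{\times}\{{-}1,0,1\}))$ is infinite (Lemma~\ref{lemma:infinite:undecidable:case}), this settles the ``undecidable'' half of the equivalence.

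For the ``decidable'' half, assume $\mathtt{S}_{\L}{\setminus}((\{1\}{\times}\mathbf{N}^{-}){\cup}(\mathbf{N}^{+}{\times}\{{-}1,0,1\}))$ is finite and let $\varphi_{\L}$ be a modal formula with $\L{=}\K5{\oplus}\varphi_{\L}$. Given $\varphi$ and $A$, I would first decide whether $A$ is modally definable in $\Fr(\L)$; this is possible, in $\EXPSPACE$, by Theorem~\ref{proposition:finale}, and if the answer is ``no'' then $\varphi$ and $A$ do not correspond. Otherwise I would compute a modal formula $\psi$ defining $A$ in $\Fr(\L)$: if $A{\in}\Th(\Fr(\L))$ (decidable by Proposition~\ref{proposition:beta:beta:nsc:for:decidability:theory}) take $\psi{=}\top$, and if $A{\not\in}\Th(\Fr(\L))$ take $\psi{=}\varphi_{\L}^{A}{=}\bigwedge\{\chi_{m}^{n}:(m,n){\in}\mathtt{S}_{\L}^{A}\}$ from the proof of Proposition~\ref{proposition:characterizing:modal:definability}, noting that $\mathtt{S}_{\L}^{A}{\subseteq}\lsem0,\qdd(A)\rsem{\times}\lsem{-}1,\qdd(A)\rsem$ is finite and computable, since for each candidate $(m,n)$ one only has to test the fixed finite frame ${\mathcal F}_{m}^{n}$ against the fixed axiom $\varphi_{\L}$ and against $A$, which is decidable by Lemmas~\ref{lemma:validity:in:a:given:frame:is:in:coNP} and~\ref{lemma:validity:in:a:given:frame:is:in:coNP:fol}. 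Since $\psi$ defines $A$ in $\Fr(\L)$, ``$\varphi$ and $A$ correspond in $\Fr(\L)$'' is now equivalent to ``$\varphi$ and $\psi$ correspond in $\Fr(\L)$'', i.e.\ to the conjunction of (a)~$A{\in}\Th(\Fr(\L{\oplus}\varphi))$ — equivalently, every frame of $\Fr(\L)$ validating $\varphi$ validates $\psi$ — and (b)~$\varphi$ is valid in $\Fr(\L{\oplus}\psi)$. Both auxiliary logics satisfy $\mathtt{S}_{\L{\oplus}\varphi}{\subseteq}\mathtt{S}_{\L}$ and $\mathtt{S}_{\L{\oplus}\psi}{\subseteq}\mathtt{S}_{\L}$, so their sets $\mathtt{S}{\setminus}((\{1\}{\times}\mathbf{N}^{-}){\cup}(\mathbf{N}^{+}{\times}\{{-}1,0,1\}))$ remain finite; hence (a) is decided in $\EXPSPACE$ by running the argument of Lemma~\ref{decidability:if:setminus:is:finite} for $\L{\oplus}\varphi{=}\K5{\oplus}(\varphi_{\L}{\wedge}\varphi)$, using that a value of $k$ good for $\L$ is a fortiori good for $\L{\oplus}\varphi$, so the size bound for the finite frames to be inspected stays governed by a constant depending only on $\L$. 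And (b) is decided, also in $\EXPSPACE$, by checking that $\varphi$ is refuted by no finite rooted Euclidean frame validating $\varphi_{\L}{\wedge}\psi$ whose cluster has at most $2^{{\parallel}\vvar(\varphi){\parallel}}$ elements — a larger refuting witness being collapsed to such a frame by identifying cluster points with equal truth values, which is a surjective bounded morphism and therefore preserves validity of $\varphi_{\L}{\wedge}\psi$ as well as refutation of $\varphi$ (alternatively, decidability of (b) can be read off the finite model property of extensions of $\K5$~\cite{Nagle:1981,Nagle:Thomason:1985}). Answering ``yes'' exactly when (a) and (b) both hold, and adding up the $\EXPSPACE$ bounds of the three phases, yields the claimed complexity.

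I expect the main obstacle to be the asymmetry in the last step: the class $\{\mathcal F{\in}\Fr(\L):\mathcal F{\models}A\}$ is in general not the class of frames of any modal logic, so ``every frame of $\Fr(\L)$ satisfying $A$ validates $\varphi$'' cannot be attacked head-on as a modal-logic question; it is precisely to bypass this that one must first resolve modal definability and extract from Proposition~\ref{proposition:characterizing:modal:definability} a modal formula $\psi$ equivalent to $A$ over $\Fr(\L)$, after which the problem splits into the two manageable pieces (a) and (b). A subsidiary technical point, already flagged above, is to ensure that passing to the auxiliary logics $\L{\oplus}\varphi$ and $\L{\oplus}\psi$ does not inflate the frame-size bounds beyond $\EXPSPACE$; this is handled by the containments $\mathtt{S}_{\L{\oplus}\varphi},\mathtt{S}_{\L{\oplus}\psi}{\subseteq}\mathtt{S}_{\L}$, which keep the relevant $k$ bounded by a constant depending only on $\L$.
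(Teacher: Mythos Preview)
Your argument is correct, and the undecidable direction is exactly the paper's: reduce $\Th(\Fr(\L))$ to correspondence via $A\mapsto(\top,A)$.

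For the decidable direction you take a genuinely different route. The paper goes from modal to first-order: it invokes~\cite[Lemma~18]{Balbiani:Georgiev:Tinchev:2018}, which produces for every modal formula $\varphi$ a sentence $A(\varphi)$ of linear size that corresponds to $\varphi$ over all Euclidean frames, and then observes that $\varphi$ and $A$ correspond in $\Fr(\L)$ if and only if $A\leftrightarrow A(\varphi)\in\Th(\Fr(\L))$; a single call to Proposition~\ref{proposition:beta:beta:nsc:for:decidability:theory} finishes the job. You go in the opposite direction, from first-order to modal: you first test modal definability of $A$ (Theorem~\ref{proposition:finale}), then extract a defining modal formula $\psi$ from the Jankov--Fine construction in the proof of Proposition~\ref{proposition:characterizing:modal:definability}, and finally reduce correspondence of $\varphi$ and $A$ to the two modal questions (a)~$A\in\Th(\Fr(\L{\oplus}\varphi))$ and (b)~$\varphi$ valid in $\Fr(\L{\oplus}\psi)$, handled respectively by Lemma~\ref{decidability:if:setminus:is:finite} (with the same $k$, since $\mathtt{S}_{\L{\oplus}\varphi}\subseteq\mathtt{S}_{\L}$) and by a small-model argument for extensions of $\K5$. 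The paper's path is shorter and exploits the already-known triviality of first-order definability in Euclidean frames; your path is longer but entirely self-contained within the paper's machinery, and it makes explicit why the preliminary modal-definability test is unavoidable in this direction---precisely the asymmetry you flag, that $\{\mathcal F\in\Fr(\L):\mathcal F\models A\}$ need not be $\Fr$ of any modal logic. Both routes land in $\EXPSPACE$ for the same reason: the finite frames to be inspected have size singly exponential in the input, and model-checking on them is polynomial space in frame size.
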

\begin{proof}
Let $\L$ be an Euclidean modal logics.
\\
\\
Suppose $\mathtt{S}_{\L}{\setminus}((\{1\}{\times}\mathbf{N}^{-}){\cup}(\mathbf{N}^{+}{\times}\{{-}1,0,1\}))$ is infinite.
Hence, by Proposition~\ref{proposition:beta:beta:nsc:for:decidability:theory}, $\Th(\Fr(\L))$ is undecidable.
Since for all sentences $A$, the modal formula $\top$ and $A$ correspond in $\Fr(\L)$ if and only if $A$ is in $\Th(\Fr(\L))$, then the problem of deciding the correspondence of modal formulas and sentences with respect to $\Fr(\L)$ is undecidable.
\\
\\
Suppose Suppose $\mathtt{S}_{\L}{\setminus}((\{1\}{\times}\mathbf{N}^{-}){\cup}(\mathbf{N}^{+}{\times}\{{-}1,0,1\}))$ is finite.
Thus, by Proposition~\ref{proposition:beta:beta:nsc:for:decidability:theory}, $\Th(\Fr(\L))$ is in $\EXPSPACE$.
According to Balbiani {\it et al.}\/~\cite[Lem\-ma~$18$]{Balbiani:Georgiev:Tinchev:2018}, given an arbitrary modal formula $\varphi$, one can easily construct a sentence $A(\varphi)$ such that $\size(A(\varphi)){=}{\mathcal O}(\size(\varphi))$ and $\varphi$ and $A(\varphi)$ correspond in the class of all Euclidean frames.
Thus, for all modal formulas $\varphi$ and for all sentences $A$, $\varphi$ and $A$ correspond in $\Fr(\L)$ if and only if $A\leftrightarrow A(\varphi)$ is in $\Th(\Fr(\L))$.
Since $\Th(\Fr(\L))$ is in $\EXPSPACE$, then the problem of deciding the correspondence of modal formulas and sentences with respect to $\Fr(\L)$ is in $\EXPSPACE$.
%
%
\medskip
\end{proof}
Now, it is time to assert and prove the following additional results.
\begin{corollary}\label{corollary:modal:definability:complexity}
The following decision problem is in $\NEXPTIME$:
\begin{description}
\item[input:] a modal formula $\varphi$,
\item[output:] determine whether the problem of deciding the modal definability of sentences with respect to $\Fr(\K5{\oplus}\varphi)$ is decidable.
\end{description}
\end{corollary}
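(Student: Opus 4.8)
The plan is to reduce the decision problem to checking whether the set $\mathtt{S}_{\K5\oplus\varphi}{\setminus}((\{1\}{\times}\mathbf{N}^{-}){\cup}(\mathbf{N}^{+}{\times}\{{-}1,0,1\}))$ is finite, which by Theorem~\ref{proposition:finale} is equivalent to the problem being decidable. By Lemmas~\ref{lemma:if:S:L:setminus:some:pairs:is:infinite:then:two:consequences} and~\ref{lemma:S:L:is:closed:subset:of:N:plus:times:N:minus}, that set is infinite if and only if either $\{2\}{\times}\mathbf{N}^{-}{\subseteq}\mathtt{S}_{\L}$ or $\mathbf{N}^{+}{\times}\{2\}{\subseteq}\mathtt{S}_{\L}$, where $\L{=}\K5{\oplus}\varphi$; equivalently, the set is finite iff there exists $m{\in}\mathbf{N}^{+}$ with ${\mathcal F}_{m}^{2}{\not\models}\varphi$ and there exists $n{\in}\mathbf{N}^{-}$ with ${\mathcal F}_{2}^{n}{\not\models}\varphi$. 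So the first step is: guess such an $m$ and such an $n$ and verify ${\mathcal F}_{m}^{2}{\not\models}\varphi$ and ${\mathcal F}_{2}^{n}{\not\models}\varphi$.

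The key to getting $\NEXPTIME$ is that the witnesses $m$ and $n$ can be taken of size at most exponential in $\size(\varphi)$. This is exactly what Lemmas~\ref{lemma:about:upper:bound:on:the:kernel:1} and~\ref{lemma:about:upper:bound:on:the:kernel:2} provide: if some ${\mathcal F}_{m}^{2}$ refutes $\varphi$ then one with $m{\in}\lsem1,2^{{\parallel}\vvar(\varphi){\parallel}}\rsem$ does, and similarly for $n$. Since ${\parallel}\vvar(\varphi){\parallel}{\leq}\size(\varphi)$, the witness flowers ${\mathcal F}_{m}^{2}$ and ${\mathcal F}_{2}^{n}$ have at most $2^{\size(\varphi)}{+}2$ points, hence admit a binary encoding of polynomial length. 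A nondeterministic machine guesses $m,n$ in binary (polynomially many bits), constructs the corresponding flowers in exponential time, and then, by Lemma~\ref{lemma:validity:in:a:given:frame:is:in:coNP}, checks $(W^{\prime\prime},R^{\prime\prime}){\not\models}\varphi$ for each of the two finite frames. The validity check on a finite frame of exponential size is in $\coNP$ relative to the size of that frame, hence runs in (nondeterministic) exponential time in $\size(\varphi)$; negation of validity (finding a refuting valuation) is therefore also within nondeterministic exponential time. The whole procedure accepts iff both refutations exist, i.e. iff the problem is decidable, so the decision problem is in $\NEXPTIME$.

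The step I expect to be the main obstacle is pinning down the complexity bookkeeping precisely: one must observe that "guess $m,n$ in binary, build ${\mathcal F}_{m}^{2}$ and ${\mathcal F}_{2}^{n}$ explicitly, then run a $\coNP$ validity check" composes to a single $\NEXPTIME$ computation. Building the flowers is deterministic exponential time; the validity check of Lemma~\ref{lemma:validity:in:a:given:frame:is:in:coNP} is in $\coNP$ in the combined input $(\varphi,(W^{\prime\prime},R^{\prime\prime}))$, so since $(W^{\prime\prime},R^{\prime\prime})$ is already exponential in $\size(\varphi)$, checking non-validity amounts to an additional nondeterministic exponential-time guess of a valuation on the exponential-size frame together with a polynomial-in-the-frame verification — all absorbed into $\NEXPTIME$. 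A clean way to phrase this is: the class $\NEXPTIME$ is closed under the operation "existentially guess a polynomial-size string, do deterministic exponential work, then existentially guess an exponential-size certificate and verify in exponential time", and our algorithm fits this template. With that remark the proof is a short assembly of Lemmas~\ref{lemma:if:S:L:setminus:some:pairs:is:infinite:then:two:consequences}, \ref{lemma:validity:in:a:given:frame:is:in:coNP}, \ref{lemma:about:upper:bound:on:the:kernel:1}, \ref{lemma:about:upper:bound:on:the:kernel:2} and~\ref{lemma:S:L:is:closed:subset:of:N:plus:times:N:minus} together with Theorem~\ref{proposition:finale}.
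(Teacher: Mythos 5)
Your proposal is correct and follows essentially the same route as the paper: reduce via Theorem~\ref{proposition:finale} and Lemmas~\ref{lemma:if:S:L:setminus:some:pairs:is:infinite:then:two:consequences}, \ref{lemma:S:L:is:closed:subset:of:N:plus:times:N:minus} to the existence of refuted flowers ${\mathcal F}_{m}^{2}$ and ${\mathcal F}_{2}^{n}$, bound the witnesses exponentially by Lemmas~\ref{lemma:about:upper:bound:on:the:kernel:1} and~\ref{lemma:about:upper:bound:on:the:kernel:2}, and check non-validity on exponential-size frames via Lemma~\ref{lemma:validity:in:a:given:frame:is:in:coNP}. The only (inessential) difference is that the paper adds a final monotonicity step (via Lemmas~\ref{lemma:about:bounded:morphisms:in:the:situation:of F:m:n:frames} and~\ref{Bounded:Morphism:Lemma}) replacing the guess of $m,n$ by checking the fixed flowers ${\mathcal F}_{l}^{2}$, ${\mathcal F}_{2}^{-1}$, ${\mathcal F}_{2}^{0}$, ${\mathcal F}_{2}^{l}$, which does not affect the $\NEXPTIME$ bound.
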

\begin{proof}
Let $\varphi$ be a modal formula.
Obviously, $l$ denoting $2^{{\parallel}\vvar(\varphi){\parallel}}$, the following conditions are equivalent:
\begin{enumerate}
\item the modal definability of sentences with respect to $\Fr(\K5{\oplus}\varphi)$ is decidable,
\item $\mathtt{S}_{\K5{\oplus}\varphi}{\setminus}((\{1\}{\times}\mathbf{N}^{-}){\cup}(\mathbf{N}^{+}{\times}\{{-}1,0,1\}))$ is finite,
\item $\{2\}{\times}\mathbf{N}^{-}{\not\subseteq}\mathtt{S}_{\K5{\oplus}\varphi}$ and $\mathbf{N}^{+}{\times}\{2\}{\not\subseteq}\mathtt{S}_{\K5{\oplus}\varphi}$,
\item there exists $m{\in}\mathbf{N}^{+}$ such that ${\mathcal F}_{m}^{2}{\not\models}\varphi$ and there exists $n{\in}\mathbf{N}^{-}$ such that ${\mathcal F}_{2}^{n}{\not\models}\varphi$,
\item there exists $m{\in}\lsem1,l\rsem$ such that ${\mathcal F}_{m}^{2}{\not\models}\varphi$ and there exists $n{\in}\lsem{-}1,l\rsem$ such that ${\mathcal F}_{2}^{n}{\not\models}\varphi$,
\item ${\mathcal F}_{l}^{2}{\not\models}\varphi$ and either ${\mathcal F}_{2}^{{-}1}{\not\models}\varphi$, or ${\mathcal F}_{2}^{0}{\not\models}\varphi$, or ${\mathcal F}_{2}^{l}{\not\models}\varphi$.
\end{enumerate}
Indeed, the equivalence between~$\mathbf{(1)}$ and~$\mathbf{(2)}$ is a consequence of Theorem~\ref{proposition:finale}, the equivalence between~$\mathbf{(2)}$ and~$\mathbf{(3)}$ is a consequence of Lemmas~\ref{lemma:if:S:L:setminus:some:pairs:is:infinite:then:two:consequences} and~\ref{lemma:S:L:is:closed:subset:of:N:plus:times:N:minus}, the equivalence between~$\mathbf{(4)}$ and~$\mathbf{(5)}$ is a consequence of Lemmas~\ref{lemma:about:upper:bound:on:the:kernel:1} and~\ref{lemma:about:upper:bound:on:the:kernel:2} and the equivalence between~$\mathbf{(5)}$ and~$\mathbf{(6)}$ is a consequence of Lemmas~\ref{lemma:about:bounded:morphisms:in:the:situation:of F:m:n:frames} and~\ref{Bounded:Morphism:Lemma}.
Hence, by Lemma~\ref{lemma:validity:in:a:given:frame:is:in:coNP}, the decision problem considered in Corollary~\ref{corollary:modal:definability:complexity} is in $\NEXPTIME$.
\medskip
\end{proof}
\begin{corollary}\label{corollary:correspondence:complexity}
The following decision problem is in $\NEXPTIME$:
\begin{description}
\item[input:] a modal formula $\varphi$,
\item[output:] determine whether the problem of deciding the correspondence of modal formulas and sentences with respect to $\Fr(\K5{\oplus}\varphi)$ is decidable.
\end{description}
\end{corollary}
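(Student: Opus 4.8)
The plan is to mimic the proof of Corollary~\ref{corollary:modal:definability:complexity} almost verbatim, the only new observation being that Theorem~\ref{proposition:finale:correspondence:problem} furnishes exactly the same decidability criterion as Theorem~\ref{proposition:finale}, namely finiteness of $\mathtt{S}_{\L}{\setminus}((\{1\}{\times}\mathbf{N}^{-}){\cup}(\mathbf{N}^{+}{\times}\{{-}1,0,1\}))$. So fix a modal formula $\varphi$ and write $l$ for $2^{{\parallel}\vvar(\varphi){\parallel}}$. First I would record the following chain of equivalences: the problem of deciding the correspondence of modal formulas and sentences with respect to $\Fr(\K5{\oplus}\varphi)$ is decidable iff $\mathtt{S}_{\K5{\oplus}\varphi}{\setminus}((\{1\}{\times}\mathbf{N}^{-}){\cup}(\mathbf{N}^{+}{\times}\{{-}1,0,1\}))$ is finite (Theorem~\ref{proposition:finale:correspondence:problem}), iff $\{2\}{\times}\mathbf{N}^{-}{\not\subseteq}\mathtt{S}_{\K5{\oplus}\varphi}$ and $\mathbf{N}^{+}{\times}\{2\}{\not\subseteq}\mathtt{S}_{\K5{\oplus}\varphi}$ (Lemmas~\ref{lemma:if:S:L:setminus:some:pairs:is:infinite:then:two:consequences} and~\ref{lemma:S:L:is:closed:subset:of:N:plus:times:N:minus}), iff there exists $m{\in}\mathbf{N}^{+}$ with ${\mathcal F}_{m}^{2}{\not\models}\varphi$ and there exists $n{\in}\mathbf{N}^{-}$ with ${\mathcal F}_{2}^{n}{\not\models}\varphi$ (unfolding the definition of $\mathtt{S}_{\K5{\oplus}\varphi}$, using that every flower is Euclidean and so validates $\K5$), iff there exists $m{\in}\lsem1,l\rsem$ with ${\mathcal F}_{m}^{2}{\not\models}\varphi$ and there exists $n{\in}\lsem{-}1,l\rsem$ with ${\mathcal F}_{2}^{n}{\not\models}\varphi$ (Lemmas~\ref{lemma:about:upper:bound:on:the:kernel:1} and~\ref{lemma:about:upper:bound:on:the:kernel:2}), iff ${\mathcal F}_{l}^{2}{\not\models}\varphi$ and either ${\mathcal F}_{2}^{{-}1}{\not\models}\varphi$, or ${\mathcal F}_{2}^{0}{\not\models}\varphi$, or ${\mathcal F}_{2}^{l}{\not\models}\varphi$ (Lemmas~\ref{lemma:about:bounded:morphisms:in:the:situation:of F:m:n:frames} and~\ref{Bounded:Morphism:Lemma}).

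To finish, I would note that the last condition requires only finitely many~---~indeed at most four~---~checks of the shape ``${\mathcal F}_{m}^{n}{\not\models}\varphi$'', where each flower ${\mathcal F}_{m}^{n}$ involved has at most $l{+}1{=}2^{{\parallel}\vvar(\varphi){\parallel}}{+}1$ elements, hence size exponential in $\size(\varphi)$ since ${\parallel}\vvar(\varphi){\parallel}{\leq}\size(\varphi)$. By Lemma~\ref{lemma:validity:in:a:given:frame:is:in:coNP} each such check is the complement of an $\NP$ question about an input of exponential size, so a non-deterministic machine can, in exponential time, construct the relevant flowers and then, within the same exponential budget, verify the required (non-)validities. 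Thus the decision problem of the corollary is in $\NEXPTIME$.

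There is no genuine obstacle here: the substantive work was already carried out in Theorem~\ref{proposition:finale:correspondence:problem} and in the proof of Corollary~\ref{corollary:modal:definability:complexity}, and the essential point is simply that the two decidability criteria literally coincide, so the entire equivalence chain of Corollary~\ref{corollary:modal:definability:complexity} transfers unchanged. The one place deserving a line of care is the complexity bookkeeping: one must check that the parameter $l$ bounding the flower sizes is single-exponential in $\size(\varphi)$ (which it is, via ${\parallel}\vvar(\varphi){\parallel}{\leq}\size(\varphi)$), so that guessing and evaluating the flowers stays within $\NEXPTIME$ rather than escaping to a higher class.
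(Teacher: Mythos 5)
Your proposal is correct and follows essentially the same route as the paper: the identical chain of equivalences (Theorem~\ref{proposition:finale:correspondence:problem}, then Lemmas~\ref{lemma:if:S:L:setminus:some:pairs:is:infinite:then:two:consequences}, \ref{lemma:S:L:is:closed:subset:of:N:plus:times:N:minus}, \ref{lemma:about:upper:bound:on:the:kernel:1}, \ref{lemma:about:upper:bound:on:the:kernel:2}, \ref{lemma:about:bounded:morphisms:in:the:situation:of F:m:n:frames} and~\ref{Bounded:Morphism:Lemma}), reducing the question to finitely many validity checks on flowers of size at most $2^{{\parallel}\vvar(\varphi){\parallel}}{+}1$ and concluding via Lemma~\ref{lemma:validity:in:a:given:frame:is:in:coNP}. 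Your extra remarks on the complexity bookkeeping are consistent with the paper's argument and add nothing that conflicts with it.
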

\begin{proof}
Let $\varphi$ be a modal formula.
Obviously, $l$ denoting $2^{{\parallel}\vvar(\varphi){\parallel}}$, the following conditions are equivalent:
\begin{enumerate}
\item the correspondence of modal formulas and sentences with respect to $\Fr(\K5
$\linebreak$
{\oplus}\varphi)$ is decidable,
\item $\mathtt{S}_{\K5{\oplus}\varphi}{\setminus}((\{1\}{\times}\mathbf{N}^{-}){\cup}(\mathbf{N}^{+}{\times}\{{-}1,0,1\}))$ is finite,
\item $\{2\}{\times}\mathbf{N}^{-}{\not\subseteq}\mathtt{S}_{\K5{\oplus}\varphi}$ and $\mathbf{N}^{+}{\times}\{2\}{\not\subseteq}\mathtt{S}_{\K5{\oplus}\varphi}$,
\item there exists $m{\in}\mathbf{N}^{+}$ such that ${\mathcal F}_{m}^{2}{\not\models}\varphi$ and there exists $n{\in}\mathbf{N}^{-}$ such that ${\mathcal F}_{2}^{n}{\not\models}\varphi$,
\item there exists $m{\in}\lsem1,l\rsem$ such that ${\mathcal F}_{m}^{2}{\not\models}\varphi$ and there exists $n{\in}\lsem{-}1,l\rsem$ such that ${\mathcal F}_{2}^{n}{\not\models}\varphi$,
\item ${\mathcal F}_{l}^{2}{\not\models}\varphi$ and either ${\mathcal F}_{2}^{{-}1}{\not\models}\varphi$, or ${\mathcal F}_{2}^{0}{\not\models}\varphi$, or ${\mathcal F}_{2}^{l}{\not\models}\varphi$.
\end{enumerate}
Indeed, the equivalence between~$\mathbf{(1)}$ and~$\mathbf{(2)}$ is a consequence of Theorem~\ref{proposition:finale:correspondence:problem}, the equivalence between~$\mathbf{(2)}$ and~$\mathbf{(3)}$ is a consequence of Lemmas~\ref{lemma:if:S:L:setminus:some:pairs:is:infinite:then:two:consequences} and~\ref{lemma:S:L:is:closed:subset:of:N:plus:times:N:minus}, the equivalence between~$\mathbf{(4)}$ and~$\mathbf{(5)}$ is a consequence of Lemmas~\ref{lemma:about:upper:bound:on:the:kernel:1} and~\ref{lemma:about:upper:bound:on:the:kernel:2} and the equivalence between~$\mathbf{(5)}$ and~$\mathbf{(6)}$ is a consequence of Lemmas~\ref{lemma:about:bounded:morphisms:in:the:situation:of F:m:n:frames} and~\ref{Bounded:Morphism:Lemma}.
Hence, by Lemma~\ref{lemma:validity:in:a:given:frame:is:in:coNP}, the decision problem considered in Corollary~\ref{corollary:correspondence:complexity} is in $\NEXPTIME$.
\medskip
\end{proof}
%
%
%
%
%
%
%
%
%
%
%
%
%
%
%
%
%
%
%
%
%
%
%
%
%
%
%
%
%
%
%
%
%
%
%
%
%
%
%
%
%
%
%
%
%
%
%
%
\section{Conclusion}\label{section:conclusion}
This paper was about the computability of the modal definability problem in classes of frames determined by Euclidean modal logics.
We have characterized those Euclidean modal logics such that the classes of frames they determine give rise to an undecidable modal definability problem.
Much remains to be done.
\\
\\
A first obvious question is whether the problem of deciding the modal definability of sentences considered in Theorem~\ref{proposition:finale} and the problem of deciding the correspondence of modal formulas and sentences considered in Theorem~\ref{proposition:finale:correspondence:problem}~---~when they are decidable~---~are $\EXPSPACE$-hard.
\\
\\
A second obvious question is whether there exists other classes of frames for which modal definability is decidable.
Is the modal definability problem in classes of frames determined by extensions of $\S4.3$ decidable?
What about the first-order definability problem in classes of frames determined by extensions of $\S4.3$?
Another question is whether there exists other modal logics for which the first-order definability problem is trivial in the classes of frames they determine.
\\
\\
It is also worth to consider restrictions or extensions of the ordinary language of modal logic.
For example, one can consider either the implication fragment based on $\rightarrow$ and $\Box$, or the positive fragment based on $\vee$, $\wedge$, $\Box$ and $\Diamond$, or the tense extension, or the extension with the universal modality.
For such restrictions or extensions of the ordinary language of modal logic, what is the computability of the first-order definability problem and the modal definability problem?
\\
\\
And in the end, there is the open question of the existence of modal logics determining classes of frames in which the modal definability problem is decidable and the first-order definability problem is undecidable.
We conjecture that such modal logics do not exist.
\section*{Funding}
The preparation of this paper has been supported by the {\it Rila Programme} (project $\mathbf{KP{-}06{-}RILA{/}4}$ of the Bulgarian Ministry of Education and Science and project $\mathbf{48055QG}$ of the French Ministry for Europe and Foreign Affairs and the French Ministry of Higher Education and Research).
\section*{Acknowledgement}
Special acknowledgement is heartily granted to our colleagues of the Faculty of Mathematics and Informatics (Sofia, Bulgaria) and the Toulouse Institute of Computer Science Research (Toulouse, France) for many stimulating discussions about the subject of this paper.
%
%
%
%
\bibliographystyle{named}
\end{document}